\title{Practical approach to $2$-Euclidean Preferences}
\newcommand{\sv}[1]{#1}
\newcommand{\lv}[1]{}
\newcommand{\appendixText}{}
\newcommand{\toappendix}[1]{\gappto{\appendixText}{{#1}}}
\newcommand{\sv}[1]{}
\newcommand{\lv}[1]{#1}
\newcommand{\appendixText}{}
\newcommand{\toappendix}[1]{#1}
\newcommand{\dtoappendix}[1]{#1\gappto{\appendixText}{{#1}}} 
\definecolor{darkgreen}{RGB}{0, 150, 0} 
\newtheorem{reductionrule}{Reduction Rule}
\newenvironment{reductionrulep}[1]{
	
	\reductionrulealt
}{\endreductionrulealt}
\newcommand{\restatableeq}[3]{\label{#3}#2\gdef#1{#2\tag{\ref{#3}}}}
\newcommand{\dswapnoarg}{\ensuremath{d^{\operatorname{swap}}}}
\newcommand{\dswap}[2]{\ensuremath{\dswapnoarg(#1,#2)}}
\newcommand{\arrangement}[1]{\ensuremath{\mathcal{A}^{#1}}}
\newcommand{\primalg}[1]{\ensuremath{\mathcal{P}^{#1}}}
\newcommand{\dualg}[1]{\ensuremath{\mathcal{D}^{#1}}}
\newcommand{\region}[2]{\ensuremath{R^{#1}(#2)}}
\newcommand{\halfplane}[3]{\ensuremath{H^{#1}_{#2,#3}}}
\newcommand{\bisector}[3]{\ensuremath{\beta^{#1}_{#2,#3}}}
\newcommand{\pos}[2]{\ensuremath{\operatorname{pos}_{#1}(#2)}}
\newcommand{\posi}[1]{\ensuremath{\operatorname{pos}^{-1}(#1)}}
\newcommand{\sym}[1]{\ensuremath{\mathcal{S}_{#1}}}
\newcommand{\conv}[1]{\ensuremath{\operatorname{conv}(#1)}}
\newcommand{\controversity}[2]{\ensuremath{\mathcal{C}\mathcal{G}(#1,#2)}}
\newcommand{\openballcr}[2]{\ensuremath{B_{#2}(#1)}}
\newcommand{\openballrc}[2]{\openballcr{#2}{#1}}
\newcommand{\openballcp}[2]{\ensuremath{B(#1,#2)}}
\newcommand{\boundary}[1]{\ensuremath{\partial{#1}}}
\newcommand{\closure}[1]{\ensuremath{\overline{#1}}}
\newcommand{\annuluscpp}[3]{\ensuremath{A(#1,#2,#3)}}
\newcommand{\ray}[2]{\ensuremath{\overrightarrow{{#1}{#2}}}}
\newcommand{\infline}[2]{\ensuremath{\overleftrightarrow{{#1}{#2}}}}
\newcommand{\linesegment}[2]{\ensuremath{\overline{{#1}{#2}}}}
\newcommand{\ub}{\ensuremath{\operatorname{ub}}}
\crefname{lemma}{lemma}{lemmata}
\crefname{reduction rule}{reduction rule}{reduction rules}
\crefname{observation}{observation}{observations}
	{\end{quote}}
\newcounter{clcnt}
\newcommand{\ER}{\ensuremath{\exists\mathbb{R}}}
\author{Michal Dvořák}
\affiliation{
	\institution{Czech Technical University in Prague}
	\country{Czech Republic}
}
\author{Jan Pokorný}
\affiliation{
	\institution{Czech Technical University in Prague}
	\country{Czech Republic}
}
\author{Dušan Knop}
\affiliation{
	\institution{Czech Technical University in Prague}
	\country{Czech Republic}
}
\author{Martin Slávik}
\affiliation{
	\institution{Czech Technical University in Prague}
	\country{Czech Republic}
}
\begin{abstract}
	An election is a pair $(C,V)$ of candidates and voters. Each vote is a ranking (permutation) of the candidates. An election is $d$-Euclidean if there is an embedding of both candidates and voters into $\mathbb{R}^d$ such that voter $v$ prefers candidate $a$ over $b$ if and only if $a$ is closer to $v$ than $b$ is to $v$ in the embedding. For $d\geq 2$ the problem of deciding whether $(C,V)$ is $d$-Euclidean is $\exists \mathbb{R}$-complete.
	In this paper, we propose practical approach to recognizing and refuting $2$-Euclidean preferences. We design a new class of forbidden substructures that works very well on practical instances. We utilize the framework of integer linear programming (ILP) and quadratically constrained programming (QCP). We also introduce reduction rules that simplify many real-world instances significantly. Our approach beats the previous algorithm of Escoffier, Spanjaard and Tydrichová~[Algorithmic Recognition of 2-Euclidean Preferences, ECAI 2023] both in number of resolved instances and the running time. In particular, we were able to lower the number of unresolved PrefLib instances from $343$ to $60$. Moreover, $98.7\%$ of PrefLib instances are resolved in under $1$ second using our approach.
\end{abstract}
\begin{document}
\maketitle
\begin{titlepage}


\end{titlepage}

\section{Introduction}

The complexity of recognition problems of graph classes and geometrically inspired objects is very classical in computer science.
These problems exhibit a wide range of complexities, from polynomial time to \NP-complete, and even \ER-complete; see, e.g.,~\cite{Simon91,Corneil04,BretscherCHP08,CardinalFMTV18,KangM12,Kratochvil91,MilanicRT14,BertschingerEKMW23,KratochvilM94,matousek2014intersection,MvLvL13,schaefer2003recognizing}.
While some of these classes arise from one-dimensional objects, many are based on higher-dimensional structures.

In voting theory, preference restrictions studied over the past decades have predominantly focused on ``linearly ordered'' profiles, such as single-peaked or single-crossing; see
Elkind, Lackner, and Peters~\cite{ElkindLP22survey}. Note that these profiles are inherently one-dimensional in a certain sense.
Elkind, Lackner, and Peters also highlight that multidimensional domain restrictions present many challenging research questions.
Geometric representations help visualize both the profiles and the election results~\cite{DONALDG2011897}.
Crucially, Peters~\cite{Peters17} showed that recognizing two-dimensional profiles is \ER-complete under the standard Euclidean metric.

To the best of our knowledge, no direct algorithmic advantage has been shown to arise from higher-dimensional profiles, despite significant efforts within the community.
For example, the determination of Kemeny winner was recently shown to be \NP-complete even on two-dimensional profiles~\cite{EscoffierST22}.
However, if the Kemeny consensus is additionally restricted to be embeddable within the given profile, it can be found in polynomial time for any fixed dimension $d$ and is a $2$-approximation of the unrestricted Kemeny consensus~\cite{HammLR21}.

\paragraph*{Forbidden Substructures}
Many mathematical objects (graphs, elections, matrices, \ldots) allow characterization by a certain set of forbidden substructures. A classical example from graph theory is the result of Kuratowski~\cite{Kuratowski1930} characterizing planar graphs as those that do not contain subdivisions of $K_{3,3}$ or $K_5$ as subgraphs. Less known examples include the characterization of the class of interval graphs by infinite families of forbidden induces subgraphs by Lekkerkerker and Boland~\cite{LekkeikerkerB1962} or characterization of the consecutive $1$'s property for binary matrices by Tucker~\cite{Tucker1972}. 

In the area of voting theory, certain properties of elections are similarly characterized by forbidden substructures. An example is the result of Ballester and Haeringer~\cite{Ballester2011} characterizing the single-peaked elections in terms of two forbidden substructures. Another result of this kind is due to Bredereck, Chen and Woeginger~\cite{Bredereck2013} who characterized single-crossing elections by forbidding two particular patterns to occur as a subelection. 

The power of forbidden substructure characterizations extends beyond theoretical classifications. Some of these characterizations can be even used to design an efficient algorithm recognizing objects with desired properties. For example the characterization of interval graphs can be directly applied algorithmically~\cite{Lindzey2013OnFL}. 

However, in some cases, the `characterizations' are incomplete, meaning they do not contain all possible forbidden substructures that would disqualify an object from possessing the desired property. In this case, an algorithm based on these forbidden substructures is only able to detect that the object does not satisfy the given property. If no such substructure is found by the algorithm, the algorithm cannot conclude anything about the object as it may contain another unidentified disqualifying substructure. This limitation occurs in particular when dealing with $d$-Euclidean elections. For $d\geq 2$, the problem of deciding whether an election is $d$-Euclidean is $\exists \mathbb{R}$-complete~\cite{Peters17} and there cannot be finite characterization of $d$-Euclidean elections and unless $\coNP\subseteq\ER$ there isn't even polynomial-time recognizable characterization (see~\cite{Peters17} for more details). Interestingly, Chen, Pruhs and Woeginger~\cite{ChenPW15} showed that already $1$-Euclidean preferences cannot be characterized by finitely many forbidden substructures. However, $1$-Euclidean elections are recognizable in polynomial time~\cite{Doignon1994,Knoblauch2010,Elkind2014}.

Theory suggests that it is impossible to give a good characterizing set of forbidden substructures for $2$-Euclidean elections. However, many real-world instances are not $2$-Euclidean (at least $91.5\%$ of PrefLib~\cite{EscoffierST23}). This result was obtained by exhaustively trying all subelections and simply using the fact that $2$-Euclidean elections can contain roughly $O(|C|^4)$ votes (generally, an election can contain up to $|C|!$ votes) or the characterization of $2$-Euclidean profiles on at most $4$ candidates~\cite{KamiyaTT11}.

However, many of the instances also fail to be $2$-Euclidean due to a much simpler and quickly recognizable forbidden configuration. We demonstrate this in our paper. One such example arises from the previously known construction of 
Bogomolnaïa and Laslier~\cite{BogomolnaiaL07} (see \Cref{subsec:3_8_pattern}). Another such class of forbidden configuration arises from the convex hull of the voters (see \Cref{sec:forbidden_substructures}). This phenomenon illustrates the gap between worst-case theoretical limitations and practical application, where even an incomplete characterization may suffice to solve most practical instances.

\subsection{Related Work}
Peters~\cite{Peters17} showed that some $d$-Euclidean elections inherently require exponentially many bits to even represent any of their $d$-Euclidean embeddings. Bennet and Hays~\cite{HaysW1961,Bennett1960} studied what is the sufficient dimension $d$ for an election to be $d$-Euclidean. They also showed that the maximum number of voters with distinct preference ranking in a $d$-dimensional election with $m$ candidates is equal to $\sum_{k=m-d}^m |s(m,k)|$, where $s(m,k)$ are the Stirling numbers of the first kind. The same result was later obtained by Good and Tideman~\cite{Good1977}. Bogomolnaïa and Laslier~\cite{BogomolnaiaL07} studied lower bounds on $d$ based on the election to guarantee a $d$-Euclidean embedding. Kamiya,Takemura and Terao~\cite{KamiyaTT11} established the number of maximal $d$-Euclidean profiles if the number of candidates $m$ satisfies $m=d-2$ and they were able to enumerate them for $m=4$. A simpler geometrical proof of this characterization for $m=4$ was later given by Escoffier, Spanjaard and Tydrichová~\cite{EscoffierST22_arx}. Bulteau and Chen~\cite{BulteauC23} showed that any election with at most $2$ voters is $2$-Euclidean and for $3$ voters they show that any election with at most $7$ candidates is $2$-Euclidean.  There are also results considering similar topics  in a different metric than the standard Euclidean, e.g., the $\ell_1$ and $\ell_\infty$ metrics~\cite{EscoffierST22_arx,Chen22_arx}. 

The most important previous work for us is the paper of Escoffier, Spanjaard and Tydrichová~\cite{EscoffierST23}. In their work, they propose the first algorithm (partially) deciding whether given election is $2$-Euclidean. Their algorithm consists of two phases. In the first phase, the aim is to find a no-certificate based on the maximal non-$2$-Euclidean elections on $4$ candidates. This part is implemented using exhaustive search on all subsets of $4$ candidates and then finding an appropriate mapping of candidates which is also done by brute-force. In the second phase, they aim to find an embedding using a certain randomized procedure. The second phase terminates within some time limit, in which case the algorithm reports \texttt{Unknown}. Their experiments show that many real-world elections are not $2$-Euclidean (i.e., the algorithm finishes within the first phase). The considered real-world instances are from the PrefLib dataset~\cite{Preflib}. We will refer to this particular algorithm as the EST algorithm.

\subsection{Our contribution}
We introduce new, efficiently recognizable class of forbidden substructures for $2$-Euclidean elections based on the convex hull of the voters. Our experiments indicate that, in practice, it suffices to check the substructure only on four voters. Additionally, our experiments suggest that, in most real-world cases, using the convex hull along with the $3$-$8$ no-instance (see \Cref{subsec:3_8_pattern}) is enough to conclude that an election is not $2$-Euclidean.

Next, we utilize the framework of reduction rules that simplify the input instance and remove the `trivial' parts. This reduces the size of the considered instances significantly. In particular, this improves the number of classified instances simply by using our reduction rules and then running previously known algorithms (e.g., the EST algorithm).

Finally, we study the graph-theoretical properties of the embedding graph based on a possible embedding of the election. We use integer linear program to verify that a given election cannot be $2$-Euclidean by falsifying these properties. On the other side, we propose an enhancement of the QCP approach used by Escoffier, Spanjaard and Tydrichová~\cite{EscoffierST23} and in fact show that it can be modified to provide the embedding for previously unclassified instances that are (now known to be) $2$-Euclidean.
Using our methods, we were able to classify more than $82\%$ of previously unclassified instances of PrefLib. Previously, there were $343$ unclassified instances from PrefLib. Using our approach, we reduce this number to $60$. For most previously classified instances we also improve the running time of finding the appropriate yes- or no-certificate by orders of magnitude. In particular, $98.7\%$ of instances of PrefLib are solved with our approach under $1$ second.
\begin{tikzpicture}
	
\end{tikzpicture}
\subsection{Paper Organization}
In \Cref{sec:preliminaries} we review basic definitions and notation used throughout the paper. \Cref{sec:forbidden_substructures} focuses on the convex hull. The reduction rules are shown in \Cref{sec:reducing_the_number_of_candidates}. In \Cref{sec:implied_regions} we introduce the graph-theoretical framework and the ILP approach. In \Cref{sec:qcp} we discuss the QCP approach. Lastly, we comment on implementation details and experiments in \Cref{sec:experiments} and we conclude the paper with open questions and future research directions in \Cref{sec:conclusion}.
\sv
{
	Statements where proofs or details are omitted due to space constraints are marked with $\star$. The omitted material is available in the Appendix. 
}

\section{Preliminaries}\label{sec:preliminaries}
\sv
{
	We introduce the main definitions and notation used throughout the paper. Additional definitions, relevant to specific sections, are provided where needed. For a more detailed and comprehensive version of the preliminary section, we refer the reader to the appendix (see~\Cref{app:sec:preliminaries}). For $i,j\in\mathbb{Z}^+_0$ we let $[i,j]=\{x\in \mathbb{Z}^+_0\mid i\leq x \wedge x \leq j\}$ and $[j]=[1,j]$.
	An \emph{election} is a pair $(C,V)$ of \emph{candidates} and \emph{voters}. Each voter casts a vote which is a permutation (or ranking) of $C$, that is, a bijection $v\colon C\to [|C|]$, or equivalently a strict linear order $\succ_v$ on $C$. The linear order $\succ_v$ is given by $a\succ_v b\Leftrightarrow v(a)<v(b)$. We are not dealing with preference aggregation, hence we do not distinguish between vote and voter. We say that voter $v$ prefers $a$ over $b$ if $a\succ_v b$. The \emph{position} of candidate $c$ in a vote $v$, denoted by $\pos{v}{c}$ is defined to be $|\{d\in C\mid d \succ_v c\}|+1$. An election $(C',V')$ is a \emph{subelection} of $(C,V)$ (or $(C',V')$ is \emph{contained} in $(C,V)$) if there is an injective function $f\colon C'\cup V'\to C\cup V$ such that $f(C')\subseteq C$, $f(V')\subseteq V$, and $\forall a,b\in C',\forall v\in V':a\succ_v b \Leftrightarrow f(a)\succ_{f(v)}f(b)$. For election $(C,V)$ and $C'\subseteq C$ the \emph{subelection of $(C,V)$ induced by $C'$} is the election $(C',V[C'])$, where $V[C']$ is the set of all votes in $V$ restricted to the candidates in $C'$. We say that election $(C,V)$ is \emph{$2$-Euclidean} if there exists an embedding (i.e., an injective function) $\gamma\colon C\cup V \to \mathbb{R}^2$ such that $\forall v\in V, a,b\in C$ we have $a\succ_v b \Rightarrow \ell_2(\gamma(v),\gamma(a))<\ell_2(\gamma(v),\gamma(b))$, where $\ell_2$ is the two-dimensional euclidean distance. We refer to $\gamma$ as \emph{$2$-Euclidean} embedding of $(C,V)$. Let $\gamma$ be a $2$-Euclidean embedding of $(C,V)$ and $a,b\in C$. The perpendicular bisector of the line segment with endpoints $\gamma(a)$ and $\gamma(b)$ is denoted by $\bisector{\gamma}{a}{b}$. The bisector $\bisector{\gamma}{a}{b}$ splits the plane into two open half-planes $\halfplane{\gamma}{a}{b}$, and $\halfplane{\gamma}{b}{a}$, where $\gamma(a)\in \halfplane{\gamma}{a}{b}$ and $\gamma(b)\in \halfplane{\gamma}{b}{a}$. The set of all bisectors $\arrangement{\gamma}=\{\bisector{\gamma}{a}{b}\mid a,b\in C, a\neq b\}$ induces an arrangement of lines whose cells are refered to as \emph{regions}. We let $\region{\gamma}{v}=\bigcap_{a\succ_v b}\halfplane{\gamma}{a}{b}$ be the region corresponding to the vote $v$. Note that $(C,V)$ is $2$-euclidean if and only if there is an embedding $\gamma\colon C \to \mathbb{R}^2$ such that $\region{\gamma}{v}\neq \emptyset$ for $v\in V$. We use the same letter $\gamma$ to refer to an embedding of the candidate set $C$.

	Two candidates $a,b\in C$ are \emph{consecutive} in vote $v$ if $|v(a)-v(b)|=1$. A transposition $\tau_{a,b}\colon C \to C$ is a \emph{consecutive swap} with respect to $v$ if $a,b$ are consecutive in $v$. For two votes $u,v$ we denote by $\dswap{u}{v}$ their \emph{swap distance} which is the minimum number of sonsecutive swaps needed to obtain $v$ from $u$. We denote the set of all possible votes over $C$ (i.e., permutations of $C$) by $\sym{C}$. The undirected graph $G(\sym{C})$ has a vertex for each permutation of $C$ and two vertices share an edge if and only if they differ by a consecutive swap, i.e. $\dswap{u}{v}=1$.
}
	
	\toappendix
	{
		\sv
		{
			\section{Missing material from section Preliminaries}\label{app:sec:preliminaries}
		}
		\paragraph*{Overview} We start by reviewing basic concepts from geometry (\Cref{subsec:geometry}), graph theory (\Cref{subsec:graph_theory}), permutations (\Cref{subsec:permutations}), and logic (\Cref{subsec:logic}). In \Cref{subsec:elections} we introduce the main terminology from elections and election embeddings. In \Cref{subsec:technical_lemmata} we prove several technical lemmata about embeddings and in \Cref{subsec:3_8_pattern} we comment on the $3$-$8$ pattern.
		
		\subsection{Geometry}\label{subsec:geometry}
		Let $p\in\mathbb{R}^2$ be a point. We denote by $x(p),y(p)$ its $x$- and $y$-coordinates, respectively. We denote by $\ell_2$ the Euclidean distance in $\mathbb{R}^2$, that is $\ell_2(p,q)=\sqrt{(x(p)-x(q))^2+(y(p)-y(q))^2}$.
		For $r\in \mathbb{R}^+$ we denote by $\openballrc{r}{p}=\{x\in \mathbb{R}^2 \mid \ell_2(x,p)<r\}$ the \emph{open ball of radius $r$ centered at $p$}. For a set $P\subseteq \mathbb{R}^2$ we denote by $\boundary{P}$ the \emph{topological boundary} of $P$ which is defined to be
		$\boundary{P} = \{p\in \mathbb{R}^2 \mid \forall \varepsilon>0: B_\varepsilon(p)\cap P \neq \emptyset \wedge B_\varepsilon(p)\cap (\mathbb{R}^2\setminus P)\neq \emptyset\}$ and $\closure{P}$ denotes the closure of $P$ defined as $\closure{P}=P\cup \boundary{P}$. 
		Let $h\in \mathbb{R}^2$ be a point and $k\in\mathbb{R}\setminus \{0\}$. A \emph{homothety with center~$h$ and ratio~$k$} is a mapping $\mathbb{R}^2\to \mathbb{R}^2$ given by $x\mapsto h + k(x-h)$. See \Cref{fig:linear_algebra} for an example of a homothety.

		If $T$ is a homothety with $k > 0$ and $h=(0,0)$ we refer to $T$ as \emph{uniform scaling with factor $k$}. Let $t\in \mathbb{R}^2$, a \emph{translation} given by $t$ is the mapping $\mathbb{R}^2\to\mathbb{R}^2$ given by $x\mapsto x + t$. A \emph{rotation by angle $\theta\in\mathbb{R}$} (around the origin) is a mapping $\mathbb{R}^2\to\mathbb{R}^2$ given by $x\mapsto A\cdot x$ where $A$ is the rotation matrix $\begin{pmatrix} \cos \theta & -\sin \theta \\ \sin \theta & \cos \theta\end{pmatrix}$.
		
		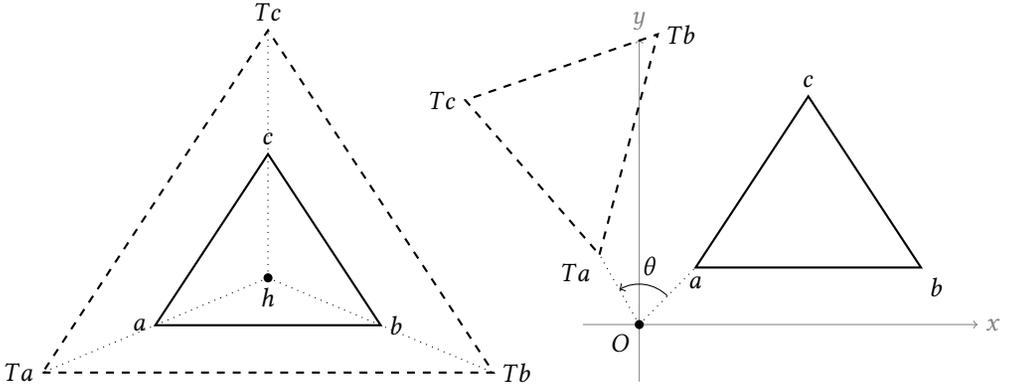
\begin{figure}[ht]
			\begin{minipage}{0.4\textwidth}
				\begin{tikzpicture}[scale=0.75]
					\coordinate (A) at (0,0);
					\coordinate (B) at (4,0);
					\coordinate (C) at (2,3);
					
					\coordinate (O) at (2,0.83);
					
					\def\k{2}
					
					\coordinate (A') at ($(O)!\k!(A)$);
					\coordinate (B') at ($(O)!\k!(B)$);
					\coordinate (C') at ($(O)!\k!(C)$);
					
					\draw[thick] (A) -- (B) -- (C) -- cycle;
					\node[left] at (A) {$a$};
					\node[right] at (B) {$b$};
					\node[above] at (C) {$c$};
					
					\draw[dashed, thick] (A') -- (B') -- (C') -- cycle;
					\node[left] at (A') {$Ta$};
					\node[right] at (B') {$Tb$};
					\node[above] at (C') {$Tc$};
					
					\filldraw[] (O) circle (2pt);
					\node[below] at (O) {$h$};
					
					\draw[dotted] (O) -- (A');
					\draw[dotted] (O) -- (B');
					\draw[dotted] (O) -- (C');
				\end{tikzpicture}
			\end{minipage}
			\begin{minipage}{0.4\textwidth}
				\begin{tikzpicture}[scale=0.75]
					\coordinate (A) at (1,1);
					\coordinate (B) at (5,1);
					\coordinate (C) at (3,4);
					
					\def\angle{75}
					
					\coordinate (A') at ({cos(\angle)*1 - sin(\angle)*1}, {sin(\angle)*1 + cos(\angle)*1});
					\coordinate (B') at ({cos(\angle)*5 - sin(\angle)*1}, {sin(\angle)*5 + cos(\angle)*1});
					\coordinate (C') at ({cos(\angle)*3 - sin(\angle)*4}, {sin(\angle)*3 + cos(\angle)*4});
					
					\draw[help lines,->] (-1,0) -- (6,0) node[right] {$x$};
					\draw[help lines,->] (0,-1) -- (0,5) node[above] {$y$};
					
					\draw[thick] (A) -- (B) -- (C) -- cycle;
					\node[below] at (A) {$a$};
					\node[below right] at (B) {$b$};
					\node[above] at (C) {$c$};
					
					\draw[dashed, thick] (A') -- (B') -- (C') -- cycle;
					\node[below left] at (A') {$Ta$};
					\node[right] at (B') {$Tb$};
					\node[left] at (C') {$Tc$};
					
					\filldraw[] (0,0) circle (2pt);
					\node[below left] at (0,0) {$O$};
					
					\draw[dotted] (0,0) -- (A);
					\draw[dotted] (0,0) -- (A');
					\draw[->] (0.5,0.5) arc[start angle=45, end angle=45+\angle, radius=0.707] node[midway,above,xshift=2pt] {$\theta$};
				\end{tikzpicture}
			\end{minipage}
			
			\caption{Examples of transformations. On the left, the solid triangle $\triangle a b c$ is mapped via a homothety with center $h$ and ratio $k=2$ to the dashed triangle $\triangle Ta Tb Tc$. We note that $h$ is chosen to be the circumcenter of the triangle $\triangle a b c$. Note that $h$ remains the circumcenter of the triangle $\triangle Ta Tb Tc$. This is because the circumcenter is the intersection of perpendicular bisectors of the three sides of a triangle.
				On the right the solid triangle $\triangle a b c$ is mapped via a rotation by angle $\theta=75^{\circ}$ to the dashed triangle $\triangle Ta Tb Tc$. Note that the center of rotation is always the origin. }\label{fig:linear_algebra}
		\end{figure}

		
		For two distinct points $p,q\in\mathbb{R}^2$ we denote by $\openballcp{p}{q}$ the open ball centered at $p$ whose boundary contains $q$, that is, $\openballcp{p}{q}=\openballcr{p}{\ell_2(p,q)}$. For three distinct points $p,q_1,q_2$ with $\ell_2(p,q_1)<\ell_2(p,q_2)$ we denote by $\annuluscpp{p}{q_1}{q_2}$ the two-dimensional open annulus given by the two concentric circles $\boundary{\openballcp{p}{q_1}}$ and $\boundary{\openballcp{p}{q_2}}$. That is, $\annuluscpp{p}{q_1}{q_2}=\openballcp{p}{q_2}\setminus \closure{\openballcp{p}{q_1}}$. A set $P\subseteq \mathbb{R}^2$ is \emph{bounded} if for any point $p$ there is $r>0$ such that $P\subseteq B_r(p)$.
		
		Let $p_1,p_2,p_3\in \mathbb{R}^2$ be three distinct points not lying on a common line. We denote by $\triangle p_1p_2p_3$ the triangle with vertices $p_1,p_2,p_3$. We denote by $\linesegment{p_1}{p_2}=\{ p_1 + t(p_2-p_1) \mid t \in [0,1]\}$ the closed line segment with endpoints $p_1,p_2$ and $\ray{p_1}{p_2}=\{p_1+t (p_2-p_1)\mid t \geq 0\}$ is the infinite ray with origin at $p_1$ and direction $p_2-p_1$. We denote by $\infline{p_1}{p_2}$ the line given by the two distinct points $p_1,p_2$, that is, $\infline{p_1}{p_2}=\{p_1+t (p_2-p_1)\mid t \in\mathbb{R}\}$. A set $P\subseteq \mathbb{R}^2$ is \emph{convex} if for any two distinct $p_1,p_2\in P$ we have $\linesegment{p_1}{p_2}\subseteq P$. For a set $P\subseteq \mathbb{R}^2$ we denote by $\conv{P}$ the convex hull of $P$, which is defined to be the smallest convex set containing~$P$.
		
		\subsection{Graph theory}\label{subsec:graph_theory}
		A (simple undirected) graph $G$ is a pair $(V(G),E(G))$, where $V(G)$ is a set of \emph{vertices} and $E(G)$ is a set of \emph{edges}. An edge is a pair of distinct vertices, i.e., $E(G)\subseteq \{\{u,v\}\mid u,v\in V(G),u\neq v\}$. The \emph{neighborhood} of a vertex $v$ is $N_G(v)=\{u\in V\mid \{u,v\}\in E\}$. The degree of $v$ is $\deg v = |N(v)|$. The maximum degree of a graph $G$ is $\Delta (G)  = \max_{v\in V} \deg v$ and the minimum degree is $\delta(G) = \min_{v\in V}\deg v$. A \emph{path (from $v_1$ to $v_k$)} or \emph{$v_1$-$v_k$ path} in a graph~$G$ is a sequence $v_1,v_2,\ldots,v_k$ of vertices of $G$ such that for every $i\in \{1,\ldots,k-1\}$ we have $\{v_i,v_{i+1}\}\in E(G)$. A graph $G$ is \emph{connected} if for any two vertices $u,v\in V(G)$ there is an $u$-$v$ path. A \emph{cycle} in a graph $G$ is a sequence $v_1,v_2,\ldots,v_k,v_1$ of vertices in $G$ such that $v_1,\ldots,v_k$ is a path and $\{v_k,v_1\}\in E(G)$. A graph is \emph{bipartite} if it contains no cycle of odd length. Graph $H$ is a subgraph of graph $G$ if $V(H)\subseteq V(G)$ and $E(H)\subseteq E(G)$. The \emph{distance} of two vertices $u,v$ in graph $G$ is denoted by $d_G(u,v)$ and is defined as the length of a shortest $u$-$v$ path in $G$. A subgraph $H$ of graph $G$ is said to be \emph{distance-preserving} if for any two vertices $u,v\in V(H)$ we have $d_H(u,v)=d_G(u,v)$.
		
		\paragraph*{Plane graphs.} An \emph{arc} is a subset of $\mathbb{R}^2$ of the form $f([0,1])$ where $f\colon [0,1]\to \mathbb{R}^2$ is an injective continuous function. We refer to the points $f(0)$ and $f(1)$ as \emph{endpoints} of the arc.
		A \emph{plane graph} is a pair $(V(G),E(G))$ of finite sets where $V(G)\subseteq \mathbb{R}^2$ are the \emph{vertices} and $E(G)$ are the \emph{edges}. Every edge $e\in E(G)$ is an arc with endpoints in $V(G)$. Any two arcs $e,f\in E(G)$ have different set of endpoints and the set $f((0,1))$ contains no vertex and no point of any other edge. Plane graph defines an undirected graph $G=(V(G),E(G))$ in a natural way. If no confusion can occur we will use the name $G$ for both the undirected graph and for the plane graph $(V(G),E(G))$. We say that an undirected graph is \emph{planar} if it is isomorphic to a plane graph. A \emph{face} is a region of $\mathbb{R}^2\setminus (V(G)\cup \bigcup E(G))$. Note that the set $V(G)\cup \bigcup E(G)$ is bounded, thus every plane graph has exactly one unbounded face. The unbounded face is referred to as the \emph{outer face} of $G$, while the other faces are the \emph{inner faces} (of $G$). An arc is \emph{incident} to a face if it is contained in its topological boundary. 
		
		A~\emph{dual} of a plane graph $G=(V(G),E(G))$ is a plane graph $H=(V(H),E(H))$ created as follows. First, place a point arbitrarily to each face of $G$. This is the set of vertices $V(H)$. Next, for each edge $e\in E(G)$ of $G$ we connect the two new vertices in the faces incident with $e$ by an arc crossing only $e$ and exactly once. Note that in general plane duals might contain self loops and multiedges. However, all duals we work with in this work induce a simple graph. The \emph{weak dual} for a plane graph is the dual without the vertex corresponding to the outer face.
		
		\subsection{Permutations}\label{subsec:permutations} For $i,j\in \mathbb{Z}^+_0$ we denote by $[i,j]$ the discrete interval $\{x\in \mathbb{Z}^+_0\mid i\leq x \wedge x \leq j\}$ and $[j]=[1,j]$. Let $C$ be a finite nonempty set. A~\emph{permutation} of $C$ is a bijection $\pi\colon C\to [|C|]$. Each permutation $\pi$ of $C$ induces a strict linear order $\succ_\pi$ on $C$ given by $a\succ_\pi b\Leftrightarrow \pi(a)<\pi(b)$. We equivalently denote a permutation~$\pi$ of~$C$ by the $|C|$-tuple $(\pi^{-1}(1),\pi^{-1}(2),\ldots,\pi^{-1}(|C|))$. If no confusion can occur, we omit commas and parentheses, e.g., for the set $C=\{a,b,c,d\}$ an example of a permutation of $C$ is $\pi=dacb$. Note that this corresponds to the linear order $d\succ a \succ c \succ b$. A~\emph{transposition} of $a,b\in C$ is a bijection $\tau_{a,b}\colon C \to C$ where $\tau_{a,b}(a)=b,\tau_{a,b}(b)=a$ and $\tau_{a,b}(x)=x$ for $x\in C\setminus \{a,b\}$. Note that the composition $\pi \circ \tau_{a,b}$ induces the same linear order as $\pi$ except $a,b$ have swapped positions. For a permutation~$\pi$ we denote by $\pi^R$ the reversed permutation, i.e., the linear order given by~$\pi$, reversed, e.g., $(dabc)^R=cbad$. We say that two elements~$a,b$ are consecutive in a permutation~$\pi$ if $|\pi(a)-\pi(b)|=1$. A transposition $\tau_{a,b}$ is \emph{consecutive swap with respect to $\pi$} if $a,b$ are consecutive in~$\pi$. Given two permutations $\pi,\rho$ of~$C$, we denote by $\dswap{\pi}{\rho}$ their \emph{swap distance} which is defined to be the minimum number of consecutive swaps needed to obtain $\rho$ from~$\pi$. Note that the swap distance is equivalent to the so-called Kendall Tau distance between $\pi$ and $\rho$ defined as the number of discordant pairs of $\pi$ and $\rho$. An unordered pair of distinct elements $a,b\in C$ is a \emph{discordant pair} if $\pi(a)<\pi(b) \wedge \rho(a)>\rho(b)$ or $\pi(a)>\pi(b)\wedge \rho(a)<\rho(b)$.
		
		We denote by $\sym{C}$ the set of all permutations of~$C$. There is a naturally associated undirected graph of permutations, which we denote by $G(\sym{C})$ where $V(G(\sym{C}))=\sym{C}$ and for two permutations $\pi,\rho\in \sym{C}$ we have $\{\pi,\rho\}\in E(G(\sym{C}))$ if and only if $\dswap{\pi}{\rho}=1$. See \Cref{fig:graph_of_permutations} for an example.
		
		\begin{figure}[ht]
			\centering
			
				%
				%
				%
			\begin{tikzpicture}[scale=1.5,z={(0,0,0.9)}, line join=round]
				\foreach [var=\x, var=\y, var=\z, count=\n] in {
					2/1/0, 2/0/1, 2/-1/0, 2/0/-1,
					-2/1/0,-2/0/1,-2/-1/0,-2/0/-1,
					1/0/ 2, 0/1/ 2, -1/0/ 2, 0/-1/ 2,
					1/0/-2, 0/1/-2, -1/0/-2, 0/-1/-2,
					1/2/0,0/2/1,-1/2/0,0/2/-1,
					1/-2/0,0/-2/1,-1/-2/0,0/-2/-1
				}
				{
					\coordinate (n\n) at (\x,\y,\z); 
				}
				\node (abcd) at (n1){$abcd$};
				\node (bacd) at (n2){$bacd$};
				\node (acbd) at (n17){$acbd$};
				\node (abdc) at (n4){$abdc$};
				\node (badc) at (n3){$badc$};
				\node (bcad) at (n9){$bcad$};
				\node (cbad) at (n10){$cbad$};
				\node (bcda) at (n12){$bcda$};
				\node (bdca) at (n22){$bdca$};
				\node (cbda) at (n11){$cbda$};
				\node (bdac) at (n21){$bdac$};
				\node (dbac) at (n24){$dbac$};
				\node (dbca) at (n23){$dbca$};
				\node (dcba) at (n7){$dcba$};
				\node (cabd) at (n18){$cabd$};
				\node (acdb) at (n20){$acdb$};
				\node (cadb) at (n19){$cadb$};
				\node (adcb) at (n14){$adcb$};
				\node (adbc) at (n13){$adbc$};
				\node (dabc) at (n16){$dabc$};
				\node (dacb) at (n15){$dacb$};
				\node (dcab) at (n8){$dcab$};
				\node (cdba) at (n6){$cdba$};
				\node (cdab) at (n5){$cdab$};

				\draw[shorten <=10pt, shorten >=10pt,red] (n1) -- (n2);
				\draw[shorten <=10pt, shorten >=10pt] (n2) -- (n3);
				\draw[shorten <=10pt, shorten >=10pt] (n3) -- (n4);
				\draw[shorten <=10pt, shorten >=10pt] (n4) -- (n1);
				\draw[shorten <=10pt, shorten >=10pt] (n5)--(n6);
				\draw[shorten <=10pt, shorten >=10pt] (n6)--(n7);
				\draw[dashed,shorten <=10pt, shorten >=10pt] (n7)--(n8);
				\draw[dashed,shorten <=10pt, shorten >=10pt] (n8)--(n5);
				\draw[shorten <=10pt, shorten >=10pt] (n9)--(n10);
				\draw[shorten <=10pt, shorten >=10pt] (n10)--(n11);
				\draw[shorten <=10pt, shorten >=10pt] (n11)--(n12);
				\draw[shorten <=10pt, shorten >=10pt,red] (n12)--(n9);
				\draw[dashed,shorten <=10pt, shorten >=10pt] (n13)--(n14);
				\draw[dashed,shorten <=10pt, shorten >=10pt] (n14)--(n15);
				\draw[dashed,shorten <=10pt, shorten >=10pt] (n15)--(n16);
				\draw[dashed,shorten <=10pt, shorten >=10pt] (n16)--(n13);
				\draw[shorten <=10pt, shorten >=10pt] (n17)--(n18);
				\draw[shorten <=10pt, shorten >=10pt] (n18)--(n19);
				\draw[shorten <=10pt, shorten >=10pt] (n19)--(n20);
				\draw[shorten <=10pt, shorten >=10pt] (n20)--(n17);
				\draw[shorten <=10pt, shorten >=10pt] (n21)--(n22);
				\draw[shorten <=10pt, shorten >=10pt,red] (n22)--(n23);
				\draw [dashed,shorten <=10pt, shorten >=10pt] (n23)--(n24);
				\draw [dashed,shorten <=10pt, shorten >=10pt] (n24)--(n21);
				
				\draw[shorten <=10pt, shorten >=10pt,red] (n12) -- (n22);
				\draw[shorten <=10pt, shorten >=10pt] (n21) -- (n3);
				\draw[shorten <=10pt, shorten >=10pt,red] (n2) -- (n9);
				
				\draw[shorten <=10pt, shorten >=10pt] (n23) -- (n7);
				\draw[shorten <=10pt, shorten >=10pt] (n6) -- (n11);
				
				\draw[shorten <=10pt, shorten >=10pt] (n10) -- (n18);
				\draw[shorten <=10pt, shorten >=10pt] (n5) -- (n19);
				
				\draw[dashed,shorten <=10pt, shorten >=10pt] (n8) -- (n15);
				\draw[dashed,shorten <=10pt, shorten >=10pt] (n4) -- (n13);
				\draw[dashed,shorten <=10pt, shorten >=10pt] (n16) -- (n24);
				\draw[dashed,shorten <=10pt, shorten >=10pt] (n20) -- (n14);
				\draw[shorten <=10pt, shorten >=10pt] (n1) -- (n17);
				
			\end{tikzpicture}
			\caption{The graph of permutations $G(\sym{C})$ for $C=\{a,b,c,d\}$ realised in the shape of a truncated octahedron. The edges of the graph correspond to consecutive swaps. Note that the distance $\dswapnoarg$ coincides with the graph-theoretical distance $d_{G(\sym{C})}$. For example we have \mbox{$\dswap{abcd}{dbca}=5$} because $dbca=abcd\circ \tau_{a,b} \circ \tau_{a,c}\circ \tau_{a,d}\circ \tau_{c,d}\circ \tau_{b,d}$ and $dbca$ cannot be expressed as composition of $abcd$ with fewer than $5$ consecutive swaps. These consecutive swaps correspond to one of the shortest paths from $abcd$ to $dbca$ in $G(\sym{C})$ given by $abcd,bacd,bcad,bcda,bdca,dbca$. The path is highlighted in red. 
			}\label{fig:graph_of_permutations}
		\end{figure}
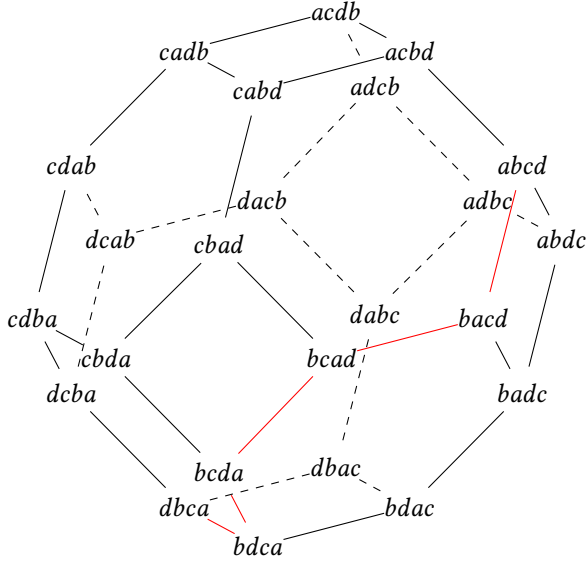
		
		\subsection{Logic}\label{subsec:logic}
		A \emph{propositional formula over set of variables $\{x_1,x_2,\ldots\}$} is a boolean expression containing the variables and logical connectives: disjunction ($\vee$), conjunction ($\wedge$), negation ($\neg$), implication ($\Rightarrow$), and equivalence ($\Leftrightarrow$). Formally, every variable is a formula and if $\varphi,\psi$ are two formulas, then $(\varphi \oplus \psi)$ and $\neg \varphi$ are also formulas, where $\oplus\in\{\vee,\wedge,\Rightarrow,\Leftrightarrow\}$. A \emph{literal} is either a variable $x$ (positive literal) or a negation of a variable $\neg x$ (negative literal). A \emph{clause} is a disjunction of literals, e.g. $x_v \vee \neg x_u \vee x_w$. A formula is in \emph{conjunctive normal form} (CNF), or \emph{CNF formula} if it is a conjunction of clauses. A (truth) assignment for formula $\varphi$ over variables $\{x_1,x_2,\ldots\}$ is a function $f\colon \{x_1,x_2,\ldots\}\to \{0,1\}$. We extend $f$ to set of all formulas as follows: $f(\neg \varphi)=1-f(\varphi),f(\varphi\wedge \psi)=\min\{f(\varphi),f(\psi)\},f(\varphi\vee\psi)=\max\{f(\varphi),f(\psi)\},f(\varphi\Rightarrow \psi)=f(\neg \varphi \vee \psi),f(\varphi \Leftrightarrow \psi)=f((\varphi\Rightarrow \psi)\wedge(\psi \Rightarrow \varphi))$. A formula $\varphi$ is satisfied by an assignment $f$ if $f(\varphi)=1$. It is not hard to observe, that if $\varphi$ is a CNF formula, then $f$ satisfies $\varphi$ if and only if for every clause $X$ of $\varphi$ there is a literal $\ell$ in $X$ such that $f(\ell)=1$.
		We say that two formulas $\varphi,\psi$ are \emph{equivalent} and write $\varphi \equiv \psi$ if for any truth assignment $f$ we have $f(\varphi)=1$ if and only if $f(\psi)=1$.

		\subsection{Elections}\label{subsec:elections}
		An \emph{election} is a pair $(C,V)$, where $C$ is the set of \emph{candidates} and $V$ is the set of \emph{voters}. Each voter $v\in V$ casts a \emph{vote} which is a permutation of $C$; or equivalently a strict linear order $\succ_v$ on $C$. In this work, we are not dealing with preference aggregation, so we use the terms \emph{vote} and \emph{voter} interchangeably and for our purposes $V$ is just a set of permutations of $C$, i.e., $V\subseteq \sym{C}$. We say that voter $v$ \emph{prefers $a$ over $b$} if $a\succ_v b$. For a vote $v$ and a candidate $c\in C$ the \emph{position} of $c$ in $v$, denoted by $\pos{v}{c}$, is defined to be $v(c)$ or equivalently $|\{d\in C \mid d\succ_v c\}| + 1$. We say that $v$ ranks $c$ \emph{first} (\emph{second},$\ldots$,\emph{last}) if $\pos{v}{c}=1$ ($\pos{v}{c}=2, \ldots, \pos{v}{c}=|C|$). 
		
		
		\paragraph*{Subelections.}
		A \emph{restriction} of a vote $v\in\sym{C}$ to $C'$ is the vote $v'\in \sym{C'}$ where for all $a,b\in C'$ we have $a\succ_{v'} b$ if and only if $a\succ_v b$. We denote the vote $v\in \sym{C}$ restricted to $C'\subseteq C$ by $v[C']$ and extend this notion to sets of votes, i.e., $V[C']=\{v[C']\mid v \in V\}$.
		\lv{
			Let $(C,V)$ be an election. Intuitively, a subelection of $(C,V)$ is an election $(C',V')$ where $C'\subseteq C$ and $V'$ can be obtained from $V$ by deleting some voters and restricting the remaining ones to the candidates from $C'$. 
			For our purposes, it will be essential to be able to capture that one election is a subelection of another election even if, for example, they have different names of candidates and voters.
		}
		Formally, an election $(C',V')$ is a subelection of $(C,V)$ if there exists an injective function $f\colon C'\cup V' \to C\cup V$ mapping candidates to candidates and voters to voters, i.e., $f(C')\subseteq C$ and $f(V')\subseteq V$ such that for any $a,b\in C'$ and $v\in V'$ we have $a\succ_v b\Leftrightarrow f(a)\succ_{f(v)}f(b)$. We alternatively say that the election $(C,V)$ \emph{contains} $(C',V')$ or that $(C',V')$ \emph{is contained in} $(C,V)$ to express the fact that $(C',V')$ is a subelection of $(C,V)$. We say that two elections $(C_1,V_1)$ and $(C_2,V_2)$ are \emph{isomorphic} if $(C_1,V_1)$ is a subelection of $(C_2,V_2)$ and vice versa. Notice that if two elections are isomorphic, the function $f$ is necessarily a bijection.

		For election $(C,V)$ and $C'\subseteq C$ the \emph{subelection of $(C,V)$ induced by the set of candidates $C'\subseteq C$} is the election $(C',V[C'])$. Note that it is possible that $|V[C']|<|V|$ as we could create duplicate votes by leaving out the candidates in $C\setminus C'$. It is not hard to verify that any subelection of $(C,V)$ induced by a set of candidates is indeed a subelection of $(C,V)$ by the definition above.
		
		\toappendix{
			\begin{example}
				Let $C_1=\{1,2,3,4\}$, $V_1=\{1324,1243,4132\}$, $C_2=\{a,b,c,d\}$, $V_2=\{bcda,dbac,bacd\}$, $C_3=\{x,y\}$, $V_3=\{xy,yx\}$. The elections $(C_1,V_1)$ and $(C_2,V_2)$ are isomorphic via the bijection $1\mapsto b, 2\mapsto c, 3\mapsto a, d\mapsto 4,1324\mapsto bacd,1243\mapsto bcda,4132\mapsto dbac$. The election $(C_3,V_3)$ is a subelection of $(C_2,V_2)$ (and also of $(C_1,V_1)$), where the injective function $f$ is given by $x\mapsto 1, y\mapsto 4,xy\mapsto 1243,yx\mapsto 4132$. Note that $(C_3,V_3)$ is isomorphic to the subelection of $(C_1,V_1)$ induced by the set of candidates $\{1,4\}\subseteq C_1$, i.e., to the election $(\{1,4\},V_1[\{1,4\}]) = (\{1,4\},\{14,41\})$.
			\end{example}
		}
		
		\paragraph*{Election Embeddings.}
		An \emph{embedding} of a finite set $X$ is an injective function $X\to \mathbb{R}^2$.
		Let $(C,V)$ be an election. We say that it is \emph{$2$-Euclidean} if there exists an embedding~$\gamma\colon C \cup V \to \mathbb{R}^2$ such that for all $v \in V$ and $a,b \in C$ we have $a \succ_v b\Rightarrow\ell_2(\gamma(v), \gamma(a)) < \ell_2(\gamma(v), \gamma(b))$. We refer to $\gamma$ as \emph{$2$-Euclidean} embedding of $(C,V)$.
		In other words, voter $v$ prefers candidate $a$ over candidate $b$ if $\gamma(v)$ is closer to~$\gamma(a)$ than to~$\gamma(b)$. Since we are dealing with strict votes then we can equivalently require $a \succ_v b\Leftrightarrow\ell_2(\gamma(v), \gamma(a)) < \ell_2(\gamma(v), \gamma(b))$.
		
		Observe that the property `being $2$-Euclidean' is hereditary under taking arbitrary subelections.
		
		\begin{observation}\label{obs:euclidean_is_hereditary}
			If $(C,V)$ is $2$-Euclidean, then any subelection of $(C,V)$ is also $2$-Euclidean.
		\end{observation}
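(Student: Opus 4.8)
The plan is to exploit that being a subelection is witnessed by an injective, preference-preserving map, and then to pull back any $2$-Euclidean embedding of the larger election along this map. Concretely, suppose $(C,V)$ is $2$-Euclidean and fix a $2$-Euclidean embedding $\gamma\colon C\cup V\to\mathbb{R}^2$. Let $(C',V')$ be an arbitrary subelection of $(C,V)$, witnessed by the injective function $f\colon C'\cup V'\to C\cup V$ from the definition of subelection, which satisfies $f(C')\subseteq C$, $f(V')\subseteq V$, and $a\succ_v b\Leftrightarrow f(a)\succ_{f(v)}f(b)$ for all $a,b\in C'$ and $v\in V'$. I would simply define $\gamma'=\gamma\circ f$ on $C'\cup V'$ and claim it is a $2$-Euclidean embedding of $(C',V')$.

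First I would check that $\gamma'$ is a legitimate embedding, i.e.\ that it is injective; this is immediate, since $\gamma'$ is the composition of the injection $f$ with the injection $\gamma$. Next I would verify the Euclidean condition: take any $v\in V'$ and $a,b\in C'$ with $a\succ_v b$. The defining property of $f$ gives $f(a)\succ_{f(v)}f(b)$, and since $f(v)\in V$ while $f(a),f(b)\in C$, the fact that $\gamma$ is a $2$-Euclidean embedding of $(C,V)$ yields $\ell_2(\gamma(f(v)),\gamma(f(a)))<\ell_2(\gamma(f(v)),\gamma(f(b)))$. Rewriting this through $\gamma'=\gamma\circ f$ gives exactly $\ell_2(\gamma'(v),\gamma'(a))<\ell_2(\gamma'(v),\gamma'(b))$, which is the required implication, so $\gamma'$ witnesses that $(C',V')$ is $2$-Euclidean.

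There is essentially no hard step here; the only two points worth stating carefully are that $f$ sends candidates to candidates and voters to voters (so that $\gamma'$ separates the two roles correctly and the pulled-back distances are computed against candidate images), and that we only need the forward implication $a\succ_v b\Rightarrow \ell_2(\gamma'(v),\gamma'(a))<\ell_2(\gamma'(v),\gamma'(b))$, which follows from just one direction of the biconditional in the subelection definition. I expect the whole argument to fit in a few lines, and it specializes in particular to the induced subelection $(C',V[C'])$ mentioned earlier, which is already known to be a subelection.
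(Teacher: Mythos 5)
Your proof is correct. The paper states this as an \emph{observation} and gives no proof at all, treating it as immediate; your argument---pulling back the embedding along the subelection injection $f$ via $\gamma'=\gamma\circ f$, checking injectivity as a composition of injections, and verifying the distance condition from the preference-preserving property of $f$---is exactly the routine verification the paper implicitly relies on.
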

		
		
		\paragraph*{Geometry of Embeddings.} We further explore the geometry of election embeddings and introduce several technical terms. We refer the reader to \Cref{fig:embedding_four_candidates} for illustration.
		Suppose that $(C,V)$ is 2-Euclidean and let $\gamma$ be a $2$-Euclidean embedding of $(C,V)$. For any two distinct candidates $a,b\in C$ there is perpendicular bisector of the line segment $\linesegment{\gamma(a)}{\gamma(b)}$ which we denote by $\bisector{\gamma}{a}{b}$. That is, $\bisector{\gamma}{a}{b}=\{p\in \mathbb{R}^2\mid \ell_2(p,\gamma(a))=\ell_2(p,\gamma(b))\}$. Note that $\bisector{\gamma}{a}{b}=\bisector{\gamma}{b}{a}$. The bisector $\bisector{\gamma}{a}{b}$ splits the plane into two open half-planes $\halfplane{\gamma}{a}{b},\halfplane{\gamma}{b}{a}$, where $\gamma(a)\in \halfplane{\gamma}{a}{b}$ and $\gamma(b)\in \halfplane{\gamma}{b}{a}$. Formally $\halfplane{\gamma}{a}{b}=\{p\in\mathbb{R}^2\mid \ell_2(p,\gamma(a))<\ell_2(p,\gamma(b))\}$ and $\halfplane{\gamma}{b}{a}=\{p\in\mathbb{R}^2\mid \ell_2(p,\gamma(b))<\ell_2(p,\gamma(a))\}$. For a voter $v$ we have $\gamma(v)\in \halfplane{\gamma}{a}{b}$ if $a\succ_v b$ and $\gamma(v)\in\halfplane{\gamma}{b}{a}$ if $b\succ_v a$. Note that the bisector $\bisector{\gamma}{a}{b}$ consists exactly of all the points that are equidistant to $\gamma(a)$ and $\gamma(b)$. Since we deal with votes that are strict linear orders, no voter can be embedded to any bisector, i.e., for any two candidates $a,b\in C$ we have $\gamma(V)\cap \bisector{\gamma}{a}{b}=\emptyset$. Let $\arrangement{\gamma}=\{\bisector{\gamma}{a}{b}\mid a,b\in C, a\neq b\}$ be the set of all bisectors of $\gamma$. $\arrangement{\gamma}$ induces an \emph{arrangement of lines} and it splits the plane into two-dimensional open cells which we refer to as \emph{regions}. Each region corresponds to a particular vote~$v$ and can be described as the intersection of $\binom{|C|}{2}$ open halfplanes of the form $\halfplane{\gamma}{a}{b}$. We denote the region for vote~$v$ by $\region{\gamma}{v}$. That is, $\region{\gamma}{v}=\bigcap_{a\succ_v b} \halfplane{\gamma}{a}{b}$. It is clear that if $v\in V$, then $\region{\gamma}{v}\neq \emptyset$ and $\gamma(v)\in\region{\gamma}{v}$. A region is \emph{inner} if it is bounded, otherwise it is \emph{outer}.
		
		With this terminology at hand, it is not hard to see that instead of embedding both the candidates and voters, we could only embed the candidates. It is then clear that an embedding $\gamma\colon C\to \mathbb{R}^2$ can be extended to a $2$-Euclidean embedding $\gamma \colon C\cup V \to \mathbb{R}^2$ if and only if for all $v\in V$ we have $\region{\gamma}{v}\neq \emptyset$. We use the same symbol $\gamma$ to refer to an \emph{embedding of the candidates}.
		
		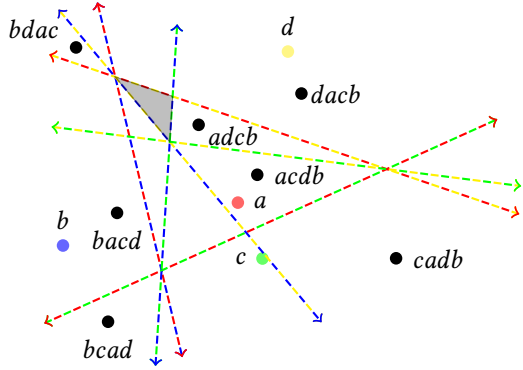
\begin{figure}
			\centering
			\begin{tikzpicture}[scale=0.3]
				\node[circle,scale=0.5,fill=red!60,label=right:$a$] (a) at (3.18,-0.78) {};
				\node[circle,scale=0.5,fill=blue!60,label=$b$] (b) at (-4.58,-2.66) {};
				\node[circle,scale=0.5,fill=green!60,label=left:$c$] (c) at (4.27,-3.23) {};
				\node[circle,scale=0.5,fill=yellow!60,label=$d$] (d) at (5.4,5.84) {};
				
				\path (a) -- (b) coordinate[midway] (Mab);
				\draw[name path=AB,postaction={draw,red,dash pattern= on 3pt off 5pt,dash phase=4pt,thick}]
				[blue,dash pattern= on 3pt off 5pt,thick,<->] ($(Mab)!6cm!270:(a)$) -- ($(Mab)!10cm!90:(a)$);
				
				\path (a) -- (c) coordinate[midway] (Mac);
				\draw[postaction={draw,red,dash pattern= on 3pt off 5pt,dash phase=4pt,thick}]
				[green,dash pattern= on 3pt off 5pt,thick,<->] ($(Mac)!12cm!270:(a)$) -- ($(Mac)!10cm!90:(a)$);
				
				\path (a) -- (d) coordinate[midway] (Mad);
				\draw[name path=AD,postaction={draw,red,dash pattern= on 3pt off 5pt,dash phase=4pt,thick}]
				[yellow,dash pattern= on 3pt off 5pt,thick,<->] ($(Mad)!10cm!270:(a)$) -- ($(Mad)!12cm!90:(a)$);
				\path (b) -- (c) coordinate[midway] (Mbc);
				\draw[name path=BC,postaction={draw,blue,dash pattern= on 3pt off 5pt,dash phase=4pt,thick}]
				[green,dash pattern= on 3pt off 5pt,thick,<->] ($(Mbc)!10cm!270:(b)$) -- ($(Mbc)!5cm!90:(b)$);
				\path (b) -- (d) coordinate[midway] (Mbd);
				\draw[name path=BD,postaction={draw,blue,dash pattern= on 3pt off 5pt,dash phase=4pt,thick}]
				[yellow,dash pattern= on 3pt off 5pt,thick,<->] ($(Mbd)!8cm!270:(b)$) -- ($(Mbd)!10cm!90:(b)$);
				\path (c) -- (d) coordinate[midway] (Mcd);
				\draw[postaction={draw,green,dash pattern= on 3pt off 5pt,dash phase=4pt,thick}]
				[yellow,dash pattern= on 3pt off 5pt,thick,<->] ($(Mcd)!10cm!270:(c)$) -- ($(Mcd)!11cm!90:(c)$);
				
				
				\node[fill,scale=0.5,circle] (bcad) at (-2.58,-6) {};
				\node[below,yshift=-3pt] at (bcad){$bcad$};
				\node[fill,scale=0.5,circle] (bacd) at (-2.18,-1.23) {};
				\node[below,yshift=-3pt] at (bacd){$bacd$};
				\node[fill,scale=0.5,circle] (bdac) at (-4,6.02) {};
				\node[above left,xshift=-3pt] at (bdac){$bdac$};
				\node[fill,scale=0.5,circle] (dacb) at (6,4) {};
				\node[right] at (dacb) {$dacb$};
				\node[fill,scale=0.5,circle] (cadb) at (10.2,-3.22) {};
				\node[right,xshift=3pt] at (cadb) {$cadb$};
				\node[fill,scale=0.5,circle] (acdb) at (4.04,0.44) {};
				\node[right,xshift=3pt] at (acdb) {$acdb$};
				\node[fill,scale=0.5,circle] (adcb) at (1.44,2.63) {};
				\node[below right,yshift=3pt] at (adcb) {$adcb$};
				
				\draw[name intersections={of=AB and AD}];
				\coordinate (x) at (intersection-1);
				\draw[name intersections={of=BC and AD}];
				\coordinate (y) at (intersection-1);
				\draw[name intersections={of=BC and BD}];
				\coordinate (z) at (intersection-1);

				\draw[fill=gray,opacity=0.5] (x) -- (y) -- (z) --cycle;
				
				%
				%

			\end{tikzpicture}
			\caption{A $2$-Euclidean embedding of the election $(C,V)$ with $C=\{a,b,c,d\}$ and $V=\{bdac,bacd,adcb,acdb,dacb,cadb,bcad\}$. The dashes lines are the bisectors between pairs of candidates. The shaded triangle is the region $\region{\gamma}{adbc}$. Note that for example the region $\region{\gamma}{cdba}$ is empty. }\label{fig:embedding_four_candidates}
		\end{figure}

	} 


\subsection{Nice embeddings}\label{subsec:technical_lemmata}
\toappendix
{
	\sv
	{
		\subsection{Missing material from section Nice embeddigs}
	}
}
To prove certain properties of embeddings, it is essential to have somewhat `well-behaved' embedding. We prove that we can always assume a \emph{nice} $2$-Euclidean embedding of a $2$-Euclidean election (\Cref{thm:non_degenerate_embedding}).

\toappendix
{
	\begin{lemma}\label{lem:technical_embedding_simplify_calculation}
		Let $X$ be a finite set and $\gamma\colon X \to \mathbb{R}^2$ an embedding. Let $x_1,x_2\in X$ and $p_1,p_2\in\mathbb{R}^2$ be such that $x_1\neq x_2$ and $p_1\neq p_2$. Then there is a transformation $T$ preserving relative distances such that $(T\circ \gamma)(x_1)=p_1$ and $(T\circ \gamma)(x_2)=p_2$. In particular for any four elements $y_1,y_2,y_3,y_4\in X$ we have $\frac{\ell_2(\gamma(y_1),\gamma(y_2))}{\ell_2(\gamma(y_3),\gamma(y_4))} = \frac{\ell_2((T\circ\gamma)(y_1),(T\circ\gamma)(y_2))}{\ell_2((T\circ\gamma)(y_3),(T\circ \gamma)(y_4))}$.
	\end{lemma}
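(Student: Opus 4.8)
The plan is to construct $T$ explicitly as a composition of a translation, a uniform scaling, a rotation, and a second translation, and then to observe that each of these maps rescales all pairwise distances by one common factor. Write $a=\gamma(x_1)$ and $b=\gamma(x_2)$; since $\gamma$ is injective and $x_1\neq x_2$ we have $a\neq b$, so $d:=\ell_2(a,b)>0$, and by hypothesis $d':=\ell_2(p_1,p_2)>0$. This positivity is exactly what makes the scaling factor and the aligning rotation below well defined.

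First I would translate by $-a$, sending $a$ to the origin and $b$ to the vector $b-a$ of length $d$. Next I would apply the uniform scaling with factor $k:=d'/d>0$, which fixes the origin and sends $b-a$ to a vector of length exactly $d'$. Since rotations around the origin act transitively on vectors of a fixed positive norm, there is an angle $\theta$ whose rotation carries this scaled vector onto $p_2-p_1$ (both now have norm $d'$); I apply this rotation, keeping the image of $a$ at the origin. Finally I translate by $p_1$. Composing the four maps yields $T$, and a direct check gives $T(a)=p_1$ and $T(b)=p_1+(p_2-p_1)=p_2$, that is $(T\circ\gamma)(x_1)=p_1$ and $(T\circ\gamma)(x_2)=p_2$.

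To see that $T$ preserves relative distances, note that translations and rotations are isometries and that uniform scaling with factor $k$ multiplies every pairwise distance by $k$; hence the composition satisfies $\ell_2(T(p),T(q))=k\,\ell_2(p,q)$ for all $p,q\in\mathbb{R}^2$. The ``in particular'' claim is then immediate, since the factor $k$ cancels in the ratio: for any $y_1,y_2,y_3,y_4\in X$ we have $\frac{\ell_2((T\circ\gamma)(y_1),(T\circ\gamma)(y_2))}{\ell_2((T\circ\gamma)(y_3),(T\circ\gamma)(y_4))}=\frac{k\,\ell_2(\gamma(y_1),\gamma(y_2))}{k\,\ell_2(\gamma(y_3),\gamma(y_4))}=\frac{\ell_2(\gamma(y_1),\gamma(y_2))}{\ell_2(\gamma(y_3),\gamma(y_4))}$.

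There is no substantial obstacle here; the statement is essentially the familiar fact that the similarity group acts transitively on ordered pairs of distinct points of the plane. The only points deserving a word of care are that $d,d'>0$ so that $k$ and $\theta$ are well defined, and the existence of the aligning rotation, which follows from transitivity of the rotation group on vectors of equal positive length. I would also remark that no reflection is required, since we are matching a single oriented segment to another oriented segment of the same length.
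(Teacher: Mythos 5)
Your proof is correct and follows essentially the same route as the paper's: both construct $T$ as a composition of a translation, a uniform scaling, a rotation, and a final translation, and then invoke the fact that similarity transformations scale all pairwise distances by a common factor, so ratios are preserved. (Your ordering — translate to the origin \emph{before} scaling — is in fact slightly cleaner than the paper's, which scales first and then translates by $-\gamma(x_1)$ rather than by the scaled image of $\gamma(x_1)$, but this is an inessential difference.)
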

	
	\begin{proof}
		The transformation $T$ will be given as a composition of four transformations $T_4\circ T_3 \circ T_2 \circ T_1$. $T_1$ is the uniform scaling with ratio $\frac{\ell_2(p_1,p_2)}{\ell_2(\gamma(x_1),\gamma(x_2))}$. Notice that $\ell_2((T_1\circ \gamma)(x_1),(T_1\circ \gamma)(x_2))=\ell_2(p_1,p_2)$. Next, $T_2$ is the translation given by the translation vector $t=-\gamma(x_1)$. $T_3$ is the rotation which rotates the vetor $\gamma(x_2)-\gamma(x_1)$ to the same direction as the vector $p_2-p_1$. The angle $\theta$ for $T_3$ is given by the angle that one needs to rotate $\gamma(x_2)-\gamma(x_1)$ in positive direction to obtain $p_2-p_1$. Finally, let $T_4$ be the translation given by the vector $t=p_1$. Notice that uniform scaling, translation and rotation are similarity transformations which implies that they preserve the ratios of any two straight line segments. In particular we have
		$\frac{\ell_2(q_1,q_2)}{\ell_2(q_3,q_4)}=\frac{\ell_2(Tq_1,Tq_2)}{\ell_2(Tq_3,Tq_4)}$ for any four points $q_1,q_2,q_3,q_4\in \mathbb{R}^2$ and in particular for any four points $q_1,q_2,q_3,q_4\in \gamma(X)$.
	\end{proof}
}
\toappendix
{
	For an embedding $f\colon X\to \mathbb{R}^2$, $x\in X$ and $p\in\mathbb{R}^2\setminus f(X\setminus \{x\})$ we denote by $f_{x\mapsto p}$ the embedding where we send $x$ to $p$ and everything else remains unchanged. Three elements $x_1,x_2,x_3\in X$ are referred to as \emph{collinear triplet with respect to $f$} if the points $f(x_1),f(x_2),f(x_3)$ lie on a common line. 
}

\toappendix{
	\begin{lemma}\label{lem:technical_lemma_no_three_voters_on_line}
		
		Let $(C,V)$ be a $2$-Euclidean election. There is a 2-Euclidean embedding $\gamma$ of $(C,V)$ such that for any three distinct voters $v_1,v_2,v_3\in V$ the points $\gamma(v_1),\gamma(v_2),\gamma(v_3)$ do not lie on a common line.
	\end{lemma}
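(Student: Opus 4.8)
The plan is to keep a \emph{fixed} embedding of the candidates and re-place only the voters, exploiting the fact that each voter has a full-dimensional set of admissible positions. Since $(C,V)$ is $2$-Euclidean, by the observation that an embedding $\gamma\colon C\to\mathbb{R}^2$ extends to a $2$-Euclidean embedding of $C\cup V$ exactly when every region is nonempty, I may fix an embedding $\gamma$ of the candidates with $\region{\gamma}{v}\neq\emptyset$ for every $v\in V$. The key point is that each $\region{\gamma}{v}=\bigcap_{a\succ_v b}\halfplane{\gamma}{a}{b}$ is a finite intersection of open half-planes, hence an \emph{open} set; being nonempty and open it contains an open ball, so it has positive two-dimensional Lebesgue measure. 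Thus there is a genuinely two-dimensional amount of freedom in where to put each voter, and any choice $p_v\in\region{\gamma}{v}$ for all $v$ already yields a $2$-Euclidean embedding.

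I would then place the voters one at a time, greedily breaking collinearity. Enumerate the voters as $v_1,\dots,v_n$. Having chosen $p_{v_1},\dots,p_{v_{i-1}}$, I place $v_i$ at a point of $\region{\gamma}{v_i}$ avoiding \textup{(i)} every line $\infline{p_{v_j}}{p_{v_k}}$ through a pair of already-placed voters with $j<k<i$, and \textup{(ii)} the finitely many candidate points $\gamma(a)$, $a\in C$. Distinctness among the voter images is automatic, since distinct permutations have disjoint regions, so condition \textup{(ii)} together with this disjointness makes $\gamma$ injective on $C\cup V$. Each forbidden line meets the open region $\region{\gamma}{v_i}$ in a set of zero area, and there are only finitely many lines and points to avoid; since $\region{\gamma}{v_i}$ contains an open disk of positive area, the union of these measure-zero sets cannot cover it, so an admissible $p_{v_i}$ exists.

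The one thing to state carefully is that a nonempty open set minus finitely many lines and points remains nonempty, which I would argue directly: $\region{\gamma}{v_i}$ contains some open disk $D$, each line meets $D$ in a chord of zero area, and finitely many such chords together with finitely many points cannot exhaust the positive area of $D$. A minor bookkeeping remark is that for $i\le 2$ there are no pairs to avoid, so the base case is vacuous, and the invariant ``no three of $p_{v_1},\dots,p_{v_i}$ are collinear'' is preserved at each step because any new collinear triple would force $p_{v_i}$ onto the line through two earlier voters, which condition \textup{(i)} forbids. After all $n$ voters are placed, $\gamma$ is a $2$-Euclidean embedding of $(C,V)$ in which no three voters lie on a common line, as required. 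I do not expect a serious obstacle here; the only subtlety is ensuring the regions are open (hence full-dimensional), which is exactly why the genericity argument goes through.
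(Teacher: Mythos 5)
Your proof is correct and rests on exactly the same core fact as the paper's: each region $\region{\gamma}{v}$ is a nonempty \emph{open} (hence positive-measure) set, while the finitely many lines through pairs of already-placed voters (and the finitely many candidate points) have measure zero, so an admissible placement always exists. The only difference is organizational --- the paper takes a $2$-Euclidean embedding minimizing the number of collinear voter triplets and perturbs a single voter within its region to contradict minimality, whereas you re-place all voters greedily by induction from a candidate-only embedding; your version also makes the injectivity bookkeeping (avoiding candidate points) explicit, a detail the paper's proof leaves implicit.
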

	
	\begin{proof}
		Let $\gamma$ be a $2$-Euclidean embedding of $(C,V)$ that minimizes the number of collinear triplets of voters with respect to $\gamma$. If $\gamma$ has no collinear triplet of voters, we are done. Suppose that there is a collinear triplet $v_1,v_2,v_3\in V$ w.r.t. $\gamma$. We show that we can move $\gamma(v_1)$ and create a new embedding $\gamma_{v_1\mapsto p}$ and has smaller number of collinear triplets of voters thus contradicting the choice of $\gamma$.
		
		Consider the region $\region{\gamma}{v_1}$. Since $\gamma$ is a $2$-Euclidean embedding of $(C,V)$ the region $\region{\gamma}{v_1}$ is nonempty and has nonzero (possibly infinite) measure. Consider all possible lines of the form $\infline{\gamma(u)}{\gamma(v)}$ for $u,v\in V\setminus \{v_1\}$. Since lines have measure zero and there is only finitely many such lines, it follows that the set $P=\region{\gamma}{v_1}\setminus \bigcup_{u,v\in V\setminus \{v_1\}}\infline{\gamma(u)}{\gamma(v)}$ has nonzero measure and is hence nonempty. Let $p\in P$ be arbitrary. It is clear that the embedding $\gamma_{v_1\mapsto p}$ is a valid $2$-Euclidean embedding of $(C,V)$ and there is strictly less collinear triplets with respect to $\gamma_{v_1\mapsto p}$ than to $\gamma$. This finishes the proof.
	\end{proof}
}

\toappendix
{
	\begin{lemma}\label{lem:technical_lemma_no_three_candidates_on_line}
		
		Let $C$ be a finite set of candidates and let $\gamma\colon C \to \mathbb{R}^2$ be an embedding. Then there exists an embedding of candidates $\gamma^*\colon C\to\mathbb{R}^2$ such that no three points $\gamma^*(a),\gamma^*(b),\gamma^*(c)$ for $a,b,c\in C$ are collinear and for any $v\in \sym{C}$ if $\region{\gamma}{v}\neq \emptyset$, then $\region{\gamma^*}{v}\neq \emptyset$.
	\end{lemma}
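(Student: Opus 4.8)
The plan is to avoid reasoning directly about how the regions $\region{\gamma}{v}$ deform as the candidate points move, since perturbing a candidate changes the entire bisector arrangement. Unlike the voter version (\Cref{lem:technical_lemma_no_three_voters_on_line}), where moving $\gamma(v_1)$ inside its own region trivially keeps the embedding valid, here I exploit that membership of a \emph{fixed} point in a region is governed purely by strict inequalities, which survive small perturbations of the candidates. Concretely, I first fix witnesses: the set $W=\{v\in\sym{C}\mid\region{\gamma}{v}\neq\emptyset\}$ is finite (at most $|C|!$ elements), and for each $v\in W$ I choose some $p_v\in\region{\gamma}{v}$. By definition of the region, each such $p_v$ satisfies the strict inequality $\ell_2(p_v,\gamma(a))<\ell_2(p_v,\gamma(b))$ for every pair with $a\succ_v b$.

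Next I would set up an openness argument. Identify an embedding of candidates with a point $q=(q_c)_{c\in C}\in(\mathbb{R}^2)^C\cong\mathbb{R}^{2|C|}$, so that $\gamma$ corresponds to $q^\circ=(\gamma(c))_{c\in C}$. For each fixed witness $p_v$ and each relevant pair $a\succ_v b$, the function $q\mapsto\ell_2(p_v,q_b)-\ell_2(p_v,q_a)$ is continuous, so the condition $\ell_2(p_v,q_a)<\ell_2(p_v,q_b)$ defines an open subset of $(\mathbb{R}^2)^C$. Intersecting these finitely many open conditions (over all $v\in W$ and all pairs $a\succ_v b$) yields an open set $U$ containing $q^\circ$. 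Hence there is $\varepsilon>0$ with the box $B:=\prod_{c\in C}\openballrc{\varepsilon}{\gamma(c)}\subseteq U$. The point of this step is that for \emph{any} embedding $\gamma^*$ whose tuple lies in $U$, each $p_v$ still witnesses $p_v\in\region{\gamma^*}{v}$, so every region nonempty under $\gamma$ stays nonempty under $\gamma^*$.

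Finally I would choose $\gamma^*$ inside $B$ to be generic. For any three distinct candidates $a,b,c$, the tuples making $q_a,q_b,q_c$ collinear form the zero set of the (not identically zero) polynomial
\[
\det\begin{pmatrix}1 & x(q_a) & y(q_a)\\ 1 & x(q_b) & y(q_b)\\ 1 & x(q_c) & y(q_c)\end{pmatrix},
\]
which has Lebesgue measure zero in $\mathbb{R}^{2|C|}$; likewise each coincidence set $\{q_a=q_b\}$ has measure zero. Taking the union $Z$ of these finitely many sets gives a measure-zero set, and since the open box $B$ has positive measure, $B\setminus Z\neq\emptyset$. I let $\gamma^*$ be any tuple in $B\setminus Z$. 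Avoiding the coincidence sets makes $\gamma^*$ injective (an embedding), avoiding the collinearity sets gives that no three of its points are collinear, and $\gamma^*\in B\subseteq U$ guarantees $p_v\in\region{\gamma^*}{v}$ for all $v\in W$, which is exactly the required implication. The main conceptual step is the reduction itself, namely fixing finitely many witness points so that region preservation becomes a finite family of open (strict) conditions; once that is in place, the only thing to verify is that the ``valid'' set is open (hence contains a box) while the ``degenerate'' set is measure zero, so both requirements can be met simultaneously, which is immediate.
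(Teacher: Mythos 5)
Your proof is correct, and it takes a genuinely different route from the paper's. The paper argues by an extremal/contradiction scheme: it considers the set $\Gamma$ of embeddings preserving the nonempty regions of $\gamma$, takes $\gamma^*\in\Gamma$ minimizing the number of collinear triplets, and, if a collinear triplet remains, moves a \emph{single} candidate inside a small ball --- chosen so that every nonempty region keeps a point and so that the moved candidate avoids all lines through pairs of the others --- to strictly decrease the count, contradicting minimality. You instead do a one-shot argument: fix a witness point $p_v$ in each nonempty region, observe that membership of these fixed points in their regions is a finite conjunction of strict inequalities, hence an open condition on the tuple of candidate positions in $\mathbb{R}^{2|C|}$, and then pick a generic tuple in a small box around $\gamma$ avoiding the (measure-zero) collinearity and coincidence varieties. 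Both proofs ultimately rest on the same engine --- strict inequalities at fixed witnesses survive small perturbations of the candidates, which is exactly what the paper's choice of $\varepsilon_v$ encodes --- but your architecture is cleaner: it needs no minimizer, perturbs all candidates simultaneously, makes injectivity of $\gamma^*$ explicit (the paper leaves this implicit), and extends immediately to further genericity requirements (e.g., one could fold the no-parallel-bisectors condition of \Cref{lem:technical_lemma_no_parallel_bisectors} into the same measure-zero union). What the paper's style buys in exchange is uniformity: the same move-one-point-and-improve template is reused verbatim for the voter lemma and the parallel-bisector lemma.
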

	\begin{proof}
		Let $\Gamma = \{\gamma'\colon C\to\mathbb{R}^2 \mid \region{\gamma}{v}\neq \emptyset \Rightarrow \region{\gamma'}{v}\neq \emptyset \}$ be the set of all embeddings that preserve nonempty regions of $\gamma$. Let $\gamma^*\in \Gamma$ be an embedding that minimizes the number of collinear triplets. If $\gamma^*$ has no collinear triplet of candidates, we are done. Suppose that $a,b,c\in C$ is a collinear triplet w.r.t. $\gamma^*$.
		Our aim is to move $\gamma^*(a)$ to some point $p\in \openballrc{\varepsilon}{\gamma^*(a)}$ and show that such embedding is also in $\Gamma$ and has smaller number of collinear triplets than $\gamma^*$, thus contradicting the choice of $\gamma^*$.
		
		Note that for any $\varepsilon>0$ the open ball $\openballrc{\varepsilon}{\gamma^*(a)}$ has nonzero measure and hence by the very same argument as the in the proof of \Cref{lem:technical_lemma_no_three_voters_on_line} the set $\region{\gamma}{v}\setminus \bigcup_{d,e\in C\setminus \{a\}}\infline{\gamma(d)}{\gamma(e)}$ is nonempty. Hence for any $\varepsilon>0$ there is a point $p\in \openballrc{\varepsilon}{\gamma^*(a)}$ such that the mapping $\gamma^*_{a\mapsto p}$ has strictly less collinear triplets of candidates.
		
		It suffices to argue that there exists some $\varepsilon^*>0$ such that for any $p\in \openballrc{\varepsilon^*}{\gamma^*(a)}$ we have $\gamma^*_{a\mapsto p}\in \Gamma$. In other words that there is such ball $\openballrc{\varepsilon^*}{\gamma^*(a)}$ such that any movement of $\gamma^*(a)$ inside $\openballrc{\varepsilon^*}{\gamma^*(a)}$ preserves all nonempty regions of $\gamma$.
		
		Consider the movement of the point $\gamma^*(a)$ and
		let $\region{\gamma^*}{v}$ be a nonempty region such that one of its bounding bisectors is of the form $\bisector{\gamma^*}{a}{x}$ for some candidate $x\in C\setminus \{a\}$. Note that if none of the bounding bisectors of $\region{\gamma^*}{v}$ is of this form then the polygon $\boundary{\region{\gamma^*}{v}}$ and the region $\region{\gamma^*}{v}$ remains unchanged under sufficiently small movement of $\gamma^*(a)$.
		
		Any movement of $\gamma^*(a)$ induces a transformation of $\bisector{\gamma^*}{a}{x}$ which can be composed of translation and rotation. As the region $\region{\gamma^*}{v}$ has nonzero (possibly infinite) measure and there is only finitely many of bounding bisectors there is some small $\varepsilon_v>0$ such that $\region{\gamma^*_{a\mapsto p}}{v}\neq \emptyset$ for any $p\in \openballrc{\varepsilon_v}{\gamma^*(a)}$.
		
		We let
		$\varepsilon^*=\min_{\region{\gamma^*}{v}\neq \emptyset} \varepsilon_v$. By construction for any $p\in \openballrc{\varepsilon^*}{\gamma^*(a)}$ we have $\gamma^*_{a\mapsto p}\in \Gamma$ and there is also some $p^*$ such that the embedding $\gamma^*_{a\mapsto p^*}\in \Gamma$ has less collinear triplets. This finishes the proof.
	\end{proof}
}

\toappendix
{
	\begin{lemma}\label{lem:technical_lemma_no_parallel_bisectors}
		Let $(C,V)$ be $2$-Euclidean election. Then there is $2$-Euclidean embedding $\gamma$ of $(C,V)$ such that no pair of bisectors of the form $\bisector{\gamma}{a}{b}$ and $\bisector{\gamma}{c}{d}$ are parallel.
	\end{lemma}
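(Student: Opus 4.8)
The plan is to reuse the minimization strategy from the proof of \Cref{lem:technical_lemma_no_three_candidates_on_line}, after first recasting the statement as a condition on candidate segments. The key geometric observation is that the bisector $\bisector{\gamma}{a}{b}$ is by definition perpendicular to the segment $\linesegment{\gamma(a)}{\gamma(b)}$; hence $\bisector{\gamma}{a}{b}$ and $\bisector{\gamma}{c}{d}$ are parallel if and only if the segments $\linesegment{\gamma(a)}{\gamma(b)}$ and $\linesegment{\gamma(c)}{\gamma(d)}$ are parallel. So it suffices to exhibit a candidate embedding $\gamma\colon C\to\mathbb{R}^2$ with all regions $\region{\gamma}{v}$ nonempty (which exists since $(C,V)$ is $2$-Euclidean) such that no two distinct candidate segments are parallel.

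First I would take, among all candidate embeddings preserving the nonempty regions of an initial $2$-Euclidean embedding, one that minimizes the number of unordered pairs of parallel segments. If this number is zero, we are done. Otherwise fix one parallel pair, carried by the segments on $\{a,b\}$ and $\{c,d\}$, and single out the candidate $a$. The crucial structural point is that relocating a single candidate $a$ alters only the bisectors $\bisector{\gamma}{a}{x}$ with $x\neq a$, so it can only affect parallel pairs that involve $a$; every other parallel pair is left untouched.

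Next I would invoke the region-preservation machinery established in \Cref{lem:technical_lemma_no_three_candidates_on_line}: there is an $\varepsilon^*>0$ such that moving $\gamma(a)$ to any point of $\openballrc{\varepsilon^*}{\gamma(a)}$ keeps every previously nonempty region nonempty. Inside this ball I would then avoid the finitely many forbidden positions for $\gamma(a)$. Concretely, the parallelism $\bisector{\gamma}{a}{x}\parallel\bisector{\gamma}{a}{x'}$ forces $\gamma(a)$ onto the line $\infline{\gamma(x)}{\gamma(x')}$, while $\bisector{\gamma}{a}{x}\parallel\bisector{\gamma}{y}{z}$ with $y,z\neq a$ forces $\gamma(a)$ onto the line through $\gamma(x)$ in the direction of $\linesegment{\gamma(y)}{\gamma(z)}$. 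There are only finitely many such lines, and finitely many other candidate images to avoid for injectivity, so their union has measure zero and $\openballrc{\varepsilon^*}{\gamma(a)}$ contains a point $p$ missing all of them. Relocating $a$ to $p$ then yields an embedding $\gamma_{a\mapsto p}$ that is still injective, preserves all nonempty regions, destroys every parallel pair through $a$ (in particular the fixed one), and—since only $a$-bisectors changed and $p$ lies off all forbidden lines—creates no new parallel pair. Thus $\gamma_{a\mapsto p}$ has strictly fewer parallel pairs than $\gamma$, contradicting minimality.

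The only delicate part is the bookkeeping in this last step: one must check that the chosen $p$ simultaneously lies in the region-preserving ball, removes every parallelism passing through $a$, and introduces none. Each of these amounts to excluding a measure-zero set from a set of positive measure, which is precisely the type of genericity argument already carried out for collinear triplets, so I do not expect any genuinely new obstacle beyond assembling these exclusions carefully.
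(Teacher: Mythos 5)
Your proof is correct and follows essentially the same strategy as the paper's: an extremal argument minimizing the number of parallel pairs over region-preserving embeddings, broken by perturbing a single candidate inside the region-preserving ball established in \Cref{lem:technical_lemma_no_three_candidates_on_line}. If anything, your write-up is more explicit than the paper's (which moves $a$ perpendicular to $\linesegment{\gamma(a)}{\gamma(b)}$ and appeals to ``similar arguments''), since you pin down the forbidden positions of $\gamma(a)$ as a finite union of lines, which also lets you skip the paper's preliminary step of first eliminating collinear triplets to ensure $\{a,b\}\cap\{c,d\}=\emptyset$.
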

	\begin{proof}
		We proceed similarly as in the proof of \Cref{lem:technical_lemma_no_three_candidates_on_line}. We refer to a pair of parallel bisectors as a \emph{parallel pair}.
		We show that any $2$-Euclidean embedding of $(C,V)$ can be adjusted such that it has less parallel pairs. Hence there is an embedding with no parallel pairs.
		
		Start with any embedding that has no collinear triplets of candidates and let $\gamma^*$ minimize the number of parallel pairs among such embeddings. Suppose there is a parallel pair $\bisector{\gamma^*}{a}{b},\bisector{\gamma^*}{c}{d}$. Note that $\{a,b\}\cap \{c,d\}=\emptyset$, because otherwise we have a collinear triplet of candidates. We will move the candidate $a$ in the perpendicular direction to the line segment $\linesegment{\gamma^*(a)}{\gamma^*(b)}$. This induces a rotation of the bisector $\bisector{\gamma^*}{a}{b}$ and by similar arguments as in \Cref{lem:technical_lemma_no_three_candidates_on_line} we can ensure that no triplets of candidates become collinear and no pair of bisectors becomes a parallel pair.
	\end{proof}
}
\begin{definition}
	An embedding of candidates $\gamma\colon C \to \mathbb{R}^2$ is \emph{nice} if there are no two parallel bisectors of the form $\bisector{\gamma}{a}{b},\bisector{\gamma}{c}{d}$ for some $a,b,c,d\in C$. A $2$-Euclidean embedding $\gamma$ of an election $(C,V)$ is \emph{nice} if $\gamma$ restricted to $C$ is nice and it has no collinear triplet of voters.
\end{definition}
Note that a nice embedding of candidates has no collinear triplet $a,b,c\in C$ as otherwise the two bisectors $\bisector{\gamma}{a}{b},\bisector{\gamma}{b}{c}$ are parallel.

\sv
{
	\begin{theorem}[$\star$]\label{thm:non_degenerate_embedding}
	}
	\lv
	{
		\begin{theorem}\label{thm:non_degenerate_embedding}
		}
		Any $2$-Euclidean election admits a nice $2$-Euclidean embedding.
	\end{theorem}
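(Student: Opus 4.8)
The plan is to obtain the nice embedding in two independent stages, exploiting the fact that the niceness condition on the candidate part depends only on the positions of the candidates, whereas the condition on voters only constrains their positions within the already fixed regions. Ordering the two stages correctly is the whole point.

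First I would apply \Cref{lem:technical_lemma_no_parallel_bisectors} to the given $2$-Euclidean election $(C,V)$ to obtain a $2$-Euclidean embedding $\gamma$ in which no pair of bisectors of the form $\bisector{\gamma}{a}{b},\bisector{\gamma}{c}{d}$ is parallel. Restricting $\gamma$ to $C$ then already yields a nice embedding of candidates; in particular, by the remark following the definition, there is no collinear triplet of candidates. The key point is that the arrangement $\arrangement{\gamma}$, and hence every region $\region{\gamma}{v}$, is determined solely by $\gamma$ on $C$, and that each region $\region{\gamma}{v}$ for $v\in V$ is nonempty because $\gamma$ is a valid $2$-Euclidean embedding.

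Second, I would keep the candidate positions fixed and only relocate the voters so that no three of them become collinear, exactly as in the proof of \Cref{lem:technical_lemma_no_three_voters_on_line}. Concretely, among all embeddings that agree with $\gamma$ on $C$ and place each voter inside its own (nonempty) region, pick one minimizing the number of collinear voter triplets; if such a triplet survived, one could move one of the three voters to a point of its positive-measure region avoiding the finitely many lines spanned by the other voters, strictly decreasing the count, a contradiction. Since this stage never moves a candidate, the bisectors and their (non-)parallelism are untouched, so the candidate part of the embedding stays nice.

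Finally I would observe that the two stages cannot interfere: niceness of $\gamma$ on $C$ is a statement purely about the candidate positions (equivalently about $\arrangement{\gamma}$), which the voter relocation leaves invariant, while the absence of collinear voter triplets is a statement purely about voter positions, which the first-stage candidate placement constrains only through the requirement that every region be nonempty. Hence the resulting embedding is simultaneously nice on $C$ and free of collinear voter triplets, i.e., a nice $2$-Euclidean embedding of $(C,V)$. The only (mild) obstacle is this sequencing argument — one must resist trying to enforce both conditions simultaneously and instead apply the voter-moving step last, on top of the already fixed nice candidate arrangement.
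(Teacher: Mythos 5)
Your proposal is correct and follows essentially the same route as the paper: first invoke \Cref{lem:technical_lemma_no_parallel_bisectors} to fix a candidate embedding with no parallel bisectors (hence no collinear candidate triplets), then run the voter-relocation argument of \Cref{lem:technical_lemma_no_three_voters_on_line} on top of it, noting that moving voters within their regions leaves the candidate arrangement untouched. Your explicit justification of why the two stages do not interfere is a slightly more detailed write-up of exactly the sequencing the paper uses.
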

	\toappendix
	{
		\sv
		{
			\begin{proof}[Proof of \Cref{thm:non_degenerate_embedding}]
			}
			\lv
			{
				\begin{proof}
				}
				Start with the embedding with no parallel bisectors and no collinear triplets of candidates provided by \Cref{lem:technical_lemma_no_parallel_bisectors}. Then apply the procedure described in the proof of \Cref{lem:technical_lemma_no_three_voters_on_line} to adjust the voters, but start with an embedding that already has no collinear triplets of candidates nor parallel bisectors. The resulting embedding satisfies all conditions promised in the theorem.
			\end{proof}
		}
		
		\subsection{Forbidden substructure on $3$ voters ($3$-$8$ pattern)}\label{subsec:3_8_pattern}
		\toappendix
		{
			\sv
			{
				\subsection{Missing material from section Forbidden substructure on $3$ voters ($3$-$8$ pattern)}
			}
		}
		It has been shown by Bulteau and Chen~\cite{BulteauC23} that any election with at most $2$ voters is euclidean for any number of candidates. They also show that for $3$ voters and at most $7$ candidates any election is $2$-Euclidean. This is in fact tight as there is an example of an election with $3$ voters and $8$ candidates that is not $2$-Euclidean. This counterexample, given by Bogomolnaïa and Laslier~\cite[Proposition 11]{BogomolnaiaL07}, in fact gives rise to class of counterexamples for any fixed $d\geq 2$ and they obey a certain pattern. For $d=2$ the pattern is as follows (see \Cref{fig:the38pattern}). The set of voters is $V=\{v_1,v_2,v_3\}$ and the candidate set corresponds to the set of $2^{|V|}=8$ subsets of $V$. That is $C=\{c_{\emptyset}, c_{1},c_{2},c_{3},c_{12},c_{13},c_{23},c_{123}\}$. The preferences of each voter is as follows. The candidate $c_{\emptyset}$ is the `central' candidate and is ranked fifth in each vote. For each index $i\in[3]$ we have $c_{I}\succ_{v_i} c_{\emptyset}$ if and only if $i\in I$ and the remaining preferences may be arbitrary. The instance looks as follows. The preferences of the voters inside the blocks may be arbitrary.
		\begin{figure}[ht]
			\begin{align*}
				\minibox{$v_1\colon$ \\ $v_2\colon$ \\ $v_3\colon$}
				\minibox[frame]{$c_{123}\succ c_{12}\succ c_{13}\succ c_1\succ$ \\
					$c_2 \succ c_{23}\succ c_{123}\succ c_{12}\succ$ \\ $ c_{13}\succ c_3\succ c_{123}\succ c_{23}\succ$}
				\minibox[frame]{$c_\emptyset$ \\ $c_\emptyset$ \\ $c_\emptyset$}
				\minibox[frame]{$\succ c_2\succ c_{23}\succ c_3$\\ $\succ c_{13}\succ c_1\succ c_3$ \\ $\succ c_1 \succ c_2 \succ c_{12}$}
			\end{align*}
			\caption{The $3$-$8$ pattern.}\label{fig:the38pattern}
		\end{figure}
		We shall refer to this pattern as the \emph{$3$-$8$ pattern}. 
		
		\sv
		{
			\begin{lemma}[$\star$]\label{lem:38patterrec}
			}
			\lv
			{
				\begin{lemma}\label{lem:38patterrec}
				}
				Given an election $(C,V)$, we can in $O(|V|^3|C|^2)$ time decide whether $(C,V)$ contains the $3$-$8$ pattern.
			\end{lemma}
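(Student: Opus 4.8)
The plan is to design a brute-force search with careful pruning so that the $O(|V|^3|C|^2)$ bound is met. The $3$-$8$ pattern is structurally rigid: it consists of exactly three voters and eight candidates, where the ``central'' candidate $c_\emptyset$ is ranked fifth by all three voters, and for each voter $v_i$ the four candidates ranked above $c_\emptyset$ are precisely those $c_I$ with $i \in I$. So first I would iterate over all ordered triples of distinct voters $(v_1, v_2, v_3) \in V^3$; there are $O(|V|^3)$ such triples. Since the $3$-$8$ pattern is defined up to a permutation of the three voter ``roles'', iterating over ordered triples automatically covers all possible assignments of the three voters to the three rows of the pattern.

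\textbf{Identifying the central candidate.} For each fixed ordered triple $(v_1, v_2, v_3)$, the key observation is that the central candidate $c_\emptyset$ is uniquely determined by a position constraint: it must be ranked exactly fifth by all three voters simultaneously. Hence I would scan for candidates $c \in C$ with $\pos{v_1}{c} = \pos{v_2}{c} = \pos{v_3}{c} = 5$; this takes $O(|C|)$ time per triple (assuming positions are precomputed once in $O(|V||C|)$ total time). If no such candidate exists, this triple cannot witness the pattern and I move on. Otherwise, having fixed a candidate $c$ as the candidate $c_\emptyset$, every other candidate $c'$ acquires a \emph{signature} $S(c') = \{\, i \in [3] \mid c' \succ_{v_i} c \,\} \subseteq \{1,2,3\}$, namely the set of voters ranking $c'$ above the center. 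Computing the signature of a single candidate costs $O(1)$ given the positions, so computing all signatures costs $O(|C|)$.

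\textbf{Matching signatures to the pattern.} The $3$-$8$ pattern requires that among the remaining seven candidates there is exactly one candidate realizing each of the seven nonempty-or-empty signatures other than that of $c_\emptyset$ itself — more precisely, the candidates $c_I$ for $I$ ranging over all eight subsets of $\{1,2,3\}$ must have signature exactly $I$, and $c_\emptyset$ (signature $\emptyset$) is our chosen center. Since there are exactly eight distinct subsets of $\{1,2,3\}$, I would bucket the candidates of $C$ by their signature and check whether all eight buckets are nonempty (where the empty-signature bucket must contain our fixed center $c$, which is automatic since $\pos{v_i}{c}=5$ means exactly four candidates are ranked above $c$ in each vote, consistent with the counting). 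Because the pattern only constrains the relative order of each $c_I$ versus $c_\emptyset$ and leaves the intra-block orders arbitrary, nonemptiness of all eight buckets is exactly the condition for the pattern to be contained, via choosing one representative candidate per signature. The bucketing and emptiness check run in $O(|C|)$ time per triple.

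\textbf{Total running time and the main obstacle.} Summing up, each ordered triple costs $O(|C|)$ after an $O(|V||C|)$ preprocessing of positions, giving a naive total of $O(|V|^3 |C|)$, which is within the claimed $O(|V|^3|C|^2)$ bound. \emph{The main obstacle} — and the reason the stated bound carries an extra factor of $|C|$ — is arguing correctness of the reduction from ``contains the $3$-$8$ pattern as a subelection'' to ``nonemptiness of all eight signature buckets for some choice of center''. One must verify that the arbitrary intra-block preferences in the pattern impose \emph{no} constraint beyond the eight above/below-center relations, so that the subelection embedding function $f$ exists if and only if the bucketing succeeds; this requires carefully checking that the eight candidate-roles are matched injectively and that no hidden ordering constraint (e.g.\ among candidates on the same side of $c_\emptyset$) is violated. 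I expect the extra $|C|$ factor in the stated bound arises either from a less optimized matching step or from recomputing signatures less cleverly, and I would confirm that even the slack version stays within budget.
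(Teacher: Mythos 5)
There is a genuine gap: your identification of the central candidate via the filter $\pos{v_1}{c}=\pos{v_2}{c}=\pos{v_3}{c}=5$ is incorrect. The $3$-$8$ pattern must be found as a \emph{subelection} of $(C,V)$, and subelection containment only constrains the relative order of the eight chosen candidates within each vote; it says nothing about absolute positions in the full votes. The center is ranked fifth \emph{among the eight chosen candidates}, not fifth in $(C,V)$. Concretely, take the $3$-$8$ pattern itself and add a ninth candidate $z$ ranked first by all three voters. The resulting election still contains the pattern (restrict to the original eight candidates), but $c_\emptyset$ now sits at position $6$ in every full vote, while position $5$ is occupied by $c_1$, $c_{12}$, and $c_{23}$ in $v_1,v_2,v_3$ respectively --- so no candidate passes your filter and your algorithm answers NO on a YES-instance. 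This also refutes your closing speculation that the extra factor of $|C|$ in the stated bound is mere slack: precisely because the center cannot be pinned down by positions, one must try \emph{every} candidate $c\in C$ in the role of $c_\emptyset$, paying $O(|C|)$ choices of center, each costing $O(|C|)$ for the signature computation, for each of the $O(|V|^3)$ voter triples. That is exactly what the paper's algorithm does.

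The remainder of your argument is sound and coincides with the paper's: for a fixed voter triple and a fixed center $c$, assign each other candidate $d$ the signature $\{i \in [3] \mid d \succ_{v_i} c\}$, and the pattern is contained (with this triple and this center) if and only if all seven nonzero signatures are realized --- correctness needs nothing more, since the pattern by definition leaves all preferences other than the above/below-center relations arbitrary, so one representative per nonzero signature together with $c$ yields the required subelection. Thus your proof becomes correct, with the paper's $O(|V|^3|C|^2)$ running time rather than your claimed $O(|V|^3|C|)$, once the position-$5$ filter is replaced by an exhaustive loop over all candidates as the center.
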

			\toappendix
			{
				\sv
				{
					\begin{proof}[Proof of \Cref{lem:38patterrec}]
					}
					\lv{
						\begin{proof}
						}
						Fix $\{v_1,v_2,v_3\}\subseteq V$ and fix one of the candidates in $C$ as $c_\emptyset$. For any other candidate $d\in C\setminus \{c_\emptyset\}$ compute the vector $x_d=(x_1,x_2,x_3)$ where $x_i=1$ if and only if $d\succ_{v_i} c_\emptyset$, otherwise $x_i=0$. Let $X=\{x_d\mid d\in C\setminus \{c_\emptyset\}\}$ be the set of all bit vectors. If $|X\setminus \{(0,0,0)\}|=7$, then the subelection $(C,\{v_1,v_2,v_3\})$ contains the forbidden $3$-$8$ pattern. To see this, note that $(x_1,x_2,x_3)\in X$ if and only if there is a candidate $c\in C\setminus \{c_\emptyset\}$ that corresponds to the candidate $c_I$ for $I=\{i\mid x_i = 1\}$. Hence if $X$ contains all the $7$ patterns corresponding to $c_1,c_2,c_3,c_{12},c_{13},c_{23},$ and $c_{123}$, the input election contains the $3$-$8$ pattern.
						
						The algorithm deciding whether $(C,V)$ contains the $3$-$8$ pattern is as follows. Try all $|C|$ candidates as $c_\emptyset$ and all triplets $\{v_1,v_2,v_3\}\subseteq V$ of voters and do the above procedure. It is not hard to see that if $(C,V)$ contains the $3$-$8$ pattern, then the above algorithm finds it.
						
						To finish the proof, note that for each choice of $c_\emptyset,v_1,v_2,v_3$ we can in time $O(|C|)$ construct the set $|X|$. For each $d\in C$ we only need to test $d\succ_{v_i} c_\emptyset$ and this is equivalent to $\pos{v_i}{d}<\pos{v_i}{c_\emptyset}$. Hence the entire algorithm runs in $O(|V|^3|C|^2)$ time, as promised.
					\end{proof}
				}
				
				\toappendix
				{
					\sv
					{
						\section{Missing Material from section Convex Hull}
					}
				}
				\section{Convex Hull}\label{sec:forbidden_substructures}
				We now introduce a new forbidden substructure which is based on the convex hull of the voters. Suppose that $(C,V)$ is $2$-Euclidean and let $\gamma$ be a nice $2$-Euclidean embedding of $(C,V)$. Consider the points $\gamma(V)$. Since there are finitely many voters, the convex hull $\conv{\gamma(V)}$ is a convex polygon. We describe it as a sequence of points $\gamma(V)\cap \boundary{\conv{\gamma(V)}}=\{p_1,p_2,\ldots,p_k\}$ given as the list of vertices of the polygon sorted in counterclockwise order. Note that the polygon is non-degenerate, that is, no two of its consecutive sides are parallel. This is because $\gamma$ has no collinear triplets of voters. We say that two points $p_i$ and $p_j$ on the boundary are \emph{consecutive} if either $|j-i|=1$ or $\{i,j\}=\{1,k\}$. For simplicity of notation we say that a voter $v\in V$ is \emph{on the convex hull} if $\gamma(v) = p_i$ for some $i\in[k]$ and two voters $v_1,v_2\in \gamma^{-1}(\{p_1,\ldots,p_k\})$ are \emph{consecutive on the convex hull} if the points $\gamma(v_1),\gamma(v_2)$ are consecutive. 
				
				\begin{definition}
					Let $V'\subseteq V$ be a set of voters and $a,b\in C$ two candidates. We say that $V'$ is \emph{controversial for $a$ over $b$} if and only if all voters in $V'$ prefer $a$ over $b$ while all voters in $V\setminus V'$ prefer $b$ over $a$. We say that $V'$ is \emph{controversial} if it is controversial for $a$ over $b$ for some $a,b\in C$. A voter $v$ is controversial if the singleton $\{v\}$ is controversial.
				\end{definition}
				
				We now show that if an election $(C,V)$ is $2$-Euclidean and $\gamma$ is nice $2$-Euclidean embedding of $(C,V)$, there is a connection between the combinatorial property 'being controversial' and the geometrical notion of the convex hull of the voters. In particular, we show that controversial voters necessarily lie on the convex hull (\Cref{lem:controversial_lies_on_convex_hull}) and if $\{u,v\}\subseteq V$ is a controversial set of voters and both $u$ and $v$ are controversial then $u,v$ are consecutive on the convex hull (\Cref{lem:two_controversial_consecutive_convex_hull}).
				
				\sv
				{
					\begin{lemma}[$\star$]\label{lem:controversial_lies_on_convex_hull}
					}
					\lv
					{
						\begin{lemma}\label{lem:controversial_lies_on_convex_hull}
						}
						Let $(C,V)$ be $2$-Euclidean election and $\gamma$ be a nice $2$-Euclidean embedding of $(C,V)$. If $v\in V$ is controversial, then $v$ is on the convex hull.
					\end{lemma}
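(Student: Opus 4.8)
The plan is to extract a strict separating line directly from the controversy condition and then invoke the elementary fact that a point strictly separated from the rest of a finite planar set is a vertex of the convex hull of the whole set. First I would unpack the definition: since $v$ is controversial, there are candidates $a,b\in C$ with $a\succ_v b$ while $b\succ_u a$ for every $u\in V\setminus\{v\}$. Translating each preference through the embedding $\gamma$ via the semantics of the bisector, we obtain $\gamma(v)\in\halfplane{\gamma}{a}{b}$ and $\gamma(u)\in\halfplane{\gamma}{b}{a}$ for all $u\neq v$. Thus the single line $\bisector{\gamma}{a}{b}$ has $\gamma(v)$ strictly on one side and every remaining voter point strictly on the other.

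Next I would convert this separation into a statement about the convex hull. The open half-plane $\halfplane{\gamma}{b}{a}$ is convex and contains every point of $\gamma(V\setminus\{v\})$, so it contains their convex hull $\conv{\gamma(V\setminus\{v\})}$; since $\gamma(v)$ lies in the disjoint open half-plane $\halfplane{\gamma}{a}{b}$, we conclude $\gamma(v)\notin\conv{\gamma(V\setminus\{v\})}$. Using injectivity of $\gamma$, this reads $\gamma(v)\notin\conv{\gamma(V)\setminus\{\gamma(v)\}}$, which is precisely the characterization of $\gamma(v)$ as an extreme point of $\conv{\gamma(V)}$, i.e. a vertex of the polygon. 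Hence $\gamma(v)=p_i$ for some $i$ and $v$ is on the convex hull.

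There is essentially no hard obstacle here; the only genuine ingredient is the standard fact that a point of a finite planar set not contained in the convex hull of the others is a vertex of the convex hull of the whole set, which itself follows from the separating-line description of extreme points. I would remark that niceness of $\gamma$ is not actually needed for this particular lemma, since strict separation already prevents $\gamma(v)$ from lying in the relative interior of a hull edge; it merely guarantees, consistently with the paper's definition of \emph{on the convex hull} through the vertex list $p_1,\dots,p_k$, that every boundary voter is a genuine vertex rather than an interior point of a side.
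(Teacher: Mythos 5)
Your proof is correct and takes essentially the same route as the paper's: both rest on the observation that controversiality makes the bisector $\bisector{\gamma}{a}{b}$ strictly separate $\gamma(v)$ from all other voter points, and both then invoke convexity of the open half-plane $\halfplane{\gamma}{b}{a}$. The paper packages this as a contradiction (a non-boundary $\gamma(v)$ would force some hull vertex, i.e., another voter, into $\halfplane{\gamma}{a}{b}$), whereas you argue directly that $\gamma(v)\notin\conv{\gamma(V)\setminus\{\gamma(v)\}}$ and is therefore a vertex; this is a cosmetic difference, and your side remark that niceness is not needed for this particular lemma is also accurate.
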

					\toappendix
					{
						\sv
						{
							\begin{proof}[Proof of \Cref{lem:controversial_lies_on_convex_hull}]
							}
							\lv
							{
								\begin{proof}
								}
								Let $P=\conv{\gamma(V)}$ be the convex polygon corresponding to the convex hull of the voters. Let $a,b\in C$ be two candidates such that $v$ is controversial for $a$ over $b$. Note that this assumption implies that $v$ is the only voter with $\gamma(v)\in \halfplane{\gamma}{a}{b}$. Suppose that $\gamma(v)$ is not on the boundary of $P$. We argue that this implies that there is a vertex $p$ of $P$ that also lies in $\halfplane{\gamma}{a}{b}$. To see this note that if all vertices of $P$ were in $\halfplane{\gamma}{b}{a}$, then the entire polygon $P$ is in $\halfplane{\gamma}{b}{a}$, which contradicts the fact that $\gamma(v)\in \halfplane{\gamma}{a}{b}\cap P$. Note that no voter can be embedded to the bisector $\bisector{\gamma}{a}{b}$. The vertex $p$ corresponds to a voter $\gamma^{-1}(p)$ that also prefers $a$ over $b$ and is distinct from~$v$, thus contradicting the assumption that $v$ was the only one who prefers $a$ over $b$.
							\end{proof}
						}
						\sv
						{
							\begin{lemma}[$\star$]\label{lem:two_controversial_consecutive_convex_hull}
							}
							\lv
							{
								\begin{lemma}\label{lem:two_controversial_consecutive_convex_hull}
								}
								Let $(C,V)$ be $2$-Euclidean election and $\gamma$ be a nice $2$-Euclidean embedding of $(C,V)$. Let $u,v\in V$ be two voters such that the three sets $\{u,v\},\{u\},\{v\}$ are controversial. Then $u$ and $v$ are consecutive on the convex hull.
							\end{lemma}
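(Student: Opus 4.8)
The plan is to exploit the line that separates the controversial pair $\{u,v\}$ from the remaining voters, combined with \Cref{lem:controversial_lies_on_convex_hull}. Since $\{u\}$ and $\{v\}$ are individually controversial, \Cref{lem:controversial_lies_on_convex_hull} already guarantees that $\gamma(u)$ and $\gamma(v)$ are vertices of the convex polygon $P=\conv{\gamma(V)}$. Thus the only thing left to establish is that no other vertex of $P$ lies between them along the boundary on one of the two sides.

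For this I would invoke the hypothesis that $\{u,v\}$ is controversial, say for $a$ over $b$. By definition both $u$ and $v$ prefer $a$ over $b$ while every other voter prefers $b$ over $a$, which in the embedding means $\gamma(u),\gamma(v)\in\halfplane{\gamma}{a}{b}$ and $\gamma(w)\in\halfplane{\gamma}{b}{a}$ for every $w\in V\setminus\{u,v\}$. Hence the bisector $\bisector{\gamma}{a}{b}$ strictly separates $\{\gamma(u),\gamma(v)\}$ from all other voter images. Since no voter may be embedded onto a bisector (votes are strict), the line $\bisector{\gamma}{a}{b}$ avoids every vertex of $P$, and because $P$ has points on both sides of it, it meets $\boundary{P}$ in exactly two interior-of-edge points, cutting the boundary cycle into an ``$a$-side'' arc and a ``$b$-side'' arc whose interiors lie in $\halfplane{\gamma}{a}{b}$ and $\halfplane{\gamma}{b}{a}$ respectively.

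The key observation is then that the only vertices of $P$ lying in the open half-plane $\halfplane{\gamma}{a}{b}$ are $\gamma(u)$ and $\gamma(v)$, since every other vertex of $P$ is the image $\gamma(w)$ of some voter $w\neq u,v$ and hence lies in $\halfplane{\gamma}{b}{a}$. Consequently the $a$-side arc of $\boundary{P}$ contains $\gamma(u)$ and $\gamma(v)$ and no other vertex: walking along it we pass from the first crossing point directly to $\gamma(u)$, then to $\gamma(v)$ (or the reverse), and then to the second crossing point. This means $\gamma(u)$ and $\gamma(v)$ are adjacent in the cyclic vertex order of $P$, i.e.\ $u$ and $v$ are consecutive on the convex hull.

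The step I expect to be the main obstacle is making the arc argument fully rigorous: one must justify that a line with interior points of a convex polygon on both sides crosses $\boundary{P}$ in exactly two points and splits it into two arcs sitting in the respective open half-planes, and that ``no other vertex on one arc'' is precisely the meaning of being consecutive. These are standard facts about convex polygons, but they rely on $P$ being non-degenerate (guaranteed since $\gamma$ is nice, so there are no collinear triplets of voters) and on $\bisector{\gamma}{a}{b}$ missing every vertex (guaranteed since no voter lies on a bisector).
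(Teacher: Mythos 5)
Your proof is correct, and it shares the paper's starting point --- \Cref{lem:controversial_lies_on_convex_hull} plus the observation that $\bisector{\gamma}{a}{b}$ strictly separates $\{\gamma(u),\gamma(v)\}$ from all other voter images --- but the concluding step is genuinely different. The paper argues by contradiction: assuming $u$ and $v$ are not consecutive, it takes the two hull-neighbors $v_1,v_2$ of $v$ (which then lie in $\halfplane{\gamma}{b}{a}$), intersects the edges $\linesegment{\gamma(v)}{\gamma(v_1)}$ and $\linesegment{\gamma(v)}{\gamma(v_2)}$ with the bisector, and uses convexity (containment of $P$ in the angle at $\gamma(v)$) to force $\gamma(u)$ onto one of those edges, contradicting consecutiveness of $v_1,v$ or $v_2,v$. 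You instead argue directly: since the bisector misses every vertex and has vertices of $P$ strictly on both sides, it crosses $\boundary{P}$ in exactly two edge-interior points and splits the boundary into two arcs, and the $a$-side arc contains exactly the two vertices $\gamma(u),\gamma(v)$, which forces them to be adjacent in the cyclic vertex order. Your route needs the global fact that a line through interior points of a convex polygon meets its boundary in exactly two points (standard, and you correctly flag it as the step requiring care, including why the line cannot contain an edge --- it avoids all vertices); the paper's route needs only the local angle-containment fact at one vertex but pays for it with a case split and an indirect argument. One small omission common to both: the degenerate case $|V|=2$ (or more precisely, the case where no voter outside $\{u,v\}$ is a hull vertex) should be dispatched trivially before asserting that $P$ has vertices on both sides of the bisector; with $|V|\geq 3$ and a nice embedding this is automatic, since $P$ is a non-degenerate polygon whose remaining vertices all lie in $\halfplane{\gamma}{b}{a}$.
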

							\toappendix
							{
								\sv
								{
									\begin{proof}[Proof of \Cref{lem:two_controversial_consecutive_convex_hull}]
									}
									\lv
									{
										\begin{proof}
										}
										Let $P=\conv{\gamma(V)}$ be the convex polygon corresponding to the convex hull of the voters. By assumption $u$ and $v$ are controversial. By \Cref{lem:controversial_lies_on_convex_hull} both $u$ and $v$ are on the boundary of $P$. We just have to show that $u$ and $v$ are consecutive. Suppose, for the sake of contradiction that they aren't and let $v_1,v_2\in V$ be two voters such that $v_1,v$ and $v_2,v$ are consecutive. Let $a,b\in C$ be two candidates such that $\{u,v\}$ is controversial for $a$ over $b$. This implies that $u$ and $v$ are the only voters with $\gamma(u),\gamma(v)\in \halfplane{\gamma}{a}{b}$. Hence $\gamma(v_1),\gamma(v_2)\in \halfplane{\gamma}{b}{a}$. We refer the reader to~\Cref{fig:pf_2_controversial_adjacent} for an illustration of the proof.
										
										Let $x_1$ and $x_2$ be the points of intersection of the line segments $\overline{\gamma(v)\gamma(v_1)}$ and $\overline{\gamma(v)\gamma(v_2)}$ with the bisector $\bisector{\gamma}{a}{b}$. Note that $x_1\neq x_2$ since we can assume that the three points $\gamma(v),\gamma(v_1),\gamma(v_2)$ do not lie on a common line.
										
										Let $T=\triangle \gamma(v) x_1 x_2$ and let $\alpha$ be the interior angle of $T$ at the vertex $\gamma(v)$. By convexity of $P$ we have $P\subseteq \alpha$. Since $u$ is on the boundary of $P$ and also in the halfplane $\halfplane{\gamma}{a}{b}$, this implies that $\gamma(u)$ lies on one of the line segments $\linesegment{\gamma(v)}{x_1},\linesegment{\gamma(v)}{x_2}$. But this contradicts the fact that $v_1,v$ and $v_2,v$ are consecutive. This finishes the proof.
									\end{proof}
									
									\begin{figure}[ht]
										\center
										\begin{tikzpicture}
											
											\coordinate (gv) at (0,4);
											\fill (gv) circle (2pt);
											\node[above left] at (gv){$\gamma(v)$};
											
											\coordinate (gv1) at (4,4);
											\fill (gv1) circle (2pt);
											\node[below right] at (gv1){$\gamma(v_1)$};	
											
											\coordinate (gv2) at (3,1);
											\fill (gv2) circle (2pt);
											\node[below] at (gv2){$\gamma(v_2)$};
											
											\coordinate (a) at (3.5714285714286,6);
											\coordinate (b) at (1,0);
											\draw[<->] (a) -- (b);
											\node[left] at (a){$\bisector{\gamma}{a}{b}$};
											
											\draw(gv) -- (gv1);
											\draw(gv) -- (gv2);

											\coordinate (x1) at (2.7142857142,4);
											\coordinate (x2) at (1.9,2.1);

											\fill (x1) circle (2pt);
											\node[below right] at (x1){$x_1$};	
											\fill (x2) circle (2pt);
											\node[right] at (x2){$x_2$};
											
											\draw[fill=black!20] (gv) -- ($(gv)!8mm!(x1)$) to [bend left] ($(gv)!8mm!(x2)$) -- cycle;
											\node[xshift=15,yshift=-5] at (gv){$\alpha$};
											
										\end{tikzpicture}
										\caption{The situation in the proof of \Cref{lem:two_controversial_consecutive_convex_hull}.}\label{fig:pf_2_controversial_adjacent}
									\end{figure}
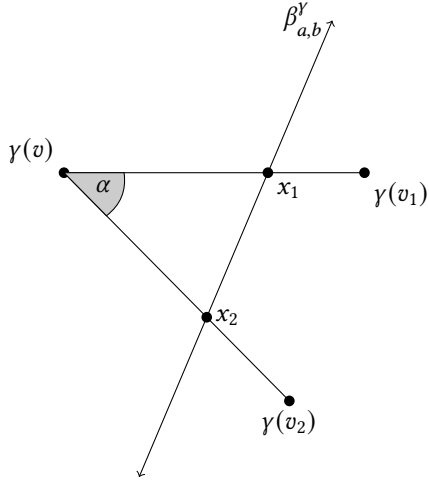
								}
								
								We can now use the information about the convex hull of the voters. Using \Cref{lem:controversial_lies_on_convex_hull,lem:two_controversial_consecutive_convex_hull} we can determine which voters must inevitably lie on the convex hull and which pairs of voters must inevitably be consecutive on the convex hull. Towards this, we define the \emph{controversity graph}, denoted by $\controversity{C}{V}$, for $(C,V)$ as follows. The vertex set $V(\controversity{C}{V})$ consists of controversial voters and there is an edge $\{u,v\}\in E(\controversity{C}{V})$ between two controversial voters if and only if the set $\{u,v\}\subseteq V$ is controversial. Rephrasing in terms of the convex hull and \Cref{lem:controversial_lies_on_convex_hull,lem:two_controversial_consecutive_convex_hull}, the vertices of $\controversity{C}{V}$ are the voters that are necessarily on the convex hull and the edges correspond to consecutiveness on the convex hull for a nice $2$-Euclidean embedding $\gamma$ of $(C,V)$ (if it exists). This allows us to give a simple graph-theoretical characterization of $\controversity{C}{V}$.
								
								\begin{theorem}\label{thm:controversity_graph_property}
									Let $(C,V)$ be $2$-Euclidean election and let $G=\controversity{C}{V}$ be the controversity graph for $(C,V)$. Then $\Delta(G)\leq 2$ and if $G$ contains a cycle, then it is connected.
								\end{theorem}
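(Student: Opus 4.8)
The plan is to fix a nice embedding and transport the whole statement into a question about the boundary cycle of the convex hull of the voters. First I would invoke \Cref{thm:non_degenerate_embedding} to obtain a nice $2$-Euclidean embedding $\gamma$ of $(C,V)$ and consider the convex polygon $P=\conv{\gamma(V)}$, with boundary vertices $p_1,\dots,p_k$ listed counterclockwise. Since $\gamma$ has no collinear triplet of voters, once $|V|\ge 3$ the hull is a genuine polygon with $k\ge 3$ and every voter lying on the hull is an actual vertex $p_i$; the cases $|V|\le 2$ give at most two vertices and no cycle, so the statement is trivial there and I assume $|V|\ge 3$. Let $Z$ denote the cycle graph on $\{p_1,\dots,p_k\}$ whose edges join consecutive vertices.

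The core step is to realize $G=\controversity{C}{V}$, via $\gamma$, as a subgraph of $Z$. By \Cref{lem:controversial_lies_on_convex_hull} every vertex of $G$ (a controversial voter) is mapped by $\gamma$ to some $p_i$, so $\gamma$ restricts to an injection from $V(G)$ onto a set $S\subseteq\{p_1,\dots,p_k\}$. For an edge $\{u,v\}\in E(G)$ both endpoints are vertices of $G$, hence $\{u\}$ and $\{v\}$ are controversial; since $\{u,v\}$ is controversial as well, \Cref{lem:two_controversial_consecutive_convex_hull} applies and forces $\gamma(u),\gamma(v)$ to be consecutive on $P$. Thus every edge of $G$ becomes an edge of $Z$ with both endpoints in $S$, i.e., $G$ is isomorphic to a subgraph of the subgraph of $Z$ induced on $S$. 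The bound $\Delta(G)\le 2$ is now immediate, because in $Z$ each vertex $p_i$ is consecutive to exactly two others, namely $p_{i-1}$ and $p_{i+1}$ (indices modulo $k$), so each controversial voter has at most two neighbors in $G$.

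For the second claim I would use the elementary fact that deleting at least one vertex from a cycle leaves a disjoint union of paths, which is acyclic. Consequently the subgraph of $Z$ induced on $S$ is acyclic whenever $S\subsetneq\{p_1,\dots,p_k\}$, and so is $G$ in that case. Hence if $G$ contains a cycle we must have $S=\{p_1,\dots,p_k\}$, so $G$ is a subgraph of the full cycle $Z$. Because the only cycle contained in $Z$ is $Z$ itself, $G$ must then contain all $k$ edges of $Z$; together with $G\subseteq Z$ this yields $G=Z$, which is connected.

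The geometric difficulty is entirely absorbed by the two lemmas on the convex hull, so the main obstacle is conceptual rather than computational: recognizing that both conclusions reduce to the single combinatorial observation that a graph embedded as a subgraph of one cycle graph has maximum degree at most two and, if it contains a cycle at all, must be that whole cycle. The only points requiring care are the degenerate situations---fewer than three voters or collinear voters---which are ruled out by the niceness of $\gamma$ and the trivial small cases.
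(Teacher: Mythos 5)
Your proposal is correct and takes essentially the same route as the paper: both fix a nice embedding via \Cref{thm:non_degenerate_embedding}, use \Cref{lem:controversial_lies_on_convex_hull} to place the vertices of $G$ on the hull and \Cref{lem:two_controversial_consecutive_convex_hull} to turn edges of $G$ into consecutive hull pairs, and then conclude by the combinatorics of the hull boundary. Your cycle-graph abstraction $Z$ is just a more explicit formalization of the paper's terser claims that a vertex has only two hull-neighbors and that a cycle in $G$ ``uniquely determines the convex hull,'' so it is a welcome tightening rather than a different argument.
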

								\begin{proof}
									Since $(C,V)$ is $2$-Euclidean, by \Cref{thm:non_degenerate_embedding} there is a nice $2$-Euclidean embedding of $(C,V)$. Suppose that there is a vertex $v\in V(G)$ of degree at least $3$ and let $v_1,v_2,v_3$ be three distinct neighbors of $v$ in $G$. Note that this implies that the pairs $(v,v_1),(v,v_2),(v,v_3)$ are consecutive on the convex hull. However, this is impossible. Hence $\Delta(G)\leq 2$.
									
									For the second property suppose that $G$ contains a cycle $Y$. Then this cycle uniquely determines the convex hull $\conv{\gamma(V)}$ and any other voter cannot be on the boundary. Hence if $G$ was disconnected and $v$ was a voter not on $Y$, then $v$ is, by \Cref{lem:controversial_lies_on_convex_hull}, on the convex hull. Contradiction.
								\end{proof}
								
								\Cref{thm:controversity_graph_property} gives us a refutation procedure for $2$-Euclideaness of an election. Given $(C,V)$, construct the controversity graph $G=\controversity{C}{V}$. If $G$ does not satisfy the properties of \Cref{thm:controversity_graph_property}, then $(C,V)$ is not $2$-Euclidean.
								
								\begin{remark}\label{remark:cg}
									The structure of the controversity graph is `rich' for smaller number of voters. By adding additional voters some previously controversial voters might no longer be controversial, thus disallowing the refutation using \Cref{thm:controversity_graph_property}. More precisely, if $(C,V)$ is the input instance, we may not be able to directly refute $(C,V)$ using the controversity graph $\controversity{C}{V}$. However, there might be a subelection $(C,V')$ for $V'\subseteq V$ which we can refute and then use the fact that being $2$-Euclidean is a hereditary property and refute the original instance $(C,V)$. See \Cref{example:refuting_using_controversity_graph,example:refuting_using_controversity_graph_2}.
									While restricting the set of voters might help with refutation using the controversity graph, restricting the set of candidates will not help. It is not hard to see that if $(C,V)$ is an election and $(C',V')$ is the subelection induced by $C'\subseteq C$, then $\controversity{C'}{V'}$ is a subgraph of $\controversity{C}{V}$. Moreover, it also doesn't help to consider controversity graphs for subelections on at most three voters as any such graph will always satisfy the properties given in \Cref{thm:controversity_graph_property}.
								\end{remark}
								
								\toappendix
								{
									We \lv{now} present examples where the controversity graph allows us to easily show that given election is not $2$-Euclidean.
									
									\begin{example}\label{example:refuting_using_controversity_graph}
										Consider the election with seven candidates $C=\{a,b,c,d,e,f,g\}$ and four voters $V=\{v_1,v_2,v_3,v_4\}$ where the preferences are:
										
										\begin{minipage}{0.45\textwidth}
											\begin{align*}
												v_1&: d\succ g\succ c\succ f\succ a\succ e\succ b \\
												v_2&: g\succ c\succ b\succ a\succ e\succ d\succ f \\
												v_3&: c\succ b\succ a\succ d\succ f\succ g\succ e \\
												v_4&: d\succ b\succ a\succ e\succ g\succ c\succ f \\
											\end{align*}
										\end{minipage}
										\hfill
										\begin{minipage}{0.45\textwidth}
											\begin{tikzpicture}
												\coordinate (v1) at (0,0);
												\fill (v1) circle (4pt);
												\node[left,xshift=-5] at (v1){$v_1:a \sqsupset b$};

												\coordinate (v2) at (2,0);
												\fill (v2) circle (4pt);
												\node[right,xshift=5] at (v2){$v_2: g\sqsupset d$};

												\coordinate (v3) at (0,2);
												\fill (v3) circle (4pt);
												\node[above, yshift=5] at (v3){$v_3: c\sqsupset g$};

												\coordinate (v4) at (2,-2);
												\fill (v4) circle (4pt);
												\node[below,yshift=-5] at (v4){$v_4: a\sqsupset c$};
												
												\draw (v1) -- (v2) node[midway,above] {$g\sqsupset a$};
												\draw (v1) -- (v4) node[sloped,midway,below] {$d\sqsupset a$};
												\draw (v2) -- (v4) node[sloped,midway,above] {$e\sqsupset f$};
												\draw (v3) -- (v1) node[sloped,midway,below] {$f\sqsupset e$};
												\draw (v3) -- (v2) node[sloped,midway,above] {$a\sqsupset d$};
											\end{tikzpicture}
										\end{minipage}
										The controversity graph $G=\controversity{C}{V}$ is depicted on the right. The notation $v_i:x\sqsupset y$ represents the fact that $v_i$ is controversial for $x$ over $y$. Note that $x$ and $y$ might not be unique. For example $v_1$ is also controversial for $f$ over $a$. The label $x\sqsupset y$ on an edge from $v_i$ to $v_j$ corresponds to the fact that $\{v_i,v_j\}\subseteq V$ is controversial for $x$ over $y$. We can see that $\deg_G v_1 = 3$ and hence by \Cref{thm:controversity_graph_property}, $(C,V)$ is not $2$-Euclidean. \qed
									\end{example}
									
									\begin{example}\label{example:refuting_using_controversity_graph_2}
										Consider extending the election from \Cref{example:refuting_using_controversity_graph} by three voters $v_5,v_6,v_7$:
										
										\begin{minipage}{0.45\textwidth}
											\begin{align*}
												v_1&: d\succ g\succ c\succ f\succ a\succ e\succ b \\
												v_2&: g\succ c\succ b\succ a\succ e\succ d\succ f \\
												v_3&: c\succ b\succ a\succ d\succ f\succ g\succ e \\
												v_4&: d\succ b\succ a\succ e\succ g\succ c\succ f \\
												v_5&: d\succ c\succ b\succ g\succ f\succ e\succ a \\
												v_6&: c\succ d\succ b\succ a\succ e\succ g\succ f \\
												v_7&: d\succ g\succ c\succ a\succ b\succ e\succ f \\
											\end{align*}
										\end{minipage}
										\hfill
										\begin{minipage}{0.45\textwidth}
											\begin{tikzpicture}
												\coordinate (v1) at (0,0);
												\fill (v1) circle (4pt);
												\node[above,yshift=5] at (v1){$v_1:b\sqsupset e$};
												
												\coordinate (v2) at (2,0);
												\fill (v2) circle (4pt);
												\node[above,yshift=5] at (v2){$v_2: e\sqsupset d$};
												
												\coordinate (v4) at (4,0);
												\fill (v4) circle (4pt);
												\node[above,yshift=5] at (v4){$v_4: a\sqsupset c$};
												
												\coordinate (v5) at (0,-2);
												\fill (v5) circle (4pt);
												\node[below,yshift=-5] at (v5){$v_5: e \sqsupset a$};
												
												\coordinate (v3) at (2,-2);
												\fill (v3) circle (4pt);
												\node[below,yshift=-5] at (v3){$v_3: f\sqsupset g$};
												
												\draw (v1) -- (v5) node[sloped,midway,above] {$f\sqsupset a$};
												\draw (v2) -- (v3) node[sloped,midway,above] {$a\sqsupset d$};
											\end{tikzpicture}
										\end{minipage}
										Here the controversity graph has only $5$ vertices and two edges (depicted on the right). The labels on the vertices and edges have the same meaning as in \Cref{example:refuting_using_controversity_graph}. \Cref{thm:controversity_graph_property} cannot be directly applied to refute the instance $(C,V)$. However, we already know that $(C,V)$ is not $2$-Euclidean.
										\qed
									\end{example}
								}
								
								\toappendix
								{
									\sv
									{
										\section{Missing Material from section Reducing the number of candidates}
									}
								}
								\section{Reducing the number of candidates}\label{sec:reducing_the_number_of_candidates}
								
								In this section, we describe how to reduce the number of candidates in an instance while preserving $2$-Euclideanness. We utilize the framework of \emph{reduction rules}. Formally a reduction rule is an algorithm that, given an instance $(C,V)$, outputs a new instance $(C',V')$. The new instance $(C',V')$ is $2$-Euclidean if and only if the original instance $(C,V)$ is $2$-Euclidean. The goal of a reduction rule is to simplify the structure of the election, for example decrease the number of candidates or voters. This will be the case in this section.
								
								We formulate the reduction rules as a set of preconditions that must be satisfied in order to apply the rule and then the operation itself. If the conditions are not satisfied, the reduction rule simply cannot be applied. We emphasize that all our reduction rules are constructive. That is, given either of $(C,V)$ or $(C',V')$ our proofs also provide an algorithm how to construct an embedding of one election given the embedding of the other one, i.e., we provide the so-called \emph{solution lifting} algorithm. We note that if we omit the precision and representation-size issues, all our reductions work in time polynomial in $|C|$ and $|V|$.
								
								\dtoappendix
								{
									\subsection{Removing candidates that behave similarly}
								}
								We begin with a simple observation that a candidate that is always ranked the last in all votes can be safely removed. This is because this candidate can be always embedded `far enough' from all other candidates and voters.
								\begin{reductionrule}\label{rr:rr1}
									Let $(C,V)$ be the input election. If there is a candidate $a\in C$ such that for all $b\in C\setminus \{a\}$ and all voters $v\in V$ we have $b\succ_v a$, then output the subelection induced by $C\setminus \{a\}$.
								\end{reductionrule}
								We now generalize \Cref{rr:rr1} to allow for the removal of more candidates who are ranked last. For this purpose we introduce new terminology and notation. Let $(C,V)$ be an election. For $C'\subseteq C$ and $V'\subseteq V$ let $\pos{V'}{C'}=\bigcup_{v\in V'}\bigcup_{c\in C'}\left\{\pos{v}{c}\right\}$. For a set of indices $I\subseteq [|C|]$ let $\posi{I}=\{c\in C \mid \exists v\in V: \pos{v}{c}\in I\}$.
								We say that a set of candidates $S\subseteq C$ induces a \emph{block} if the set of positions where~$S$ occurs in a vote is the same for each vote and moreover constitutes an interval. More formally, a nonempty $S\subseteq C$ is a \emph{block} (in $(C,V)$) if there are two indices $i,j\in[|C|]$, $i\leq j$, such that $\pos{V}{S}=[i,j]$. We equivalently refer to a block by the discrete interval $[i,j]$ such that the set of candidates occuring between positions $i$ and $j$ is the same for all votes. Formally $[i,j]$ is a block (in $(C,V)$) if $|\posi{[i,j]}| =|[i,j]|=j-i+1$. We say that $S$ is a \emph{tail block} if $\max\pos{V}{S}=|C|$, that is $j=|C|$. 
								\begin{example}
									If $a\in C$ is a candidate ranked the last in all votes, then $\{a\}$ is a tail block of size $1$. If $|V|=1$, then any subinterval $I\subseteq [1,|C|]$ is a block. The interval $[1,|C|]$ is always a block in any election $(C,V)$.
								\end{example}
								
								\begin{definition}
									Let $(C,V)$ be an election and $C_1,C_2\subseteq C$ two sets of candidates. We say that $C_1$ \emph{copies} $C_2$ if there is a bijection $f\colon C_2 \to C_1$ such that for all $a,b\in C_2$ and all votes $v\in V$ we have $a\succ_v b$ if and only if $f(a)\succ_v f(b)$.
								\end{definition}
								
								\begin{reductionrulep}{\ref*{rr:rr1}$+$}\label{rr:rr1plus}
									Let $(C,V)$ be the input election. If there is a tail block $S$ of size at most $3$ and there is $S'\subseteq C\setminus S$ that copies $S$, then output the subelection induced by $C\setminus S$.
								\end{reductionrulep}

								\sv
								{
									\begin{proposition}[$\star$]\label{prop:rr1plus}
									}
									\lv
									{
										\begin{proposition}\label{prop:rr1plus}
										}
										\Cref{rr:rr1plus} is correct.
									\end{proposition}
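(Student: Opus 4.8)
The plan is to prove both implications. The forward direction, that $(C,V)$ being $2$-Euclidean implies the subelection induced by $C\setminus S$ is, is immediate from \Cref{obs:euclidean_is_hereditary}, since deleting candidates yields a subelection. The substance lies in the converse: given a $2$-Euclidean embedding of $(C\setminus S, V[C\setminus S])$, I must extend it to all of $C$ without violating any preference. By \Cref{thm:non_degenerate_embedding} I may assume this embedding $\gamma$ is nice; in particular the points $\gamma(S')$ are pairwise distinct and, since $|S'|=|S|\le 3$, no three of them are collinear. Throughout I keep $\gamma$ unchanged on $(C\setminus S)\cup V$ and only assign positions to the candidates of $S$.

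The key idea is to place $S$ as a homothetic copy of $S'$ pushed far away, with the homothety centred at the common equidistant point of $\gamma(S')$. Because $|S'|\le 3$ and the points are in general position, they are concyclic; let $o$ be the centre of their circle (the circumcentre when $|S|=3$, the midpoint when $|S|=2$; for $|S|=1$ there are no internal bisectors and the argument collapses to \Cref{rr:rr1}, placing the single candidate far away in any direction). Let $f\colon S\to S'$ be the bijection witnessing that $S'$ copies $S$. For a ratio $k>0$ fixed later, define the homothety $H(x)=o+k(x-o)$ and set $\gamma(s)=H(\gamma(f(s)))$ for each $s\in S$. The crucial geometric fact is that $H$ fixes, as a line, every pairwise bisector among $S'$: for $a,b\in S$ the images $\gamma(a),\gamma(b)$ stay equidistant from $o$ (both at distance $k$ times the circumradius), so $o\in\bisector{\gamma}{a}{b}$, while the segment $\gamma(a)\gamma(b)$ is parallel to $\gamma(f(a))\gamma(f(b))$; hence $\bisector{\gamma}{a}{b}$ and $\bisector{\gamma}{f(a)}{f(b)}$ are perpendicular to the same direction and both pass through $o$, so they coincide. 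Moreover, for $k>0$ the point $\gamma(a)$ lies on the ray from $o$ through $\gamma(f(a))$, hence on the same side of this common line as $\gamma(f(a))$, giving $\halfplane{\gamma}{a}{b}=\halfplane{\gamma}{f(a)}{f(b)}$.

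With this in hand I would verify the three kinds of candidate pairs. Preferences inside $C\setminus S$ are untouched, since $\gamma$ is unchanged there. For a pair $a,b\in S$ and any voter $v$, the coincidence of half-planes gives $\gamma(v)\in\halfplane{\gamma}{a}{b}$ iff $\gamma(v)\in\halfplane{\gamma}{f(a)}{f(b)}$ iff $f(a)\succ_v f(b)$ iff $a\succ_v b$, the last step being the copying property; thus the internal order of $S$ is realised correctly for every voter, independently of $k$. Finally, for $s\in S$, an outside candidate $b'\in C\setminus S$, and a voter $v$, I use that $S$ is a tail block, so $b'\succ_v s$ must hold; geometrically $\gamma(v)$ and $\gamma(b')$ lie in a fixed bounded set while $\ell_2(o,\gamma(s))=k\cdot\ell_2(o,\gamma(f(s)))\to\infty$, so for $k$ large enough $\ell_2(\gamma(v),\gamma(b'))<\ell_2(\gamma(v),\gamma(s))$ for all such triples at once, and the extended $\gamma$ is injective. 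Choosing such a $k$ completes the embedding, and since the construction is explicit it also yields the promised solution-lifting algorithm.

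I expect the main obstacle to be exactly the bisector-preservation step, and this is precisely where the size bound enters: a single homothety can fix all pairwise bisectors of $S'$ only when $\gamma(S')$ has one common equidistant centre, i.e.\ when the points are concyclic, which is automatic for at most three points in general position but fails for four. A plain translation, which would already suffice to push $S$ away, can fix only the bisectors parallel to its direction and so handles only $|S|\le 2$; it is the homothety centred at the circumcentre that buys the third candidate.
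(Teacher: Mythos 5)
Your proposal is correct and follows essentially the same route as the paper's proof: both extend the given embedding by mapping $S$ to a homothetic copy of $\gamma(S')$, with the homothety centred at the point equidistant from $\gamma(S')$ (circumcentre/bisector point), exploiting that such a homothety fixes the pairwise bisectors of $S'$, and both use the tail-block property together with boundedness of the existing embedding to push $S$ far enough away (the paper just makes your ``$k$ large enough'' explicit as $k=\frac{3r+1}{\ell_2(h,\gamma'(f(a)))}$ for a ball $B_r(h)$ containing $\gamma'(C'\cup V')$). Your closing observation about concyclicity being exactly what limits the rule to $|S|\le 3$ matches the intuition behind the paper's \Cref{prop:tail_block_seven}.
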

									\toappendix
									{
										\sv
										{
											\begin{proof}[Proof of \Cref{prop:rr1plus}]
											}
											\lv
											{
												\begin{proof}
												}
												Let $(C',V')$ denote the subelection induced by $C\setminus S$. We show that $(C,V)$ is $2$-Euclidean if and only if $(C',V')$ is.
												If $(C,V)$ is $2$-Euclidean, then $(C',V')$ is $2$-Euclidean by \Cref{obs:euclidean_is_hereditary}. For the other direction suppose that $(C',V')$ is $2$-Euclidean and let $\gamma'\colon C'\cup V' \to \mathbb{R}^2$ be a $2$-Euclidean embedding of $(C',V')$. We define a $2$-Euclidean embedding $\gamma \colon C\cup V\to \mathbb{R}^2$ as an extension of $\gamma'$. For $c\in C\setminus S$ let $\gamma(c)=\gamma'(c)$ and for $v\in V$ let $\gamma(v)=\gamma'(v[C'])$. Note that $|V[C']| = |V|$, because if $u,v\in V$ are distinct, then $u[C']\neq v[C']$. It follows that $\gamma$ is injective. It is now left to embed the candidates from~$S$.
												
												Let $B'$ be an open ball that has the points in $\gamma'(S')$ on the boundary and let $h$ denote its center. Note that if $|S'|=1$, then $h$ can be chosen arbitrarily. If $|S'|=2$, then $h$ lies on the bisector between the two candidates in $S'$. In the case $|S'|=3$ the center $h$ is given by the circumcenter of the triangle with vertices $\gamma'(S')$ and $\boundary{B'}$ is the circumcircle of this triangle. Note that we can assume that no three candidates are embedded to a line by \Cref{thm:non_degenerate_embedding} so in the case $|S'|=3$ the three points in $\gamma'(S')$ indeed form a triangle. We refer the reader to \Cref{fig:rr3_proof} for visualisation of the proof.
												
												Since $S'$ copies $S$, there is a bijection $f\colon S \to S'$ such that for $a,b\in S$ and $v\in V$ we have $a\succ_v b$ if and only if $f(a)\succ_v f(b)$.
												Our goal is to embed the candidates in $S$ such that for $a,b\in S$ the bisectors $\bisector{\gamma}{a}{b}$ coincide with the bisectors $\bisector{\gamma'}{f(a)}{f(b)}$. Note that any homothety with center $h$ and ratio $k>0$ preserves the perpendicular bisectors between pairs of points that are equidistant from $h$, in particular between pairs of points from $\gamma'(S')$. This is true because all such bisectors contain the center $h$ and lines containing the center are invariant under homotheties. We shall now define the homothety $T$ and we set $\gamma(c)=(T\circ \gamma')(f(c))$ for $c\in S$. 
												
												The center for $T$ is $h$ and it is only left to determine the ratio. Note that we must ensure that for all candidates $c'\in C\setminus S$ and $c\in S$ and all voters $v$ we have $c\succ_v c'$ so we must place the voters in $S$ far enough. The set $\gamma'(C'\cup V')\subseteq \mathbb{R}^2$ is bounded, so by definition, there is an open ball $B_r(h)$ centered at $h$ with radius $r$ such that $\gamma'(C'\cup V')\subseteq B_r(h)$. We let the ratio for the homothety $T$ be equal to $\frac{3r+1}{\ell_2(h,\gamma(f(a)))}$.
												
												We now verify that $\gamma$ is a valid $2$-Euclidean embedding of $(C,V)$. By assumption $S'$ copies $S$ and for any $a,b\in S$ we have $\bisector{\gamma}{a}{b} = \bisector{\gamma}{f(a)}{f(b)}$. It follows that for any $a,b\in S,v\in V$:
												\begin{align*}
													\ell_2(\gamma(v),\gamma(a))<\ell_2(\gamma(v),\gamma(b)) &\Leftrightarrow 
													\ell_2(\gamma(v),\gamma(f(a)))<\ell_2(\gamma(v),\gamma(f(b))) \\  &\Leftrightarrow \ell_2(\gamma'(v[C']),\gamma'(f(a)))<\ell_2(\gamma'(v[C']),\gamma'(f(b))) \\&\Leftrightarrow
													f(a)\succ_{v[C']} f(b) \\ &\Leftrightarrow f(a)\succ_v f(b) \\ &\Leftrightarrow
													a \succ_v b 
												\end{align*}
												The first equivalence is due to the fact that $\bisector{\gamma}{a}{b}=\bisector{\gamma}{f(a)}{f(b)}$, the second is the definition of $\gamma$. Third is the definition of $2$-Euclidean embedding. Fourth is the definition of a restricted vote and finally, fifth is the property of $f$.
												
												It now remains to argue that for all $c\in S$ and $c'\in C\setminus S$ and all voters $v\in V$ we have $\ell_2(\gamma(v),\gamma(c'))<\ell_2(\gamma(v),\gamma(c))$. To see this note that all voters and candidates except $S$ are embedded inside $B_r(h)$. Thus $\ell_2(\gamma(v),\gamma(c')) < 2r$. The homothety $T$ scaled all distances from $h$ by $k$. Since $h$ is the circumcenter of the triangle $\triangle\gamma(a')\gamma(b')\gamma(c')$, we have $\ell_2(h,\gamma(a))=\ell_2(h,\gamma(b))=\ell_2(h,\gamma(c))= k \cdot \ell_2(h,\gamma(f(a)))=3r+1$. Hence all three candidates $a,b,c$ are at distance $3r+1$ from $h$. In particular at distance at least $2r+1$ from any point in $B_r(h)$. Hence 
												$\ell_2(\gamma(v),\gamma(c))\geq 2r+1$ for $c\in S$ and any $v\in V$. This finishes the proof.
											\end{proof}
										}
										\toappendix
										{
											\begin{figure}[ht]
												\centering
												\begin{tikzpicture}
													
													\coordinate (h) at (0,0);
													
													\coordinate (gap) at (1,0);
													\coordinate (gbp) at (-0.8,0.6);
													\coordinate (gcp) at (-0.4705882352941,-0.8823529411765);
													
													\fill (gap) circle (2pt);
													\node[right] at (gap){$\gamma(a')$};
													
													\fill (gbp) circle (2pt);
													\node [left] at (gbp) {$\gamma(b')$}; 
													
													\fill (gcp) circle (2pt);
													\node[below] at (gcp){$\gamma(c')$};
													
													\draw[opacity=0.5] (gap) -- (gbp);
													\draw[opacity=0.5] (gbp) -- (gcp);
													\draw[opacity=0.5] (gcp) -- (gap);
													
													\coordinate (ga) at (5.5,0);
													\coordinate (gb) at (-4.4,3.3);
													\coordinate (gc) at (-2.5882352941176,-4.8529411764706);
													
													\fill (ga) circle (2pt);
													\node[right] at (ga){$\gamma(a)$};
													
													\fill (gb) circle (2pt);
													\node[above left] at (gb){$\gamma(b)$};
													
													\fill (gc) circle (2pt);
													\node[below] at (gc){$\gamma(c)$};
													
													\draw[help lines] (ga) -- (gb);
													\draw[help lines] (gb) -- (gc);
													\draw[help lines] (gc) -- (ga);

													\coordinate (bab1) at (1.3333333333333,4);
													\coordinate (bab2) at (-2,-6);
													
													\coordinate (bac1) at (-2.4,4);
													\coordinate (bac2) at (3.6,-6);
													
													\coordinate (bbc1) at (-6,-1.33333333333);
													\coordinate (bbc2) at (6,1.33333333333);
													
													\draw[help lines,dashed] (bab1) -- (bab2);
													\draw[help lines,dashed] (bac1) -- (bac2);
													\draw[help lines,dashed] (bbc1) -- (bbc2);

													\fill (h) circle (2pt);
													\node[above,yshift=10] at (h){$h$};

													\draw[thick,red,fill=red,fill opacity=0.1] (h) circle (1.5);
													\node [above left,red] at (-1.42942312,0.454697186){$\openballrc{r}{h}$};

													\draw (h) circle (1);
													\node [above,yshift=27pt] at (0,0) {$\boundary{B'}$};
													
													

												\end{tikzpicture}
												\caption{Situation in the proof of correctness of \Cref{rr:rr1plus} where $S=\{a,b,c\}$ and $S'=\{a',b',c'\}$, where $a'=f(a),b'=f(b),c'=f(c)$. Note that $h$ -- the circumcenter of the triangle $\triangle\gamma'(a')\gamma'(b')\gamma'(c')$ is given by the intersection of the three bisectors $\bisector{\gamma'}{a'}{b'}\cap\bisector{\gamma'}{b'}{c'}\cap \bisector{\gamma'}{a'}{c'}$ (represented as dashed lines). The black circle $\boundary{B'}$ is the circumcircle of the triangle $\triangle\gamma'(a')\gamma'(b')\gamma'(c')$ and the red ball $\openballrc{r}{h}$ contains all the points of $\gamma'(C'\cup V')$. }\label{fig:rr3_proof}
											\end{figure}
										}
										We proceed to generalize this rule even further and introduce the notion of block decomposition. With \Cref{rr:rr1plus} we are able to remove tail blocks of size at most $3$. With block decomposition we will be able to remove such blocks even if they are not tail.
										
										\begin{definition}[Block decomposition]
											Let $(C,V)$ be an election and $k$ a positive integer. We say that a sequence of blocks $\mathcal{I}=([i_1,j_1],[i_2,j_2],\ldots,[i_t,j_t])$ in $(C,V)$ forms a \emph{$k$-block decomposition (of $(C,V)$)} if
											\begin{enumerate}
												\item all blocks $[i_\ell,j_\ell]$ are of size at most $k$,
												\item the blocks are consecutive, that is, for all $\ell\in[t-1]$ we have $j_{\ell}+1=i_{\ell+1}$, and
												\item $[i_t,j_t]$ is a tail block.
											\end{enumerate}
										\end{definition}

										Note that it is possible that $i_1\neq 1$. Observe that some elections may admit multiple $k$-block decompositions for some $k$. For example the single interval $([1,|C|])$ is always a $|C|$-block decomposition for any election $(C,V)$. And if $|V|=1$, then there are as many $k$-block decompositions of $(C,V)$ as there are ways to partition the set $[|C|]$ into nonempty intervals of size at most $k$. However, if we require that the decomposition contains maximum number of blocks, then it is unique (\Cref{lem:maximality_block}).

										\sv
										{
											\begin{lemma}[$\star$]\label{lem:maximality_block}
											}
											\lv
											{
												\begin{lemma}\label{lem:maximality_block}
												}
												Let $(C,V)$ be an election and $k$ a positive integer. Then the $k$-block decomposition containing maximum number of blocks is unique.
											\end{lemma}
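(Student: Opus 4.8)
The plan is to reduce the statement to a purely combinatorial question about a canonical finite set of positions, which I will call \emph{breakpoints}. Call a position $p\in\{0,1,\dots,|C|\}$ a \emph{breakpoint} if $|\posi{[1,p]}|=p$ (note $p=0$ and $p=|C|$ are always breakpoints). Equivalently, $p$ is a breakpoint exactly when in every vote the candidates occupying positions $1,\dots,p$ form one and the same set: indeed $\posi{[1,p]}$ is the union over votes of the respective $p$-element prefix sets, so this union has size $p$ if and only if all these prefix sets coincide. The heart of the argument is a correspondence: the cut positions $(i_1-1,\,j_1,\,j_2,\dots,\,j_t=|C|)$ arising from a $k$-block decomposition are precisely the strictly increasing sequences of breakpoints ending at $|C|$ whose consecutive gaps are all at most $k$, and the number of blocks equals the number of gaps. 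Once this correspondence is in place, uniqueness of the maximiser follows from a short exchange argument. Throughout I assume at least one $k$-block decomposition exists, as otherwise there is nothing to prove.

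First I would prove the easy direction: if $0\le a<b\le|C|$ are both breakpoints, then $[a+1,b]$ is a block. Since $|\posi{[1,a]}|=a$, in every vote the first $a$ positions hold the same $a$-element set $A$, and likewise the first $b$ positions hold the same set $B\supseteq A$; hence positions $a+1,\dots,b$ hold $B\setminus A$ in every vote, so $|\posi{[a+1,b]}|=b-a$ and $[a+1,b]$ is a block. Consequently any increasing breakpoint sequence $b_0<\dots<b_t=|C|$ with all gaps $b_\ell-b_{\ell-1}\le k$ yields the $k$-block decomposition $([b_0+1,b_1],\dots,[b_{t-1}+1,b_t])$.

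Second, and this is the main obstacle, I would prove the converse: every cut position of a $k$-block decomposition is a breakpoint. Fix a decomposition $([i_1,j_1],\dots,[i_t,j_t])$. The key subclaim is that each prefix $[i_1,j_\ell]$ is itself a block. Because each $[i_r,j_r]$ is a block, in every vote positions $i_r,\dots,j_r$ hold the fixed set $\posi{[i_r,j_r]}$; since the position-intervals $[i_1,j_1],\dots,[i_\ell,j_\ell]$ are pairwise disjoint, the sets $\posi{[i_r,j_r]}$ are pairwise disjoint, whence $|\posi{[i_1,j_\ell]}|=\sum_{r\le\ell}(j_r-i_r+1)=j_\ell-i_1+1$ and $[i_1,j_\ell]$ is a block. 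Taking $\ell=t$ shows $[i_1,|C|]$ is a block, so its complement $[1,i_1-1]$ is a block and $i_1-1$ is a breakpoint. For general $\ell$, since $[i_1,j_\ell]$ and $[i_1,|C|]$ are blocks with nested candidate sets, the difference $[j_\ell+1,|C|]$ is a block, hence so is its complement $[1,j_\ell]$, making $j_\ell$ a breakpoint. This establishes the correspondence.

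Finally, let $\mathcal{P}=\{p_0=0<p_1<\dots<p_m=|C|\}$ be all breakpoints, with gaps $g_\ell=p_\ell-p_{\ell-1}$. By the correspondence, a maximum decomposition corresponds to a subset of $\mathcal{P}$ containing $|C|$ with all consecutive gaps at most $k$, of maximum cardinality (the block count being the cardinality minus one). An exchange argument shows such a maximiser must contain every breakpoint strictly between its smallest element and $|C|$: inserting an omitted intermediate breakpoint only splits one gap into two smaller ones, each still at most $k$, and strictly increases the block count. Hence the maximiser is a suffix $\{p_s,\dots,p_m\}$, and feasibility requires $g_{s+1},\dots,g_m\le k$, which forces $s\ge\ell$ for every index $\ell$ with $g_\ell>k$. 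Maximising the block count means minimising $s$, so $s$ equals the largest index $\ell$ with $g_\ell>k$ (or $s=0$ if all gaps are at most $k$); this value is uniquely determined, and a decomposition exists precisely when the resulting suffix has at least two elements. Since the chosen breakpoint set is thereby completely pinned down, the maximum $k$-block decomposition is unique, as claimed.
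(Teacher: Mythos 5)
Your proof is correct, and it takes a genuinely different route from the paper's. The paper argues by contradiction on two distinct maximum $k$-block decompositions: it picks the largest index $\ell$ at which they differ (so that $j_\ell=j'_\ell$, with $i_\ell<i'_\ell$ say) and shows that the block $[i_\ell,j_\ell]$ can be split into the two blocks $[i_\ell,i'_\ell-1]$ and $[i'_\ell,j'_\ell]$ --- the crux being a counting argument showing that the intersection of the two overlapping blocks is again a block --- thereby producing a decomposition with one more block and contradicting maximality. You instead prove a global structural characterization: your breakpoints, the positions $p$ with $|\posi{[1,p]}|=p$, are exactly the positions at which all votes agree on the prefix set; $k$-block decompositions correspond bijectively to sets of breakpoints containing $|C|$ with consecutive gaps at most $k$ (the nontrivial direction, that every cut position of a decomposition is a breakpoint, follows from your closure properties of blocks under disjoint consecutive unions and complements); and a maximum such set is forced, by the insertion step, to be the full suffix of breakpoints after the last gap exceeding $k$, which is uniquely determined. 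Both arguments are sound, but they buy different things: the paper's is shorter and purely local, while yours yields an explicit description of the maximal decomposition, an existence criterion for $k$-block decompositions (a case the paper passes over silently), and in effect a classification of all of them; it also sidesteps the loose ends in the paper's exchange step, such as the case $\ell=1$, where the reference block $[i'_{\ell-1},j'_{\ell-1}]$ used there does not exist.
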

											\toappendix
											{
												\lv{
													\begin{proof}	
													}
													\sv
													{
														\begin{proof}[Proof of \Cref{lem:maximality_block}]
														}
														Let $t$ be the maximum size of a $k$-block decomposition of $(C,V)$ and for the sake of contradiction suppose that there are two distinct $k$-block decompositions $\mathcal{I} = ([i_1,j_1],\ldots,[i_t,j_t])$ and $\mathcal{I}'=([i'_1,j'_1],\ldots,[i'_t,j'_t])$. Let $\ell$ be the largest index such that $[i_{\ell},j_{\ell}] \neq [i'_{\ell},j'_{\ell}]$. Note that $j_{\ell}=j'_{\ell}$ otherwise it is a contradiction with the choice of $\ell$. Without loss of generality assume that $i_{\ell}<i'_{\ell}$. 
														
														We create a new $k$-block decomposition $\mathcal{I}^*$ from $\mathcal{I}$ by replacing the block $[i_{\ell},j_{\ell}]$ by two blocks $[i_{\ell},i'_{\ell}-1],[i'_{\ell},j_{\ell}]$ (see \Cref{fig:lemma_maximality_block} for visualisation). We now verify that $\mathcal{I}^*$ is also a $k$-block decomposition thus contradicting the maximality of both $\mathcal{I}$ and $\mathcal{I}'$. The new blocks are still of size at most $k$. It remains to argue that the two new intervals are indeed blocks in $(C,V)$. Note that $[i'_{\ell},j_{\ell}]=[i'_{\ell},j'_{\ell}]$ is a block since it is a block in $\mathcal{I}'$. Denote $I = [i_\ell,i_{\ell}'-1], I_1 = [i_\ell,j_\ell], I_2 = [i'_{\ell-1},j'_{\ell-1}]$. We show that $|\posi{I}|=|I|$. Note that 
														$\posi{I}=\posi{I_1\cap I_2} = \{c\in C | \exists v \in V: \pos{v}{c}\in I_1\cap I_2\}=\posi{I_1}\cap\posi{I_2}=C_1\cap C_2$. Hence $|\posi{I}|=|C_1\cap C_2|$.
														
														The inequality $|\posi{I}|\geq |I|$ holds for any $I\subseteq [|C|]$ as there cannot be less than $|I|$ candidates on $|I|$ positions. On the other hand, we show that $|I| = |C_1\cap C_2|$. To see this note that $C_1\cup C_2$ is also a block since $\pos{V}{C_1\cup C_2}=\pos{V}{C_1}\cup\pos{V}{C_2}=I_1\cup I_2 = [i'_{\ell - 1},j'_{\ell}]$. Hence $|C_1\cup C_2|=|[i'_{\ell-1},j'_{\ell}]|=|I_1|+|I_2|-|I_1\cap I_2|=|I_1|+|I_2|-|I|=|C_1|+|C_2|-|I|$. Since $|C_1\cup C_2|=|C_1|+|C_2|-|C_1\cap C_2|$ it follows that $|I|=|C_1\cap C_2|$.
														
														Putting this together we obtain $|I|=|C_1\cap C_2|=|\posi{I}|\geq |I|$. Hence, the inequality hold with an equality and thus $I$ is a block and this finishes the proof.
													\end{proof}
													\begin{figure}
														\centering
														\begin{tikzpicture}
															\draw (0,0) rectangle (6,0.5);
															\node[above] at (6,0.5){$j'_\ell$};
															
															\draw (4,0.5)--(4,0);
															
															\draw (1,0.5)--(1,0);
															\node[above] at (4,0.5){$i'_\ell$};
															\node[above] at (1,0.5){$i'_{\ell-1}$};
															
															\draw (3,0.5)--(3,0);
															\node[below] at (3,0){$i_\ell$};
															\node[below] at (6,0){$j_\ell$};

															\draw[decorate,decoration={brace,amplitude=5pt,raise=15pt}] (1,0.5) -- (4,0.5) node [midway,yshift=25pt]{$I_2$};
															
															\draw[decorate,decoration={brace,amplitude=5pt,raise=15pt}] (6,0) -- (3,0) node [midway,yshift=-26pt]{$I_1$};
															
														\end{tikzpicture}
														\caption{Visualisation of the proof of \Cref{lem:maximality_block}.}\label{fig:lemma_maximality_block}
													\end{figure}
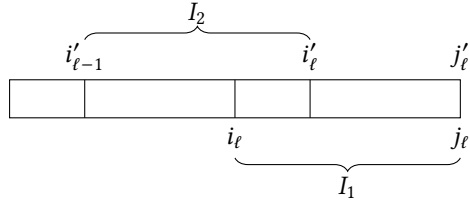
													
												}

												From now on we shall refer to the unique $k$-block decomposition containing the maximum number of blocks as the \emph{maximal $k$-block decomposition}.
												
												\begin{reductionrulep}{\ref*{rr:rr1}$++$}\label{rr:rr1plusplus}
													Let $(C,V)$ be the input election and let $\mathcal{I} = (I_1,I_2,\ldots,I_t)$ be its maximal $3$-block decomposition. Denote $S_\ell = \posi{I_\ell}$. If there is a block $I_\ell$ such that there is $S'\subseteq C\setminus S_\ell$ that copies $S_\ell$, then output the subelection induced by $C\setminus S_\ell$.
												\end{reductionrulep}
												
												Observe that \Cref{rr:rr1plus} is a special case of \Cref{rr:rr1plusplus} for $\ell=t$.
												
												\sv
												{
													\begin{proposition}[$\star$]\label{prop:correcntess_of_1pp}
													}
													\lv
													{
														\begin{proposition}\label{prop:correctness_of_1pp}
														}
														\Cref{rr:rr1plusplus} is correct.
													\end{proposition}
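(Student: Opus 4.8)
The plan is to prove the equivalence in both directions, with all the content residing in one of them. One direction is immediate: if $(C,V)$ is $2$-Euclidean then so is its induced subelection on $C\setminus S_\ell$ by \Cref{obs:euclidean_is_hereditary}. For the converse I would mimic the proof of \Cref{prop:rr1plus}. Starting from a \emph{nice} $2$-Euclidean embedding $\gamma'$ of $(C\setminus S_\ell, V[C\setminus S_\ell])$ (which exists by \Cref{thm:non_degenerate_embedding}), set $\gamma(c)=\gamma'(c)$ for $c\notin S_\ell$ and $\gamma(v)=\gamma'(v[C\setminus S_\ell])$ for voters, and reinsert the candidates of $S_\ell$ as a homothetic copy of the witness $S'$. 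Concretely, let $f\colon S_\ell\to S'$ be the copying bijection and let $h$ be the circumcenter of $\gamma'(S')$ (well defined since $|S'|=|S_\ell|\le 3$ and $\gamma'$ is nice, so $\gamma'(S')$ is not collinear); place $\gamma(c)=h+k(\gamma'(f(c))-h)$ for $c\in S_\ell$ with a ratio $k>0$ to be fixed. Exactly as in \Cref{prop:rr1plus}, a homothety centered at $h$ fixes every bisector $\bisector{\gamma'}{f(a)}{f(b)}$ (each such line passes through $h$ because $\gamma'(S')$ is concyclic about $h$), so $\bisector{\gamma}{a}{b}=\bisector{\gamma'}{f(a)}{f(b)}$; since $S'$ copies $S_\ell$, this already makes the internal order of $S_\ell$ under every voter correct, independently of $k$.

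The substantive difference from \Cref{prop:rr1plus} is the role of $k$. There $S_\ell$ was a tail block, so it sufficed to take $k$ large and push $S_\ell$ past every other point, making it last in every vote; here $I_\ell$ is interior and $S_\ell$ must be inserted at \emph{intermediate} distances. I would first isolate the structure that enables this. Because $I_t$ is a tail block and the blocks are consecutive, $D:=\posi{[j_\ell+1,|C|]}=\bigcup_{m>\ell}\posi{I_m}$ is a fixed tail block, and symmetrically $H:=\posi{[1,i_\ell-1]}$ is a fixed set every candidate of which is ranked above $S_\ell$ in every vote. Hence in $\gamma'$ each voter $v$ is closer to every candidate of $H$ than to every candidate of $D$; writing $a_v$ for the candidate $v$ ranks at position $i_\ell-1$ (its farthest \emph{above} candidate) and $d_v$ for the candidate of $D$ nearest $\gamma'(v)$, the admissible locus for $S_\ell$ with respect to $v$ is exactly the open annulus $\annuluscpp{\gamma'(v)}{\gamma'(a_v)}{\gamma'(d_v)}$, which is non-empty precisely because $D$ is a tail block. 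The target is therefore to choose $k$ so that $\gamma(S_\ell)\subseteq\bigcap_{v\in V}\annuluscpp{\gamma'(v)}{\gamma'(a_v)}{\gamma'(d_v)}$, which forces each candidate of $S_\ell$ above every $H$-candidate and below every $D$-candidate, i.e.\ into positions $[i_\ell,j_\ell]$, for every voter simultaneously.

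The main obstacle is exactly this simultaneous, all-voters placement. Unlike the tail case we cannot send $S_\ell$ to infinity, and a homothety about $h$ concentrates $\gamma(S_\ell)$ near $h$ only for small $k$; the naive construction therefore succeeds only when $h$ itself lies in the common annulus $\bigcap_v\annuluscpp{\gamma'(v)}{\gamma'(a_v)}{\gamma'(d_v)}$, which is not guaranteed by the position of $S'$ alone. I would resolve this by using the freedom in the choice of $\gamma'$: before reinserting $S_\ell$, re-embed so that the tail block $D$ is moved far enough from $H$ and the voters that every annulus becomes wide and their intersection contains a small disk $\openballrc{\varepsilon}{h}$ into which the scaled copy of $S'$ fits; taking $k$ small then lands $\gamma(S_\ell)$ inside this disk. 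Establishing that such a gap-widening re-embedding exists (while keeping $\gamma'$ valid and $H$, the voters, and the internal order of $D$ intact) is the technical heart of the argument and the step I expect to be delicate. Once a placement is fixed, the remaining verification is routine and parallels \Cref{prop:rr1plus}: the homothety yields the internal order and correct voter-classification of the $S_\ell$-bisectors, annulus membership yields the head$\,\succ S_\ell\succ\,$tail relations, and injectivity follows since $\gamma(S_\ell)$ consists of finitely many points avoiding $\gamma'(C\setminus S_\ell)\cup\gamma'(V')$. Finally, \Cref{prop:rr1plus} is the special case $\ell=t$, where $D=\emptyset$ and the annuli degenerate to exteriors of disks, so this construction strictly generalizes it.
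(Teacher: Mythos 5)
Your proposal has a genuine gap, and it sits exactly where you predict it does. The parts that mirror \Cref{prop:rr1plus} (heredity, the circumcenter $h$ of $\gamma'(S')$, the fact that a homothety centered at $h$ fixes every bisector $\bisector{\gamma'}{f(a)}{f(b)}$ and hence makes the internal order of $S_\ell$ correct for free) are fine. But the core difficulty --- the candidates in the tail $D=\posi{[j_\ell+1,|C|]}$ --- is deferred to an unproven ``gap-widening re-embedding,'' and in the specific form you state it, that claim does not even suffice. You insist on keeping $H$, the voters, \emph{and} the location of $S'$ fixed (when $S'\subseteq H$, the center $h$ is then pinned), and you want to place the shrunken copy of $S'$ in a small disk around $h$. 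The inner boundary of each voter's admissible annulus is the circle $\boundary{\openballcp{\gamma'(v)}{\gamma'(a_v)}}$, which is determined entirely by $H$ and the voters; moving $D$ cannot change it. If $h$ happens to lie inside $\closure{\openballcp{\gamma'(v)}{\gamma'(a_v)}}$ for even one voter $v$ --- which nothing prevents, since $S'$ may consist of candidates ranked early by $v$ --- then no disk around $h$ ever lies in the intersection of the annuli, no matter how far $D$ is pushed. So the small-$k$ placement fails outright; only a large-$k$ placement (pushing $S_\ell$ beyond all of $H$ and the voters, but short of $D$) could work, and that still requires first moving $D$ arbitrarily far while preserving validity --- which is precisely the step you have not supplied, and which cannot be done by a single rigid motion or a single homothety since $D$ may have more than $3$ candidates and thus no common circumcenter fixing all of its internal bisectors.

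The paper's proof avoids the ``squeeze into the middle'' framing entirely, and in doing so it shows how your missing step must be carried out. It keeps only the head candidates and voters at their $\gamma'$-positions, embeds $S_\ell$ as the homothetic image of $\gamma'(S')$ pushed to distance $3r+1$ from $h$ (where $\openballrc{r}{h}$ contains everything embedded so far) --- exactly as in the tail-block rule --- and then \emph{re-embeds the remaining blocks} $S_{\ell+1},\ldots,S_t$ one at a time, in increasing order: block $S_i$ is mapped by the homothety centered at the circumcenter $h_i$ of $\gamma'(S_i)$ (well defined since $|S_i|\le 3$) with ratio chosen so that $S_i$ lands beyond everything placed before it. Each such homothety fixes the internal bisectors of $S_i$, so within-block preferences are inherited from $\gamma'$, and the successively growing radii enforce all cross-block preferences, including head $\succ S_\ell \succ S_{\ell+1} \succ \cdots \succ S_t$. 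This block-by-block outward cascade is the content your proposal is missing; without it (or an equivalent construction), the converse direction is not established.
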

													\toappendix
													{
														\sv
														{
															\begin{proof}[Proof of \Cref{prop:correcntess_of_1pp}]
															}
															\lv
															{
																\begin{proof}
																}
																Let $(C',V')$ denote the subelection induced by $C\setminus S_\ell$. Clearly if $(C,V)$ is $2$-Euclidean then $(C',V')$ is also $2$-Euclidean by \Cref{obs:euclidean_is_hereditary}. Now suppose that $(C',V')$ is $2$-Euclidean and let $\gamma'$ be a $2$-Euclidean embedding of $(C',V')$. We have, by the same argument as in the proof of correctness of \Cref{rr:rr1plus} that $|V|=|V[C']|$. We construct a $2$-Euclidean embedding $\gamma\colon C\cup V \to \mathbb{R}^2$ for $(C,V)$ as follows. For all $v\in V$ we let $\gamma(v)=\gamma'(v[C'])$. For all candidates~$c$ that are ranked before the candidates in $S_\ell$ in all votes we let $\gamma(c)=\gamma'(c)$. As in the proof of correctness of \Cref{rr:rr1plus} we let $h$ be the center of the open ball $B'$ that has the points in $\gamma'(S')$ on the boundary. Our task is to embed $S_\ell$ in a similar way as in \Cref{rr:rr1plus} via a homothety with center $h$. Since $S'$ copies $S_\ell$ we let $f\colon S_\ell \to S'$ be the bijection such that $\forall v \in V: a\succ_v b \Leftrightarrow f(a)\succ_v f(b)$. Let $\openballrc{r}{h}$ be the open ball containing the so-far embedded candidates (i.e., those that are before the block $I_\ell$) and all voters and we let $\gamma(a)=(T\circ\gamma')(f(a))$, where $T$ is homothety with center $h$ and ratio $\frac{3r+1}{\ell_2(h,\gamma'(f(a)))}$. 
																
																Now it only remains to embed the candidates from the remaining blocks $S_{\ell+1},\ldots,S_{t}$. We can do this in a similar fashion by embedding them one by one, but the `reference' points are the points $\gamma'(S_i)$. To be more precise, to embed the candidates $S_i$ for $i\in\{\ell+1,\ldots,t\}$, we consider the points $\gamma'(S_i)$ and let $B_i$ be the ball containing the points $\gamma'(S_i)$ on the boundary and let $h_i$ be its center. Let $\openballcr{h_i}{r_i}$ be an open ball containing all candidates before the $i$-th block and all the voters. For $a\in S_i$ we let $\gamma(a)=(T_i\circ \gamma')(a)$ where $T_i$ is the homothety with center $h_i$ and ratio $\frac{3r_i+1}{\ell_2(h_i,\gamma'(a))}$. By doing this procedure for all $i\in\{\ell+1,\ldots,t\}$ in increasing order, we obtain an embedding $\gamma$ of $(C,V)$.
																
																It remains to verify that $\gamma$ is a valid $2$-Euclidean embedding of $(C,V)$. Note that if $a,b\in S_\ell$ we have for all voters $v\in V$:
																\[
																\ell_2(\gamma(v),\gamma(a))<\ell_2(\gamma(v),\gamma(b))\Leftrightarrow a\succ_v b
																\]
																by the same arguments as in the proof of \Cref{rr:rr1plus}.
																
																We now verify the positions of the candidates from blocks $S_{\ell+1},\ldots,S_t$. There are two cases to consider for $a,b\in C$ and a vote $v$:
																\begin{description}
																	\item[Case 1:] $a,b$ are in two distinct blocks, say $a\in S_i,b\in S_j,j>i$. By the definition of blocks we have $a\succ_v b$ for all $v\in V$. However, note that $\gamma(a)$ and all voters are inside the ball $B=\openballcr{h_{j}}{r_{j}}$. Hence $\forall p,q\in B$ we have $\ell_2(p,q)<2r_j$ and by definition of $T_j$ we have $\forall p \in B: \ell_2(p,\gamma(b))\geq 2r_j$. In particular, for any voter $v$ we have $\ell_2(\gamma(v),\gamma(a))<2r_j\leq \ell_2(\gamma(v),\gamma(b))$. 
																	\item[Case 2:] $a,b$ are from the same block, say $a,b\in S_i$. Note that $\gamma'$ was $2$-Euclidean embdeding for the subelection $(C',V')$ and $S_i\subseteq C'$. Hence we have $a\succ_v b\Leftrightarrow \ell_2(\gamma'(v),\gamma'(a))<\ell_2(\gamma'(v),\gamma'(b))$. But since $\gamma$ is defined as a composition of $\gamma'$ and a homothety with center $h_i$ and all the points in~$\gamma'(S_i)$ have the same distance to $h_i$ (by definition of $h_i$ and the ball $B_i$), we have $\bisector{\gamma}{a}{b}=\bisector{\gamma'}{a}{b}$. Thus 
																	\[
																	\ell_2(\gamma'(v),\gamma'(a))<\ell_2(\gamma'(v),\gamma'(b))\Leftrightarrow \ell_2(\gamma(v),\gamma(a))<\ell_2(\gamma(v),\gamma(b)).
																	\]
																\end{description}
															\end{proof}
														}
														\dtoappendix
														{
															\subsection{Removing candidates ranked next to each other}
														}
														This reduction rule deals with the scenario when there are two candidates that are ranked tightly next to each other in each vote. That is, there are two candidates $b,c$ such that for every $v\in V$ we have $b\succ_v c$ and there is no $d$ and voter $v$ such that $b\succ_v d \succ_v c$. Moreover, for the reduction rule to work there must be a candidate $a$ that is ranked before $b$ in all votes. Symbolically, each vote may be described by the regular expression $?a?bc?$, where $?$ denotes arbitrary (possibly empty) sequence of candidates (other than $a,b,c$).

														\begin{reductionrule}\label{rr:rr2}
															Let $(C,V)$ be the input election. If there are candidates $b,c\in C$ such that:
															\begin{enumerate}
																\item for all votes $v$ we have $b\succ_v c$,
																\item for all votes $v$ there is no $d\in C\setminus \{b,c\}$ such that $b\succ_v d$ and $d\succ_v c$, and
																\item there exists candidate $a$ such that $a\succ_v b$ for all votes $v$,	
															\end{enumerate}
															then output the subelection induced by $C\setminus \{b\}$.
														\end{reductionrule}
														
														\sv
														{
															\begin{proposition}[$\star$]\label{prop:rr2}
															}
															\lv
															{
																\begin{proposition}\label{prop:rr2}
																}
																\Cref{rr:rr2} is correct.
															\end{proposition}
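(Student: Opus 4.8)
The forward implication is immediate: if $(C,V)$ is $2$-Euclidean then so is the induced subelection $(C',V')=(C\setminus\{b\},V[C\setminus\{b\}])$ by \Cref{obs:euclidean_is_hereditary}. The plan is therefore to prove the converse. Starting from a $2$-Euclidean embedding $\gamma'$ of $(C',V')$, I would extend it to a $2$-Euclidean embedding $\gamma$ of $(C,V)$ by placing the single missing candidate $b$ appropriately, setting $\gamma(d)=\gamma'(d)$ for every $d\in C\setminus\{b\}$ and $\gamma(v)=\gamma'(v[C'])$ for every voter.

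First I would record the \emph{twin property} of $b$ and $c$ forced by the preconditions. Fix $d\in C\setminus\{b,c\}$ and a vote $v$. Conditions (1) and (2) say $b,c$ are consecutive with $b\succ_v c$, so $d$ cannot lie strictly between them; hence either $d\succ_v b$ (whence $d\succ_v c$) or $c\succ_v d$ (whence $b\succ_v d$), and in both cases $d\succ_v b\Leftrightarrow d\succ_v c$. Thus $b$ and $c$ induce the same relative order against every other candidate in every vote, the only difference being that $b$ always beats $c$. Combining (1) and (3) also gives $a\succ_v c$ for every $v$, a fact I will use for the placement.

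The placement is the crux. I would put $\gamma(b)$ on the segment $\linesegment{\gamma'(c)}{\gamma'(a)}$ at a small distance $\varepsilon>0$ from $\gamma'(c)$ toward $\gamma'(a)$, and establish two facts for $\varepsilon$ small enough. (a) \emph{Every voter prefers $b$ over $c$.} Since $\gamma(b),\gamma'(c),\gamma'(a)$ are collinear, the bisectors $\bisector{\gamma}{b}{c}$ and $\bisector{\gamma'}{a}{c}$ are parallel (both perpendicular to that line), and $\bisector{\gamma}{b}{c}$ is at distance $\varepsilon/2$ from $\gamma'(c)$ while $\bisector{\gamma'}{a}{c}$ is at distance $\ell_2(\gamma'(a),\gamma'(c))/2$ on the same side. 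Because $a\succ_v c$, every $\gamma'(v)$ lies in the open half-plane $\halfplane{\gamma'}{a}{c}$, i.e. its signed projection onto the directed line from $\gamma'(c)$ to $\gamma'(a)$ exceeds $\ell_2(\gamma'(a),\gamma'(c))/2$; as $\varepsilon/2<\ell_2(\gamma'(a),\gamma'(c))/2$, that projection also exceeds $\varepsilon/2$, so $\gamma'(v)\in\halfplane{\gamma}{b}{c}$. (b) \emph{For every other candidate $d$, voter $v$ prefers $b$ over $d$ exactly when it prefers $c$ over $d$.} As $\gamma(b)\to\gamma'(c)$ we have $\ell_2(\gamma'(v),\gamma(b))\to\ell_2(\gamma'(v),\gamma'(c))$, and validity of $\gamma'$ means no voter is equidistant from $c$ and $d$; hence for small $\varepsilon$ the sign of $\ell_2(\gamma'(v),\gamma(b))-\ell_2(\gamma'(v),\gamma'(d))$ matches that of $\ell_2(\gamma'(v),\gamma'(c))-\ell_2(\gamma'(v),\gamma'(d))$. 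There are finitely many voter–candidate pairs, so one $\varepsilon>0$ serves all constraints simultaneously, and it can be taken small enough to keep $\gamma$ injective.

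Finally I would verify validity by a pairwise check. For two candidates in $C'$ the embedding is unchanged, so the preference is inherited from $\gamma'$. For the pair $\{b,c\}$, fact (a) supplies the always-present $b\succ_v c$. For a pair $\{b,d\}$ with $d\neq c$, the twin property rewrites the target preference as the corresponding $c$-versus-$d$ preference, which fact (b) realizes. The main obstacle is the placement, specifically part (a): the ``guard'' candidate $a$ from condition (3) is precisely what lets me pin $b$ on the line through $\gamma'(c)$ and $\gamma'(a)$ so that $\bisector{\gamma}{b}{c}$ runs parallel to, and strictly inside, the bisector $\bisector{\gamma'}{a}{c}$ that already confines all voters; without this collinearity there is no reason for a voter placed near $c$ to end up closer to $b$ than to $c$.
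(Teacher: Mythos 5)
Your proof is correct and follows essentially the same route as the paper: both directions are handled identically, and the key construction---placing $b$ on the segment $\linesegment{\gamma'(c)}{\gamma'(a)}$ at a sufficiently small distance $\varepsilon>0$ from $\gamma'(c)$, with condition (3) supplying the guard candidate $a$ that makes this collinear placement work---is exactly the paper's placement. The only difference is in how the placement is verified: the paper intersects the segment with explicit feasible regions $F_d(v)$ (open balls when $c\succ_v d$, annuli when $d\succ_v c$, with a case analysis on circle--segment intersections), whereas you handle the pair $\{b,c\}$ by the parallel-bisector projection argument and all pairs $\{b,d\}$ by a continuity/sign-preservation argument over the finitely many voter--candidate constraints; the two verifications are equivalent, yours being a slightly more streamlined packaging of the same geometry.
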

															
															\toappendix
															{
																\sv
																{
																	\begin{proof}[Proof of \Cref{prop:rr2}]
																	}
																	\lv
																	{
																		\begin{proof}
																		}
																		Let $(C',V')$ be the subelection induced by $C\setminus \{b\}$. If $(C,V)$ is $2$-Euclidean, then~$(C',V')$ is clearly $2$-Euclidean. On the other hand, let $(C',V')$ be $2$-Euclidean and $\gamma'\colon C'\cup V'\to \mathbb{R}^2$ a $2$-Euclidean embedding of $(C',V')$. We define a $2$-Euclidean embedding $\gamma\colon C \cup V\to\mathbb{R}^2$ of $(C,V)$ as an extension of $\gamma'$. For $c\in C'=C\setminus \{b\}$ we let $\gamma(c)=\gamma'(c)$ and for $v\in V$ we let $\gamma(v)=\gamma'(v[C'])$. Note that $|V[C']| = |V|$, because if $u,v\in V$ are distinct, then $u[C']\neq v[C']$. We are left with the task of determining $\gamma(b)$. The idea is as follows.
																		Let~$L$ be the line segment with endpoints $\gamma(a),\gamma(c)$. Our aim is to embed~$b$ close to~$c$~on~$L$. We now formally describe how to do it.
																		
																		Let $d\in C\setminus \{b,c\}$ be any candidate (note that we may have $d=a$) and $v\in V$ an arbitrary vote. We define the \emph{feasible region} $F_d(v)$ as follows:
																		\[
																		F_d(v) = 
																		\begin{cases}
																			\left\{p\in \mathbb{R}^2\mid \ell_2(\gamma(v),p)<\ell_2(\gamma(v),\gamma(c))\right\} & \text{if $c\succ_v d$} \\
																			\left\{p\in \mathbb{R}^2 \mid \ell_2(\gamma(v),\gamma(d))<\ell_2(\gamma(v),p)<\ell_2(\gamma(v),\gamma(c)\right\} & \text{if $d\succ_v c$}
																		\end{cases}
																		\]
																		
																		Let $F=\bigcap_{d\in C\setminus \{b,c\}}\bigcap_{v\in V'}F_d(v)$. Note that condition $2$ can be rephrased as follows: For all candidates $d\in C\setminus \{b,c\}$ we have $d\succ_v c$ if and only if $d\succ_v b$. Notice that if we embed $\gamma(b)\in F\setminus \gamma(C'\cup V)$, then we are done. It remains to show that $F\setminus \gamma(C'\cup V)\neq \emptyset$. To achieve this, we show that $F$ is nonempty and infinite. Since $\gamma(C'\cup V)$ is finite, then $F\setminus \gamma(C'\cup V)$ is also nonempty.
																		
																		We show that for each feasible region $F_d(v)$ there is an open line segment $L_d(v)\subseteq L\cap F_d(v)$ whose one endpoint is $\gamma(c)$. We emphasize that the line segment is \emph{open}, i.e., it does not contain its endpoints. In other words, the line segment $L_d(v)$ is of the form 
																		\[
																		L_d(v) = \{\gamma(c)+ t\cdot (\gamma(a)-\gamma(c))\mid 0< t < \varepsilon \} 
																		\]
																		for some $\varepsilon=\varepsilon(d,v)$.
																		
																		We only need to distinguish two cases for a voter $v$ and the candidate $d$. See \Cref{fig:rr5_proof} for illustration.
																		\begin{description}
																			\item[Case 1:] $c\succ_v d$. In this case, the feasible region $F_d(v)$ is the open ball $B=\openballcp{\gamma(v)}{\gamma(c)}$. Note that $a\succ_v c$ implies that both $\gamma(a),\gamma(c)$ lie in the closed ball $\closure{B}$. By convexity of $\closure{B}$ we have $\linesegment{\gamma(a)}{\gamma(c)} = L\subseteq \closure{B}$. Hence we set $L_d(v)=L\setminus \{\gamma(a),\gamma(c)\}\subseteq B=F_d(v)$, thus $\varepsilon(d,v)=1$.
																			\item[Case 2:] $d\succ_v c$. In this case, the feasible region $F_d(v)$ is the open annulus $\annuluscpp{\gamma(v)}{\gamma(d)}{\gamma(c)}$. By the very same argument as in the previous case we have $L\subseteq \closure{\openballcp{\gamma(v)}{\gamma(c)}}$. Note that $d\succ_v c$ is equivalent to $\ell_2(\gamma(v),\gamma(d))<\ell_2(\gamma(v),\gamma(c))$. Hence $\annuluscpp{\gamma(v)}{\gamma(d)}{\gamma(c)} = \openballcp{\gamma(v)}{\gamma(c)}\setminus \closure{\openballcp{\gamma(v)}{\gamma(d)}}$. 
																			
																			Let $L'=L\setminus \{\gamma(a),\gamma(c)\}\setminus \closure{\openballcp{\gamma(v)}{\gamma(d)}}$. In other words, $L'$ is the set of points of $L$ excluding the endpoints and without the points that are in the closed inner ball of the annulus. There are three subcases to consider depending on the number of intersection points of the circle $\boundary{\openballcp{\gamma(v)}{\gamma(d)}}$ with the line segment $L$. See \Cref{fig:rr5_three_subcases} for illustration.
																			\begin{description}
																				\item[Subcase 2.1:] $L\cap \boundary{\openballcp{\gamma(v)}{\gamma(d)}}=\emptyset$. In this case, the entire line $L'$ is contained in the annulus $\annuluscpp{\gamma(v)}{\gamma(d)}{\gamma(c)}=F_d(v)$. Hence we set $L_d(v)=L'$. Therefore, $\varepsilon(d,v)=1$.
																				
																				\item[Subcase 2.2:] $L\cap \boundary{\openballcp{\gamma(v)}{\gamma(d)}}=\{p\}$. In this case, $L'$ is just an open line segment with one endpoint $\gamma(c)$. Hence we set $L_d(v)=L'$. Thus $\varepsilon(d,v)=\frac{\ell_2(\gamma(c),p)}{\ell_2(\gamma(c),\gamma(a))}$.
																				
																				\item[Subcase 2.3:] $L\cap \boundary{\openballcp{\gamma(v)}{\gamma(d)}}=\{p_1,p_2\}$. In this case, $L'$ is a disjoint union of two open line segments $L_1\cup L_2$ where (without loss of generality) $L_1$ has $\gamma(c)$ as one of its endpoints. We set $L_d(v)=L_1$. Thus, if $\ell_2(p_1,\gamma(c))<\ell_2(p_2,\gamma(c))$, then $\varepsilon(d,v) = \frac{\ell_2(\gamma(c),p_1)}{\ell_2(\gamma(c),\gamma(a))}$.
																			\end{description}
																		\end{description}
																		
																		By the two cases above, for each feasible region $F_d(v)$ there is a nonempty open subsegment $L_d(v)\subseteq F_d(v)\cap L$ of the form $\{\gamma(c)+ t\cdot (\gamma(a)-\gamma(c))\mid 0< t < \varepsilon \}$. Let $\varepsilon^*=\min_{d,v}\varepsilon(d,v)$. Then the open line segment $\{\gamma(c)+t\cdot (\gamma(a)-\gamma(c))\mid 0<t<\varepsilon^*\}$ is contained in all $L_d(v)$, in particular it is contained in $F$. Hence $F$ is infinite. By the arguments above this implies that $F\setminus \gamma(V)\neq \emptyset$ and the proof is finished.
																	\end{proof}
																	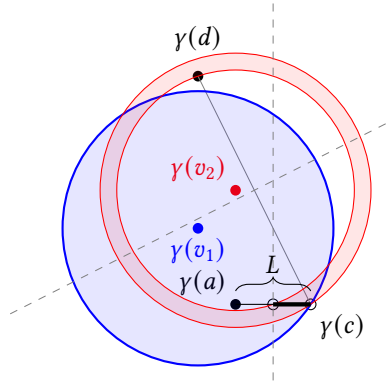
\begin{figure}[ht]
																		\center
																		\begin{tikzpicture}
																			
																			\coordinate (a) at (0,0);
																			\fill (a) circle (2pt);
																			\node[above left] at (a){$\gamma(a)$};
																			
																			\coordinate (c) at (1,0);
																			\draw (c) circle (2pt);
																			\node[below right] at (c){$\gamma(c)$};
																			
																			\draw (0.5,0) circle (2pt);

																			\coordinate (v1) at (-0.5,1);
																			\fill[blue] (v1) circle (2pt);
																			\node[below,blue] at (v1){$\gamma(v_1)$};

																			\coordinate (v2) at (0,1.5);
																			\fill[red] (v2) circle (2pt);
																			\node[above left,red] at (v2){$\gamma(v_2)$};

																			\coordinate (d) at (-0.5,3);
																			\fill (d) circle (2pt);
																			\node[above,shift={(0,0.2)}] at (d){$\gamma(d)$};
																			
																			\draw[help lines] (c) -- (d);

																			\draw[thick,blue,fill=blue,fill opacity=0.1] (v1) circle (1.8027756377319946);
																			
																			\draw[red,fill=red,fill opacity=0.1,even odd rule] (v2) circle (1.581138830084188) circle (1.8027756377319946);
																			
																			\coordinate (bcd1) at (-3,-0.125);
																			\coordinate (bcd2) at (2,2.375);
																			\draw[dashed,help lines] (bcd1) -- (bcd2);

																			\coordinate (bac1) at (0.5,-1);
																			\coordinate (bac2) at (0.5,4);
																			\draw[dashed,help lines] (bac1) -- (bac2);

																			\draw[ultra thick] (0.5,0) -- (1,0);
																			
																			\draw (0,0) -- (1,0);
																			\draw[decorate,decoration={brace,amplitude=5pt,raise=5pt}] (0,0) -- (1,0) node [midway,yshift=15pt]{$L$};

																		\end{tikzpicture}
																		\caption{The situation in \Cref{rr:rr2}. The blue open ball is the feasible region $F_d(v_1)=\openballcp{\gamma(v_1)}{\gamma(c)}$ and this corresponds to the Case 1 in the proof, i.e., $c\succ_{v_1} d$. The open annulus given by the two red circles is the feasible region $F_d(v_2)=\annuluscpp{\gamma(v_2)}{\gamma(d)}{\gamma(c)}$ and this corresponds to the Case 2 in the proof, i.e., $d\succ_{v_2}c$. The thick black line is the segment $L_d(v_2)$ (note that it does not contain the endpoints). The dashed lines represent the bisectors $\bisector{\gamma}{a}{c}$ and $\bisector{\gamma}{c}{d}$.}\label{fig:rr5_proof}
																	\end{figure}
																	
																	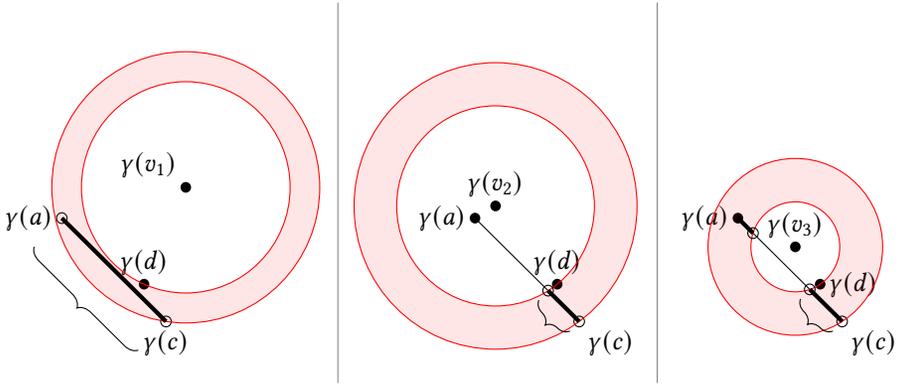
\begin{figure}[ht]
																		\centering
																		\begin{tikzpicture}[scale=0.5]
																			\begin{scope}
																				\coordinate (a) at (1.1500664776888,3.3341147983);
																				\draw (a) circle (4pt);
																				\node[left] at (a){$\gamma(a)$};
																				
																				\coordinate (d) at (3.339169250,1.59792984022);
																				\fill (d) circle (4pt);
																				\node [above] at (d){$\gamma(d)$};
																				
																				\coordinate (c) at (3.924188095546,0.6166079073822);
																				\draw (c) circle (4pt);
																				\node [below] at (c){$\gamma(c)$};

																				\draw (a) -- (c);
																				
																				\coordinate (v1) at (4.45070567,4.1437053435);
																				\fill (v1) circle (4pt);
																				\node [above left] at (v1){$\gamma(v_1)$};
																				
																				\draw[red,fill=red,fill opacity=0.1,even odd rule] (v1) circle (2.7778561895712492) circle (3.5661795779168637
																				);
																				
																				\draw[ultra thick] (a) -- (c);
																				
																				\draw[decorate,decoration={brace,amplitude=5pt,raise=15pt}] (c) -- (a);

																			\end{scope}
																			\draw[help lines] (8.5,-1) -- (8.5,9);
																			\draw[help lines] (17,-1) -- (17,9);
																			
																			\begin{scope}[xshift=11cm]
																				\coordinate (a) at (1.1500664776888,3.3341147983);
																				\fill (a) circle (4pt);
																				\node[left] at (a){$\gamma(a)$};
																				
																				\coordinate (d) at (3.339169250,1.59792984022);
																				\fill (d) circle (4pt);
																				\node [above] at (d){$\gamma(d)$};
																				
																				\coordinate (c) at (3.924188095546,0.6166079073822);
																				\draw (c) circle (4pt);
																				\node [below right] at (c){$\gamma(c)$};
																				
																				\draw (a) -- (c);
																				
																				\coordinate (v2) at (1.697342176,3.654934083);
																				\fill (v2) circle (4pt);
																				\node [above] at (v2){$\gamma(v_2)$};
																				
																				\draw[red,fill=red,fill opacity=0.1,even odd rule] (v2) circle (2.6318913978354046) circle (3.7669952928030055);
																				
																				\node(cc2) at (v2)[circle through={(d)}]{};
																				\coordinate (int2) at (intersection 1 of a--c and cc2);
																				
																				\draw (int2) circle (4pt);
																				\draw[ultra thick] (int2) -- (c);
																				
																				\draw[decorate,decoration={brace,amplitude=5pt,raise=5pt}] (c) -- (int2);

																			\end{scope}
																			\begin{scope}[xshift=18cm]
																				\coordinate (a) at (1.1500664776888,3.3341147983);
																				\fill (a) circle (4pt);
																				\node[left] at (a){$\gamma(a)$};
																				
																				\coordinate (d) at (3.339169250,1.59792984022);
																				\fill (d) circle (4pt);
																				\node [right] at (d){$\gamma(d)$};
																				
																				\coordinate (c) at (3.924188095546,0.6166079073822);
																				\draw (c) circle (4pt);
																				\node [below right] at (c){$\gamma(c)$};
																				
																				\draw (a) -- (c);
																				
																				\coordinate (v3) at (2.67866410,2.57925177);
																				\fill (v3) circle (4pt);
																				\node [above] at (v3){$\gamma(v_3)$};
																				
																				\draw[red,fill=red,fill opacity=0.1,even odd rule] (v3) circle (1.1829031174381104
																				) circle (2.3245001520757103);
																				
																				\node(cc3) at (v3)[circle through={(d)}]{};
																				\coordinate (int3) at (intersection 1 of a--c and cc3);
																				\coordinate (int3extra) at (1.55217858,2.94020905); 
																				
																				\draw (int3) circle (4pt);
																				\draw (int3extra) circle (4pt);
																				
																				\draw[ultra thick] (int3) -- (c);
																				
																				\draw[decorate,decoration={brace,amplitude=5pt,raise=5pt}] (c) -- (int3);
																				
																				\draw[ultra thick] (int3extra) -- (a);
																				
																			\end{scope}

																		\end{tikzpicture}
																		\caption{The three subcases in Case 2 in the proof of Reduction Rule 2. The black line is the line $L=\linesegment{\gamma(a)}{\gamma(c)}$ and the thick parts correspond to the set $L'$. The braces represent the line segments $L_d(v_i)$. From left to right the number of intersection points of $L$ with the inner circle $\boundary{\openballcp{\gamma(v_i)}{\gamma(d)}}$ is $0,1,$ and $2$, respectively. The red concentric circles represent the annulus $\annuluscpp{\gamma(v_i)}{\gamma(d)}{\gamma(c)}$.}\label{fig:rr5_three_subcases}
																	\end{figure}
																}
																
																\dtoappendix
																{
																	\subsection{Limitation of generalization of reduction rules}
																}
																We show that our reduction rules are optimally constrained. In particular, we prove that \Cref{rr:rr1plus} cannot be applied even to tail blocks of size $2$ when no copy of the tail block exists (\Cref{prop:cannot_remove_two_candidates_without_copy}). We then further explore the generalization of \Cref{rr:rr1plus} and show that it is not possible to remove arbitrarily large tail blocks even with copies (\Cref{prop:tail_block_seven}). Finally, we show that condition 3 in \Cref{rr:rr2} cannot be omitted (\Cref{prop:cannot_gen_rr2}).
																
																\lv
																{
																	\begin{proposition}\label{prop:cannot_remove_two_candidates_without_copy}
																	}
																	\sv
																	{
																		\begin{proposition}[$\star$]\label{prop:cannot_remove_two_candidates_without_copy}
																		}
																		There exists election $(C,V)$ with $|C|=14,|V|=4$ with tail block $S\subseteq C$ of size $2$ and the subelection induced by $C\setminus S$ is $2$-Euclidean while $(C,V)$ is not $2$-Euclidean.
																	\end{proposition}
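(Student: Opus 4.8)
The plan is to exhibit one explicit witness election and then argue the two halves of the statement separately: the negative half (that the full election is not $2$-Euclidean) via the controversity graph, and the positive half (that the reduced election is $2$-Euclidean) via an explicit embedding. I would take twelve ``base'' candidates $c_1,\dots,c_{12}$ together with a size-two tail block $S=\{x,y\}$ occupying positions $13$ and $14$ of every vote, over four voters $v_1,v_2,v_3,v_4$, arranged so that inside $S$ the voters $v_1,v_2,v_3$ rank $y$ before $x$ while $v_4$ ranks $x$ before $y$.

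For the negative direction, observe that the only candidate pair whose addition can create new controversiality is $(x,y)$ itself: every base candidate is ranked above both $x$ and $y$ by all four voters, so each pair $(x,z)$ and $(y,z)$ with $z\notin S$ is unanimous. By the chosen block orientation $v_4$ is the unique voter preferring $x$ over $y$, so $\{v_4\}$ is controversial and $v_4$ becomes a vertex of $\controversity{C}{V}$. I would design the base votes so that the controversial two-sets $\{v_1,v_4\},\{v_2,v_4\},\{v_3,v_4\}$ are already realized there---this is automatic once the base controversity graph is a triangle on $\{v_1,v_2,v_3\}$, since each triangle edge $\{v_i,v_j\}$ is cut by the same candidate pair as its complementary two-set containing $v_4$. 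By \Cref{remark:cg} these controversial two-sets persist in the full election, and now that $v_4$ is a vertex they become three edges incident to it. Hence $\deg v_4 = 3$, so $\Delta(\controversity{C}{V})>2$, contradicting \Cref{thm:controversity_graph_property}; thus $(C,V)$ is not $2$-Euclidean. (Intuitively this is \Cref{lem:controversial_lies_on_convex_hull}: $v_4$ would have to be consecutive with all three hull vertices, which is impossible.)

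For the positive direction I would give explicit coordinates in $\mathbb{R}^2$ for the twelve base candidates and the four voters and check directly that every vote's region is nonempty, i.e.\ that each voter is closer to the candidates in exactly its ranked order; this certifies that the subelection induced by $C\setminus S$ is $2$-Euclidean. The same coordinates verify the combinatorial requirements: the singleton splits $\{v_1\},\{v_2\},\{v_3\}$ make $v_1,v_2,v_3$ controversial (the triangle's vertices), the two-set splits $\{v_1,v_2\},\{v_1,v_3\},\{v_2,v_3\}$ supply its edges, and---crucially---no base pair realizes the split $\{v_4\}$. This last condition does double duty: it says exactly that $v_4$ is not controversial in the base election (so the base controversity graph really is the triangle, consistent with being $2$-Euclidean) and simultaneously that no $S'\subseteq C\setminus S$ copies $S$, so \Cref{rr:rr1plus} genuinely fails to apply.

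The main obstacle is the positive direction. Unlike refutation, $2$-Euclideanness admits no finite syntactic certificate, so one cannot avoid producing concrete coordinates and verifying all four regions. The delicate part is the combinatorial design of the twelve base candidates: one must place pairs so that precisely the six desired splits occur, the split $\{v_4\}$ never does, and the whole configuration is simultaneously realizable by a single planar embedding; meeting all these constraints at once is what forces $|C|=14$.
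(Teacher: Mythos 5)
Your witness design and your refutation half are exactly the paper's: the paper also uses $14$ candidates arranged as six ``split pairs'' plus a two-candidate tail pair, makes $v_1,v_2,v_3$ controversial singletons via three of the pairs, makes the three two-sets $\{v_1,v_2\},\{v_1,v_3\},\{v_2,v_3\}$ controversial via the other three, lets the tail pair make $v_4$ controversial, and then uses precisely your complementation trick (with four voters, the complement of a controversial pair is again a controversial pair containing $v_4$) to conclude that $\controversity{C}{V}$ is $K_4$, so some vertex has degree $3$, contradicting \Cref{thm:controversity_graph_property}. That half of your argument is complete and correct, including the observation that controversiality with respect to base pairs persists when the tail block is appended.

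The gap is the positive half, and it is not a small one: the proposition's content on that side is exactly the existence of a $2$-Euclidean election on $12$ candidates and $4$ voters realizing your prescribed controversity structure, and you never produce it --- you only promise ``explicit coordinates'' and flag their construction as the main obstacle. The paper closes this gap with an idea your proposal is missing, and which also dissolves the delicacy you worry about. Embed $\gamma(v_1),\gamma(v_2),\gamma(v_3)$ at the vertices of an equilateral triangle and $\gamma(v_4)$ at its circumcenter; choose six lines $L_{1,2},\dots,L_{11,12}$, each separating exactly the subset of voters required by the corresponding split (all six splits are realizable by lines in this configuration). Then place the pairs inductively: put $c_{2k-1},c_{2k}$ mirror-symmetric about $L_{2k-1,2k}$, so that this line \emph{is} the bisector $\bisector{\gamma}{c_{2k-1}}{c_{2k}}$, and far enough out that the new pair lies outside the union $\bigcup_{v,\,k'}\closure{\openballcp{\gamma(v)}{\gamma(c_{k'})}}$ over all voters $v$ and all previously placed candidates $c_{k'}$; this is always possible because that union is bounded while the lines are unbounded. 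The far-away placement makes every cross-pair comparison unanimous and consistent with the block order automatically, so the only preferences that need arranging are the within-pair ones, and those are exactly what the six lines encode. In particular there is no need to verify four regions of a $12$-candidate arrangement ``all at once,'' and $|C|=14$ is not forced by any tension between the constraints --- it is simply $6\times 2$ pair candidates plus the $2$ tail candidates.
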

																	\toappendix
																	{
																		\sv
																		{
																			\begin{proof}[Proof of \Cref{prop:cannot_remove_two_candidates_without_copy}]
																			}
																			\lv
																			{
																				\begin{proof}
																				}
																				The election is as follows. Let $C=\{c_1,\ldots,c_{14}\}$ and $V=\{v_1,v_2,v_3,v_4\}$, where:
																				\[
																				\begin{array}{cccccccccccccc}
																					v_1: & c_2 c_1 & c_3 c_4 & c_5 c_6 & c_8 c_7 & c_{10} c_9  & c_{11} c_{12} & c_{13} c_{14}\\
																					v_2: & c_1 c_2 & c_4 c_3 & c_5 c_6 & c_8 c_7 & c_{9} c_{10} & c_{12} c_{11} & c_{13} c_{14}\\
																					v_3: & c_1 c_2 & c_3 c_4 & c_6 c_5 & c_7 c_8 & c_{10} c_9   & c_{12} c_{11} & c_{13} c_{14}\\
																					v_4: & c_1 c_2 & c_3 c_4 & c_5 c_6 & c_7 c_8 & c_{9} c_{10} & c_{11} c_{12} & c_{14} c_{13}\\		
																				\end{array}
																				\]
																				We omit the symbol $\succ$ due to space constraints and emphasize the blocks of two candidates $c_{2k-1},c_{2k}$. To show that $(C,V)$ is not $2$-Euclidean we use the controversity graph. Note that the voter~$v_k$ is controversial for~$c_{2k}$ over~$c_{2k-1}$ for $k\in[3]$ and $v_4$ is controversial for~$c_{14}$ over~$c_{13}$ and due to candidates $c_7,\ldots,c_{12}$ for any pair of distinct voters $u,v$ the set $\{u,v\}$ is controversial. Hence the controversity graph  $\controversity{C}{V}$ is a complete graph on $4$ vertices and thus by \Cref{thm:controversity_graph_property} the election $(C,V)$ cannot be $2$-Euclidean.
																				
																				To prove that the subelection induced by $C\setminus \{c_{13},c_{14}\}$ is $2$-Euclidean we explicitly construct a $2$-Euclidean embedding $\gamma$ for it.
																				
																				Start by embedding $\gamma(v_1),\gamma(v_2),\gamma(v_3)$ to the vertices of an equilateral triangle $T$ and $\gamma(v_4)$ to the circumcenter of $T$. Next, let $L_{1,2},L_{3,4},L_{5,6},L_{7,8},L_{9,10},L_{11,12}$ be the six lines as depicted in \Cref{fig:pf_thm_no_copied_order}. The line $L_{2k-1,2k}$ separates the voters with respect to their preference about candidates $c_{2k-1} c_{2k}$. For example $v_1$ is the only candidate that prefers $c_2$ over $c_1$, thus $L_{1,2}$ separates $\gamma(v_1)$ from $\gamma(v_2),\gamma(v_3),\gamma(v_4)$.
																				
																				We now describe how to embed the candidates. The aim is to embed the candidates in such a way that the line $L_{2k-1,2k}$ coincides with the bisector $\bisector{\gamma}{c_{2k-1}}{c_{2k}}$. In this way the embedding of the candidates will agree with the preferences about $c_{2k-1},c_{2k}$ of the individual voters. What is left is to ensure that the placement agrees with the preferences $c_k \succ_{v} c_{k+i}$ for $i>1$ which holds for all voters $v$.
																				
																				We inductively place the pairs $c_{2k-1},c_{2k}$ for all $k\in[6]$.
																				Start by placing $c_1,c_2$ arbitrarily such that $\bisector{\gamma}{c_1}{c_2}=L_{1,2}$. Suppose now that first $2k$ candidates were placed for $k\geq 1$. To place $c_{2k+1},c_{2k+2}$ consider the closed balls of the form $\closure{\openballcp{\gamma(v)}{\gamma(c_{k'})}}$ for $v\in V$ and $k'\in[2k]$. In order for the embedding $\gamma$ to agree with $c_{2k+1}\succ_{v} c_{k'}$ and $c_{2k+2}\succ_v c_{k'}$ for all $k'\in[2k],v\in V$ it is enough to place the two candidates $c_{2k+1},c_{2k+2}$ outside the union
																				$
																				\bigcup_{k'\in[2k],v\in V}\closure{\openballcp{\gamma(v)}{\gamma(c_{k'})}}
																				$
																				and such that $L_{2k+1,2k+2}=\bisector{\gamma}{c_{2k+1}}{c_{2k+2}}$. This is indeed possible as the region given by the finite union of closed balls is bounded, however, all the lines $L_{i,i+1}$ extend to infinity.
																			\end{proof}

																			\begin{figure}[ht]
																				\center
																				\begin{tikzpicture}[scale=0.75]
																					\coordinate (gv1) at (0,0);
																					\coordinate (gv2) at (3,0);
																					\coordinate (gv3) at (1.5,2.598076);
																					
																					\coordinate (gv4) at (1.5,0.88602540);
																					
																					\fill (gv1) circle (2pt);
																					\node[below left] at (gv1){$\gamma(v_1)$};
																					
																					\fill (gv2) circle (2pt);
																					\node [below right] at (gv2) {$\gamma(v_2)$}; 
																					
																					\fill (gv3) circle (2pt);
																					\node[above] at (gv3){$\gamma(v_3)$};

																					\fill (gv4) circle (2pt);
																					\node[below] at (gv4){$\gamma(v_4)$};
																					
																					\draw[help lines] (gv1) -- (gv2);	
																					\draw[help lines] (gv1) -- (gv3);	
																					\draw[help lines] (gv3) -- (gv2);

																					\coordinate (x1) at (-0.7320508075689,3);
																					\coordinate (x2) at (2.15470053, -2);
																					\draw[<->] (x1) -- (x2);
																					\node[left] at (x1){$L_{1,2}$};
																					
																					\coordinate (x3) at (3.73205080,3);
																					\coordinate (x4) at (0.8452994616,-2);
																					\draw[<->] (x3) -- (x4);
																					\node[right] at (x3){$L_{3,4}$};
																					
																					\coordinate (x5) at (-1,1.7320508075689);
																					\coordinate (x6) at (4,1.7320508075689);
																					\draw[<->] (x5) -- (x6);
																					\node[above] at (x5){$L_{5,6}$};

																					\coordinate (x7) at (-1,0.43301270);
																					\coordinate (x8) at (4,0.43301270);
																					\draw[<->] (x7) -- (x8);
																					\node[above] at (x7){$L_{7,8}$};

																					\coordinate (x9) at (2.23205080,3);
																					\coordinate(x10) at (-0.6547005,-2);
																					\draw[<->] (x9) -- (x10);
																					\node[above] at (x9){$L_{9,10}$};

																					\coordinate (x11) at (0.7679491924311,3);
																					\coordinate (x12) at (3.65470053,-2);
																					\draw[<->] (x11) -- (x12);
																					\node[above] at (x11){$L_{11,12}$};
																					
																				\end{tikzpicture}
																				\caption{Realization of the six lines $L_{1,2},\ldots,L_{11,12}$ in the proof of \Cref{prop:cannot_remove_two_candidates_without_copy}. The gray triangle is the triangle $T$. The concrete realization of the lines is not relevant, the important property is that $L_{1,2}$ separates $\gamma(v_1)$ from $\gamma(v_2),\gamma(v_3),\gamma(v_4)$, $L_{3,4}$ separates $\gamma(v_2)$ from $\gamma(v_1),\gamma(v_3),\gamma(v_4)$ and so on.}\label{fig:pf_thm_no_copied_order}
																			\end{figure}
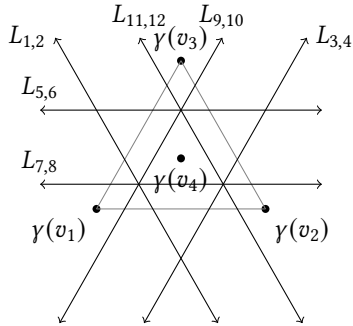
																		}
																		\sv
																		{
																			\begin{proposition}[$\star$]\label{prop:tail_block_seven}
																			}
																			\lv
																			{
																				\begin{proposition}\label{prop:tail_block_seven}
																				}
																				There exists election $(C,V)$ with $|C|=14,|V|=7$ with tail block $S\subseteq C$ of size $7$ and the set $S'=C\setminus S$ copies $S$ but the subelection induced by $C\setminus S$ is $2$-Euclidean and $(C,V)$ is not.
																			\end{proposition}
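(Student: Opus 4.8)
The plan is to exhibit one explicit election and then verify its two required properties separately. Write $C=\{a_1,\dots ,a_7\}\cup\{b_1,\dots ,b_7\}$ with $S=\{b_1,\dots ,b_7\}$ and $S'=\{a_1,\dots ,a_7\}$, and let every vote have the shape $a_{\pi(1)}\succ\dots\succ a_{\pi(7)}\succ b_{\pi(1)}\succ\dots\succ b_{\pi(7)}$ for a permutation $\pi$ of $[7]$ depending on the voter. By construction $S$ occupies positions $[8,14]$ in every vote, so it is a tail block of size $7$, and the bijection $b_i\mapsto a_i$ witnesses that $S'$ copies $S$. The whole task therefore reduces to choosing the seven permutations $\pi_{v_1},\dots ,\pi_{v_7}$ so that (i) the subelection induced by $S'$ is $2$-Euclidean, while (ii) the full election $(C,V)$ is not.

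For (ii) I would realise a $3$-$8$ pattern (see \Cref{subsec:3_8_pattern}) inside $(C,V)$; since being $2$-Euclidean is hereditary (\Cref{obs:euclidean_is_hereditary}), exhibiting this pattern as a subelection immediately refutes $2$-Euclideanness. The central candidate must be some $b_j$: because $S$ is a tail block, every $a_i$ lies above $b_j$ in every vote and hence carries the bit-vector $(1,1,1)$ in the sense of \Cref{lem:38patterrec}, while each remaining $b_i$ (with $i\neq j$) lies above $b_j$ for voter $v_t$ exactly when $a_i\succ_{v_t} a_j$. Choosing three voters $v_1,v_2,v_3$ and the index $j$ so that the six vectors attached to $\{b_i:i\neq j\}$ are precisely the six nonzero patterns different from $(1,1,1)$, the eight candidates $b_j$, one $a$-candidate (supplying $(1,1,1)$) and the $b_i$ with $i\neq j$ form a $3$-$8$ pattern. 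It is worth noting here why the controversity graph of \Cref{thm:controversity_graph_property} is useless for this instance: since $S'$ copies the tail block $S$ and no cross pair $a_i,b_{i'}$ ever induces a controversial set (every voter ranks each $a_i$ above each $b_{i'}$), the controversial sets depend only on the order of $S'$, so for \emph{every} subset of voters the controversity graph of $(C,V)$ coincides with that of the $S'$-subelection. Any controversity-based refutation of $(C,V)$ would thus equally refute the $2$-Euclidean election on $S'$, which is exactly why the $3$-$8$ pattern, necessarily using candidates from both $S$ and $S'$, is the right certificate.

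For (i) I would produce a $2$-Euclidean embedding of the $S'$-subelection explicitly. The cleanest route is to fix concrete coordinates for the seven candidate points $\gamma(a_1),\dots ,\gamma(a_7)$ and the seven voter points $\gamma(v_1),\dots ,\gamma(v_7)$ in $\mathbb{R}^2$, read off the induced orders as the permutations $\pi_{v_t}$, and then check that the three distinguished voters indeed realise the bit-vector pattern demanded in step (ii); placing the voters as actual points guarantees (i) by construction, leaving only the finite combinatorial check of (ii). The main obstacle is precisely this simultaneous fit: step (ii) wants the seven orders spread out enough to hit all the required nonzero bit-vectors, whereas step (i) wants them to arise from a genuine planar configuration. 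I expect to resolve the tension by starting from the planar configuration (so that (i) holds automatically) and only afterwards verifying (ii) on the resulting table of preferences, rather than prescribing the permutations abstractly and struggling to embed them.
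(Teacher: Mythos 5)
Your construction template and your certificate for non-$2$-Euclideanness coincide with the paper's: a doubled candidate set in which the tail block $S$ repeats the relative order of $S'=C\setminus S$, and a $3$-$8$ pattern whose central candidate lies in $S$ (the copies in $S'$ all supply the bit-vector $(1,1,1)$, while the six other members of $S$ must supply the six remaining nonzero vectors). That half of the plan is sound, and it is exactly the paper's no-certificate.

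The genuine gap is in part (i). You never exhibit the permutations $\pi_{v_1},\dots,\pi_{v_7}$ or any coordinates, and, more importantly, you never establish that a planar configuration inducing three orders that hit all six required bit-vectors around a common middle candidate exists at all. Your stated strategy is to start from a planar configuration (so that (i) holds by fiat) and only afterwards verify (ii), but that verification is precisely the step left undone: if no generic placement of seven candidate points happened to produce such a triple of orders, the whole approach would collapse, and nothing in your proposal rules this out. The paper closes this gap in one line by doing the opposite of what you propose: it \emph{prescribes} the three permutations abstractly (e.g.\ $v_1\colon c_6c_4c_1c_0c_2c_3c_5$, $v_2\colon c_5c_4c_2c_0c_1c_3c_6$, $v_3\colon c_6c_5c_3c_0c_1c_2c_4$, so the $3$-$8$ pattern with center $c_0'$ holds by construction) and then invokes the theorem of Bulteau and Chen \cite[Theorem 3]{BulteauC23} that \emph{every} election with $3$ voters and at most $7$ candidates is $2$-Euclidean, so the $S'$-subelection is $2$-Euclidean with no embedding needed. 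Note that the paper's witness accordingly has three voters (despite the ``$|V|=7$'' in the statement); by taking $|V|=7$ literally you make part (i) strictly harder, since Bulteau--Chen no longer applies to a $7$-voter, $7$-candidate subelection and an explicit embedding becomes unavoidable. Either switch to three voters and cite \cite[Theorem 3]{BulteauC23}, or first build the $3$-voter instance that way and then append four extra voters read off from points of an actual embedding of its $S'$-subelection; as written, the existence claim at the heart of your proof is unproven. (Your aside on the controversity graph is correct but plays no role in the argument.)
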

																			\toappendix{
																				\sv
																				{
																					\begin{proof}[Proof of \Cref{prop:tail_block_seven}]
																					}
																					\lv
																					{
																						\begin{proof}
																						}
																						The instance is as follows. There are $3$ voters $V=\{v_1,v_2,v_3\}$ and the candidate set is  $C=\{c_0,c_1,\ldots,c_6,c_0',c_1',\ldots,c_6'\}$. The three voters are:
																						\[
																						\begin{array}{ccc}
																							v_1: c_6 c_4 c_1 c_0 c_2 c_3 c_5 & c_6' c_4' c_1' c_0' c_2' c_3' c_5' \\
																							v_2: c_5 c_4 c_2 c_0 c_1 c_3 c_6 & c_5' c_4' c_2' c_0' c_1' c_3' c_6' \\
																							v_3: c_6 c_5 c_3 c_0 c_1 c_2 c_4 & c_6' c_5' c_3' c_0' c_1' c_2' c_4' \\
																						\end{array}
																						\]
																						To see why $(C,V)$ is not $2$-Euclidean, note that this instance contains the forbidden $3$-$8$ pattern by choosing $c_\emptyset$ as $c_0'$. The tail block is $S=\{c_0',\ldots,c_6'\}$. Note that subelection induced by $C\setminus S$ has $7$ candidates and $3$ voters and by the result of Bulteau and Chen (\cite[Theorem 3]{BulteauC23}) any such election is $2$-Euclidean.
																					\end{proof}
																				}
																				We know that we can always remove tail blocks with copies of size at most $3$ by \Cref{rr:rr1plus} and \Cref{prop:tail_block_seven} gives an upper bound on tail block size that certainly cannot be removed safely even with copies. We conjecture that \Cref{rr:rr1plus} cannot be extended to remove more than $3$ candidates.
																				
																				\sv
																				{
																					\begin{proposition}[$\star$]\label{prop:cannot_gen_rr2}
																					}
																					\lv
																					{
																						\begin{proposition}\label{prop:cannot_gen_rr2}
																						}
																						There exists election $(C,V)$ with $|C|=8,|V|=3$ with two candidates $b,c\in C$ such that:
																						\begin{enumerate}
																							\item for all votes $v$ we have $b\succ_v c$,
																							\item for all votes $v$ there is no $d\in C\setminus \{b,c\}$ such that $b\succ_v d$ and $d\succ_v c$, and
																							\item $(C,V)$ is not $2$-Euclidean and the subelection induced by $C\setminus \{b\}$ is $2$-Euclidean.
																						\end{enumerate}
																						In other words, we cannot omit condition $3$ in \Cref{rr:rr2}.
																					\end{proposition}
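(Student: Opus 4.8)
The plan is to reuse the strategy of \Cref{prop:tail_block_seven}: exhibit an $8$-candidate, $3$-voter instance that is not $2$-Euclidean because it is an instantiation of the $3$-$8$ pattern, chosen so that deleting the single candidate $b$ drops the candidate count to $7$, which by the result of Bulteau and Chen~\cite[Theorem 3]{BulteauC23} forces the resulting election to be $2$-Euclidean. Since the proposition only asks us to violate condition $3$ while preserving conditions $1$ and $2$, the entire difficulty is purely combinatorial: we must locate inside the rigid $3$-$8$ pattern a pair of candidates $b,c$ that are consecutive in every vote with $b\succ_v c$, while ensuring that no single candidate sits above $b$ in all three votes.

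I would start from the $3$-$8$ pattern (\Cref{subsec:3_8_pattern}, \Cref{fig:the38pattern}): the candidate set is $\{c_I \mid I\subseteq\{1,2,3\}\}$, the central candidate $c_\emptyset$ occupies position $5$ in each vote, and $c_I\succ_{v_i}c_\emptyset$ exactly when $i\in I$, while the orders inside the two surrounding blocks are arbitrary. The key observation to isolate is that the only candidate preferred over $c_\emptyset$ by all three voters is $c_{123}$; hence $c_{123}$ is the unique candidate that can be placed immediately before $c_\emptyset$ (at position $4$) simultaneously in all three votes. I therefore set $b=c_{123}$ and $c=c_\emptyset$, and fix an instantiation placing $c_{123}$ at position $4$ everywhere, e.g. $v_1\colon c_{12}\,c_{13}\,c_1\,c_{123}\,c_\emptyset\,c_2\,c_{23}\,c_3$, with the analogous cyclic arrangements for $v_2$ and $v_3$.

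With this choice, conditions $1$ and $2$ are immediate: $b=c_{123}$ sits at position $4$ and $c=c_\emptyset$ at position $5$ in every vote, so $b\succ_v c$ holds and the two are consecutive, leaving no $d$ between them. That $(C,V)$ is not $2$-Euclidean follows because it is exactly the $3$-$8$ pattern, which is not $2$-Euclidean by Bogomolnaïa and Laslier~\cite{BogomolnaiaL07}. The subelection induced by $C\setminus\{b\}$ has $7$ candidates and $3$ voters, hence it is $2$-Euclidean by~\cite{BulteauC23}.

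Finally I would check that condition $3$ fails, i.e. that no candidate $a$ satisfies $a\succ_v b$ for all $v$. A candidate beats $b=c_{123}$ in vote $v_i$ only if it lies in positions $1$--$3$, that is, among the three before-block candidates other than $c_{123}$; these sets are $\{c_1,c_{12},c_{13}\}$, $\{c_2,c_{12},c_{23}\}$, and $\{c_3,c_{13},c_{23}\}$ for $v_1,v_2,v_3$, whose common intersection is empty. Thus no single candidate dominates $c_{123}$, condition $3$ is violated, and the instance witnesses that condition $3$ cannot be dropped from \Cref{rr:rr2}. The main (and essentially only) obstacle is the combinatorial step already resolved here: the rigidity of the $3$-$8$ pattern makes universally consecutive pairs scarce, and identifying $c_{123}$ as the unique candidate that can be made adjacent to $c_\emptyset$ in all votes is precisely what makes the construction succeed.
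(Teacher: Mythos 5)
Your proposal is correct and matches the paper's own proof essentially verbatim: the paper also instantiates the $3$-$8$ pattern with $c_{123}$ placed at position $4$ in every vote, sets $b=c_{123}$, $c=c_\emptyset$, invokes Bogomolnaïa--Laslier for non-$2$-Euclideanness, and cites Bulteau--Chen~\cite[Theorem 3]{BulteauC23} for the $7$-candidate, $3$-voter subelection. Your extra check that no candidate beats $c_{123}$ in all three votes is a harmless addition (it also follows automatically from the correctness of \Cref{rr:rr2}, i.e.\ \Cref{prop:rr2}).
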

																					\toappendix
																					{
																						\sv
																						{
																							\begin{proof}[Proof of \Cref{prop:cannot_gen_rr2}]
																							}
																							\lv
																							{
																								\begin{proof}
																								}
																								The instance $(C,V)$ is the forbidden $3$-$8$ pattern:
																								\begin{align*}
																									v_1 &\colon   c_{1}\succ c_{12} \succ c_{13} \succ c_{123} \succ  c_\emptyset\succ c_2 \succ c_{23} \succ c_3\\
																									v_2 &\colon  c_{2} \succ c_{23}\succ c_{12}\succ c_{123}\succ c_\emptyset\succ c_{13}\succ c_{1}\succ c_{3}\\
																									v_3 &\colon  c_{13}\succ c_3 \succ c_{23}\succ c_{123}\succ c_\emptyset\succ c_{1}\succ c_{2}\succ c_{12}
																								\end{align*}
																								where $b=c_{123}$ and $c=c_\emptyset$. The subelection induced by $C\setminus \{b\}$ has $3$ voters and $7$ candidates and any such election is $2$-Euclidean by the result of Bulteau and Chen (\cite[Theorem 3]{BulteauC23}).
																							\end{proof}
																						}
																						\toappendix
																						{
																							\sv
																							{
																								\section{Missing Material from section Refuting the existence of an embedding with an ILP}
																							}
																						}
																						\section{Refuting the existence of an embedding with an ILP}\label{sec:implied_regions}
																						The property of being $2$-Euclidean imposes restriction on the election itself but also on the embedding. A trivial observation is that whenever $(C,V)$ admits a $2$-Euclidean embedding $\gamma$, then $\gamma$ must have at least $|V|$ nonempty regions, one for each $v\in V$. 
																						
																						It is not hard to observe that for $|C|=3$ the embedding that embeds the $3$ candidates to the vertices of a non-degenerate triangle induces $6=3!$ distinct regions. Thus, any election with at most $3$ candidates is $2$-Euclidean. This has also been observed by Chen and Bulteau in~\cite{BulteauC23}. In general, an election with $m$ candidates can have up to $m!$ votes. However, these must correspond to some nonempty regions induced by the bisectors. Bennet and Hays~\cite{Bennett1960} gave a recursive formula to compute the maximum number of nonempty regions induced by an embedding of $m$ candidates in $d$ dimensions. For $d=2$ the upper bound is as follows.
																						
																						\begin{corollary}\label{thm:ubm}
																							Let $\gamma\colon C\to\mathbb{R}^2$ be an embedding of candidates. Let $\mathcal{R}= \{v \mid \region{\gamma}{v}\neq \emptyset\}$ be the set of nonempty regions of $\gamma$. Then $|\mathcal{R}|\leq \ub(|C|)$, where
																							\begin{equation}
																								\ub(m) = \frac{m(3m-10)(m+1)(m-1)}{24}+m(m-1)+1.\tag{UB}\label{eqn:ubm}
																							\end{equation}
																						\end{corollary}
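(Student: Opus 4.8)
The plan is to obtain \Cref{thm:ubm} as a direct specialization of the counting result of Bennett and Hays~\cite{Bennett1960} recalled in the related work, namely that the maximum number of distinct distance orderings of $m$ points in $\mathbb{R}^d$ equals $\sum_{k=m-d}^{m} |s(m,k)|$, followed by an elementary evaluation at $d=2$. The geometric step is essentially already recorded in the preliminaries: a point $p\in\mathbb{R}^2$ lying on no bisector lies in $\region{\gamma}{v}$ precisely when sorting the candidates by increasing distance from $p$ yields the linear order $\succ_v$, since $p\in\halfplane{\gamma}{a}{b}$ iff $\ell_2(p,\gamma(a))<\ell_2(p,\gamma(b))$. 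Hence the nonempty regions of $\arrangement{\gamma}$ are in bijection with the distinct distance orderings realised by $\gamma(C)$, and so $|\mathcal{R}|$ is at most the Bennett--Hays maximum for $d=2$, i.e.\ $|\mathcal{R}|\le \sum_{k=m-2}^{m}|s(m,k)|$.

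It then remains to evaluate $\sum_{k=m-2}^{m}|s(m,k)| = |s(m,m)| + |s(m,m-1)| + |s(m,m-2)|$ in closed form. I would use the interpretation of $|s(m,k)|$ as the number of permutations of an $m$-element set with exactly $k$ cycles. This gives $|s(m,m)| = 1$ (only the identity), $|s(m,m-1)| = \binom{m}{2}$ (one transposition together with $m-2$ fixed points), and $|s(m,m-2)| = 2\binom{m}{3} + 3\binom{m}{4}$: a permutation with $m-2$ cycles on $m$ points must have its non-trivial cycles contribute a total length-minus-one of $2$, so it is either a single $3$-cycle ($2\binom{m}{3}$ choices) or a pair of disjoint transpositions ($3\binom{m}{4}$ choices).

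Substituting and placing everything over the common denominator $24$ gives $\sum_{k=m-2}^{m}|s(m,k)| = \frac{1}{24}\left(3m^4 - 10m^3 + 21m^2 - 14m + 24\right)$. A routine expansion of the claimed right-hand side shows $\frac{m(3m-10)(m+1)(m-1)}{24} + m(m-1) + 1 = \frac{1}{24}\left(3m^4 - 10m^3 + 21m^2 - 14m + 24\right)$ as well, matching term by term; as a sanity check, both expressions give $18$ at $m=4$, which is consistent with the known value $\sum_{k=2}^{4}|s(4,k)| = 11+6+1$. This establishes $|\mathcal{R}|\le \ub(m)$.

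Since the substantive bound is supplied by the cited Bennett--Hays count, the only work is the cycle-counting for $|s(m,m-2)|$ and the final polynomial identity, so there is no genuine obstacle beyond bookkeeping. I note, however, that a self-contained proof avoiding the black-box citation would be considerably harder: one would have to bound the cell count of the bisector arrangement directly, and the bisectors are far from a generic line arrangement because of forced concurrences (for any three candidates the bisectors $\bisector{\gamma}{a}{b}$, $\bisector{\gamma}{b}{c}$, $\bisector{\gamma}{a}{c}$ all pass through the circumcenter of $\triangle\gamma(a)\gamma(b)\gamma(c)$). Accounting for these dependencies, rather than the closing computation, is where the difficulty of such an alternative argument would concentrate.
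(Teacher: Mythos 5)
Your proposal is correct and takes essentially the same approach as the paper: both invoke the Bennett--Hays result that the maximum number of regions for $d=2$ is $|s(m,m)|+|s(m,m-1)|+|s(m,m-2)|$ and then reduce to the closed form via $|s(m,m)|=1$ and $|s(m,m-1)|=\binom{m}{2}$. The only cosmetic difference is that you derive $|s(m,m-2)|=2\binom{m}{3}+3\binom{m}{4}$ by counting cycle types, whereas the paper quotes the equivalent known identity $|s(m,m-2)|=\frac{3m-1}{4}\binom{m}{3}$; both expand to the same polynomial $\frac{1}{24}\left(3m^4-10m^3+21m^2-14m+24\right)$.
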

																						\begin{proof}
																							By \cite{Bennett1960} for $d=2$ the formula for $\ub(m)$ is $\ub(m)=|s(m,m-2)|+|s(m,m-1)|+|s(m,m)|$, where $s(n,k)$ are the Stirling numbers of the first kind. The desired result is then obtained by using known identities $|s(m,m)|=1,|s(m,m-1)|=\binom{m}{2}$, and $|s(m,m-2)|=\frac{3m-1}{4}\binom{m}{3}$.
																						\end{proof}
																						
																						A trivial consequence of \Cref{thm:ubm} is that whenever $|V|>\ub(|C|)$, then $(C,V)$ is clearly not $2$-Euclidean as otherwise a $2$-Euclidean embedding would induce too many regions. \Cref{thm:ubm} is one of many combinatorial properties that the election must satisfy in order to even have a chance to be $2$-Euclidean. Towards designing an efficient refutation procedure, we derive many more such properties that only depend on the election itself or hold for \emph{any} $2$-Euclidean embedding of the given election.
																						
																						For convenience, we rephrase the geometrical terminology of bisectors and regions into the language of plane graphs and introduce the \emph{embedding graph} for a $2$-Euclidean embedding of an election in \Cref{subsec:embedding_graph}. We prove important properties of the embedding graph in \Cref{subsec:properties_of_embeddings} and then utilize them to design an integer linear program for refuting the existence of $2$-Euclidean embedding in \Cref{subsec:the_algorithm}. To avoid degenerate or trivial cases, we shall from now on silently assume, without loss of generality, that $|C|\geq 4$. We already know that any election with at most three candidates is $2$-Euclidean.

																						\dtoappendix
																						{
																							\subsection{Embedding Graph}
																						}\label{subsec:embedding_graph}
																						Let $\gamma \colon C \to \mathbb{R}^2$ be a nice embedding of candidates. Recall that $\arrangement{\gamma}=\{\bisector{\gamma}{a}{b}\mid a,b\in C, a\neq b\}$ is the set of all bisectors and no two are parallel. The arrangement of lines $\arrangement{\gamma}$ induces a plane graph $\primalg{\gamma}=(V(\primalg{\gamma}),E(\primalg{\gamma}))$ (the \emph{primal graph}) as follows (refer to \Cref{fig:embedding_graph} for an illustration). Let $I^\gamma=\{p\in \mathbb{R}^2 \mid \exists \beta_1,\beta_2, \in \arrangement{\gamma}: \beta_1\neq \beta_2, p \in \beta_1\cap \beta_2\}$ be the set of all pairwise intersection points of distinct bisectors. Since there is only finitely many bisectors it follows that $I^\gamma$ is also finite and hence bounded. Let $B^\gamma$ be an open ball containing $I^\gamma$. We let $V(\primalg{\gamma})$ be the set of intersection points of bisectors together with the points where $\boundary{B^\gamma}$ (the boundary of $B^\gamma$) intersects a bisector. In other words $V(\primalg{\gamma}) = I^\gamma \cup \left(\bigcup\arrangement{\gamma} \cap \partial B^\gamma\right)$. The arcs $E(\primalg{\gamma})$ are either the straight line segments (parts of bisectors) or circular arcs (parts of $\partial B^\gamma$) connecting two intersection points. Note that we completely drop the infinite rays. Observe that the bounded faces of $\primalg{\gamma}$ are exactly the regions $\region{\gamma}{v}$ (the outer regions are clipped by $B^\gamma$). Moreover the circle $\boundary{B^\gamma}$ is partitioned by the bisectors into several circular arcs and there is a one to one correspondence between these circular arcs and the outer regions of the embedding.
																						
																						The \emph{embedding graph for $\gamma$}, denoted by $\dualg{\gamma}$, is the weak dual of the plane graph $\primalg{\gamma}$. Recall that the weak dual has a vertex for each bounded face of $\primalg{\gamma}$ and there is an edge between two faces $F_1,F_2$ for each common boundary edge of $F_1$ and $F_2$. Each bounded face of $\primalg{\gamma}$ in turn corresponds to some nonempty region $\region{\gamma}{v}$. By a slight abuse of notation, we will identify the vertices of $\dualg{\gamma}$ with the underlying permutations. I.e., a vertex $v\in V(\dualg{\gamma})$ corresponds to the nonempty region $\region{\gamma}{v}$. This correspondence allows us to rephrase \Cref{thm:ubm} in terms of the embedding graph:
																						\begin{corollary}\label{cor:ubm_embgr}
																							Let $(C,V)$ be $2$-Euclidean election, $\gamma$ a $2$-Euclidean embedding of $(C,V)$, and $\dualg{\gamma}$ its embedding graph. Then the number of vertices of $\dualg{\gamma}$ is at most $\ub(|C|)$.
																						\end{corollary}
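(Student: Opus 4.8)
The plan is to observe that the claim is an almost immediate consequence of \Cref{thm:ubm} together with the defining property of the weak dual, so the work is purely in tracking the correspondence between the three objects involved: vertices of $\dualg{\gamma}$, bounded faces of $\primalg{\gamma}$, and nonempty regions. First I would recall that $\dualg{\gamma}$ is \emph{defined} as the weak dual of the primal graph $\primalg{\gamma}$, and that the weak dual places exactly one vertex in each bounded face of $\primalg{\gamma}$. Hence, by construction, $|V(\dualg{\gamma})|$ equals the number of bounded faces of $\primalg{\gamma}$.

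Next I would invoke the observation recorded just before the corollary, namely that the bounded faces of $\primalg{\gamma}$ are exactly the regions $\region{\gamma}{v}$, with the outer (otherwise unbounded) regions clipped by the ball $B^\gamma$. This identifies the bounded faces of $\primalg{\gamma}$ with the set $\mathcal{R}=\{v\mid \region{\gamma}{v}\neq\emptyset\}$ of nonempty regions via a genuine bijection, so that $|V(\dualg{\gamma})| = |\mathcal{R}|$.

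Finally, since a $2$-Euclidean embedding $\gamma$ of $(C,V)$ restricts to an embedding of the candidate set $C$, \Cref{thm:ubm} applies verbatim and gives $|\mathcal{R}|\leq \ub(|C|)$. Chaining the equality $|V(\dualg{\gamma})| = |\mathcal{R}|$ with this bound yields $|V(\dualg{\gamma})|\leq \ub(|C|)$, which is the assertion.

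There is no substantive obstacle here; the only point that warrants care is justifying that clipping the outer regions by $B^\gamma$ really does produce a bijection between $\mathcal{R}$ and the bounded faces of $\primalg{\gamma}$, rather than merging two distinct regions into one face or splitting one region into several. This is ensured by the choice of $B^\gamma$ as an open ball containing all pairwise bisector intersection points $I^\gamma$: beyond $B^\gamma$ the arrangement $\arrangement{\gamma}$ consists only of disjoint rays, so each outer region meets $\boundary{B^\gamma}$ in a single circular arc and contributes exactly one clipped bounded face.
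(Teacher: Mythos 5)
Your proposal is correct and matches the paper's reasoning exactly: the paper states this corollary without a separate proof precisely because, as you observe, the vertices of the weak dual $\dualg{\gamma}$ are in bijection with the bounded faces of $\primalg{\gamma}$, which are identified with the nonempty regions, so \Cref{thm:ubm} applies directly. Your closing remark on why clipping by $B^\gamma$ yields a genuine bijection (one circular arc per outer region) is the same justification the paper records in its construction of the embedding graph.
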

																						
																						\begin{figure}
																							\centering
																							\begin{tikzpicture}[scale=0.25]
																								\node[circle,scale=0.5,fill=red!60,label=right:$a$] (a) at (3.18,-0.78) {};
																								\node[circle,scale=0.5,fill=blue!60,label=$b$] (b) at (-4.58,-2.66) {};
																								\node[circle,scale=0.5,fill=green!60,label=left:$c$] (c) at (4.27,-3.23) {};
																								\node[circle,scale=0.5,fill=yellow!60,label=$d$] (d) at (5.4,5.84) {};
																								
																								\path (a) -- (b) coordinate[midway] (Mab);
																								\draw[postaction={draw,red,dash pattern= on 3pt off 5pt,dash phase=4pt,thick}]
																								[blue,dash pattern= on 3pt off 5pt,thick,<->] ($(Mab)!11cm!270:(a)$) -- ($(Mab)!11cm!90:(a)$);
																								
																								\path (a) -- (c) coordinate[midway] (Mac);
																								\draw[name path=AC,postaction={draw,red,dash pattern= on 3pt off 5pt,dash phase=4pt,thick}]
																								[green,dash pattern= on 3pt off 5pt,thick,<->] ($(Mac)!12cm!270:(a)$) -- ($(Mac)!12cm!90:(a)$);
																								
																								\path (a) -- (d) coordinate[midway] (Mad);
																								\draw[postaction={draw,red,dash pattern= on 3pt off 5pt,dash phase=4pt,thick}]
																								[yellow,dash pattern= on 3pt off 5pt,thick,,<->] ($(Mad)!11cm!270:(a)$) -- ($(Mad)!12cm!90:(a)$);
																								\path (b) -- (c) coordinate[midway] (Mbc);
																								\draw[postaction={draw,blue,dash pattern= on 3pt off 5pt,dash phase=4pt,thick}]
																								[green,dash pattern= on 3pt off 5pt,thick,,<->] ($(Mbc)!14cm!270:(b)$) -- ($(Mbc)!9cm!90:(b)$);
																								\path (b) -- (d) coordinate[midway] (Mbd);
																								\draw[postaction={draw,blue,dash pattern= on 3pt off 5pt,dash phase=4pt,thick}]
																								[yellow,dash pattern= on 3pt off 5pt,thick,,<->] ($(Mbd)!8cm!270:(b)$) -- ($(Mbd)!15cm!90:(b)$);
																								\path (c) -- (d) coordinate[midway] (Mcd);
																								\draw[name path=CD, postaction={draw,green,dash pattern= on 3pt off 5pt,dash phase=4pt,thick}]
																								[yellow,dash pattern= on 3pt off 5pt,thick,,<->] ($(Mcd)!13cm!270:(c)$) -- ($(Mcd)!11cm!90:(c)$);

																								\draw[name path=circ] (a) circle (11);
																								\draw[name intersections={of=circ and CD}];
																								\coordinate (x) at (intersection-2);
																								\draw[ultra thick] (x) arc(161:211:11); 
																								\draw[ultra thick] (x) arc(161:210.5:11.1);	
																								\draw[ultra thick] (x) arc(161:210.5:11.2);		
																								
																								\node[] (Bg) at (10,10) {$\boundary{B^\gamma}$};
																								
																								\node[fill,scale=0.5,circle] (abcd) at (-0.45,1.30) {};
																								\node[fill,scale=0.5,circle] (acbd) at (0.81,-0.62) {};
																								\node[fill,scale=0.5,circle] (cabd) at (2.47,-5.67) {};
																								\node[fill,scale=0.5,circle] (cbad) at (0,-7) {};
																								\node[fill,scale=0.5,circle] (bcad) at (-2.58,-6) {};
																								\node[fill,scale=0.5,circle] (bacd) at (-2.18,-1.23) {};
																								\node[fill,scale=0.5,circle] (badc) at (-2.92,3.69) {};
																								\node[fill,scale=0.5,circle] (bdac) at (-4,6.02) {};
																								\node[fill,scale=0.5,circle] (dbac) at (-3.5,7.23) {};
																								\node[fill,scale=0.5,circle] (dabc) at (-0.91,6.54) {};
																								\node[fill,scale=0.5,circle] (dacb) at (6,4) {};
																								\node[fill,scale=0.5,circle] (dcab) at (13,1) {};
																								\node[fill,scale=0.5,circle] (cdab) at (14,-0.2) {};
																								\node[fill,scale=0.5,circle] (cadb) at (10.2,-3.22) {};
																								\node[fill,scale=0.5,circle] (acdb) at (4.04,0.44) {};
																								\node[fill,scale=0.5,circle] (adcb) at (1.44,2.63) {};
																								\node[fill,scale=0.5,circle] (adbc) at (-0.38,3.48) {};
																								\node[fill,scale=0.5,circle] (abdc) at (-1.15,2.61) {};
																								
																								\draw[thick] (abcd) -- (acbd);
																								\draw[thick] (acbd) -- (cabd);
																								\draw[thick] (cabd) -- (cbad);
																								\draw[thick] (cbad) -- (bcad);
																								\draw[thick] (bcad) -- (bacd);
																								\draw[thick] (bacd) -- (badc);
																								\draw[thick] (badc) -- (bdac);
																								\draw[thick] (bdac) -- (dbac);
																								\draw[thick] (dbac) -- (dabc);
																								\draw[thick] (dabc) -- (dacb);
																								\draw[thick] (dacb) -- (dcab);
																								\draw[thick] (dcab) -- (cdab);
																								\draw[thick] (cdab) -- (cadb);
																								\draw[thick] (cadb) -- (acdb);
																								\draw[thick] (acdb) -- (adcb);
																								\draw[thick] (adcb) -- (adbc);
																								\draw[thick] (adbc) -- (abdc);
																								\draw[thick] (abdc) -- (abcd);
																								
																								\draw[thick] (abcd) -- (bacd);
																								\draw[thick] (abdc) -- (badc);
																								\draw[thick] (adbc) -- (dabc);
																								\draw[thick] (adcb) -- (dacb);
																								\draw[thick] (acbd) -- (acdb);
																								\draw[thick] (cadb) -- (cabd);
																							\end{tikzpicture}
																							\caption{The embedding graph $\dualg{\gamma}$ (in black). Note that the unbounded regions $\region{\gamma}{v}$ (or equivalently the vertices on the outer face of $\dualg{\gamma}$) are in one to one correspondence with the circular arcs of the circle induced by the bisectors. For example the highlighted circular arc corresponds to the region $\region{\gamma}{bacd}$. }\label{fig:embedding_graph}
																						\end{figure}
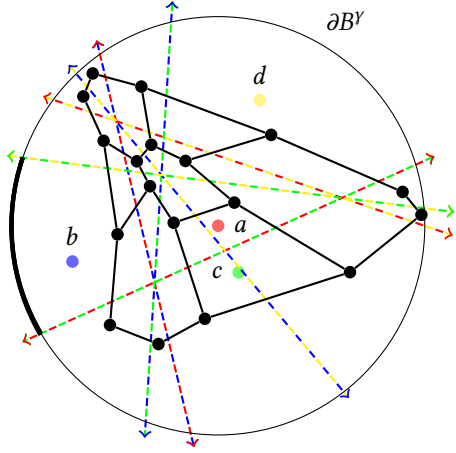
																						
																						\dtoappendix
																						{
																							\subsubsection{Properties of Embeddings and the Embedding Graph}
																						}\label{subsec:properties_of_embeddings}
																						We begin with a simple observation that the edges of the embedding graph correspond to consecutive swaps in the underlying permutations. 
																						
																						\sv
																						{
																							\begin{observation}[$\star$]\label{obs:edge_iff_neighboring_swap}
																							}
																							\lv
																							{
																								\begin{observation}\label{obs:edge_iff_neighboring_swap}
																								}
																								In the embedding graph $\dualg{\gamma}$, there is an edge between two vertices $u$ and $v$ if and only if $\dswap{u}{v}=1$.
																							\end{observation}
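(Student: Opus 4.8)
The plan is to prove both implications by relating adjacency in the dual $\dualg{\gamma}$ to being separated by a single bisector, and then to invoke the combinatorial fact recorded in the preliminaries that $\dswap{u}{v}$ equals the number of discordant pairs (the Kendall tau distance). Throughout I use that each region is the intersection of the halfplanes $\halfplane{\gamma}{a}{b}$ selected by the corresponding permutation, and that vertices of $\dualg{\gamma}$ correspond to \emph{nonempty} such regions.

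For the forward direction, I would start from an edge of $\dualg{\gamma}$ between $u$ and $v$. By definition of the weak dual this edge comes from a common boundary edge $e$ of the two bounded faces $\region{\gamma}{u}$ and $\region{\gamma}{v}$ in $\primalg{\gamma}$. The edge $e$ must be a straight segment lying on a single bisector $\bisector{\gamma}{a}{b}$ whose relative interior contains no intersection point of $\arrangement{\gamma}$: it cannot be a circular arc of $\partial B^\gamma$, since those arcs are incident to the outer face and are removed in the weak dual, and it lies on exactly one bisector because primal edges join consecutive vertices. Crossing $e$ therefore flips the side only with respect to $\bisector{\gamma}{a}{b}$ while preserving the side with respect to every other bisector. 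Hence $\region{\gamma}{u}$ and $\region{\gamma}{v}$ occupy opposite halfplanes $\halfplane{\gamma}{a}{b}$ and $\halfplane{\gamma}{b}{a}$ but agree on every $\bisector{\gamma}{a'}{b'}$ with $\{a',b'\}\neq\{a,b\}$. Thus $u$ and $v$ agree on all pairwise comparisons except that of $a$ and $b$, which is exactly one discordant pair, giving $\dswap{u}{v}=1$.

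For the reverse direction I would take $u,v$ with $\dswap{u}{v}=1$, so they share a single discordant pair $\{a,b\}$, say $a\succ_u b$ and $b\succ_v a$, agreeing on everything else. Writing $K=\bigcap_{\{a',b'\}\neq\{a,b\}}\halfplane{\gamma}{a'}{b'}$ for the common intersection of the remaining halfplanes, we get $\region{\gamma}{u}=K\cap\halfplane{\gamma}{a}{b}$ and $\region{\gamma}{v}=K\cap\halfplane{\gamma}{b}{a}$, both nonempty since $u,v$ are vertices of $\dualg{\gamma}$. As $K$ is open and convex and meets both open halfplanes bounded by $\bisector{\gamma}{a}{b}$, the set $K\cap\bisector{\gamma}{a}{b}$ is a nonempty relatively open segment contained in the closure of both regions. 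Because $K$ is a single cell of the reduced arrangement $\arrangement{\gamma}\setminus\{\bisector{\gamma}{a}{b}\}$, its interior contains no point of any other bisector, so this segment (after clipping against $B^\gamma$) is a genuine edge of $\primalg{\gamma}$ shared by $\region{\gamma}{u}$ and $\region{\gamma}{v}$, which yields the claimed edge of $\dualg{\gamma}$.

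The main obstacle is the reverse direction: I must guarantee that the common boundary is truly one-dimensional, a segment rather than a single point, and that it survives the clipping against $B^\gamma$. Convexity of $K$ together with the niceness of $\gamma$ (no parallel bisectors and no collinear triple of candidates, so that the regions are full-dimensional cells and $K$ is a single cell of the reduced arrangement) is precisely what makes $K\cap\bisector{\gamma}{a}{b}$ a nondegenerate open segment and hence a bona fide primal edge.
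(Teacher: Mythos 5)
Your proof is correct and follows essentially the same route as the paper's: the forward direction rests on the fact that two regions sharing a primal edge on $\bisector{\gamma}{a}{b}$ lie on the same side of every other bisector, and the reverse direction shows that $\region{\gamma}{u}$ and $\region{\gamma}{v}$ sit inside a single cell of the reduced arrangement $\arrangement{\gamma}\setminus\{\bisector{\gamma}{a}{b}\}$ and are therefore adjacent across that bisector. The only (immaterial) differences are that you finish the forward direction by invoking the Kendall tau identity where the paper argues directly that $a,b$ must be consecutive, and that you spell out the convexity/clipping details of the reverse direction more explicitly than the paper does.
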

																							\toappendix
																							{
																								\sv
																								{
																									\begin{proof}[Proof of \Cref{obs:edge_iff_neighboring_swap}]
																									}
																									\lv
																									{
																										\begin{proof}
																										}
																										Let $e=\{u,v\}\in E(\dualg{\gamma})$. By definition there is an edge of $\primalg{\gamma}$ that $e$ corresponds to. This edge is a part of some bisector $\bisector{\gamma}{a}{b}$ for some candidates $a,b\in C$. Note that for any two distinct candidates $c,d\in C$ such that $\{c,d\}\neq \{a,b\}$ we have  $\region{\gamma}{u}\subseteq \halfplane{\gamma}{c}{d} \Leftrightarrow \region{\gamma}{v}\subseteq \halfplane{\gamma}{c}{d}$.
																										In other words the regions $\region{\gamma}{u}$ and $\region{\gamma}{v}$ are on the same side of any other bisector $\bisector{\gamma}{c}{d}$. This implies that $c\succ_v d\Leftrightarrow c\succ_u d$. Note that this implies that $a,b$ must be consecutive in $v$. To see this, suppose without loss of generality that $a\succ_v b$ and towards a contradiction suppose that there is some $x\in C\setminus \{a,b\}$ such that $a\succ_v x \succ_v b$. This implies $b\succ_u x \succ_u a$. However, this contradicts the equivalence $c\succ_v d\Leftrightarrow c\succ_u d$ by choosing $c=a$ and $d=x$.
																										
																										For the other direction, suppose that $u,v$ are two votes that differ by a consecutive swap of candidates $a,b\in C$. That is, $u=c_{i_1}c_{i_2}\ldots ab \ldots c_{i_\ell}$ and $v=c_{i_1}c_{i_2}\ldots ba\ldots c_{i_\ell}$. Then $\region{\gamma}{u}\subseteq \halfplane{\gamma}{a}{b}$ but $\region{\gamma}{v}\subseteq \halfplane{\gamma}{b}{a}$. For any other two distinct candidates $c,d\in C$ such that $\{c,d\}\neq \{a,b\}$ we have
																										$c\succ_v d\Leftrightarrow c\succ_u d$. In other words	 $\region{\gamma}{u}\subseteq \halfplane{\gamma}{c}{d}\Leftrightarrow\region{\gamma}{v}\subseteq \halfplane{\gamma}{c}{d}$. This implies that in the arrangement $\arrangement{\gamma}\setminus \bisector{\gamma}{a}{b}$ the permutations $u,v$ would end up in the same region. However, by adding $\bisector{\gamma}{a}{b}$ they are different, thus $\region{\gamma}{u}$ and $\region{\gamma}{v}$ are necessarily adjacent regions. Therefore, $\{u,v\}\in E(\dualg{\gamma})$.
																									\end{proof}
																								}
																								\lv
																								{
																									\begin{observation}\label{obs:d_has_no_multiedges}
																									}
																									\sv
																									{
																										\begin{observation}[$\star$]\label{obs:d_has_no_multiedges}
																										}
																										The embedding graph $\dualg{\gamma}$ has no loops and multiedges.
																									\end{observation}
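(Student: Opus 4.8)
The plan is to reduce everything to \Cref{obs:edge_iff_neighboring_swap}, which already says that $\{u,v\}\in E(\dualg{\gamma})$ exactly when $\dswap{u}{v}=1$, together with the convexity of the regions $\region{\gamma}{v}$. First I would record the harmless preliminary that every primal edge contributing a dual edge is a \emph{straight} segment lying on some bisector $\bisector{\gamma}{a}{b}$: the circular arcs on $\boundary{B^\gamma}$ are incident only to the outer face, whose vertex the weak dual discards, so they create no edge between two bounded faces.

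For the absence of loops, a loop would join a vertex $u$ to itself, which by \Cref{obs:edge_iff_neighboring_swap} would force $\dswap{u}{u}=1$, impossible since $\dswap{u}{u}=0$. Geometrically, the two bounded faces flanking a segment of $\bisector{\gamma}{a}{b}$ lie in $\halfplane{\gamma}{a}{b}$ and $\halfplane{\gamma}{b}{a}$ respectively, hence correspond to distinct votes.

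For the absence of multiedges I would assume two distinct dual edges $e_1,e_2$ between the same pair $u\neq v$ and derive a contradiction. Each $e_i$ is a straight segment on a bisector whose crossing flips exactly one preference; since $u$ and $v$ differ by exactly one consecutive swap of some pair $\{a,b\}$, both $e_1$ and $e_2$ must lie on the single line $\bisector{\gamma}{a}{b}$. Since each region is an intersection of open half-planes, clipped for outer regions by the convex ball $B^\gamma$, the closures $\closure{\region{\gamma}{u}}$ and $\closure{\region{\gamma}{v}}$ are convex, so their intersection is a convex subset of the line $\bisector{\gamma}{a}{b}$, i.e. a single line segment carrying the entire shared boundary of the two faces.

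It then remains to see that this segment is a single edge of $\primalg{\gamma}$ rather than several collinear ones, which I expect to be the one delicate point. A priori the shared segment could be subdivided by a vertex of $\primalg{\gamma}$ in its relative interior, namely a crossing with another bisector $\bisector{\gamma}{c}{d}$; I would rule this out by the local picture at such a crossing, where four cells are split by $\bisector{\gamma}{a}{b}$ and by $\bisector{\gamma}{c}{d}$, and the two cells that differ only in the $a$-versus-$b$ preference are adjacent along $\bisector{\gamma}{a}{b}$ on exactly one side of the crossing point. Hence any such crossing is an endpoint, not an interior point, of $\closure{\region{\gamma}{u}}\cap\closure{\region{\gamma}{v}}$, so the shared boundary is a single maximal segment and thus a single edge, contradicting $e_1\neq e_2$. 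Everything else is immediate from \Cref{obs:edge_iff_neighboring_swap} and the definition of the weak dual.
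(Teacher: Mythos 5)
Your proof is correct, and its combinatorial core coincides with the paper's: a loop is excluded because the two faces flanking a bisector segment lie in opposite half-planes and hence carry distinct votes (equivalently, $\dswap{u}{u}=0$), and two dual edges arising from \emph{distinct} bisectors are excluded because $u$ and $v$ determine the swapped pair uniquely --- exactly the paper's ``$u=v\circ\tau_{a,b}$ and $u=v\circ\tau_{c,d}$ forces $\tau_{a,b}=\tau_{c,d}$'' argument. Where you genuinely diverge is that you also treat the remaining case, which the paper's proof passes over in silence: two distinct primal edges lying on the \emph{same} bisector $\bisector{\gamma}{a}{b}$, both on the common boundary of $\closure{\region{\gamma}{u}}$ and $\closure{\region{\gamma}{v}}$. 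The paper tacitly assumes that a multiedge yields two different transpositions; your argument closes this hole: convexity of the cells makes the shared boundary a single segment of $\bisector{\gamma}{a}{b}$, and any arrangement vertex on that bisector can only be an endpoint of the segment, not an interior point, because $u$ and $v$ agree on every other pair $\{c,d\}$, so both closed regions lie in the same closed half-plane determined by $\bisector{\gamma}{c}{d}$, while the segment would have to cross that bisector transversally at an interior vertex. So your proof is strictly more careful than the paper's; the cost is the extra geometric bookkeeping, the gain is that the statement is actually established for the weak dual as defined, rather than under the unproved assumption that two faces meet along at most one primal edge. One small refinement: in a nice embedding the three bisectors of a triple $\{a,b\},\{b,c\},\{a,c\}$ meet at a common point (the circumcenter), so the local picture at a vertex may involve more than two crossing lines; your ``four cells'' phrasing should therefore be replaced by the half-plane formulation above, under which the argument goes through unchanged.
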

																									\toappendix
																									{
																										\lv
																										{
																											\begin{proof}
																											}
																											\sv
																											{
																												\begin{proof}[Proof of \Cref{obs:d_has_no_multiedges}]
																												}
																												Note that a loop would imply that $v=v\circ \tau_{a,b}$ for some transposition $\tau_{a,b}$ and vote $v$ which is impossible. A multiedge would imply that $u=v\circ \tau_{a,b}$ and $u=v\circ \tau_{c,d}$ for some $\{c,d\}\neq \{a,d\}$ and votes $u,v$. But then clearly $\tau_{a,b}=\tau_{c,d}$ which is not true.
																											\end{proof}
																										}
																										Recall that $G(\sym{C})$ is the graph with vertices corresponding to $\sym{C}$ -- the set of all permutations of $C$ and there is an edge between two permutations $u,v\in \sym{C}$ if $u$ differs from $v$ by a consecutive swap. It is not hard to see that $\dualg{\gamma}$ is in fact subgraph of $G(\sym{C})$. Moreover it is a distance-preserving subgraph of $G(\sym{C})$. Recall that graph $H$ is a distance-preserving subgraph of graph $G$ if for any two vertices $u,v\in V(H)$ we have $d_H(u,v)=d_G(u,v)$.
																										
																										\lv
																										{
																											\begin{lemma}\label{lem:dualg_preserves_distances}
																											}
																											\sv
																											{
																												\begin{lemma}[$\star$]\label{lem:dualg_preserves_distances}
																												}
																												The graph $\dualg{\gamma}$ is distance-preserving subgraph of $G(\sym{C})$.
																											\end{lemma}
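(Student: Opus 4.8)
The plan is to prove the equality $d_{\dualg{\gamma}}(u,v)=d_{G(\sym{C})}(u,v)$ for every pair $u,v\in V(\dualg{\gamma})$ by establishing the two inequalities separately. The easy direction comes for free: by \Cref{obs:edge_iff_neighboring_swap} every edge of $\dualg{\gamma}$ is an edge of $G(\sym{C})$ and $V(\dualg{\gamma})\subseteq\sym{C}$, so $\dualg{\gamma}$ is a subgraph of $G(\sym{C})$. Hence any $u$-$v$ path in $\dualg{\gamma}$ is also a $u$-$v$ path in $G(\sym{C})$, giving $d_{G(\sym{C})}(u,v)\le d_{\dualg{\gamma}}(u,v)$.

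The substance is the reverse inequality, and for this I would pass through the geometry. Recall $d_{G(\sym{C})}(u,v)=\dswap{u}{v}$ equals the number of discordant pairs of $u$ and $v$, i.e. the number of candidate pairs $\{a,b\}$ ordered oppositely by $u$ and $v$. Geometrically, $\{a,b\}$ is discordant exactly when $\region{\gamma}{u}$ and $\region{\gamma}{v}$ lie on opposite sides of $\bisector{\gamma}{a}{b}$, so $\dswap{u}{v}$ is precisely the number of bisectors in $\arrangement{\gamma}$ separating the two regions. The idea is to realise a walk of exactly this length in $\dualg{\gamma}$ by traversing a straight segment from a point of $\region{\gamma}{u}$ to a point of $\region{\gamma}{v}$ and recording the sequence of regions it meets.

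Concretely, I would pick representatives $p_u\in\region{\gamma}{u}\cap B^\gamma$ and $p_v\in\region{\gamma}{v}\cap B^\gamma$. Since $B^\gamma$ is convex, the segment $\linesegment{p_u}{p_v}$ stays inside $B^\gamma$ and never touches the clipping circle $\boundary{B^\gamma}$. Because the endpoints lie off every bisector and each region is open, I can perturb $p_v$ so that the open segment avoids the finitely many intersection points $I^\gamma$ of distinct bisectors (the bad choices lie on finitely many rays through $p_u$, hence form a null set inside $\region{\gamma}{v}$). A straight segment meets each straight bisector in at most one point, and it crosses $\bisector{\gamma}{a}{b}$ exactly when $p_u,p_v$ are separated by it; thus the segment makes exactly $\dswap{u}{v}$ transversal crossings, each in the relative interior of an edge of $\primalg{\gamma}$. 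Each such crossing carries the segment from one bounded face of $\primalg{\gamma}$ into an adjacent one, i.e. from one nonempty region to a neighbouring nonempty region, which by \Cref{obs:edge_iff_neighboring_swap} is an edge of $\dualg{\gamma}$. The visited regions give a walk $u=w_0,w_1,\dots,w_k=v$ in $\dualg{\gamma}$ with $k=\dswap{u}{v}$, so $d_{\dualg{\gamma}}(u,v)\le\dswap{u}{v}=d_{G(\sym{C})}(u,v)$, and combining with the first inequality yields equality.

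The main obstacle I anticipate is the genericity and clipping bookkeeping: arguing cleanly that $p_u,p_v$ can be chosen so the segment meets $\arrangement{\gamma}$ only transversally and away from $I^\gamma$, and that every crossing corresponds to an edge between two \emph{nonempty} regions, so that all intermediate $w_i$ are genuine vertices of $\dualg{\gamma}$. Once this is handled, the remainder is the elementary fact that a line segment crosses a line at most once, together with the discordant-pair description of the swap distance.
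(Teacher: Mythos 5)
Your proof is correct and takes essentially the same route as the paper's: both directions are handled identically, with the lower bound coming from the subgraph property via \Cref{obs:edge_iff_neighboring_swap} and the upper bound from a straight segment joining points of $\region{\gamma}{u}$ and $\region{\gamma}{v}$, which crosses exactly the bisectors of the discordant pairs and hence traces a walk of length $\dswap{u}{v}$ in $\dualg{\gamma}$. If anything, your version is slightly more careful than the paper's own proof, which picks arbitrary points and does not explicitly perturb them to avoid the intersection points $I^\gamma$ or address the clipping by $B^\gamma$.
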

																											\toappendix
																											{
																												\lv
																												{
																													\begin{proof}
																													}
																													\sv
																													{
																														\begin{proof}[Proof of \Cref{lem:dualg_preserves_distances}]
																														}
																														The fact that $\dualg{\gamma}$ is a subgraph of $G(\sym{C})$ is a direct corollary of \Cref{obs:edge_iff_neighboring_swap,obs:d_has_no_multiedges}. Recall that the graph distance in $G(\sym{C})$ is equivalent to the swap distance of two permutations $u,v$. This in turn is equivalent to the Kendall Tau distance which is the number of discordant pairs of candidates, i.e, $D=|\{a,b\}\subseteq C\mid \pos{u}{a}<\pos{u}{b}\wedge \pos{v}{b}<\pos{v}{a} \text{ or }\pos{u}{b}<\pos{u}{a}\wedge \pos{v}{a}<\pos{v}{b}\}|$.
																														
																														Let $u,v\in V(\dualg{\gamma})$ be arbitrary. We show that $d_{\dualg{\gamma}}(u,v)=d_{G(\sym{C})}(u,v)$. Since $u,v\in V(\dualg{\gamma})$ we have $\region{\gamma}{u}\neq \emptyset,\region{\gamma}{v}\neq \emptyset$. Consider arbitrary points $p_u\in \region{\gamma}{u},p_v\in \region{\gamma}{v}$ and the straight line segment $L=\linesegment{p_u}{p_v}$. Let $\mathcal{B}=\{\bisector{\gamma}{a}{b}\mid \{a,b\}\in D\}$ be the set of bisectors corresponding to the discordant pairs of candidates with respect to $u$ and $v$. We have $|\mathcal{B}|=\dswap{u}{v}=d_{G(\sym{C})}(u,v)$. We argue that $L$ crosses exactly the bisectors from $\mathcal{B}$. To see this, consider travelling along $L$ from $p_u$ to $p_v$ and maintain the current set $D'$ of discordant pairs. We initialize $D'$ by all $\{a,b\}$ such that $\bisector{\gamma}{a}{b}\in\mathcal{B}$. Whenever we cross a bisector of the form $\bisector{\gamma}{c}{d}$ we either add the pair $\{c,d\}$ to $D'$ if $\{c,d\}\notin D'$ or we remove it from $D'$ if $\{c,d\}\in D'$. Note that any pair will be added or removed exactly once because we cannot cross a bisector twice. When we arrive to $p_v$, the set $D'$ must be empty. Hence each crossing of a bisector must remove exactly one pair. Thus $L$ crosses exactly the bisectors from $\mathcal{B}$. Each bisector that $L$ crossed corresponds to some edge of $\dualg{\gamma}$ and these edges induce a path from $u$ to $v$ of length $|\mathcal{B}|=\dswap{u}{v}=d_{G(\sym{C})}(u,v)$ in $\dualg{\gamma}$. There cannot be any shorter $u$-$v$ path in $\dualg{\gamma}$ because $\dualg{\gamma}$ is a subgraph of $G(\sym{C})$. Hence, $d_{G(\sym{C})}(u,v) = d_{\dualg{\gamma}}(u,v)$, as claimed.
																													\end{proof}
																												}

																												\sv
																												{
																													\begin{lemma}[$\star$]\label{lem:minimum_degree_of_dualg}
																													}
																													\lv
																													{
																														\begin{lemma}\label{lem:minimum_degree_of_dualg}
																														}
																														The embedding graph has minimum degree $\delta(\dualg{\gamma})\geq 2$.
																													\end{lemma}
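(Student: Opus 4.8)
The plan is to translate the minimum-degree claim into a statement about the \emph{sides} of the convex cell $\region{\gamma}{v}$ and then show every cell has at least two bisector sides. By \Cref{obs:edge_iff_neighboring_swap} the neighbors of $v$ in $\dualg{\gamma}$ are exactly the nonempty regions $\region{\gamma}{u}$ with $\dswap{u}{v}=1$, and geometrically these are precisely the regions sharing a bisector edge with $\region{\gamma}{v}$ on $\boundary{\region{\gamma}{v}}$. Hence $\deg_{\dualg{\gamma}}(v)$ equals the number of maximal boundary segments of $\region{\gamma}{v}$ that lie on a bisector, which is the same as the number of \emph{active} half-plane constraints $\halfplane{\gamma}{a}{b}$ in the representation $\region{\gamma}{v}=\bigcap_{a\succ_v b}\halfplane{\gamma}{a}{b}$ (a constraint is active if its boundary line touches $\boundary{\region{\gamma}{v}}$). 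So it suffices to prove that this number is at least $2$.

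First I would record that $\region{\gamma}{v}$ is a nonempty open convex cell of the arrangement $\arrangement{\gamma}$, so no bisector passes through its interior and $\overline{\region{\gamma}{v}}\cap\ell$ is a single segment for each bisector $\ell$ (by convexity), containing no arrangement vertex in its relative interior (by cell-ness). The core step is a contradiction argument. Suppose $\region{\gamma}{v}$ had at most one active constraint. With zero active constraints $\region{\gamma}{v}$ would be the whole plane, impossible since $\arrangement{\gamma}$ is nonempty. With exactly one active constraint $\halfplane{\gamma}{a}{b}$, the cell equals the open half-plane $\halfplane{\gamma}{a}{b}$, whose boundary is $\bisector{\gamma}{a}{b}$. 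Since $|C|\geq 4$ there is another pair $\{c,d\}\neq\{a,b\}$, and because $\gamma$ is nice (\Cref{thm:non_degenerate_embedding}) the bisector $\bisector{\gamma}{c}{d}$ is not parallel to $\bisector{\gamma}{a}{b}$. A line not parallel to the boundary of a half-plane meets the interior of that half-plane, so $\bisector{\gamma}{c}{d}$ would pass through the interior of $\region{\gamma}{v}$, contradicting that $\region{\gamma}{v}$ is a cell. Therefore $\region{\gamma}{v}$ has at least two active constraints, i.e. at least two bisector sides.

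It remains to verify that passing to the clipped plane graph $\primalg{\gamma}$ and its weak dual neither destroys nor miscounts these sides, and this is the step I expect to need the most care. Each bisector side of $\region{\gamma}{v}$ is either a segment between two points of $I^\gamma$ or a ray from a point of $I^\gamma$; since $B^\gamma$ is a convex ball containing $I^\gamma$, both survive clipping as a nonempty segment strictly inside $B^\gamma$ separating $\region{\gamma}{v}$ from its neighbor, and this neighbor is again a (now bounded) face of $\primalg{\gamma}$, hence a genuine vertex of $\dualg{\gamma}$. Only the circular arcs on $\boundary{B^\gamma}$ border the outer face that the weak dual discards, so no bisector side is ever charged to the outer face. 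Finally, two distinct active constraints cannot produce the same neighbor $u$, since distinct bisectors correspond to distinct consecutive swaps (the no-multiedge argument of \Cref{obs:d_has_no_multiedges}). Thus each of the $\geq 2$ active constraints contributes a distinct edge of $\dualg{\gamma}$ incident to $v$, giving $\delta(\dualg{\gamma})\geq 2$.
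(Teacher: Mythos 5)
Your route is genuinely different from the paper's proof (which rules out degree $0$ by connectivity of the weak dual and degree $1$ by a one-line boundedness remark), and your second and third paragraphs are more careful than the paper precisely where care is needed; but the reduction in your first paragraph contains a real flaw. You claim $\deg_{\dualg{\gamma}}(v)$ \emph{equals} the number of \emph{active} constraints, defining a constraint as active when its line merely \emph{touches} $\boundary{\region{\gamma}{v}}$. A bisector can touch the boundary of a cell at a single point (a vertex of the arrangement) while contributing no boundary segment, and in this setting that is not an exotic degeneracy but unavoidable: for any three candidates $a,b,c$ the bisectors $\bisector{\gamma}{a}{b}$, $\bisector{\gamma}{a}{c}$, $\bisector{\gamma}{b}{c}$ are concurrent at the circumcenter of $\gamma(a),\gamma(b),\gamma(c)$ (niceness forbids parallel bisectors, not concurrent ones; the paper's discussion of $6$-cycles relies on exactly these triple points). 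For each of the six cells around such a point, one of the three bisectors touches the cell only at that vertex: it is ``active'' in your sense, yet contributes no edge of $\dualg{\gamma}$. So your paragraph 2 correctly proves that the number of touching constraints is at least $2$, but the closing inference ``each of the $\geq 2$ active constraints contributes a distinct edge'' is false, and $\delta(\dualg{\gamma})\geq 2$ does not follow as written.

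The gap is repairable with one standard fact. Call a constraint \emph{facet-defining} if its line contains a maximal boundary segment (not just a point) of $\closure{\region{\gamma}{v}}$; these are exactly the constraints that contribute edges, and your third paragraph (bisector sides survive clipping inside $B^\gamma$, circular arcs only border the outer face, and distinct bisectors give distinct neighbours by the argument of \Cref{obs:d_has_no_multiedges}) correctly shows $\deg_{\dualg{\gamma}}(v)$ is at least the number of facet-defining constraints. Now invoke the standard representation fact that a nonempty full-dimensional intersection of finitely many half-planes equals the intersection of its facet-defining half-planes: if $\region{\gamma}{v}$ had at most one facet-defining constraint it would be the whole plane or a single open half-plane, and your crossing argument (a non-parallel bisector exists since $|C|\geq 4$ and $\gamma$ is nice, and it would cross the interior, contradicting cell-ness) rules out both. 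Hence there are at least two facet-defining constraints and $\delta(\dualg{\gamma})\geq 2$. With that substitution your argument is correct, and it remains a more detailed and more rigorous treatment than the paper's own proof, which does not explicitly handle the clipped outer regions at all.
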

																													\toappendix
																													{
																														\sv
																														{
																															\begin{proof}[Proof of \Cref{lem:minimum_degree_of_dualg}]
																															}
																															\lv
																															{
																																\begin{proof}
																																}
																																Recall that we assume that $|C|\geq 4$. Hence $\dualg{\gamma}$ has at least $3$ vertices. It is not hard to see that $\dualg{\gamma}$ is connected, because it is a weak dual of a plane graph. Hence it does not contain vertices of degree $0$. Suppose for the sake of contradiction that $v\in V(\dualg{\gamma})$ is a vertex of degree $1$. By definition this means that the boundary of the region $\region{\gamma}{v}$ consists of exactly one straight line segment. That is clearly impossible because $\region{\gamma}{v}$ is bounded.
																															\end{proof}
																														}

																														\sv
																														{
																															\begin{lemma}[$\star$]\label{lem:implied_neighbors}
																															}
																															\lv
																															{
																																\begin{lemma}\label{lem:implied_neighbors}
																																}
																																Let $(C,V)$ be $2$-Euclidean and $\gamma$ any $2$-Euclidean embedding of $(C,V)$ and let $G=G(\sym{C})$. Let $u,v\in V$ be two votes and let $v_1,\ldots,v_k\in N_G(u)$ be all the neighbors of $u$ such that $d_{G}(v_i,v)=d_G(u,v)-1$. Then there is index $i$ such that $v_i\in V(\dualg{\gamma})$.
																															\end{lemma}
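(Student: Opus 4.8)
The plan is to deduce the statement directly from the distance-preserving property of the embedding graph (\Cref{lem:dualg_preserves_distances}), which asserts that $\dualg{\gamma}$ is a distance-preserving subgraph of $G(\sym{C})$. First I would observe that since $u$ and $v$ are voters of the $2$-Euclidean election $(C,V)$, their regions $\region{\gamma}{u}$ and $\region{\gamma}{v}$ are nonempty, so both $u$ and $v$ are vertices of the embedding graph $\dualg{\gamma}$; this is exactly what is needed so that \Cref{lem:dualg_preserves_distances} may be invoked on the pair $u,v$. Writing $G = G(\sym{C})$ and $\ell = d_G(u,v)$, the lemma gives $d_{\dualg{\gamma}}(u,v) = d_G(u,v) = \ell$.

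Next I would pick a shortest $u$-$v$ path $u = w_0, w_1, \ldots, w_\ell = v$ inside $\dualg{\gamma}$ and examine its first vertex $w_1$. By construction $w_1 \in V(\dualg{\gamma})$, so the whole argument reduces to checking that $w_1$ is one of the listed vertices $v_1, \ldots, v_k$, i.e.\ that $w_1 \in N_G(u)$ and $d_G(w_1, v) = \ell - 1$. Since $\dualg{\gamma}$ is a subgraph of $G$ (by \Cref{lem:dualg_preserves_distances}, or equivalently by \Cref{obs:edge_iff_neighboring_swap}, as $\{u,w_1\}\in E(\dualg{\gamma})$ means $\dswap{u}{w_1}=1$), the edge $\{u, w_1\}$ is also an edge of $G$, giving $w_1 \in N_G(u)$. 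For the distance, the subgraph relation yields $d_G(w_1, v) \leq d_{\dualg{\gamma}}(w_1, v) = \ell - 1$, while the triangle inequality in $G$ together with $d_G(u, w_1) = 1$ yields $d_G(w_1, v) \geq d_G(u,v) - 1 = \ell - 1$; hence $d_G(w_1, v) = \ell - 1$, as required. Thus $w_1$ coincides with some $v_i$, and that $v_i$ lies in $V(\dualg{\gamma})$.

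I do not expect a substantial obstacle once \Cref{lem:dualg_preserves_distances} is in hand: the entire argument rests on the fact that $\dualg{\gamma}$ realizes every $u$-$v$ shortest path with the same length as $G$, which forces the first step of such a path to be simultaneously a $G$-neighbor of $u$ decreasing the distance to $v$ \emph{and} a genuine vertex of the embedding graph. The only points demanding care are the two applications of the subgraph/distance-preservation relationship—confirming $u,v\in V(\dualg{\gamma})$ so that the distance equality $d_{\dualg{\gamma}}(u,v)=d_G(u,v)$ is legitimate, and then transferring the ``one step closer'' property from $\dualg{\gamma}$ back to $G$ for the vertex $w_1$. Both are immediate from the preceding lemmata, so the proof is short.
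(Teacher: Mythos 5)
Your proposal is correct and follows essentially the same route as the paper's proof: both rely on \Cref{lem:dualg_preserves_distances} to obtain a shortest $u$-$v$ path inside $\dualg{\gamma}$ of length $d_{G(\sym{C})}(u,v)$ and then identify the second vertex of that path with one of the $v_i$. The only difference is that you spell out the verification that this second vertex is indeed a $G$-neighbor of $u$ at distance $d_G(u,v)-1$ from $v$ (via the subgraph relation and the triangle inequality), which the paper leaves implicit.
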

																															\toappendix
																															{
																																\sv
																																{
																																	\begin{proof}[Proof of \Cref{lem:implied_neighbors}]
																																	}
																																	\lv
																																	{
																																		\begin{proof}
																																		}
																																		Since $u,v\in V$ are votes in $(C,V)$, there are the corresponding vertices $u,v\in V(\dualg{\gamma})$. Since $\dualg{\gamma}$ is connected and by \Cref{lem:dualg_preserves_distances} preserves shortest paths, there is an $u$-$v$ path $P$ of length $\dswap{u}{v}$ in $\dualg{\gamma}$. The second vertex on $P$ corresponds to one of the permutations $v_i$, hence $v_i\in V(\dualg{\gamma})$.
																																	\end{proof}
																																}
																																Without the theory of plane graphs \Cref{lem:implied_neighbors} can be rephrased as follows. Let $(C,V)$ be $2$-Euclidean and $\gamma$ any $2$-Euclidean embedding of $(C,V)$ and let $u,v\in V$ be two votes. Let $v_1,\ldots,v_k$ be all the votes that differ from $u$ by a consecutive swap (i.e., $\dswap{u}{v_i}=1$) and $\dswap{v_i}{v}=\dswap{u}{v}-1$, then for some $i$ the region $\region{\gamma}{v_i}$ is nonempty.
																																
																																\Cref{ex:implied_neighbors} shows how already the property proved in \Cref{lem:implied_neighbors} can help us with refuting an existence of $2$-Euclidean embedding for a particular nontrivial election.
																																
																																\toappendix
																																{
																																	\begin{example}\label{ex:implied_neighbors}
																																		We show how already the property proven in \Cref{lem:implied_neighbors} can help us with refuting an existence of $2$-Euclidean embedding for a particular election. Consider the election $(C,V)$ with $C=\{a,b,c,d\}$ and $V=\{abcd,dcba,bdca,cabd,dabc,cdab\}$. Intuition suggests that $(C,V)$ might be $2$-Euclidean, because since $|V|=6$ is much smaller than $18=\ub(|C|)$. It turns out, that it is not the case.
																																		
																																		Consider, for example, two votes $abcd$ and $cabd$ and apply \Cref{lem:implied_neighbors}. The only possible $v_i$ is the vote $acbd$. Hence, in any $2$-Euclidean embedding of $(C,V)$ we have $\region{\gamma}{acbd}\neq \emptyset$. We can proceed in a similar fashion as follows.
																																		\begin{center}
																																			\begin{tikzpicture}[scale=0.7]
																																				\tikzmath{\xoff = 2; \yoff=-2.5;}
																																				
																																				\node (abcd) at ($(0*\xoff,0*\yoff)$){$abcd$};
																																				\node (dcba) at ($(1*\xoff,0*\yoff)$){$dcba$};
																																				\node (bdca) at ($(2*\xoff,0*\yoff)$){$bdca$};
																																				\node (cabd) at ($(3*\xoff,0*\yoff)$){$cabd$};
																																				\node (dabc) at ($(4*\xoff,0*\yoff)$){$dabc$};
																																				\node (cdab) at ($(5*\xoff,0*\yoff)$){$cdab$};
																																				
																																				\node (acbd) at ($(0*\xoff,1*\yoff)$){$acbd$};
																																				\node (abdc) at ($(1*\xoff,1*\yoff)$){$abdc$};
																																				\node (bcda) at ($(2*\xoff,1*\yoff)$){$bcda$};
																																				\node (dbca) at ($(3*\xoff,1*\yoff)$){$dbca$};
																																				\node (cadb) at ($(4*\xoff,1*\yoff)$){$cadb$};
																																				\node (dcab) at ($(5*\xoff,1*\yoff)$){$dcab$};
																																				
																																				\node (bacd) at ($(0*\xoff,2*\yoff)$){$bacd$};
																																				\node (badc) at ($(1*\xoff,2*\yoff)$){$badc$};
																																				\node (adbc) at ($(2*\xoff,2*\yoff)$){$adbc$};
																																				\node (cbda) at ($(3*\xoff,2*\yoff)$){$cbda$};
																																				\node (dbac) at ($(4*\xoff,2*\yoff)$){$dbac$};
																																				\node (dacb) at ($(5*\xoff,2*\yoff)$){$dacb$};
																																				
																																				\node (acdb) at ($(0*\xoff,3*\yoff)$){$acdb$};
																																				\node (adcb) at ($(1*\xoff,3*\yoff)$){$adcb$};
																																				\node (bcad) at ($(2*\xoff,3*\yoff)$){$bcad$};
																																				\node (bdac) at ($(3*\xoff,3*\yoff)$){$bdac$};
																																				\node (cbad) at ($(4*\xoff,3*\yoff)$){$cbad$};
																																				\node (cdba) at ($(5*\xoff,3*\yoff)$){$cdba$};		
																																				
																																				
																																				
																																				\draw[->]($(abdc.north)+(0,0.3)$) --(abdc.north);
																																				\draw[->]($(acbd.north)+(0,0.3)$) --(acbd.north);
																																				\draw[->]($(acdb.north)+(0,0.3)$) --(acdb.north);
																																				\draw[->]($(adbc.north)+(0,0.3)$) --(adbc.north);
																																				\draw[->]($(adcb.north)+(0,0.3)$) --(adcb.north);
																																				\draw[->]($(bacd.north)+(0,0.3)$) --(bacd.north);
																																				\draw[->]($(badc.north)+(0,0.3)$) --(badc.north);
																																				\draw[->]($(bcad.north)+(0,0.3)$) --(bcad.north);
																																				\draw[->]($(bcda.north)+(0,0.3)$) --(bcda.north);
																																				\draw[->]($(bdac.north)+(0,0.3)$) --(bdac.north);
																																				\draw[->]($(cadb.north)+(0,0.3)$) --(cadb.north);
																																				\draw[->]($(cbad.north)+(0,0.3)$) --(cbad.north);
																																				\draw[->]($(cbda.north)+(0,0.3)$) --(cbda.north);
																																				\draw[->]($(cdba.north)+(0,0.3)$) --(cdba.north);
																																				\draw[->]($(dacb.north)+(0,0.3)$) --(dacb.north);
																																				\draw[->]($(dbac.north)+(0,0.3)$) --(dbac.north);
																																				\draw[->]($(dbca.north)+(0,0.3)$) --(dbca.north);
																																				\draw[->]($(dcab.north)+(0,0.3)$) --(dcab.north);

																																				\draw[out=270,in=90] (abcd) to ($(acbd.north) + (0,0.3)$);
																																				\draw[out=220,in=90,looseness=0.5] (cabd) to ($(acbd.north) + (0,0.3)$);
																																				\draw[out=270,in=90] (abcd) to ($(abdc.north) + (0,0.3)$);
																																				\draw[out=270,in=90,looseness=0.5] (dabc) to ($(abdc.north) + (0,0.3)$);
																																				\draw[out=270,in=90] (bdca) to ($(bcda.north) + (0,0.3)$);
																																				\draw[out=270,in=90] (cabd) to ($(bcda.north) + (0,0.3)$);
																																				\draw[out=270,in=90] (bdca) to ($(dbca.north) + (0,0.3)$);
																																				\draw[out=300,in=110,looseness=0.5] (dcba) to ($(dbca.north) + (0,0.3)$);
																																				\draw[out=270,in=90] (cabd) to ($(cadb.north) + (0,0.3)$);
																																				\draw[out=270,in=90] (cdab) to ($(cadb.north) + (0,0.3)$);
																																				\draw[out=270,in=90] (cdab) to ($(dcab.north) + (0,0.3)$);
																																				\draw[out=270,in=90] (dabc) to ($(dcab.north) + (0,0.3)$);
																																				
																																				\draw[out=300,in=70] (abcd) to ($(bacd.north) + (0,0.3)$);
																																				\draw[out=270,in=70] (bcda) to ($(bacd.north) + (0,0.3)$);
																																				\draw[out=270,in=90] (abdc) to ($(badc.north) + (0,0.3)$);
																																				\draw[out=270,in=90] (bdca) to ($(badc.north) + (0,0.3)$);
																																				\draw[out=270,in=90] (abdc) to ($(adbc.north) + (0,0.3)$);
																																				\draw[out=220,in=70] (dabc) to ($(adbc.north) + (0,0.3)$);
																																				\draw[out=270,in=90] (bcda) to ($(cbda.north) + (0,0.3)$);
																																				\draw[out=270,in=90,looseness=2] (cdab) to ($(cbda.north) + (0,0.3)$);
																																				\draw[out=300,in=70] (dabc) to ($(dbac.north) + (0,0.3)$);
																																				\draw[out=270,in=90] (dbca) to ($(dbac.north) + (0,0.3)$);
																																				\draw[out=270,in=90] (dabc) to ($(dacb.north) + (0,0.3)$);
																																				\draw[out=270,in=90] (dcab) to ($(dacb.north) + (0,0.3)$);
																																				
																																				\draw[out=240,in=110] (acbd) to ($(acdb.north) + (0,0.3)$);
																																				\draw[out=220,in=20] (dacb) to ($(acdb.north) + (0,0.3)$);
																																				\draw[out=270,in=90] (adbc) to ($(adcb.north) + (0,0.3)$);
																																				\draw[out=250,in=90,looseness=1.5] (cadb) to ($(adcb.north) + (0,0.3)$);
																																				\draw[out=270,in=90] (bacd) to ($(bcad.north) + (0,0.3)$);
																																				\draw[out=300,in=70] (bcda) to ($(bcad.north) + (0,0.3)$);
																																				\draw[out=300,in=90] (badc) to ($(bdac.north) + (0,0.3)$);
																																				\draw[out=270,in=90,looseness=0.5] (bdca) to ($(bdac.north) + (0,0.3)$);
																																				\draw[out=290,in=110] (cabd) to ($(cbad.north) + (0,0.3)$);
																																				\draw[out=270,in=110] (cbda) to ($(cbad.north) + (0,0.3)$);
																																				\draw[out=300,in=90] (cbda) to ($(cdba.north) + (0,0.3)$);
																																				\draw[out=300,in=70] (cdab) to ($(cdba.north) + (0,0.3)$);
																																			\end{tikzpicture}
																																		\end{center}
																																		
																																		The first row consists of votes of the original election $(C,V)$. Second to last row contains the new votes that must inevitably have their corresponding region nonempty in any $2$-Euclidean embedding of $(C,V)$. Any new vote $v_i$ is always connected to two votes $u,v$ from previous rows from which it was derived. Formally, $v_i$ is the unique vote such that $\dswap{u}{v_i}=1$ and $\dswap{v_i}{v}=\dswap{u}{v}-1$. Hence if $(C,V)$ was $2$-Euclidean, any $2$-Euclidean embedding of $(C,V)$ would have all $24$ nonempty regions. In other words, the embedding graph has at least $24$ vertices. But since $24>18=\ub(4)$, this is impossible. Hence $(C,V)$ is not $2$-Euclidean. \qed
																																	\end{example}
																																}

																																\sv
																																{
																																	\begin{observation}[$\star$]\label{obs:vote_ranking_candidate_first}
																																	}
																																	\lv
																																	{
																																		\begin{observation}\label{obs:vote_ranking_candidate_first}
																																		}
																																		Let $C$ be a set of candidates and $\gamma$ an embedding of $C$. For each candidate $c\in C$ there is a vote $v\in\sym{C}$ such that $\pos{v}{c}=1$ and $\region{\gamma}{v}\neq \emptyset$.
																																	\end{observation}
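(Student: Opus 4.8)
The plan is to exhibit a point $p\in\mathbb{R}^2$ whose uniquely nearest candidate is $\gamma(c)$ and which lies in the interior of some region; the vote $v$ with $p\in\region{\gamma}{v}$ then ranks $c$ first and has a nonempty region. The natural choice for such a point is $\gamma(c)$ itself, since $\ell_2(\gamma(c),\gamma(c))=0<\ell_2(\gamma(c),\gamma(d))$ for every $d\neq c$. The only difficulty is that $\gamma(c)$ may happen to lie on a bisector $\bisector{\gamma}{a}{b}$ for two \emph{other} candidates $a,b$, in which case $\gamma(c)$ sits on a region boundary rather than in a region's interior. This is the single obstacle, and I would resolve it by a small generic perturbation, exactly in the spirit of the measure-zero argument already used in \Cref{lem:technical_lemma_no_three_voters_on_line}.

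First I would set $\delta=\min_{d\in C\setminus\{c\}}\ell_2(\gamma(c),\gamma(d))$, which is strictly positive because $\gamma$ is injective, and consider the open ball $\openballrc{\delta/2}{\gamma(c)}$. For any $p$ in this ball and any $d\neq c$, the triangle inequality gives $\ell_2(p,\gamma(d))\geq \ell_2(\gamma(c),\gamma(d))-\ell_2(p,\gamma(c))>\delta-\delta/2=\delta/2>\ell_2(p,\gamma(c))$, so $p\in\halfplane{\gamma}{c}{d}$. Hence every point of $\openballrc{\delta/2}{\gamma(c)}$ strictly prefers $c$ to every other candidate and, in particular, lies on no bisector of the form $\bisector{\gamma}{c}{d}$.

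It then remains to avoid the finitely many remaining bisectors $\bisector{\gamma}{a}{b}$ with $a,b\neq c$. Each such bisector is a line and hence has Lebesgue measure zero, while $\openballrc{\delta/2}{\gamma(c)}$ has positive measure; therefore there exists a point $p\in\openballrc{\delta/2}{\gamma(c)}$ lying on none of the bisectors in $\arrangement{\gamma}$. Such a $p$ belongs to the interior of a unique region, say $\region{\gamma}{v}$, which is thus nonempty, and by the previous paragraph $v$ satisfies $c\succ_v d$ for all $d\neq c$, i.e.\ $\pos{v}{c}=1$. This $v\in\sym{C}$ is exactly the desired vote, completing the argument.
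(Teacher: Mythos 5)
Your proof is correct and follows essentially the same route as the paper's own argument: both take a small ball around $\gamma(c)$ in which every point is strictly closer to $\gamma(c)$ than to any other candidate, and then observe that some region of the arrangement must meet this ball. Your version merely makes the last step more explicit via the measure-zero argument for avoiding the finitely many bisectors, which the paper leaves implicit.
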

																																	\toappendix
																																	{
																																		\lv
																																		{
																																			\begin{proof}
																																			}
																																			\sv
																																			{
																																				\begin{proof}[Proof of \Cref{obs:vote_ranking_candidate_first}]
																																				}
																																				Consider the point $\gamma(c)$. Since $\gamma$ is injective, there is some $\varepsilon>0$ such that no other candidate is embedded into $\openballrc{\varepsilon}{\gamma(c)}$. Hence any voter embedded to $\openballrc{\frac{\varepsilon}{2}}{\gamma(c)}$ ranks $c$ first. However, there is always a region $\region{\gamma}{v}$ that intersects the open ball $\openballrc{\frac{\varepsilon}{2}}{\gamma(c)}$ and since all points in the region correspond to the same vote, the entire region corresponds to a vote that ranks $c$ first.
																																			\end{proof}
																																		}
																																		We now turn our attention to the inner and outer regions of the embedding. Recall that outer regions of embedding $\gamma$ correspond to the vertices of the outer face of $\dualg{\gamma}$. These in turn correspond to the circular sectors of $\boundary{B^\gamma}$ induced by the bisectors in $\arrangement{\gamma}$ (see \Cref{fig:embedding_graph}).

																																		\sv
																																		{
																																			\begin{observation}[$\star$]\label{obs:num_outer_regions}
																																			}
																																			\lv
																																			{
																																				\begin{observation}\label{obs:num_outer_regions}
																																				}
																																				Let $\gamma \colon C \to \mathbb{R}^2$ be a nice embedding of a candidate set. Then there are exactly $2\binom{|C|}{2}$ nonempty outer regions.
																																			\end{observation}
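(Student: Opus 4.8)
The plan is to recognize $\arrangement{\gamma}$ as an arrangement of $n := \binom{|C|}{2}$ lines in which no two are parallel (which is exactly what niceness guarantees), and to count its unbounded cells via the large circle $\boundary{B^{\gamma}}$ already used to define the primal graph $\primalg{\gamma}$. The underlying classical fact is that an arrangement of $n$ pairwise non-parallel lines has precisely $2n$ unbounded regions, but I would establish this directly in our setting so that the bijection with the circular arcs on $\boundary{B^{\gamma}}$ is made explicit rather than cited.

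First I would argue that each bisector $\bisector{\gamma}{a}{b}$ meets the circle $\boundary{B^{\gamma}}$ in exactly two points. Since $B^{\gamma}$ is an open ball containing all pairwise intersection points $I^{\gamma}$, and (recall $|C|\ge 4$) every bisector meets every other bisector because no two are parallel, each bisector passes through at least one point of $I^{\gamma}\subseteq B^{\gamma}$, hence through the interior of $B^{\gamma}$, and therefore crosses its boundary circle in exactly two distinct points.

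Next I would show that the resulting $2n$ boundary points are pairwise distinct. A single bisector contributes two distinct points by the previous step; and if two \emph{distinct} bisectors shared a point $p\in\boundary{B^{\gamma}}$, then $p$ would be an intersection point of two lines, so $p\in I^{\gamma}\subseteq B^{\gamma}$. But $p$ also lies on $\boundary{B^{\gamma}}$, which is disjoint from the open ball $B^{\gamma}$, a contradiction. Hence the $n$ bisectors contribute $2n$ distinct points on $\boundary{B^{\gamma}}$, and these points partition the circle into exactly $2n$ arcs.

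Finally, I would invoke the one-to-one correspondence between these circular arcs and the outer regions that was already observed when defining $\primalg{\gamma}$: each unbounded cell of $\arrangement{\gamma}$ is clipped by $B^{\gamma}$ to a bounded face of $\primalg{\gamma}$ whose only portion of $\boundary{B^{\gamma}}$ is a single arc, and conversely each arc lies on the boundary of a unique unbounded cell. Since every cell of a line arrangement is a nonempty open set, all outer regions are automatically nonempty, so their number equals the number of arcs, namely $2n = 2\binom{|C|}{2}$. The only step requiring genuine care is the distinctness of the $2n$ boundary points; this is precisely where niceness (no two parallel bisectors) together with $I^{\gamma}\subseteq B^{\gamma}$ is used, and the remaining steps are bookkeeping on the already-established structure of $\primalg{\gamma}$.
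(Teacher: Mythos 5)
Your proposal is correct and follows essentially the same route as the paper's proof: use niceness to show each of the $\binom{|C|}{2}$ bisectors crosses $\boundary{B^\gamma}$ in exactly two points, so the circle is cut into $2\binom{|C|}{2}$ arcs, and then invoke the arc--outer-region bijection from the construction of $\primalg{\gamma}$. You actually supply two details the paper elides --- that each bisector passes through an interior intersection point (hence crosses the circle twice) and that the $2\binom{|C|}{2}$ crossing points are pairwise distinct --- which strengthens rather than changes the argument.
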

																																			\toappendix
																																			{
																																				\lv
																																				{
																																					\begin{proof}
																																					}
																																					\sv
																																					{
																																						\begin{proof}[Proof of \Cref{obs:num_outer_regions}]
																																						}
																																						There are exactly $\binom{|C|}{2}$ distinct bisectors of the form $\bisector{\gamma}{a}{b}$. Consider the open ball $B^\gamma$ containing all pairwise intersections of bisectors. Note that since no two bisectors are parallel (this is because $\gamma$ is nice), all bisectors intersect the boundary $\boundary{B^\gamma}$ exactly twice.\footnote{In fact, the only scenario where the circle $B^\gamma$ might not contain any part of a bisector is when all the bisectors are parallel.} Hence the circle $\boundary{B^\gamma}$ is partitioned into exactly $2\binom{|C|}{2}$ circular arcs. Since these are in one to one correspondence with the outer regions of $\gamma$, there are exactly $2\binom{|C|}{2}$ nonempty outer regions, as claimed. 
																																					\end{proof}
																																				}
																																				
																																				\sv
																																				{
																																					\begin{lemma}[$\star$]\label{lem:inner_regions_do_not_have_reverse}
																																					}
																																					\lv
																																					{
																																						\begin{lemma}\label{lem:inner_regions_do_not_have_reverse}
																																						}
																																						Let $\gamma$ be an embedding of a candidate set $C$ and let $v\in \sym{C}$. The following conditions are equivalent:
																																						\begin{enumerate}
																																							\item The region $\region{\gamma}{v}$ is outer.
																																							\item Both $\region{\gamma}{v}$ and $\region{\gamma}{v^R}$ are nonempty.
																																						\end{enumerate}
																																					\end{lemma}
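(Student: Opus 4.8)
The plan is to reduce both regions to a common system of affine inequalities and then run a Gordan-type dichotomy on the associated normal vectors. Write $v = c_1 \succ c_2 \succ \dots \succ c_m$, set $f_c(p) = \ell_2(p,\gamma(c))^2$, and put $g_k(p) = f_{c_k}(p) - f_{c_{k+1}}(p)$ for each consecutive pair $k \in \{1,\dots,m-1\}$. Each $g_k$ is affine with constant gradient $2n_k$, where $n_k = \gamma(c_{k+1}) - \gamma(c_k)$. Since the chain $f_{c_1}(p) < \dots < f_{c_m}(p)$ is equivalent to its consecutive inequalities, I would record that $\region{\gamma}{v} = \{p : g_k(p) < 0 \text{ for all } k\}$, and, because $v^R = c_m \succ \dots \succ c_1$ reverses every consecutive comparison, $\region{\gamma}{v^R} = \{p : g_k(p) > 0 \text{ for all } k\}$. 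Thus both regions are cut out by the \emph{same} functions $g_1,\dots,g_{m-1}$ and are governed by the same normals $n_1,\dots,n_{m-1}$.

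Next I would apply Gordan's theorem to $\{n_k\}$: exactly one of the following holds: (a) there is a $d$ with $\langle n_k, d\rangle < 0$ for all $k$; or (b) there are coefficients $\mu_k \geq 0$, not all zero, with $\sum_k \mu_k n_k = 0$. In case (a), moving to infinity along $p_0 + t d$ makes every $g_k \to -\infty$, so $\region{\gamma}{v}$ eventually contains the whole ray and is nonempty and unbounded, i.e.\ outer; moving along $p_0 - t d$ makes every $g_k \to +\infty$, so $\region{\gamma}{v^R}$ is nonempty (and outer) as well. In case (b) I would consider $h = \sum_k \mu_k g_k$: its gradient $\sum_k 2\mu_k n_k$ vanishes, so $h \equiv \beta$ is a constant. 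If $\region{\gamma}{v}\neq\emptyset$, evaluating $h$ at a point of it gives $\beta = \sum_k \mu_k g_k < 0$ (some $\mu_k>0$, all $g_k<0$); if $\region{\gamma}{v^R}\neq\emptyset$ one similarly gets $\beta > 0$. Hence in case (b) at most one of the two regions is nonempty. Consequently, both regions are nonempty if and only if we are in case (a).

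It remains to identify case (a) with ``$\region{\gamma}{v}$ is outer.'' Case (a) clearly implies $\region{\gamma}{v}$ is outer, so the work is the converse: if $\region{\gamma}{v}$ is unbounded then case (a) must hold. The recession cone of $\region{\gamma}{v}$ is $\{d : \langle n_k, d\rangle \leq 0 \text{ for all } k\}$, and unboundedness means this cone is nontrivial. The one scenario that is compatible with (b) yet still unbounded is that this cone is a single ray, and excluding it is exactly where I expect the main difficulty to lie. I would rule it out using the niceness of $\gamma$ (no two bisectors are parallel), which is the standing assumption of this section and is available by \Cref{thm:non_degenerate_embedding}: a ray recession cone forces two of the active constraints to have antiparallel normals $n_i, n_j$, but $n_i \parallel n_j$ means the bisectors $\bisector{\gamma}{c_i}{c_{i+1}}$ and $\bisector{\gamma}{c_j}{c_{j+1}}$ are parallel, contradicting niceness (the two consecutive pairs are distinct, hence give distinct bisectors). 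Therefore the recession cone is either trivial (bounded region, case (b)) or full-dimensional (case (a)), so $\region{\gamma}{v}$ is outer if and only if case (a) holds.

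Combining the two characterizations, $\region{\gamma}{v}$ is outer $\iff$ case (a) holds $\iff$ both $\region{\gamma}{v}$ and $\region{\gamma}{v^R}$ are nonempty, which is the desired equivalence. The technical heart is the last paragraph (the no-ray argument via niceness); indeed, I expect the statement to genuinely fail for non-nice embeddings, since parallel bisectors can produce an unbounded ``slab'' region whose reverse is empty, so flagging the niceness hypothesis explicitly is important.
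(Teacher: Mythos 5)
Your proof is correct, and it takes a genuinely different route from the paper's. The paper argues on the planar arrangement itself: for $1\Rightarrow 2$ it works on the bounding circle $\boundary{B^\gamma}$, uses niceness to get exactly $2\binom{|C|}{2}$ circular arcs, and shows by a crossing-counting argument that the arc diametrically opposite the one for $v$ must be the arc for $v^R$; for $2\Rightarrow 1$ it picks $p\in\region{\gamma}{v}$, $q\in\region{\gamma}{v^R}$, notes that the segment $\linesegment{p}{q}$ must cross every bisector, and derives a contradiction from convexity of the (allegedly bounded) polygon $\closure{\region{\gamma}{v}}$ by exhibiting a bounding bisector that the ray $\ray{p}{q}$ cannot meet. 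You instead encode both regions by the same consecutive-difference affine functions $g_k$ and invoke Gordan's alternative on the normals $n_k$: your case~(b) argument (the constant $h=\sum_k \mu_k g_k$ having contradictory signs on the two regions) gives ``both nonempty $\Rightarrow$ case (a)'' with no niceness and in any dimension, while your recession-cone step isolates the only place niceness is needed, namely excluding degenerate cones. Two small points. First, your prose excludes only the ray-shaped cone, but a line-shaped recession cone is equally compatible with (b) and unboundedness; your own argument disposes of it, since a line cone also forces two antiparallel normals among the $n_k$ (were all $n_k$ nonnegative multiples of a single vector, the cone would be a halfplane), so only a parenthetical fix is needed. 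Second, your closing remark is correct and worth stating explicitly: the lemma as written omits the niceness hypothesis, the paper's $1\Rightarrow 2$ argument silently relies on it (via \Cref{obs:num_outer_regions}, whose statement does assume niceness), and the equivalence genuinely fails without it --- for instance, three equally spaced collinear candidates produce an unbounded slab region whose reverse region is empty. In exchange for being less visual, your approach localizes the hypothesis precisely and gives a $2\Rightarrow 1$ direction that is dimension-free; the paper's approach meshes directly with the arc/outer-region correspondence that the surrounding embedding-graph lemmas build on.
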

																																					\toappendix
																																					{
																																						\lv
																																						{
																																							\begin{proof}
																																							}
																																							\sv
																																							{
																																								\begin{proof}[Proof of \Cref{lem:inner_regions_do_not_have_reverse}]
																																								}
																																								$1\Rightarrow 2$: Suppose that $\region{\gamma}{v}$ is a an outer region. Let $s$ be the circular arc on $\boundary{B^\gamma}$ that $\region{\gamma}{v}$ corresponds to. Since there is even number of circular arcs on $\boundary{B^\gamma}$ induced by the bisectors (precisely $2\binom{|C|}{2}$) there is an opposite arc $s'$ to $s$. Consider traveling along $\boundary{B^\gamma}$ in counterclockwise direction from $s$ to $s'$. Since $s'$ is opposite to $s$, we must pass exactly $\binom{|C|}{2}$ arcs (including $s'$ and not including $s$). In other words, we must cross exactly $\binom{|C|}{2}$ bisectors leaving $\boundary{B^\gamma}$. We claim that we in fact cross every bisector exactly once. For the sake of contradiction suppose that we crossed some bisector $\beta$ twice while traveling from $s$ to $s'$. Let $p_1,p_2$ be the points of intersection of $\beta$ with $\boundary{B^\gamma}$. Since all intersections of bisectors are inside $B^\gamma$, all intersections of $\beta$ with other bisectors are on the line segment $\linesegment{p_1}{p_2}$. Hence all bisectors leave the ball $B^\gamma$ at the circular arc given by $p_1$ and $p_2$ (in counterclockwise order). Hence when we crossed $\beta$ twice, we must have crossed every other bisector at least once. Hence we crossed at least $\binom{|C|}{2}-1+2$ bisectors. But that contradicts the fact that we traversed exactly $\binom{|C|}{2}$ circular arcs before hitting $s'$.
																																								
																																								Note that since we crossed every bisector exactly once this exactly corresponds to swapping all preferences of the voter $v$. Hence the region that corresponds to the arc $s'$ corresponds to the vote $v^R$ and hence $\region{\gamma}{v^R}\neq \emptyset$.
																																								
																																								$2\Rightarrow 1$: Suppose for the sake of contradiction that both $\region{\gamma}{v}$ and $\region{\gamma}{v^R}$ are nonempty and $\region{\gamma}{v}$ is inner. Consider arbitrary points $p\in \region{\gamma}{v},q\in\region{\gamma}{v^R}$. Notice that the line segment $\linesegment{p}{q}$ must cross each bisector exactly once because $v^R$ is the reverse vote of $v$. We now show that this is impossible. We show a slightly stronger claim: The infinite ray $\ray{p}{q}$ does not cross all bisectors.
																																								
																																								To simplify calculations, without loss of generality assume that $p$ is the origin, i.e., $p=(0,0)$ and the ray $\ray{p}{q}$ coincides with the positive direction of the $x$-axis. The closure of the region $\region{\gamma}{v}$ is a convex polygon, denote it by $P$. Let $p_1,\ldots,p_k$ be the vertices of $P$ given in counterclockwise order. Let $p_i$ be the index of a vertex with largest $y$-coordinate. Consider the line segment $\linesegment{p_{i}}{p_{i+1}}$ and let $\beta$ be the bisector that contains $\linesegment{p_{i}}{p_{i+1}}$. We show that $\ray{p}{q}$ does not intersect the bisector $\beta$.
																																								
																																								For the sake of contradiction suppose that this is not the case and that $\beta$ intersects $\ray{p}{q}$. Note that $y(p_i)>0$ since otherwise $p$ is not in the interior of $P$. The point of intersection of $\beta$ and $\ray{p}{q}$ is of the form $p'=(x,0)$ for some $x>0$. Notice that the angle $\alpha$ formed by the rays $\ray{p_i}{p_{i-1}}$ and $\ray{p_i}{p'}$ coincides with the interior angle of $P$ at $p_i$. Notice that $p\notin \alpha$. But by convexity of $P$ the entire polygon must lie in $\alpha$. A contradiction to the choice of $p$.	
																																							\end{proof}
																																						}
																																						
																																						\sv
																																						{
																																							\begin{lemma}[$\star$]\label{lem:number_of_regions_crossed_by_a_bisector}
																																							}
																																							\lv
																																							{
																																								\begin{lemma}\label{lem:number_of_regions_crossed_by_a_bisector}
																																								}
																																								Let $\gamma$ be an embedding of candidates and $\bisector{\gamma}{a}{b} \in \arrangement{\gamma}$ a bisector. Then $\bisector{\gamma}{a}{b}$ crosses at most $\binom{|C|-2}{2}+|C|-1$ regions of $\arrangement{\gamma}\setminus \{\bisector{\gamma}{a}{b}\}$.
																																							\end{lemma}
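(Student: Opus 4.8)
The plan is to count the crossed regions by counting the pieces into which the line $\bisector{\gamma}{a}{b}$ is cut by the remaining bisectors. First I would invoke the standard fact from the theory of line arrangements: if a line $\ell$ meets the other lines of an arrangement in exactly $p$ distinct points, those points cut $\ell$ into $p+1$ maximal pieces (two rays and $p-1$ segments), and since every region of the arrangement is convex, $\ell$ meets each region in at most one connected piece. Consequently distinct pieces lie in distinct regions, so the number of regions of $\arrangement{\gamma}\setminus\{\bisector{\gamma}{a}{b}\}$ that $\bisector{\gamma}{a}{b}$ crosses is exactly $p+1$. It therefore suffices to show that the number $p$ of distinct points in which $\bisector{\gamma}{a}{b}$ meets the other bisectors is at most $\binom{|C|-2}{2}+|C|-2$.

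Next I would partition the $\binom{|C|}{2}-1$ remaining bisectors into two types according to how they relate to $\{a,b\}$: the bisectors $\bisector{\gamma}{a}{c}$ and $\bisector{\gamma}{b}{c}$ sharing exactly one candidate with $\{a,b\}$ (there are $2(|C|-2)$ of these), and the bisectors $\bisector{\gamma}{c}{d}$ with $c,d\in C\setminus\{a,b\}$ sharing no candidate (there are $\binom{|C|-2}{2}$ of these). A direct check confirms $2(|C|-2)+\binom{|C|-2}{2}=\binom{|C|}{2}-1$.

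The crux of the argument is the observation that for each candidate $c\in C\setminus\{a,b\}$ the three bisectors $\bisector{\gamma}{a}{b}$, $\bisector{\gamma}{a}{c}$, $\bisector{\gamma}{b}{c}$ are concurrent: they all pass through the circumcenter of the triangle $\triangle\gamma(a)\gamma(b)\gamma(c)$, since the circumcenter is by definition equidistant from all three vertices. Hence the two first-type bisectors associated with $c$ meet $\bisector{\gamma}{a}{b}$ at one and the same point, so the $2(|C|-2)$ first-type bisectors contribute at most $|C|-2$ distinct points on $\bisector{\gamma}{a}{b}$. The $\binom{|C|-2}{2}$ second-type bisectors contribute at most $\binom{|C|-2}{2}$ further points. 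Summing gives $p\le (|C|-2)+\binom{|C|-2}{2}$, and therefore the number of crossed regions is at most $p+1\le \binom{|C|-2}{2}+|C|-1$, as claimed.

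Finally, I would note that degeneracies only help: if $\gamma(a),\gamma(b),\gamma(c)$ are collinear the relevant bisectors become parallel to $\bisector{\gamma}{a}{b}$ and contribute no crossing at all, and if any intersection points coincide then $p$ only drops; in every case the bound is preserved, so no niceness assumption on $\gamma$ is needed. The main obstacle I anticipate is not the combinatorial savings — which follow immediately once the circumcenter concurrency is identified — but the arrangement-theoretic step that the line crosses exactly $p+1$ regions; this has to be argued carefully from the convexity of each region, so that a line cannot re-enter a region it has already left.
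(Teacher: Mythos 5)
Your proposal is correct and follows essentially the same route as the paper's proof: count the intersection points on $\bisector{\gamma}{a}{b}$ (so that regions crossed is that count plus one), use the circumcenter concurrency of $\bisector{\gamma}{a}{b}$, $\bisector{\gamma}{a}{c}$, $\bisector{\gamma}{b}{c}$ to merge the two bisectors involving each $c\in C\setminus\{a,b\}$ into at most one point, and add at most $\binom{|C|-2}{2}$ points for bisectors disjoint from $\{a,b\}$. Your write-up is in fact somewhat more careful than the paper's, spelling out the convexity argument for why each piece lies in a distinct region and handling degenerate configurations explicitly.
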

																																							\toappendix
																																							{
																																								\sv
																																								{
																																									\begin{proof}[Proof of \Cref{lem:number_of_regions_crossed_by_a_bisector}]
																																									}
																																									\lv
																																									{
																																										\begin{proof}
																																										}
																																										Consider the arrangement $\mathcal{A}'=\arrangement{\gamma}\setminus \{\bisector{\gamma}{a}{b}\}$ and notice that every time $\bisector{\gamma}{a}{b}$ crosses another bisector, it bisects a new region. Hence if $x$ is the number of intersections of $\bisector{\gamma}{a}{b}$ with bisectors in $\mathcal{A}'$, then $x+1$ is the number of regions that $\bisector{\gamma}{a}{b}$ crosses. Note that each triplet of candidates induces a crossing of exactly three bisectors. There are at most $|C|-2$ candidates $c$ that make up a triplet $\{a,b,c\}$. Lastly, there are at most $\binom{|C|-2}{2}$ other pairs of candidates $\{c,d\}$ such that $\{c,d\}\cap \{a,b\}=\emptyset$ such that the bisector $\bisector{\gamma}{c}{d}$ creates a crossing with $\bisector{\gamma}{a}{b}$. Hence $\bisector{\gamma}{a}{b}$ crosses at most $|C|-2+\binom{|C|-2}{2}$ other bisectors, hence it crosses at most $\binom{|C|-2}{2}+|C|-1$ regions, as claimed.
																																									\end{proof}
																																								}

																																								\dtoappendix
																																								{
																																									\subsection{The algorithm}
																																								}\label{subsec:the_algorithm}
																																								Throughout the section, suppose that $(C,V)$ is a given election. If $(C,V)$ is $2$-Euclidean, it admits a nice $2$-Euclidean embedding $\gamma$ and we can construct the embedding graph $\dualg{\gamma}$. The properties derived in \Cref{subsec:properties_of_embeddings} impose constraints on both the embedding and the graph $\dualg{\gamma}$. If we can show, given $(C,V)$, that all the constraints cannot be satisfied, then clearly the election $(C,V)$ is not $2$-Euclidean. For this purpose we formulate our problem as an integer linear program (ILP). The core variables of the program are binary variables $x_v$ and $\iota_v$ for each $v\in \sym{C}$ and the correspondence is as follows. If $(C,V)$ is $2$-Euclidean and $\gamma$ is a nice $2$-Euclidean embedding of $(C,V)$, then by setting
																																								$x_v=1\Leftrightarrow \region{\gamma}{v}\neq \emptyset$ and $\iota_v=1\Leftrightarrow (\region{\gamma}{v}\neq \emptyset \wedge \region{\gamma}{v} \text{ is an inner region})$, we obtain a feasible solution. We say that an ILP constraint is \emph{correct} if it satisfies the aforementioned property.
																																								
																																								We emphasize that the property of the ILP is meant to be only unidirectional. Not every feasible solution to the ILP should correspond to some embedding $\gamma$ with $\region{\gamma}{v}\neq \emptyset$ if and only if $x_v=1$.
																																								
																																								In the following paragraphs we describe how to transform the properties from \Cref{subsec:properties_of_embeddings} to ILP constraints and prove their correctness. The full and compact description of the integer linear program can be found in 
																																								\sv{the appendix (see \Cref{subsec:full_ilp}).}
																																								\lv{\Cref{subsec:full_ilp}.}
																																								
																																								\dtoappendix
																																								{
																																									\subsubsection{Transforming logical formulas to ILP constraints}
																																								}
																																								Most of the constraints that we will use can be expressed in the form of a logical formula over the variables of the ILP. In turn, every formula can be written in conjunctive normal form (CNF). Our aim will be to transform these CNF formulas into (possibly more) ILP constraints (i.e., linear inequalities) such that an assignment $f$ satisfies the formula if and only if it satisfies all the constraints. In such case we shall say that the set of constraints is \emph{equivalent} with the formula. There is a standard trick to produce equivalent linear constraints for CNF formulas, which we prove in \Cref{lem:ilp_transform_formula_to_inequality}. 
																																								
																																								\sv
																																								{
																																									\begin{lemma}[$\star$]\label{lem:ilp_transform_formula_to_inequality}
																																									}
																																									\lv
																																									{
																																										\begin{lemma}\label{lem:ilp_transform_formula_to_inequality}
																																										}
																																										Let $\varphi$ be a CNF formula and let $X_1,\ldots, X_m$ be the set of clauses of $\varphi$. For $i\in[m]$, let $I_i^+$ and $I_i^-$ be the sets of positive and negative literals in clause $X_i$. Then the set of ILP constraints
																																										\begin{align*}
																																											\sum_{x\in I_i^+}x + \sum_{x\in I_i^-}(1-x)\geq 1 & & \forall i \in [m]\tag{*}\label{eqn:ilp_constraints_trick}
																																										\end{align*}
																																										is equivalent with $\varphi$.
																																									\end{lemma}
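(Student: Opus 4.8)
The plan is to argue clause by clause: for each fixed $i\in[m]$ I would show that the $i$-th linear constraint holds under a truth assignment $f$ (viewed as a $0/1$ assignment to the ILP variables) if and only if $f$ satisfies the clause $X_i$. Since, by the semantics fixed in the preliminaries, $f$ satisfies $\varphi$ exactly when it satisfies every clause, and the conjunction of all constraints holds exactly when each individual constraint holds, this single-clause equivalence immediately yields the claimed equivalence of $\varphi$ with the whole system \eqref{eqn:ilp_constraints_trick}.

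The heart of the argument is a translation step matching the truth value of a literal with the term it contributes to the constraint. I would recall that $f$ satisfies $X_i$ if and only if there is a literal $\ell$ in $X_i$ with $f(\ell)=1$. For a positive literal, i.e. a variable $x\in I_i^+$, the contribution to the left-hand side is the term $x$, and $f(\ell)=f(x)=1$ is precisely the statement $x=1$. For a negative literal $\neg x$ with $x\in I_i^-$, the contribution is $1-x$, and since $f(\neg x)=1-f(x)$, the condition $f(\ell)=1$ is precisely $1-x=1$. Hence every summand on the left-hand side equals the truth value under $f$ of its associated literal, and in particular each summand lies in $\{0,1\}$.

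With this observation the single-clause equivalence is immediate. The left-hand side $\sum_{x\in I_i^+}x + \sum_{x\in I_i^-}(1-x)$ is a sum of $0/1$ terms, so it equals exactly the number of literals of $X_i$ satisfied by $f$. Therefore it is at least $1$ if and only if at least one literal of $X_i$ is satisfied, which by the clause semantics is exactly the condition that $f$ satisfies $X_i$. Quantifying over all $i\in[m]$ then gives that $f$ satisfies every constraint in \eqref{eqn:ilp_constraints_trick} iff $f$ satisfies every clause of $\varphi$ iff $f$ satisfies $\varphi$, which is the desired equivalence.

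As this is a routine encoding fact, I do not anticipate a genuine obstacle. The only point requiring care is the bookkeeping that distinguishes positive from negative literals, together with the remark that each summand is a $0/1$ quantity; it is precisely this integrality that makes the threshold condition ``sum $\geq 1$'' coincide with ``some summand equals $1$'' rather than merely ``the sum is positive.'' No auxiliary constructions or further case analysis are needed.
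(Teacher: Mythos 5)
Your proposal is correct and follows essentially the same route as the paper's proof: both rest on the observation that a positive literal $x$ contributes the term $x$ and a negative literal $\neg x$ contributes $1-x$, so that (by $0/1$ integrality) the constraint for clause $X_i$ holds precisely when some literal of $X_i$ is satisfied. Your packaging as a single per-clause biconditional via the counting observation (``the left-hand side equals the number of satisfied literals'') is slightly more streamlined than the paper's two-direction case analysis, but it is the same underlying argument.
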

																																									\toappendix
																																									{
																																										\sv
																																										{
																																											\begin{proof}[Proof of \Cref{lem:ilp_transform_formula_to_inequality}]
																																											}
																																											\lv
																																											{
																																												\begin{proof}
																																												}
																																												Suppose that $f$ is an assignment of the variables used in the formula and the ILP.
																																												
																																												Suppose that $f(\varphi)=1$. This means that for every clause $X_i$ of $\varphi$ there is a literal $\ell$ in $X$ such that $f(\ell)=1$. If $\ell$ is negative, i.e., it is of the form $\ell=\neg y$ for some variable $y$, then $f(y)=0$. But $y\in I_i^-$. Hence, by equivalently writing $y=0$ in the language of ILP, we obtain $\sum_{x \in I_i^-} (1-x)\geq 1-y=1-0=1$. The same holds for all clauses $X_i$ for $i\in[m]$, hence all $m$ constraints are satisfied by $f$.
																																												
																																												On the other hand, suppose that the ILP constraints (\ref{eqn:ilp_constraints_trick}) are satisfied. This means that one of the sums must be at least $1$ for every $i$. If it is the first one, then there is a variable $y\in I_j^+$ such that $y=1$ (in the language of ILP) or $f(y)=1$ (in the language of logical formulas). But this corresponds to some positive occurence of the variable $y$ in the clause~$X_i$. Hence the clause is satisfied. If it is the second sum that is at least $1$, then there is a variable $z\in I_i^-$ such that $(1-z)=1$, i.e., $z=0$ or equivalently $f(z)=0$. The variable $z$ corresponds to some negative occurence of $z$ in the clause $X_i$. Thus, the clause $X_i$ is satisfied by $f$. This holds for all $i\in[m]$, hence the formula $\varphi$ is satisfied.
																																											\end{proof}
																																										}
																																										
																																										\dtoappendix
																																										{
																																											\subsubsection{Basic constraints}
																																										}
																																										We begin with the obvious constraints:
																																										\begin{align}
																																											\restatableeq{\ilpcbasic}{x_v &= 1 & \forall v \in V}{ilpc:basic}.
																																										\end{align}
																																										Correctness of constraints (\ref{ilpc:basic}) is obvious. If $\gamma$ is $2$-Euclidean embedding of $(C,V)$, then for every $v\in V$ the region $\region{\gamma}{v}$ must be nonempty, hence $x_v=1$ is satisfied. Next constraint is due to \Cref{thm:ubm}. There cannot be more than $\ub(|C|)$ nonempty regions.
																																										\begin{align}
																																											\restatableeq{\ilpcubm}{\sum_{v\in \sym{C}}x_v &\leq \ub(|C|)}{ilpc:ubm}
																																										\end{align}
																																										
																																										We now turn our attention to \Cref{lem:implied_neighbors}. This gives a constraint that can be interpreted as the formula $(x_u\wedge x_v) \Rightarrow \bigvee_{i=1}^k x_{v_i}$, where $v_1,\ldots,v_k$ are all votes such that $\dswap{u}{v_i}=1$ and $\dswap{v_i}{v}=\dswap{u}{v}-1$. The CNF may be derived by using the following equivalences: $A \Rightarrow B\equiv\neg A \vee B$ and $\neg (A\wedge B)\equiv\neg A \vee \neg B$. We thus obtain the formula $\neg x_u \vee \neg x_v \vee x_{v_1} \vee \ldots x_{v_k}$. Note that $k$ depends on both $u$ and $v$. By \Cref{lem:ilp_transform_formula_to_inequality} we can obtain the linear constraints:
																																										\begin{align}
																																											\restatableeq{\ilpcimpliedvotes}{(1-x_u) + (1-x_v) + \sum_{i=1}^k x_{v_i} &\geq 1 & \forall \{u,v\}\subseteq \sym{C}}{ilpc:implied_votes}.
																																										\end{align}
																																										
																																										\sv
																																										{
																																											\begin{lemma}[$\star$]\label{lem:con_impl_votes}
																																											}
																																											\lv
																																											{
																																												\begin{lemma}\label{lem:con_impl_votes}
																																												}
																																												The constraints (\ref{ilpc:implied_votes}) are correct.
																																											\end{lemma}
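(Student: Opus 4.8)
The plan is to show that the assignment induced by an arbitrary nice $2$-Euclidean embedding $\gamma$ of $(C,V)$—namely $x_w=1\Leftrightarrow\region{\gamma}{w}\neq\emptyset$—satisfies every instance of the constraint (\ref{ilpc:implied_votes}). Recall that these constraints were obtained from the formula $(x_u\wedge x_v)\Rightarrow\bigvee_{i=1}^k x_{v_i}$ through \Cref{lem:ilp_transform_formula_to_inequality}; by the equivalence guaranteed there, it is enough to verify that the induced assignment satisfies this formula for every pair $\{u,v\}\subseteq\sym{C}$.

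First I would dispose of the trivial situation. If $x_u=0$ or $x_v=0$, then one of the summands $(1-x_u),(1-x_v)$ equals $1$, and since every variable is nonnegative the left-hand side is at least $1$. Thus the only case that needs a geometric argument is $x_u=x_v=1$, i.e.\ both $\region{\gamma}{u}$ and $\region{\gamma}{v}$ are nonempty, which is precisely the statement that $u,v\in V(\dualg{\gamma})$.

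The key step is to invoke \Cref{lem:implied_neighbors}. That lemma is phrased for two \emph{voters} $u,v\in V$, but its proof uses the hypothesis only to conclude $u,v\in V(\dualg{\gamma})$—a voter yields a vertex of the embedding graph exactly because its region is nonempty. The argument itself—connectivity of $\dualg{\gamma}$ combined with the shortest-path preservation of \Cref{lem:dualg_preserves_distances}, which yields a $u$-$v$ path of length $\dswap{u}{v}$ whose second vertex is one of the $v_i$—then applies verbatim to any $u,v\in V(\dualg{\gamma})$. Hence some $v_i$ lies in $V(\dualg{\gamma})$, so $\region{\gamma}{v_i}\neq\emptyset$ and $x_{v_i}=1$, giving $\sum_{i=1}^k x_{v_i}\geq 1$ and closing this case.

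The only real obstacle is this gap between the hypotheses: \Cref{lem:implied_neighbors} assumes $u,v$ are voters, whereas the constraint ranges over all pairs of permutations whose regions happen to be nonempty. I would resolve it simply by observing that the condition ``$u,v\in V$'' in \Cref{lem:implied_neighbors} is used solely to place $u$ and $v$ in $V(\dualg{\gamma})$, so its conclusion in fact holds for every pair of vertices of the embedding graph; the rest is the routine nonnegativity case split above.
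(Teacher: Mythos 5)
Your proposal is correct and takes essentially the same route as the paper's proof: the same case split (if one of $\region{\gamma}{u},\region{\gamma}{v}$ is empty the constraint holds trivially by nonnegativity; otherwise \Cref{lem:implied_neighbors} supplies an index $i$ with $\region{\gamma}{v_i}\neq\emptyset$, so $x_{v_i}=1$). You are in fact slightly more careful than the paper, which applies \Cref{lem:implied_neighbors} to arbitrary permutations with nonempty regions even though that lemma is stated for $u,v\in V$; your observation that its proof uses the hypothesis only to place $u,v$ in $V(\dualg{\gamma})$ is precisely the justification the paper leaves implicit.
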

																																											\toappendix
																																											{
																																												\sv
																																												{
																																													\begin{proof}[Proof of \Cref{lem:con_impl_votes}]
																																													}
																																													\lv
																																													{
																																														\begin{proof}
																																														}
																																														To show correctness, suppose that $(C,V)$ admits a $2$-Euclidean embedding $\gamma$ and let $u,v\in \sym{C}$ be arbitrary. If one of $\region{\gamma}{u},\region{\gamma}{v}$ is empty (without loss of generality $\region{\gamma}{u}\neq \emptyset$), then we can immediately see that by plugging in $x_u=0$, the left hand side is at least $1$, hence the constraint is satisfied. If both $\region{\gamma}{u},\region{\gamma}{v}$ are nonempty, then by \Cref{lem:implied_neighbors} there is $i\in[k]$ such that $\region{\gamma}{v_i}\neq \emptyset$. Hence we can set $x_{v_i}=1$ and thus the sum on the left hand side is again at least $1$ and hence the constraint is satisfied.
																																													\end{proof}
																																												}
																																												We proceed by introducing a constraint for \Cref{obs:vote_ranking_candidate_first}.
																																												\begin{align}
																																													\restatableeq{\ilpcvoterankingcandidatefirst}{\sum_{v\in \sym{C},\pos{v}{c}=1}x_v&\geq 1 & \forall c \in C}{ilpc:vote_ranking_candidate_first}
																																												\end{align}
																																												
																																												\sv
																																												{
																																													\begin{lemma}[$\star$]\label{lem:vote_rcf}
																																													}
																																													\lv
																																													{
																																														\begin{lemma}\label{lem:vote_rcf}
																																														}
																																														The constraints (\ref{ilpc:vote_ranking_candidate_first}) are correct.
																																													\end{lemma}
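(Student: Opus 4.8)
The plan is to invoke \Cref{obs:vote_ranking_candidate_first} directly, since it already carries all the geometric content; the remaining work is purely bookkeeping about the variable correspondence. Recall that a constraint is correct when, for any $2$-Euclidean election $(C,V)$ and any nice $2$-Euclidean embedding $\gamma$ of $(C,V)$, the canonical assignment $x_v=1\Leftrightarrow\region{\gamma}{v}\neq\emptyset$ satisfies it. So I would fix such an embedding $\gamma$ and then argue that constraint (\ref{ilpc:vote_ranking_candidate_first}) holds for each candidate $c\in C$ separately.

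First I would fix an arbitrary candidate $c\in C$ and consider $\gamma$ as an embedding of the candidate set $C$. Applying \Cref{obs:vote_ranking_candidate_first} to this embedding yields a vote $w\in\sym{C}$ with $\pos{w}{c}=1$ and $\region{\gamma}{w}\neq\emptyset$. By the defining correspondence of the ILP variables, the condition $\region{\gamma}{w}\neq\emptyset$ forces $x_w=1$ in the canonical assignment.

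Finally I would observe that $w$ is exactly one of the terms of the sum $\sum_{v\in\sym{C},\pos{v}{c}=1}x_v$, precisely because $\pos{w}{c}=1$ places it in the summation index set. Since all variables are nonnegative, the single term $x_w=1$ already gives $\sum_{v\in\sym{C},\pos{v}{c}=1}x_v\geq x_w=1$, so the constraint holds for $c$. As $c$ was arbitrary, the entire family of constraints is satisfied by the canonical assignment, establishing correctness. There is no genuine obstacle here: the only point to verify is that the witness vote supplied by \Cref{obs:vote_ranking_candidate_first} is indeed indexed by the summation set, which is immediate from $\pos{w}{c}=1$; all the geometric difficulty was already discharged in the proof of that observation, where the witness region is obtained from a small ball around $\gamma(c)$ in which $c$ is necessarily ranked first.
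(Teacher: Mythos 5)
Your proof is correct and follows exactly the paper's argument: both invoke \Cref{obs:vote_ranking_candidate_first} to obtain a witness vote ranking $c$ first with a nonempty region, and conclude that the corresponding variable set to $1$ satisfies the sum constraint. No difference in substance, only in the level of bookkeeping detail.
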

																																													\toappendix
																																													{
																																														\lv
																																														{
																																															\begin{proof}
																																															}
																																															\sv
																																															{
																																																\begin{proof}[Proof of \Cref{lem:vote_rcf}]
																																																}
																																																Let $\gamma$ be a $2$-Euclidean embedding for $(C,V)$ and let $c\in C$ be any candidate. By \Cref{obs:vote_ranking_candidate_first} there is a vote $v\in \sym{C}$ such that $\pos{v}{c}=1$ and $\region{\gamma}{v}\neq \emptyset$. Hence we can set $x_v=1$ and thus satisfy this constraint (\ref{ilpc:vote_ranking_candidate_first}).
																																															\end{proof}
																																														}
																																														\dtoappendix
																																														{
																																															\subsubsection{Inner and Outer regions}
																																														}
																																														We turn our attention to the inner and outer regions. Recall that the variables $\iota_v$ correspond to nonempty regions that are also inner. Hence we should add the obvious constraint captured by the formula $\iota_v \Rightarrow x_v$. This formula says that whenever $v$ should have inner region (i.e., $\iota_v=1$), then in particular it must have nonempty region (thus $x_v=1$). This can be rewritten to disjunction as $\neg \iota_v \vee x_v$ by using the fact that $A\Rightarrow B\equiv \neg A \vee B$. \Cref{lem:ilp_transform_formula_to_inequality} gives us the following linear constraints:
																																														\begin{align}
																																															\restatableeq{\ilpcbasiciota}{(1-\iota_v) + x_v &\geq 1 & \forall v \in \sym{C}}{ilpc:basic_iota}
																																														\end{align}
																																														Next constraint is due to \Cref{obs:num_outer_regions}:
																																														\begin{align}
																																															\restatableeq{\ilpcnumouterregions}{\sum_{v\in \sym{C}}(x_v-\iota_v) &= 2\binom{|C|}{2}}{ilpc:num_outer_regions}
																																														\end{align}
																																														\sv
																																														{
																																															\begin{lemma}[$\star$]\label{lem:num_out_regions}
																																															}
																																															\lv
																																															{
																																																\begin{lemma}\label{lem:num_out_regions}
																																																}
																																																Constraint (\ref{ilpc:num_outer_regions}) is correct.
																																															\end{lemma}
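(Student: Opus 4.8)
The plan is to directly evaluate the left-hand side of constraint~(\ref{ilpc:num_outer_regions}) under the intended assignment and check that it matches the right-hand side; by the definition of correctness this is all that is required. Recall that the intended assignment sets $x_v=1$ exactly when $\region{\gamma}{v}\neq\emptyset$, and $\iota_v=1$ exactly when $\region{\gamma}{v}$ is a nonempty \emph{inner} region. The first step is to analyze the term $x_v-\iota_v$ for a fixed $v\in\sym{C}$ by a three-way case distinction. If $\region{\gamma}{v}=\emptyset$, then $x_v=\iota_v=0$ and the term is $0$. If $\region{\gamma}{v}$ is a nonempty inner region, then $x_v=\iota_v=1$ and the term is again $0$. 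Finally, if $\region{\gamma}{v}$ is a nonempty outer region, then $x_v=1$ while $\iota_v=0$, so the term equals $1$. Hence $x_v-\iota_v\in\{0,1\}$ in all cases, and it equals $1$ precisely when $\region{\gamma}{v}$ is a nonempty outer region.

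Consequently, summing over all $v\in\sym{C}$, the quantity $\sum_{v\in\sym{C}}(x_v-\iota_v)$ counts exactly the number of nonempty outer regions of $\gamma$. The second and final step is to invoke \Cref{obs:num_outer_regions}: a nice embedding of the candidate set $C$ has exactly $2\binom{|C|}{2}$ nonempty outer regions. Since the assumption is that $(C,V)$ is $2$-Euclidean and $\gamma$ is a \emph{nice} $2$-Euclidean embedding, its restriction to $C$ is a nice embedding of candidates (no two bisectors are parallel), so the observation applies verbatim. Combining the two steps gives $\sum_{v\in\sym{C}}(x_v-\iota_v)=2\binom{|C|}{2}$, which is exactly constraint~(\ref{ilpc:num_outer_regions}).

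I do not anticipate a genuine obstacle here, since the combinatorial content is entirely carried by \Cref{obs:num_outer_regions}. The only point deserving care is bookkeeping: one must confirm that the inner/outer dichotomy built into the definition of the variables $\iota_v$ is the same notion (bounded versus unbounded region) that \Cref{obs:num_outer_regions} enumerates, so that $x_v-\iota_v$ really isolates the outer regions. This is immediate from the definition of inner and outer regions introduced together with the embedding graph, so the argument reduces to the clean case analysis above.
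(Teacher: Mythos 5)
Your proof is correct and follows the same route as the paper's: observe that $x_v-\iota_v=1$ exactly for nonempty outer regions, so the sum counts them, and then apply \Cref{obs:num_outer_regions} to conclude the count is $2\binom{|C|}{2}$. Your version merely spells out the three-way case analysis and the applicability of the observation to the nice embedding in more detail.
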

																																															\toappendix
																																															{
																																																\sv
																																																{
																																																	\begin{proof}[Proof of \Cref{lem:num_out_regions}]
																																																	}
																																																	\lv
																																																	{
																																																		\begin{proof}
																																																		}
																																																		Let $(C,V)$ be $2$-Euclidean and $\gamma$ a nice $2$-Euclidean embedding of $(C,V)$. Set $\iota_v=1$ if $\region{\gamma}{v}$ is inner and $\iota_v = 0$ otherwise. Note that $x_v-\iota_v = 1$ if and only if $\region{\gamma}{v}$ is nonempty and is outer. Hence the sum on the left hand side counts the number of outer regions and right hand side is, by \Cref{obs:num_outer_regions}, also the number of outer regions of $\gamma$.
																																																	\end{proof}
																																																}
Next, we apply \Cref{lem:inner_regions_do_not_have_reverse}:
\begin{align}
																																																	\restatableeq{\ilpcnoreverse}{\iota_v + \iota_{v^R} &\leq 1 & \forall v \in \sym{C}}{ilpc:no_reverse}
																																																\end{align}
																																																
																																																\sv
																																																{
																																																	\begin{lemma}[$\star$]\label{lem:no_rev}
																																																	}
																																																	\lv
																																																	{
																																																		\begin{lemma}\label{lem:no_rev}
																																																		}
																																																		The constraints (\ref{ilpc:no_reverse}) are correct.
																																																	\end{lemma}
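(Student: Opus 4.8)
The plan is to verify the defining property of correctness for this constraint. Fixing a nice $2$-Euclidean embedding $\gamma$ of $(C,V)$ and setting the variables as prescribed—so that $\iota_w = 1$ exactly when $\region{\gamma}{w}$ is a nonempty inner region—I will show that $\iota_v + \iota_{v^R} \leq 1$ holds for every $v \in \sym{C}$. Since each $\iota_w$ is a $0/1$ variable, the inequality can only fail when $\iota_v = \iota_{v^R} = 1$, so it suffices to rule out this single case.

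First I would unpack what $\iota_v = \iota_{v^R} = 1$ would mean: the region $\region{\gamma}{v}$ is nonempty and inner, and likewise $\region{\gamma}{v^R}$ is nonempty and inner. In particular both $\region{\gamma}{v}$ and $\region{\gamma}{v^R}$ are nonempty. At this point I invoke \Cref{lem:inner_regions_do_not_have_reverse}: the implication $(2)\Rightarrow(1)$ states that whenever both $\region{\gamma}{v}$ and $\region{\gamma}{v^R}$ are nonempty, the region $\region{\gamma}{v}$ must be \emph{outer}. This directly contradicts the assumption that $\region{\gamma}{v}$ is inner, and the proof by contradiction is complete. Equivalently, and without passing through a contradiction, one may argue: if $\iota_v = 1$ then $\region{\gamma}{v}$ is inner, hence not outer, so by the contrapositive of the equivalence at least one of $\region{\gamma}{v}, \region{\gamma}{v^R}$ is empty; as $\region{\gamma}{v}$ is nonempty this forces $\region{\gamma}{v^R}$ empty and therefore $\iota_{v^R} = 0$, giving $\iota_v + \iota_{v^R} = 1$.

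I do not expect a genuine obstacle here: the entire combinatorial content of the constraint is already packaged in \Cref{lem:inner_regions_do_not_have_reverse}, whose geometric proof (counting the $2\binom{|C|}{2}$ circular arcs cut out on $\boundary{B^\gamma}$ and a convexity argument showing a ray cannot cross every bisector) does the real work. The remaining step is a one-line logical deduction, so the only care needed is to state the direction of \Cref{lem:inner_regions_do_not_have_reverse} being used and to handle the trivial cases $\iota_v = 0$ or $\iota_{v^R} = 0$ by noting the sum is then at most $1$ automatically.
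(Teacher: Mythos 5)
Your proposal is correct and follows essentially the same route as the paper: both invoke \Cref{lem:inner_regions_do_not_have_reverse} (in its contrapositive form, that an inner nonempty region $\region{\gamma}{v}$ forces $\region{\gamma}{v^R}=\emptyset$, hence $\iota_{v^R}=0$) under the prescribed assignment of the $\iota$ variables. Your write-up merely makes explicit the trivial case analysis on the binary variables, which the paper leaves implicit.
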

																																																	\toappendix
																																																	{
																																																		\sv
																																																		{
																																																			\begin{proof}[Proof of \Cref{lem:no_rev}]
																																																			}
																																																			\lv
																																																			{
																																																				\begin{proof}
																																																				}
																																																				Let $\gamma$ be a nice $2$-Euclidean embedding of $(C,V)$. By \Cref{lem:inner_regions_do_not_have_reverse} we know that if $\region{\gamma}{v}$ is inner, then $\region{\gamma}{v^R}=\emptyset$. Hence in any assignment where $\iota_v=1$ if and only if $\region{\gamma}{v}$ is inner we satisfy the constraint (\ref{ilpc:no_reverse}).
																																																			\end{proof}
																																																		}
																																																		By using the second part of \Cref{lem:inner_regions_do_not_have_reverse} we can also see that if the region $\region{\gamma}{v}$ is outer, then the region $\region{\gamma}{v^R}$ is also nonempty and is outer. This fact can be captured by the formula 
																																																		\[
																																																		(x_v\wedge \neg\iota_v) \Rightarrow (x_{v^R}\wedge \neg \iota_{v^R})
																																																		\]
																																																		which we can rewrite to conjunctive normal form as follows:
																																																		\begin{align*}
																																																			&\neg(x_v \wedge \neg \iota_v) \vee (x_{v^R}\wedge \neg \iota_{v^R}) & & (\text{$A\Rightarrow B\equiv \neg A \vee B$})\\
																																																			&\neg x_v \vee \iota_v \vee (x_{v^R}\wedge \neg \iota_{v^R}) & & (\text{$\neg(A\wedge B)\equiv\neg A \vee \neg B$})\\
																																																			&(\neg x_v \vee \iota_v \vee x_{v^R})\wedge (\neg x_v \vee \iota_v \vee \neg \iota_{v^R}) & & (\text{using the distributive law})
																																																		\end{align*}
																																																		hence by \Cref{lem:ilp_transform_formula_to_inequality} the corresponding ILP constraints are:
																																																		\begin{align}
																																																			\restatableeq{\ilpcouterregionhasreverseone}{(1-x_v)	+ \iota_v +  x_{v^R} &\geq 1 & \forall v\in\sym{C}}{ilpc:outer_region_has_reverse1}\\
																																																			\restatableeq{\ilpcouterregionhasreversetwo}{(1-x_v) + \iota_v +  (1-\iota_{v^R}) &\geq 1 & \forall v \in \sym{C}}{ilpc:outer_region_has_reverse2}
																																																		\end{align}
																																																		
																																																		Now we use the fact that the minimum degree of $\dualg{\gamma}$ is at least $2$.
																																																		\begin{align}
																																																			\restatableeq{\ilpcdegreestwoandthree}{\sum_{v_i\in N_{G(\sym{C})}(v)} x_{v_i} &\geq 2 x_v + \iota_v & \forall v \in \sym{C}}{ilpc:degrees_2_and_3}
																																																		\end{align}
																																																		
																																																		\sv
																																																		{
																																																			\begin{lemma}[$\star$]\label{lem:degrees_2_and_3}
																																																			}
																																																			\lv
																																																			{
																																																				\begin{lemma}\label{lem:degrees_2_and_3}
																																																				}
																																																				Constraints (\ref{ilpc:degrees_2_and_3}) are correct.
																																																			\end{lemma}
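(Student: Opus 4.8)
The plan is to verify that the assignment induced by a nice $2$-Euclidean embedding $\gamma$ of $(C,V)$, namely $x_v=1\Leftrightarrow\region{\gamma}{v}\neq\emptyset$ and $\iota_v=1\Leftrightarrow(\region{\gamma}{v}\neq\emptyset\wedge\region{\gamma}{v}\text{ is inner})$, satisfies the inequality for every $v\in\sym{C}$. The first step is to reinterpret the left-hand side: the sum $\sum_{v_i\in N_{G(\sym{C})}(v)}x_{v_i}$ counts exactly those neighbors of $v$ in $G(\sym{C})$ whose region is nonempty. By \Cref{obs:edge_iff_neighboring_swap}, when $\region{\gamma}{v}\neq\emptyset$ these neighbors are precisely the vertices adjacent to $v$ in the embedding graph $\dualg{\gamma}$, so the left-hand side equals $\deg_{\dualg{\gamma}}(v)$.

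I would then split into three cases according to the status of $\region{\gamma}{v}$. If the region is empty, both $x_v$ and $\iota_v$ vanish, the right-hand side is $0$, and the inequality holds trivially since the left-hand side is nonnegative. If the region is nonempty and outer, then $x_v=1$, $\iota_v=0$, the right-hand side is $2$, and the inequality reduces to $\deg_{\dualg{\gamma}}(v)\geq2$, which is exactly \Cref{lem:minimum_degree_of_dualg}. The remaining case, a nonempty inner region, forces the right-hand side to be $3$, and this is the crux of the argument.

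The main obstacle is therefore to show that an inner region has degree at least $3$ in $\dualg{\gamma}$. For this I would use that $\closure{\region{\gamma}{v}}$ is a bounded convex polygon: a bounded intersection of open half-planes of the form $\halfplane{\gamma}{a}{b}$ cannot be cut out by fewer than three bisectors, since two lines bound only an (unbounded) wedge or strip, so the polygon has at least three sides. Each side lies on a distinct bisector $\bisector{\gamma}{a}{b}$, and crossing it enters the adjacent region obtained by the single consecutive swap of $a$ and $b$; by convexity each bisector meets the polygon's boundary in at most one segment, so distinct sides yield distinct bisectors and hence distinct neighbors (this re-confirms the absence of multiedges noted in \Cref{obs:d_has_no_multiedges}). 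Consequently $\deg_{\dualg{\gamma}}(v)$ equals the number of sides of the polygon, which is at least $3$, matching the right-hand side. Collecting the three cases establishes correctness of the constraints.
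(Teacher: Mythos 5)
Your proposal is correct and follows essentially the same route as the paper's proof: the same three-case split (empty region, nonempty outer, nonempty inner), identifying the left-hand sum with the degree of $v$ in $\dualg{\gamma}$, invoking \Cref{lem:minimum_degree_of_dualg} for the outer case, and arguing via the convex-polygon structure (at least three sides, each yielding a distinct neighbor) for the inner case. Your write-up is merely a bit more explicit than the paper's, e.g.\ in citing \Cref{obs:edge_iff_neighboring_swap} for the degree identification and in justifying why a bounded region needs at least three bounding bisectors.
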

																																																			\toappendix
																																																			{
																																																				\sv
																																																				{
																																																					\begin{proof}[Proof of \Cref{lem:degrees_2_and_3}]
																																																					}
																																																					\lv
																																																					{
																																																						\begin{proof}
																																																						}
																																																						Let $\gamma$ be a nice $2$-Euclidean embedding of $(C,V)$ and consider the embedding graph $\dualg{\gamma}$. If $\region{\gamma}{v}=\emptyset$, then $x_v=0$ and $\iota_v=0$ and since all variables are binary the constraint is trivially satisfied. If $\region{\gamma}{v}\neq \emptyset$, then $x_v=1$. If $\region{\gamma}{v}$ is outer region, i.e., $\iota_v=0$, then the left hand side is required to be at lest $2$. But note that the sum on the left hand side is just the degree of the vertex $v$ in $\dualg{\gamma}$. By \Cref{lem:minimum_degree_of_dualg} the minimum degree of $\dualg{\gamma}$ is at least $2$, hence this constraint is satisfied. It remains to show that if $\region{\gamma}{v}$ is an inner region (i.e., $\iota_v=1$), then the vertex $v$ has degree at least $3$ in $\dualg{\gamma}$. Note that the inner regions are convex polygons and since convex polygons have at least three sides and these sides correspond to edges of $\primalg{\gamma}$, there are at least $3$ neighboring regions of $\region{\gamma}{v}$, i.e., at least $3$ neighbors of $v$ in $\dualg{\gamma}$.
																																																					\end{proof}
																																																				}
																																																				The description of the neighbors can be made more precise. Every outer region has exactly two neighboring regions that are also outer. For a vote $v$, consider the sum $S(v)=\sum_{v_i\in N_{G(\sym{C})}}(x_{v_i}-\iota_{v_i})$. $S(v)$ counts the number of neighbors of $v$ that are also outer regions. Our aim is to express the following. If $x_v=1$ and $\iota_v = 0$, then $S(v)=2$. To do this, we introduce an auxiliary binary variable $y_v$ such that
																																																				if $x_v=1$ and $\iota_v=0$, then $y_v=1$ and $y_v=1$ implies $S(v)=2$. The first implication can be simply rewritten to $(x_v\wedge \neg\iota_v) \Rightarrow y_v$, which can be then transformed using $A\Rightarrow B\equiv \neg A \vee B$ and $\neg (A\wedge B)\equiv \neg A \vee \neg B$ to the formula $\neg x_v \vee \iota _v \vee y_v$. \Cref{lem:ilp_transform_formula_to_inequality} gives the following constraint:
																																																				\begin{align}
																																																					\restatableeq{\ilpcyvvv}{(1-x_v)+\iota_v+y_v &\geq 1 & \forall v \in \sym{C}}{ilpc:yvvv}
																																																				\end{align}
																																																				
																																																				We are left with capturing the implication $y_v=1\Rightarrow S(v)=2$. Note that there are two trivial bounds on $S(v)$. Clearly $0\leq S(v)\leq |C| - 1$, because $v$ has $|C|-1$ neighbors in $G(\sym{C})$, hence at most $|C|-1$ neighbors in $\dualg{\gamma}$. We create the following ILP constraints:
																																																				\begin{align}
																																																					\restatableeq{\ilpcyvone}{2y_v & \leq S(v) &  \forall v \in \sym{C}}{ilpc:yvone} \\
																																																					\restatableeq{\ilpcyvtwo}{S(v)&\leq |C|-1-(|C|-3)\cdot y_v  & \forall v \in \sym{C}}{ilpc:yvtwo}
																																																				\end{align}

																																																				\sv
																																																				{
																																																					\begin{lemma}[$\star$]\label{lem:yvyvyv}
																																																					}
																																																					\lv
																																																					{
																																																						\begin{lemma}\label{lem:yvyvyv}
																																																						}
																																																						Constraints (\ref{ilpc:yvone}) and (\ref{ilpc:yvtwo}) are correct.
																																																					\end{lemma}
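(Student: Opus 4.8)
The plan is to exhibit, for a nice $2$-Euclidean embedding $\gamma$ of $(C,V)$, a setting of the variables that satisfies both constraints; by the paper's notion of correctness this suffices. I would set $x_v$ and $\iota_v$ as dictated by $\gamma$ and choose the auxiliary variable as $y_v = x_v - \iota_v$, that is, $y_v = 1$ exactly when $\region{\gamma}{v}$ is a nonempty \emph{outer} region. Since $\iota_v \le x_v$ holds for the embedding-derived assignment (constraint (\ref{ilpc:basic_iota})), this is a genuine binary value, and it is consistent with (\ref{ilpc:yvvv}), which only forces $y_v=1$ when $x_v=1$ and $\iota_v=0$.

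The central observation I would make is that $S(v)=\sum_{v_i\in N_{G(\sym{C})}(v)}(x_{v_i}-\iota_{v_i})$ counts exactly those neighbors $v_i$ of $v$ for which $\region{\gamma}{v_i}$ is nonempty and outer. By \Cref{obs:edge_iff_neighboring_swap}, among nonempty regions, adjacency in $\dualg{\gamma}$ coincides with being at swap distance one; hence when $\region{\gamma}{v}$ is itself nonempty, $S(v)$ is precisely the number of outer neighbors of $v$ in $\dualg{\gamma}$. The key geometric fact I would invoke is that every outer region has exactly two outer neighbors: the nonempty outer regions are in bijection with the circular arcs cut by the bisectors on $\boundary{B^\gamma}$ (\Cref{obs:num_outer_regions}), and two outer regions are adjacent precisely when their arcs are consecutive, so each outer region lies between exactly two others along $\boundary{B^\gamma}$.

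With this in hand the verification splits on the value of $y_v$. If $y_v=0$, then (\ref{ilpc:yvone}) reads $0\le S(v)$, which holds because every summand $x_{v_i}-\iota_{v_i}\ge 0$, and (\ref{ilpc:yvtwo}) reads $S(v)\le |C|-1$, which holds since $v$ has exactly $|C|-1$ neighbors in $G(\sym{C})$ and each contributes at most one to $S(v)$. If $y_v=1$, then $\region{\gamma}{v}$ is a nonempty outer region, so by the key fact $S(v)=2$; then (\ref{ilpc:yvone}) reads $2\le 2$ and (\ref{ilpc:yvtwo}) reads $2\le |C|-1-(|C|-3)=2$. In all cases both constraints are satisfied.

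The only nontrivial ingredient, and thus the main obstacle, is the claim that an outer region has exactly two outer neighbors, which is used in both directions (it gives $S(v)\ge 2$ for (\ref{ilpc:yvone}) and $S(v)\le 2$ for (\ref{ilpc:yvtwo}) in the $y_v=1$ case). I expect to justify it through the arc structure on $\boundary{B^\gamma}$ already exploited in \Cref{obs:num_outer_regions} and \Cref{lem:inner_regions_do_not_have_reverse}: travelling around $\boundary{B^\gamma}$, two consecutive arcs share a single bisector crossing and their outer regions share that bisector edge, whereas the regions of non-consecutive arcs are separated by at least one intervening wedge and hence cannot be adjacent. Everything else reduces to the routine case analysis on the binary value of $y_v$ above.
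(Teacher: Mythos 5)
Your proposal is correct and takes essentially the same route as the paper's proof: the same embedding-derived assignment with $y_v = x_v - \iota_v$, the trivial bounds $0 \le S(v) \le |C|-1$ in the $y_v = 0$ case, and the key fact that a nonempty outer region has exactly two outer neighbors, yielding $S(v)=2$ and hence $2 \le S(v) \le 2$ when $y_v = 1$. Your justification of that key fact via the circular-arc structure on $\boundary{B^\gamma}$ is simply a more explicit rendering of the paper's appeal to the outer regions forming a cycle on the outer face of $\dualg{\gamma}$.
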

																																																					\toappendix
																																																					{
																																																						\sv
																																																						{
																																																							\begin{proof}[Proof of \Cref{lem:yvyvyv}]
																																																							}
																																																							\lv
																																																							{
																																																								\begin{proof}
																																																								}
																																																								We know that if $\region{\gamma}{v}$ is an outer region then $x_v=1$ and $\iota_v=0$ by constraint (\ref{ilpc:basic_iota}). This is the only case where $y_v$ is forced to be $1$. In all other cases we can just set $y_v=0$ and then no matter what values $x_v$ attain, the inequality $0\leq S(v)\leq |C|-1$ is always satisfied.
																																																								
																																																								Now, if $\region{\gamma}{v}$ is an outer region, then since in $\dualg{\gamma}$ the outer regions (i.e., the vertices on the outer face) form a cycle, there are exactly two neighbors of $\region{\gamma}{v}$ that are also outer. But $S(v)$ counts the number of outer regions that are neighbors of $v$. The constraints (\ref{ilpc:yvone}) and (\ref{ilpc:yvtwo}) are then $2\leq S(v)\leq 2$ and they are satisfied.
																																																							\end{proof}
																																																						}
																																																						\subsection{Bisectors}
																																																						We now turn to utilize \Cref{lem:number_of_regions_crossed_by_a_bisector}. For this purpose, let $a,b\in C$ be two candidates. Let $Q_C(a,b)$ be the set of all pairs of votes that differ by a consecutive swap of $a$ and $b$. Formally 
																																																						\[
																																																						Q_C(a,b)=\{\{u,v\}\subseteq \sym{C}\mid \text{$a$,$b$ are consecutive in both $u$ and $v$ and $u=v\circ \tau_{a,b}$}\}.
																																																						\]
																																																						We again introduce an auxiliary variable $x_{u,v}$ corresponding to the logical and of $x_u$ and $x_v$. I.e., $x_{u,v}=1$ if and only if both $x_u=1$ and $x_v=1$. This can be done as follows. We want to encode the formula $x_{u,v}\Leftrightarrow (x_u \wedge x_v)$. We derive CNF by the following equivalent transformations:
																																																						\begin{align*}
																																																							&x_{u,v}\Leftrightarrow (x_u\wedge x_v) \\
																																																							&(x_{u,v}\Rightarrow (x_u\wedge x_v)) \wedge ((x_u\wedge x_v)\Rightarrow x_{u,v}) & & (A\Leftrightarrow B \equiv (A\Rightarrow B)\wedge (B\Rightarrow A))\\
																																																							&(\neg x_{u,v}\vee (x_u\wedge x_v))\wedge (\neg (x_u\wedge x_v)\vee x_{u,v}) & & (A\Rightarrow B \equiv \neg A \vee B)\\
																																																							&(\neg x_{u,v}\vee x_u)\wedge (\neg x_{u,v}\vee x_v) \wedge (\neg (x_u\wedge x_v)\vee x_{u,v}) & & (\text{using the distributive law}) \\
																																																							&(\neg x_{u,v}\vee x_u)\wedge (\neg x_{u,v}\vee x_v) \wedge (\neg x_u \vee \neg x_v\vee x_{u,v}) & & (\neg (A\wedge B) \equiv \neg A \vee \neg B)	
																																																						\end{align*}
																																																						By \Cref{lem:ilp_transform_formula_to_inequality} we obtain the following ILP constraints:
																																																						\begin{align*}
																																																							(1-x_{u,v})+x_v\geq 1 \notag\\
																																																							(1-x_{u,v})+x_u\geq 1 & & \forall \{u,v\}\subseteq \sym{C} \tag{H1}\label{ilpc:andstwo}\\
																																																							(1-x_u)+(1-x_v)+x_{u,v}\geq 1\notag
																																																						\end{align*}
																																																						The constraints utilizing \Cref{lem:number_of_regions_crossed_by_a_bisector} is as follows:
																																																						\begin{align}
																																																							\restatableeq{\ilpcbisectors}{\sum_{\{u,v\}\in Q_C(a,b)} x_{u,v} &\leq \binom{m-2}{2}+m-1 & \forall \{a,b\}\subseteq C}{ilpc:bisectors}
																																																						\end{align}
																																																						
																																																						\sv
																																																						{
																																																							\begin{lemma}[$\star$]\label{lem:bisectors}
																																																							}
																																																							\lv
																																																							{
																																																								\begin{lemma}\label{lem:bisectors}
																																																								}
																																																								Constraints (\ref{ilpc:bisectors}) are correct.
																																																							\end{lemma}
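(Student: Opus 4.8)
The plan is to verify the (unidirectional) correctness property directly from a geometric realization. Assume $(C,V)$ is $2$-Euclidean; by \Cref{thm:non_degenerate_embedding} I would fix a \emph{nice} $2$-Euclidean embedding $\gamma$ and take the canonical assignment $x_v=1\Leftrightarrow\region{\gamma}{v}\neq\emptyset$, together with $x_{u,v}=1\Leftrightarrow(x_u=1\wedge x_v=1)$ as forced by the auxiliary constraints~(\ref{ilpc:andstwo}). It then suffices to bound, for each fixed pair $\{a,b\}\subseteq C$, the number of pairs $\{u,v\}\in Q_C(a,b)$ for which both $\region{\gamma}{u}$ and $\region{\gamma}{v}$ are nonempty, since this is exactly the value of the left-hand side of~(\ref{ilpc:bisectors}).

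The core of the argument is to relate these pairs to the line arrangement. I would consider the arrangement $\mathcal{A}'=\arrangement{\gamma}\setminus\{\bisector{\gamma}{a}{b}\}$ obtained by deleting the single bisector $\bisector{\gamma}{a}{b}$. Every time $\bisector{\gamma}{a}{b}$ passes through a region $R$ of $\mathcal{A}'$, it cuts $R$ into exactly two cells of the full arrangement $\arrangement{\gamma}$; both cells are genuine (hence nonempty) regions, they lie on opposite sides of $\bisector{\gamma}{a}{b}$ but on the same side of every other bisector, and therefore the two corresponding votes differ exactly by swapping the consecutive pair $a,b$. By \Cref{obs:edge_iff_neighboring_swap} these two regions are adjacent across an edge lying on $\bisector{\gamma}{a}{b}$, so the corresponding votes $u,v$ satisfy $\dswap{u}{v}=1$ with the differing pair being $\{a,b\}$; that is, $\{u,v\}\in Q_C(a,b)$ and $x_{u,v}=1$.

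The key step is to argue this correspondence is a bijection. Conversely, if $\{u,v\}\in Q_C(a,b)$ with $x_{u,v}=1$, then $\region{\gamma}{u}$ and $\region{\gamma}{v}$ agree on every bisector other than $\bisector{\gamma}{a}{b}$, so deleting $\bisector{\gamma}{a}{b}$ merges them into a single region of $\mathcal{A}'$ that $\bisector{\gamma}{a}{b}$ crosses. Hence the number of pairs counted by the sum in~(\ref{ilpc:bisectors}) equals exactly the number of regions of $\mathcal{A}'$ crossed by $\bisector{\gamma}{a}{b}$. Applying \Cref{lem:number_of_regions_crossed_by_a_bisector} bounds this quantity by $\binom{|C|-2}{2}+|C|-1$, which (as $m=|C|$) is precisely the right-hand side of the constraint, so the constraint is satisfied and thus correct.

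The main obstacle I anticipate is making the correspondence between ``regions of $\mathcal{A}'$ crossed by $\bisector{\gamma}{a}{b}$'' and ``nonempty pairs in $Q_C(a,b)$'' airtight without double counting: one must confirm that each crossed region yields exactly one such pair and that distinct crossed regions yield distinct pairs. This is where niceness of $\gamma$ (no parallel bisectors, no collinear triples) does the work, since it guarantees that $\bisector{\gamma}{a}{b}$ meets each region of $\mathcal{A}'$ in a single connected segment and splits it cleanly into two full-dimensional cells, so that the ``crossed region $\leftrightarrow$ consecutive swap'' map is well-defined and injective in both directions.
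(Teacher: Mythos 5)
Your proposal is correct and follows essentially the same route as the paper's proof: identify the value of the left-hand side of (\ref{ilpc:bisectors}) under the canonical assignment with the number of regions of $\arrangement{\gamma}\setminus\{\bisector{\gamma}{a}{b}\}$ crossed by $\bisector{\gamma}{a}{b}$, then apply \Cref{lem:number_of_regions_crossed_by_a_bisector}. The only difference is that you spell out the crossed-region/consecutive-swap bijection that the paper's three-sentence proof leaves implicit, which is a welcome (and correct) elaboration rather than a deviation.
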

																																																							\toappendix
																																																							{
																																																								\sv
																																																								{
																																																									\begin{proof}[Proof of \Cref{lem:bisectors}]
																																																									}
																																																									\lv
																																																									{
																																																										\begin{proof}
																																																										}
																																																										The sum on the left hand side counts exactly the number of pairs of adjacent regions that share a part of the bisector $\bisector{\gamma}{a}{b}$ as an edge. But this is the same as the number of regions of $\arrangement{\gamma}\setminus \{\bisector{\gamma}{a}{b}\}$ crossed by $\bisector{\gamma}{a}{b}$. By \Cref{lem:number_of_regions_crossed_by_a_bisector} this number is at most $\binom{m-2}{2}+m-1$ and this finishes the proof.
																																																									\end{proof}
																																																								}
																																																								\subsubsection{Cycles of length $4$ in the embedding graph}
																																																								
																																																								Each cycle of length $4$ (a $4$-cycle) in the embedding graph is induced and corresponds to some vertex of the primal graph $\primalg{\gamma}$. Vertices of the primal graph in turn correspond to the intersection points of two bisectors $\bisector{\gamma}{a}{b}$ and $\bisector{\gamma}{c}{d}$ where $\{a,b\}\cap \{c,d\}=\emptyset$. Note that the vertices of such cycle correspond to four permutations of the form $v,v\circ \tau_{a,b},v\circ\tau_{a,b}\circ \tau_{c,d},v\circ\tau_{c,d}$, where $v$ is some permutation in which both $\{a,b\}$ and $\{c,d\}$ are consecutive. Let us denote the cycle as $\mathcal{C}^{ab|cd}(v)$ (understood as the set of $4$-tuples of permutations). We denote by $\mathcal{C}^{ab|cd}=\{\mathcal{C}^{ab|cd}(v)\mid v \in \sym{C}, \text{ $\{a,b\}$ and $\{c,d\}$ are consecutive in $v$}\}$ the set of all such cycles for given $a,b,c,d\in C$. Since the bisectors $\bisector{\gamma}{a}{b}$ and $\bisector{\gamma}{c}{d}$ can intersect in at most one point, we inevitably have only one cycle from $\mathcal{C}^{ab|cd}$ for every such $a,b,c,d$. Hence we have the constraints:
																																																								\begin{align}
																																																									\restatableeq{\ilpccycles}{\sum_{\{v_1,v_2,v_3,v_4\}\in \mathcal{C}^{ab|cd}} x_{v_1,v_2,v_3,v_4} &\leq 1 & \forall \{a,b,c,d\}\subseteq C}{ilpc:cycles}
																																																								\end{align}
																																																								Here the variable $x_{v_1,v_2,v_3,v_4}$ is the logical and of variables $x_{v_1},x_{v_2},x_{v_3},x_{v_4}$. Observe that if we sum over all distinct constraints of the form (\ref{ilpc:cycles}), we obtain that the embedding graph can contain at most $3\binom{m}{4}=O(m^4)$ cycles. In contrast, the whole graph of permutations $G(\sym{C})$ can have up to $\Omega(m!)$ $4$-cycles. Note that this bound is tight, as the embedding graph from \Cref{fig:embedding_graph} has exactly $3$ cycles of length $4$.
																																																								
																																																								A very similar approach can be adapted to $6$-cycles. A $6$-cycle in the embedding graph corresponds to an intersection point of three bisectors $\bisector{\gamma}{a}{b},\bisector{\gamma}{a}{c},\bisector{\gamma}{b}{c}$ for some candidates $a,b,c\in C$. 
																																																								
																																																								For the sake of completeness, the CNF for formula $x_{v_1,v_2,v_3,v_4}\Leftrightarrow (x_{v_1}\wedge x_{v_2} \wedge x_{v_3}\wedge x_{v_4})$ contains $5$ clauses and the corresponding ILP constraints given by \Cref{lem:ilp_transform_formula_to_inequality} are:
																																																								%
																																																								\begin{align*}
																																																									x_{v_i}+(1-x_{v_1,v_2,v_3,v_4})&\geq 1&  \forall i \in [4],\forall \{v_1,v_2,v_3,v_4\}\subseteq \sym{C}\tag{H2}\label{ilpc:andsfour}\\ 
																																																									\sum_{i=1}^4(1-x_{v_i}) +x_{v_1,v_2,v_3,v_4} &\geq 1&\forall \{v_1,v_2,v_3,v_4\}\subseteq \sym{C}\tag{H3}\label{ilpc:andsfourx}\notag 
																																																								\end{align*}

																																																								\toappendix
																																																								{
																																																									\subsubsection{Full ILP Program}\label{subsec:full_ilp}
																																																									Except the variables $x_v$ and $\iota_v$ introduced at the beginning, we also introduced the auxiliary binary variables $y_v$ for constraints (\ref{ilpc:yvone}) and (\ref{ilpc:yvtwo}) and the auxiliary variables $x_{u,v}$ and $x_{v_1,v_2,v_3,v_4}$ to express the logical ands of $x_u\wedge x_v$ and $x_{v_1}\wedge x_{v_2}\wedge x_{v_3}\wedge x_{v_4}$, respectively. The auxiliary variables expressing logical ands are bound by the constraints (\ref{ilpc:andstwo}), (\ref{ilpc:andsfour}), and (\ref{ilpc:andsfourx}) and these are omitted from the full description for readability purposes.
																																																									
																																																									\begin{align*}
																																																										\ilpcbasic \\
																																																										\ilpcubm \\
																																																										\ilpcimpliedvotes \\
																																																										\ilpcvoterankingcandidatefirst \\
																																																										\ilpcbasiciota \\
																																																										\ilpcnumouterregions \\
																																																										\ilpcnoreverse \\
																																																										\ilpcouterregionhasreverseone \\
																																																										\ilpcouterregionhasreversetwo \\
																																																										\ilpcdegreestwoandthree \\
																																																										\ilpcyvvv \\
																																																										\ilpcyvone \\
																																																										\ilpcyvtwo \\
																																																										\ilpcbisectors \\
																																																										\ilpccycles 
																																																									\end{align*}
																																																								}
																																																								Throughout the section, we have shown the following:
																																																								
																																																								\begin{theorem}
																																																									Let $(C,V)$ be an election. Suppose that $(C,V)$ is $2$-Euclidean and let $\gamma$ be a nice $2$-Euclidean embedding of $(C,V)$. Then by plugging in $x_v=1$ if and only if $\region{\gamma}{v}\neq \emptyset$ and $\iota_v$ if and only if $x_v=1$ and the region $\region{\gamma}{v}$ is an inner region, then all the constraints (\ref{ilpc:basic}) up to (\ref{ilpc:cycles}) are satisfied.
																																																								\end{theorem}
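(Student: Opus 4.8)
The plan is to treat this statement as the consolidation of all the per-constraint correctness claims proved throughout the section, so the proof is essentially bookkeeping: fix the assignment induced by $\gamma$, then discharge each block of constraints by pointing to the lemma that established its correctness. First I would fix the intended assignment precisely. Given the nice $2$-Euclidean embedding $\gamma$, set $x_v=1$ iff $\region{\gamma}{v}\neq\emptyset$ and $\iota_v=1$ iff $\region{\gamma}{v}\neq\emptyset$ and $\region{\gamma}{v}$ is inner. Crucially, I must also pin down the auxiliary variables consistently: set $x_{u,v}=\min\{x_u,x_v\}$ (the logical and of $x_u,x_v$), $x_{v_1,v_2,v_3,v_4}=\min_i x_{v_i}$, and $y_v=1$ iff $x_v=1$ and $\iota_v=0$ (i.e.\ iff $\region{\gamma}{v}$ is a nonempty outer region). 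With these values chosen to literally match the logical definitions that generated the constraints, \Cref{lem:ilp_transform_formula_to_inequality} guarantees that the defining constraints (\ref{ilpc:andstwo}), (\ref{ilpc:andsfour}), (\ref{ilpc:andsfourx}) and (\ref{ilpc:yvvv}) hold automatically.

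Next I would walk through the numbered constraints in order, invoking the corresponding lemma in each case. Constraint (\ref{ilpc:basic}) and the implication $\iota_v\Rightarrow x_v$ behind (\ref{ilpc:basic_iota}) hold directly from the assignment, since every $v\in V$ has a nonempty region and every inner region is in particular nonempty. Constraint (\ref{ilpc:ubm}) is \Cref{cor:ubm_embgr} (equivalently \Cref{thm:ubm}); (\ref{ilpc:implied_votes}) is \Cref{lem:con_impl_votes}; (\ref{ilpc:vote_ranking_candidate_first}) is \Cref{lem:vote_rcf}; (\ref{ilpc:num_outer_regions}) is \Cref{lem:num_out_regions}; (\ref{ilpc:no_reverse}) is \Cref{lem:no_rev}; (\ref{ilpc:degrees_2_and_3}) is \Cref{lem:degrees_2_and_3}; (\ref{ilpc:yvone}) and (\ref{ilpc:yvtwo}) are \Cref{lem:yvyvyv}; and (\ref{ilpc:bisectors}) is \Cref{lem:bisectors}. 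For (\ref{ilpc:outer_region_has_reverse1}) and (\ref{ilpc:outer_region_has_reverse2}) I would argue directly from the $1\Rightarrow2$ direction of \Cref{lem:inner_regions_do_not_have_reverse}: if $\region{\gamma}{v}$ is a nonempty outer region then $\region{\gamma}{v^R}$ is also nonempty and outer, which is exactly the formula $(x_v\wedge\neg\iota_v)\Rightarrow(x_{v^R}\wedge\neg\iota_{v^R})$ whose CNF these two constraints encode, so \Cref{lem:ilp_transform_formula_to_inequality} applies. Finally (\ref{ilpc:cycles}) follows because the two bisectors $\bisector{\gamma}{a}{b}$ and $\bisector{\gamma}{c}{d}$ meet in at most one point, so at most one cycle from $\mathcal{C}^{ab|cd}$ can have all four of its regions nonempty; with the $x_{v_1,v_2,v_3,v_4}$ values already set as logical ands, the left-hand sum counts exactly the realized cycles and is therefore at most $1$.

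I do not expect a genuine mathematical obstacle here, since every nontrivial fact has already been isolated in an earlier lemma; the real work of the theorem was done piecewise in the preceding subsections. The only points requiring care — and the closest thing to a ``hard part'' — are purely organizational: ensuring that a single, globally consistent assignment of the auxiliary variables $x_{u,v}$, $x_{v_1,v_2,v_3,v_4}$ and $y_v$ simultaneously satisfies their binding constraints and feeds the correct inputs into each correctness lemma, and supplying the short direct arguments for the few constraints ((\ref{ilpc:basic}), (\ref{ilpc:basic_iota}), (\ref{ilpc:outer_region_has_reverse1}), (\ref{ilpc:outer_region_has_reverse2}), (\ref{ilpc:cycles})) that were introduced without a dedicated named lemma. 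Once these are in place, the conjunction of all the cited results shows that the assignment is feasible, which is precisely the claim.
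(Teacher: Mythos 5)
Your proposal is correct and takes essentially the same route as the paper: the paper proves no separate argument for this theorem but presents it as the consolidation of the per-constraint correctness results (the dedicated lemmata for constraints such as (\ref{ilpc:implied_votes}), (\ref{ilpc:vote_ranking_candidate_first}), (\ref{ilpc:num_outer_regions}), (\ref{ilpc:no_reverse}), (\ref{ilpc:degrees_2_and_3}), (\ref{ilpc:yvone})--(\ref{ilpc:yvtwo}), (\ref{ilpc:bisectors}), together with the in-text arguments for (\ref{ilpc:basic}), (\ref{ilpc:basic_iota}), (\ref{ilpc:outer_region_has_reverse1})--(\ref{ilpc:outer_region_has_reverse2}), and (\ref{ilpc:cycles})), which is exactly the bookkeeping you carry out. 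Your explicit, globally consistent assignment of the auxiliary variables $x_{u,v}$, $x_{v_1,v_2,v_3,v_4}$, and $y_v$ via \Cref{lem:ilp_transform_formula_to_inequality} is a point the paper leaves implicit, and it is handled correctly.
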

																																																								
																																																								Note that the formulation of the ILP has $|C|!$ variables and may seem impractical at first sight. We propose a lazy approach applicable in practice which we describe in~\Cref{sec:experiments}.

																																																								\toappendix
																																																								{
																																																									\sv
																																																									{
																																																										\section{Missing material from section QCP approach}
																																																									}
																																																								}
																																																								\section{QCP approach}\label{sec:qcp}
																																																								In this section, we propose the QCP approach, which arises from the definition of $2$-Euclidean elections. According to the definition, an election $(C, V)$ is $2$-Euclidean if and only if there exist points $\gamma(t) \in \mathbb{R}^2$ for each $t \in C \cup V$ such that, whenever $a \succ_v b$, we have $\ell_2(\gamma(v), \gamma(a)) < \ell_2(\gamma(v), \gamma(b))$.
																																																								
																																																								The main challenge here is that QCP programs and in particular most practical QCP solvers require the inequalities to be non-strict. However, this is manageable by ensuring that the distances of voters from the bisectors are sufficiently large. For this purpose, we define \emph{$\varepsilon$-scattered $2$-Euclidean embedding}. A $2$-Euclidean embedding $\gamma$ is \emph{$\varepsilon$-scattered} if for any $a,b\in C,v\in V$ we have
																																																								\[
																																																								a\succ_v b \Rightarrow \ell_2(\gamma(v),\gamma(a)) + \varepsilon \leq \ell_2(\gamma(v),\gamma(b))
																																																								\]

																																																								\sv
																																																								{
																																																									\begin{lemma}[$\star$]\label{lem:scattered_embedding}
																																																									}
																																																									\lv
																																																									{
																																																										\begin{lemma}\label{lem:scattered_embedding}
																																																										}
																																																										Let $\varepsilon > 0$. An election is $2$-Euclidean if and only if it admits an $\varepsilon$-scattered $2$-Euclidean embedding.
																																																									\end{lemma}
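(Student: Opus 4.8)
The plan is to prove the two directions separately, with all of the nontrivial content concentrated in the `only if' direction and handled by a single uniform scaling argument.

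For the `if' direction, suppose $(C,V)$ admits an $\varepsilon$-scattered $2$-Euclidean embedding $\gamma$. Then for every $v\in V$ and $a,b\in C$ with $a\succ_v b$ we have $\ell_2(\gamma(v),\gamma(a))+\varepsilon \leq \ell_2(\gamma(v),\gamma(b))$, and since $\varepsilon>0$ this immediately yields the strict inequality $\ell_2(\gamma(v),\gamma(a))<\ell_2(\gamma(v),\gamma(b))$. Hence the same $\gamma$ already witnesses that $(C,V)$ is $2$-Euclidean, so nothing further is needed.

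For the `only if' direction, let $\gamma$ be any $2$-Euclidean embedding of $(C,V)$. The first step is to quantify the slack present in the defining strict inequalities by setting
\[
\delta = \min_{\substack{v\in V,\ a,b\in C \\ a\succ_v b}} \big(\ell_2(\gamma(v),\gamma(b)) - \ell_2(\gamma(v),\gamma(a))\big).
\]
Since $C$ and $V$ are finite, the minimum is taken over finitely many triples $(v,a,b)$ and is therefore attained; and every term is strictly positive because $\gamma$ is $2$-Euclidean. Consequently $\delta>0$. The key step is then to apply the uniform scaling $T$ with factor $k=\varepsilon/\delta$ (any larger $k$ works as well) and set $\gamma'=T\circ\gamma$. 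Being a similarity, uniform scaling multiplies every pairwise distance by exactly $k$, so for every triple $(v,a,b)$ with $a\succ_v b$ we obtain
\[
\ell_2(\gamma'(v),\gamma'(b)) - \ell_2(\gamma'(v),\gamma'(a)) = k\big(\ell_2(\gamma(v),\gamma(b)) - \ell_2(\gamma(v),\gamma(a))\big) \geq k\delta = \varepsilon,
\]
which is precisely the $\varepsilon$-scattered condition. Thus $\gamma'$ is an $\varepsilon$-scattered $2$-Euclidean embedding of $(C,V)$.

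There is essentially no hard obstacle in this argument: the only point requiring any care is the positivity of $\delta$, which rests on the finiteness of the candidate and voter sets together with the strictness of the preference inequalities guaranteed by $2$-Euclideanness. Once $\delta>0$ is established, the scaling step is immediate, since uniform scaling is a similarity and hence rescales all distances by the common factor $k$, preserving the comparison structure while amplifying every gap uniformly.
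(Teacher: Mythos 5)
Your proof is correct and follows essentially the same route as the paper: the easy direction is immediate from $\varepsilon>0$, and the substantive direction defines the minimum slack $\delta$ over all finitely many triples $(v,a,b)$ with $a\succ_v b$ and rescales the embedding by the factor $\varepsilon/\delta$. If anything, your write-up is slightly cleaner, since you explicitly justify $\delta>0$ and state the scaling identity directly, whereas the paper's version leaves these points implicit.
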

																																																									\toappendix
																																																									{
																																																										\sv
																																																										{
																																																											\begin{proof}[Proof of \Cref{lem:scattered_embedding}]
																																																											}
																																																											\lv
																																																											{
																																																												\begin{proof}
																																																												}
																																																												Clearly any $\varepsilon$-scattered $2$-Euclidean embedding is also a $2$-Euclidean embedding for $(C,V)$. On the other hand, suppose that $(C,V)$ is $2$-Euclidean and let $\gamma$ be any $2$-Euclidean embedding of $(C,V)$. Let $\delta = \min_{a\succ_v b}\{\ell_2(\gamma(v),\gamma(b))-\ell_2(\gamma(v),\gamma(a))\}$. Define $\gamma'$ to be $\gamma'(x)=\gamma(\frac{\varepsilon}{\delta}x)$. Clearly $\gamma'$ is a $2$-Euclidean embedding of $(C,V)$ and we have
																																																												\begin{align*}
																																																													\ell_2(\gamma'(v),\gamma'(a))+\varepsilon&\leq\ell_2(\gamma'(v),\gamma'(b)) & \Leftrightarrow\\
																																																													\frac{\varepsilon}{\delta}\ell_2(\gamma(v),\gamma(a))+\varepsilon&\leq\frac{\varepsilon}{\delta}\ell_2(\gamma(v),\gamma(b)) & \Leftrightarrow \\
																																																													\varepsilon&\leq \frac{\varepsilon}{\delta}\left(\ell_2(\gamma(v),\gamma(b)) - \ell_2(\gamma(b),\gamma(a))\right) &
																																																												\end{align*}
																																																												the last inequality holds because $\delta \leq \ell_2(\gamma(v),\gamma(b))-\ell_2(\gamma(v),\gamma(a))$.
																																																											\end{proof}
																																																										}
																																																										For the purposes of QCP we wish to remove the square root from the inequality and instead consider the constraints of the form
																																																										\begin{align}\label{eq:equivalent_cond}
																																																											\ell_2(\gamma(v),\gamma(a))^2+ \varepsilon^*\leq \ell_2(\gamma(v),\gamma(b))^2 & & \forall a,b\in C, v \in V: a\succ_v b
																																																										\end{align}
																																																										for some $\varepsilon^*>0$.
																																																										We again show that for any $\varepsilon^*>0$ the condition (\ref{eq:equivalent_cond}) can be replaced in the definition of $2$-Euclidean embedding.
																																																										\sv{
																																																											\begin{lemma}[$\star$]\label{lem:sq_scattered_embedding}
																																																											}
																																																											\lv{
																																																												\begin{lemma}\label{lem:sq_scattered_embedding}
																																																												}
																																																												Let $\varepsilon^* > 0$. An election $(C,V)$ is $2$-Euclidean if and only if there exists an embedding $\gamma'\colon C \cup V$ such that for any $a,b\in C,v\in V$ we have
																																																												\[
																																																												a\succ_v b \Rightarrow \ell_2(\gamma'(v),\gamma'(a))^2+\varepsilon^* \leq \ell_2(\gamma'(v),\gamma'(b))^2
																																																												\]
																																																											\end{lemma}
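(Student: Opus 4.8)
The plan is to prove both implications, with the forward direction mirroring the proof of \Cref{lem:scattered_embedding} but carried out with squared distances. The backward direction is immediate and the forward one reduces to a single scaling argument.

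For the backward direction, suppose such an embedding $\gamma'$ exists. For any $v\in V$ and any $a,b\in C$ with $a\succ_v b$ we have $\ell_2(\gamma'(v),\gamma'(a))^2+\varepsilon^*\leq \ell_2(\gamma'(v),\gamma'(b))^2$, and since $\varepsilon^*>0$ this gives the strict inequality $\ell_2(\gamma'(v),\gamma'(a))^2<\ell_2(\gamma'(v),\gamma'(b))^2$. As Euclidean distances are nonnegative and the square root is monotone on $[0,\infty)$, this is equivalent to $\ell_2(\gamma'(v),\gamma'(a))<\ell_2(\gamma'(v),\gamma'(b))$. Hence $\gamma'$ is itself a $2$-Euclidean embedding of $(C,V)$, so $(C,V)$ is $2$-Euclidean.

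For the forward direction, I would start from an arbitrary $2$-Euclidean embedding $\gamma$ of $(C,V)$. The key observation is that a uniform scaling by a factor $k>0$ multiplies every squared distance by $k^2$, so I can inflate the gaps in the squared inequalities to whatever I like. Concretely, set $\delta=\min_{a\succ_v b}\{\ell_2(\gamma(v),\gamma(b))^2-\ell_2(\gamma(v),\gamma(a))^2\}$; because $C$ and $V$ are finite the minimum is attained, and because $\gamma$ is $2$-Euclidean every term in the minimum is strictly positive, so $\delta>0$. Now define $\gamma'(x)=k\gamma(x)$ with $k=\sqrt{\varepsilon^*/\delta}$. Then for every $a\succ_v b$ we get $\ell_2(\gamma'(v),\gamma'(b))^2-\ell_2(\gamma'(v),\gamma'(a))^2=k^2\bigl(\ell_2(\gamma(v),\gamma(b))^2-\ell_2(\gamma(v),\gamma(a))^2\bigr)\geq k^2\delta=\varepsilon^*$, which is exactly the desired squared-scattered condition.

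There is essentially no hard step here. The only points to check with care are that $\delta>0$ — which rests on the strictness of the defining $2$-Euclidean inequalities together with the finiteness of the constraint set — and that $\gamma'$ remains an embedding, i.e. injective, which holds since scaling by $k\neq 0$ preserves injectivity. I note that one could instead derive the statement directly from \Cref{lem:scattered_embedding} by factoring $\ell_2(\gamma(v),\gamma(b))^2-\ell_2(\gamma(v),\gamma(a))^2$ as $\bigl(\ell_2(\gamma(v),\gamma(b))-\ell_2(\gamma(v),\gamma(a))\bigr)\bigl(\ell_2(\gamma(v),\gamma(b))+\ell_2(\gamma(v),\gamma(a))\bigr)$ and bounding the two factors separately, but the direct scaling argument above is cleaner and self-contained.
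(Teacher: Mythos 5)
Your proof is correct, but it takes a different route from the paper's. The paper proves the forward direction by invoking \Cref{lem:scattered_embedding} as a black box: it picks an $\varepsilon$-scattered embedding with $\varepsilon=\sqrt{\varepsilon^*}$, squares the inequality $A+\sqrt{\varepsilon^*}\leq B$ to obtain $A^2+2A\sqrt{\varepsilon^*}+\varepsilon^*\leq B^2$, and drops the nonnegative cross term $2A\sqrt{\varepsilon^*}$ to conclude $A^2+\varepsilon^*\leq B^2$. You instead rerun the scaling argument from scratch, but directly on squared distances: you take $\delta$ to be the minimum \emph{squared} gap over the finitely many constraints, scale by $k=\sqrt{\varepsilon^*/\delta}$, and use the fact that uniform scaling multiplies squared distances by $k^2$. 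Both arguments are sound; yours is self-contained and avoids the dependence on \Cref{lem:scattered_embedding} (at the cost of repeating the finiteness-and-scaling work), while the paper's is shorter because it reuses the earlier lemma and needs only one line of algebra. Your closing remark about factoring $B^2-A^2=(B-A)(B+A)$ is essentially the paper's argument in disguise: with $B-A\geq\sqrt{\varepsilon^*}$ and $B+A\geq B-A$, one gets $B^2-A^2\geq\varepsilon^*$, which is the same squaring step. All the delicate points you flag (strictness plus finiteness giving $\delta>0$, injectivity preserved under scaling by $k\neq 0$) are handled correctly.
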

																																																											\toappendix
																																																											{
																																																												\sv
																																																												{
																																																													\begin{proof}[Proof of \Cref{lem:sq_scattered_embedding}]
																																																													}
																																																													\lv
																																																													{
																																																														\begin{proof}
																																																														}
																																																														On one hand, any embedding $\gamma'$ satisfying $a\succ_v b \Rightarrow \ell_2(\gamma'(v),\gamma'(a))^2+\varepsilon^* \leq \ell_2(\gamma'(v),\gamma'(b))^2$ is a $2$-Euclidean embedding of $(C,V)$.
																																																														
																																																														On the other hand, suppose that $(C,V)$ is $2$-Euclidean and pick $\varepsilon$-scattered embedding $\gamma'$ for $\varepsilon = \sqrt{\varepsilon^*}$.
																																																														
																																																														Denote by $A=\ell_2(\gamma'(v),\gamma'(a))$ and $B=\ell_2(\gamma'(v),\gamma'(b))$. By assumption, we have
																																																														\begin{align*}
																																																															a\succ_v b & &\Rightarrow \\
																																																															A+\sqrt{\varepsilon^*}&\leq B&\Leftrightarrow \\
																																																															A^2+2A\sqrt{\varepsilon^*}+\varepsilon^*&\leq B^2 &\Rightarrow \\
																																																															A^2 + \varepsilon^* \leq B^2
																																																														\end{align*}
																																																													\end{proof}
																																																												}
																																																												We are now ready to formulate the QCP. The variables are the corresponding $x$- and $y$-coordinates of the points $\gamma(v),\gamma(c)$ for $v\in V$ and $c\in C$. To simplify notation, let $x_t = x(\gamma(t)),y_t = y(\gamma(t))$ be the $x$- and $y$- coordinates of the point $\gamma(t)$ for $t\in C \cup V$. We obtain the following contraints:
																																																												\begin{align*}
																																																													\ell_2(\gamma(v),\gamma(a))^2+ \varepsilon^*&\leq \ell_2(\gamma(v),\gamma(b))^2 & \forall a,b\in C, v \in V: a\succ_v b
																																																												\end{align*}
																																																												or equivalently
																																																												\begin{align}\label{eq:qcp_init}
																																																													(x_v-x_a)^2 + (y_v-y_a)^2 + \varepsilon^* &\leq (x_v-x_b)^2 + (y_v-y_b)^2 & & \forall a,b\in C, v \in V: a\succ_v b
																																																												\end{align}
																																																												and to avoid floating point arithmetic precision issues we choose $\varepsilon^*=1$. The key improvement over the QCP approach of Escoffier, Spanjaard and Tydrichová~\cite{EscoffierST23} is the addition of the following constraints:
																																																												\begin{align}
																																																													-x_{\max}\leq x_t\leq x_{\max} & & \forall t \in C \cup V \\
																																																													-y_{\max}\leq y_t\leq y_{\max} & & \forall t \in C \cup V
																																																												\end{align}
																																																												In other words, we bound the range of the coordinates of the embedding to the rectangle $2x_{\max}\times 2y_{\max}$ and since the error term $\varepsilon^*$ is set to $1$, the solver can quickly tell that the program is infeasible if the bounding box is too small and sometimes quickly find an embedding if the box is just big enough (see experiments in \Cref{sec:experiments}).

																																																												\toappendix
																																																												{
																																																													\sv
																																																													{
																																																														\section{Implementation details}\label{app:sec:implementation}
																																																													}
																																																												}
																																																												\section{Experiments}\label{sec:experiments}
																																																												\sv
																																																												{
																																																												Due to space constraints implementation details are defered to appendix (see \Cref{app:sec:implementation}).
																																																												}
																																																												\toappendix{
																																																												\subsection{Convex Hull}
																																																												From the description of the convex hull it is not yet clear how it should be used to refute $2$-Euclideaness of a particular election. Trying all possible subsets of voters might be too expensive. For practical usage, it seems to be enough if one considers only subsets of size $4$.
																																																												As noted in \Cref{remark:cg}, the structure of the controversity graph is rich when the number of voters is small (see also \Cref{example:refuting_using_controversity_graph,example:refuting_using_controversity_graph_2}). We implement two variants: \texttt{Hull} and \texttt{Hull++}. The former tests all subelections $(C,V')$ of $(C,V)$, where $|V'|=4,V'\subseteq V$ and the latter tests all subelections $(C,V')$ of $(C,V)$ where $V'\subseteq V$. As the experiments show, every PrefLib instance solved by the (more general) \texttt{Hull++} is also solved by the simplified and faster implementation (\texttt{Hull}).
																																																											}
																																																										\toappendix{
																																																												\subsection{ILP}
																																																												First obstacle for practical usage of the ILP is the fact that $|\sym{C}|=|C|!$. Having a variable for each possible vote quickly becomes untractable. To avoid this issue, we adopt the following lazy approach. We start by creating variables for $v\in V$ (in order to satisfy constraints (\ref{ilpc:basic})). The idea is that the variables that do not exist yet attain the value of $0$. We then repeat the following process. We run the solver. If the solver outputs NO, then we know the answer and can safely say that the input election is not $2$-Euclidean. Otherwise, the existing variables with value $1$ induce a potential embedding graph which is also divided to inner and outer vertices (according to values of variables $\iota_v$). We then manually check whether this graph satisfies all the remaining constraints and any violated constraint is added to the solver. Then we either repeat the process \texttt{Unknown} if maximum number of iterations is reached (in our implementation we use $20$ iterations) or no new constraints are added. In this way we avoid writing long constraints and keep the size of the program manageable. Note that the constrains with the sum $\sum_{v\in\sym{C}}$ and $\leq$ will now range over the existing variables. The constraints with $\sum_{v\in \sym{C}}$ and $\geq$ cannot be written unless all the variables exist (e.g., constraint (\ref{ilpc:vote_ranking_candidate_first})). Last thing that needs to be modified is the constraint (\ref{ilpc:num_outer_regions}), we write $\leq$ instead of $=$ because some of the variables may not exist. Finally, to be able to track progress of the solver, we minimize the number of vertices in the embedding graph, i.e. $\min \sum_{v}x_v$.
																																																												
																																																												Similar to the approach with the convex hull, we do not input the entire election into the solver. Here, however, we work with subsets of candidates (rather than voters), and the solver can return NO for any subset. In our implementation, we consider all possible subelections induced by sets of at least five candidates, progressively increasing in size--starting with all subsets of size five, then moving onto subsets of size six, and so on until we either hit a subset of candidates for which the above procedure says NO or a global timeout is reached. Smaller sets are not relevant because any election with at most three candidates is automatically $2$-Euclidean and $2$-Euclidean elections with four candidates are fully characterized by three maximal profiles (\cite{KamiyaTT11}). 
																																																												
																																																												To speed up the process, for each subset, we first perform a quick check to determine if the sub-election might be $2$-Euclidean before running the ILP machinery. This quick check involves applying reduction rules, followed by a three-second run of the second phase of the EST algorithm.
																																																											}
																																																											\toappendix{
																																																												\subsection{QCP}
																																																												Our implementation uses the values of $x_{\max}$ and $y_{\max}$ as follows. We start with $x_{\max}=y_{\max}=100$ and we run the QCP solver with timeout of $10$ seconds. If the solver outputs yes, we have an answer. Otherwise we multiply both $x_{\max}$ and $y_{\max}$ by $10$ and double the timeout and run the solver again and repeat the previous step until a global timeout is reached.
																																																											}
																																																												
																																																												\begin{figure}[ht]
																																																													\resizebox{!}{0.15\textheight}{
																																																														\begin{tabular}{lrrr}
																																																															\toprule
																																																															dataset & unknown & non-$2$-Euclidean & $2$-Euclidean \\
																																																															\midrule
																																																															00004 - netflix & 0 & 100 & 100 \\
																																																															00006 - skate & 0({\color{darkgreen}-16}) & 18({\color{darkgreen}+14}) & 2({\color{darkgreen}+2}) \\
																																																															00009 - agh & 0 & 2 & 0 \\
																																																															00011 - web & 0 & 3 & 0 \\
																																																															00012 - shirt & 0 & 1 & 0 \\
																																																															00014 - sushi & 0 & 1 & 0 \\
																																																															00015 - cleanweb & 6({\color{darkgreen}-35}) & 72({\color{darkgreen}+34}) & 1({\color{darkgreen}+1}) \\
																																																															00024 - dots & 0 & 4 & 0 \\
																																																															00025 - puzzle & 0 & 4 & 0 \\
																																																															00032 - education & 0({\color{darkgreen}-1}) & 1({\color{darkgreen}+1}) & 0 \\
																																																															00035 - breakfast & 0 & 6 & 0 \\
																																																															00041 - boardgames & 0 & 1 & 0 \\
																																																															00042 - boxing & 29({\color{darkgreen}-60}) & 11({\color{darkgreen}+10}) & 56({\color{darkgreen}+50}) \\
																																																															00043 - cycling & 0({\color{darkgreen}-2}) & 121 & 2({\color{darkgreen}+2}) \\
																																																															00044 - tabletennis & 0({\color{darkgreen}-2}) & 36({\color{darkgreen}+2}) & 0 \\
																																																															00045 - tennis & 0 & 29 & 0 \\
																																																															00046 - university & 0 & 4 & 0 \\
																																																															00047 - spotifyday & 0 & 362 & 0 \\
																																																															00048 - spotifycountry & 0({\color{darkgreen}-4}) & 640({\color{darkgreen}+3}) & 2({\color{darkgreen}+1}) \\
																																																															00049 - mylaps & 3({\color{darkgreen}-47}) & 584({\color{darkgreen}+38}) & 23({\color{darkgreen}+9}) \\
																																																															00050 - movehub & 0 & 1 & 0 \\
																																																															00051 - countries & 0 & 12 & 0 \\
																																																															00052 - f1seasons & 2({\color{darkgreen}-6}) & 61({\color{darkgreen}+2}) & 4({\color{darkgreen}+4}) \\
																																																															00053 - f1races & 0 & 454 & 0 \\
																																																															00054 - weeksport & 0({\color{darkgreen}-4}) & 951({\color{darkgreen}+4}) & 0 \\
																																																															00055 - combinedsport & 0 & 53 & 0 \\
																																																															00056 - seasonsport & 20({\color{darkgreen}-106}) & 3766({\color{darkgreen}+104}) & 193(({\color{darkgreen}+2})) \\
																																																															00062 - orderaltexpe & 0 & 2 & 0 \\
																																																															\bottomrule
																																																															total & 60({\color{darkgreen}-283}) & 7300({\color{darkgreen}+212}) & 383({\color{darkgreen}+71}) \\
																																																															\bottomrule
																																																														\end{tabular}
																																																													}
																																																													\caption{Solved instances per PrefLib dataset by using all solvers combined (including EST). The green numbers indicate the improvement over the EST algorithm alone. The first column indicates the number of instances yet to be resolved, the second and third column correspond to resolved instances (either non-$2$-Euclidean or $2$-Euclidean). Similar table for the EST algorithm is found in the appendix of \cite{EscoffierST23}. Note that the last dataset (00062 - orderaltexpe) was not present in PrefLib at the time of the paper~\cite{EscoffierST23}.}\label{tab:solved_per_dataset}
																																																												\end{figure}
																																																												
																																																												\begin{figure}[ht]
																																																													\includegraphics[width=\textwidth]{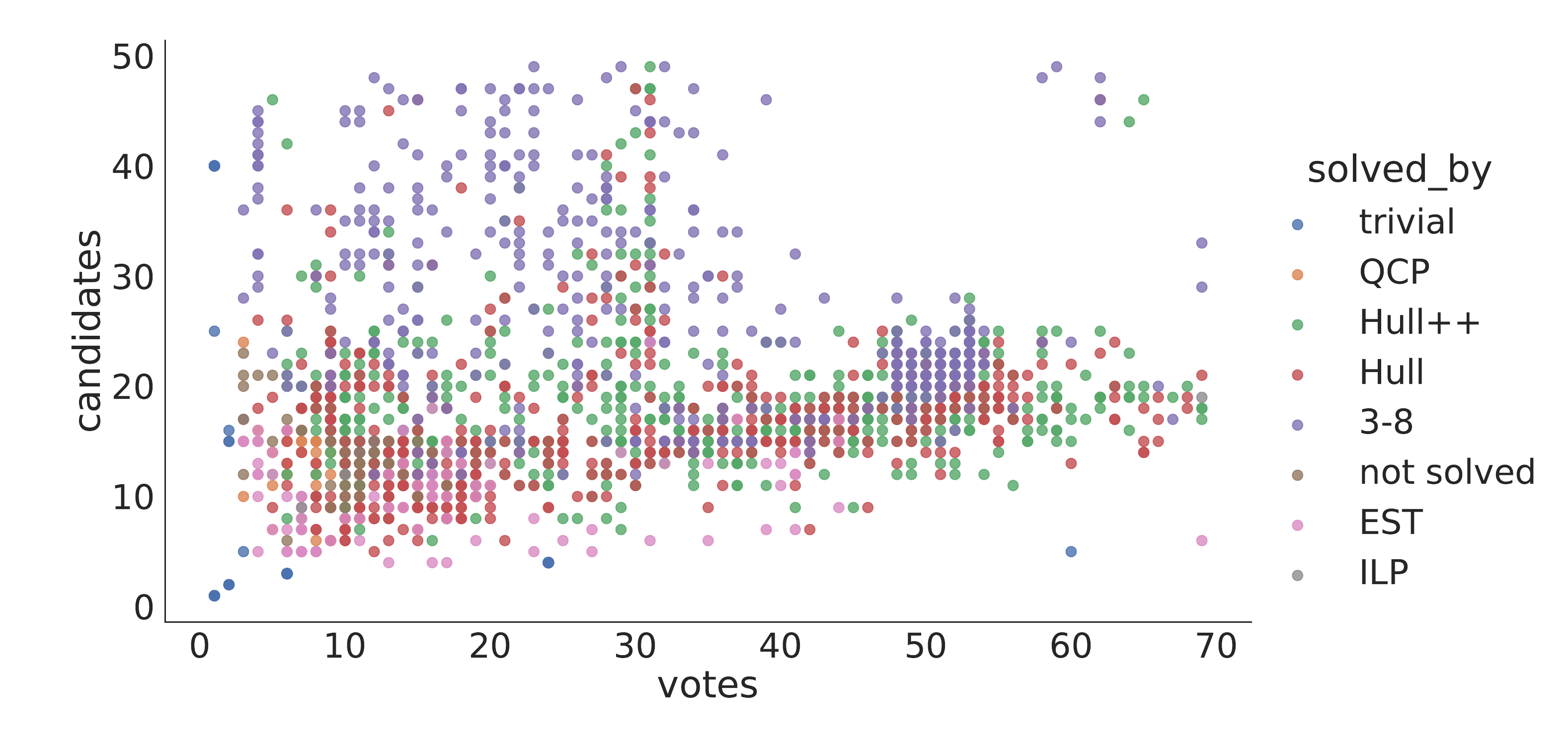}
																																																													\caption{The figure shows which solver was first to solve a given instance with up to $70$ votes and $50$ candidates.}
																																																													\label{fig:solved_first}
																																																												\end{figure}
																																																												
																																																												\begin{figure}[ht]
																																																													\centering
																																																													
																																																													\begin{minipage}{0.48\textwidth}
																																																														\centering
																																																														\includegraphics[width=\textwidth]{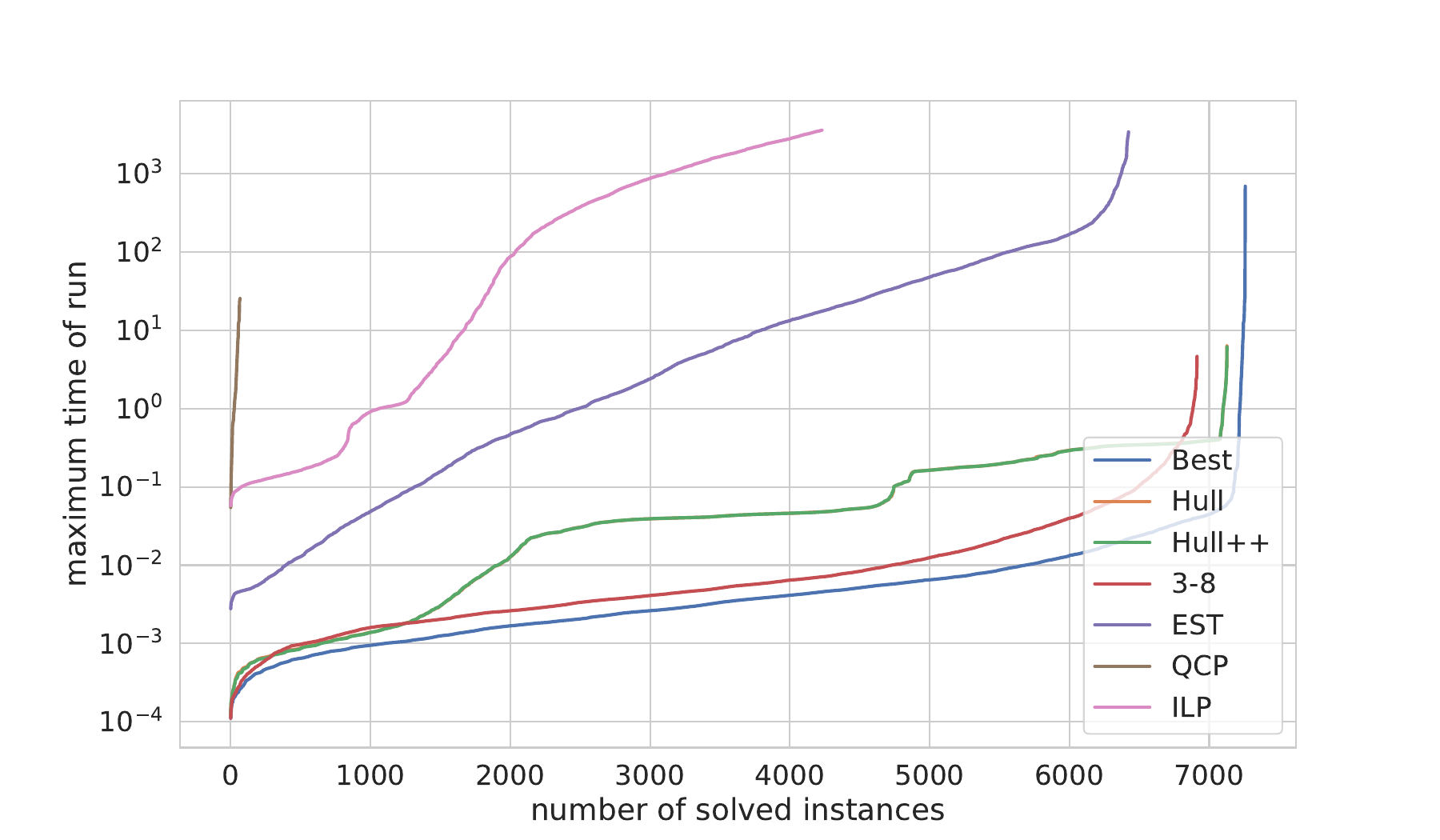}
																																																														\caption{The plot shows how many instances are solved within specified time by a particular solver alone. The curve labeled Best represents the idealized performance if all solvers were run in parallel: for each instance, all solvers are run on separate threads, and as soon as the first solver produces an answer, all others are terminated.}
																																																														\label{fig:solved_per_time}
																																																													\end{minipage}
																																																													\hfill
																																																													\begin{minipage}{0.48\textwidth}
																																																														\centering
																																																														\resizebox{\textwidth}{!}{
																																																															\begin{tabular}{lrrrr}
																																																																\toprule
																																																																solver & solved & fastest & max time (s) & median time(s) \\
																																																																\midrule
																																																																3-8 & 6914 & 5213 & 4.653 & 0.005 \\
																																																																EST & 6424 & 96 & 3422.948 & 3.857 \\
																																																																Hull & 7128 & 811 & 6.334 & 0.043 \\
																																																																Hull++ & 7128 & 1097 & 6.151 & 0.043 \\
																																																																ILP & 4231 & 3 & 3598.021 & 140.411 \\
																																																																QCP & 68 & 39 & 25.508 & 1.579 \\
																																																																trivial & 424 & 424 & - & - \\
																																																																\bottomrule
																																																															\end{tabular}
																																																														}
																																																														\caption{Running times of individual algorithms. First column indicates the number of instances solved by the algorithm, second column indicates for how many instances this approach was the fastest and the last two columns show the maximum and median times.}\label{tab:running_times}
																																																													\end{minipage}
																																																													
																																																													\vspace{0.5cm} 
																																																													
																																																													\begin{minipage}{0.48\textwidth}
																																																														\centering
																																																														\includegraphics[width=\textwidth]{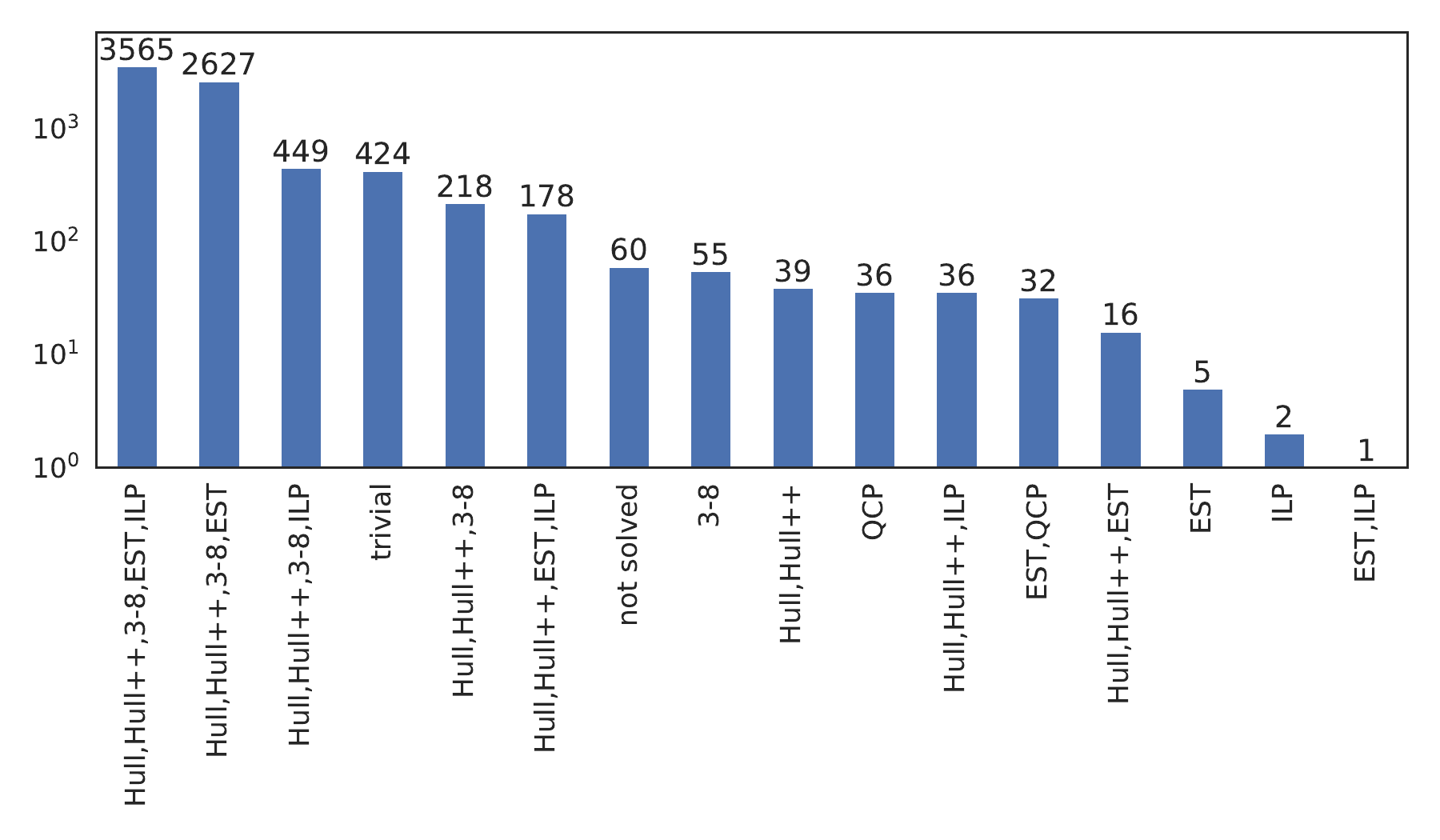}
																																																														\caption{Number of instances that were solved by exactly the solvers in the corresponding set.}
																																																														\label{fig:powersets}
																																																													\end{minipage}
																																																													\hfill
																																																													\begin{minipage}{0.48\textwidth}
																																																														\centering
																																																														\includegraphics[width=\textwidth]{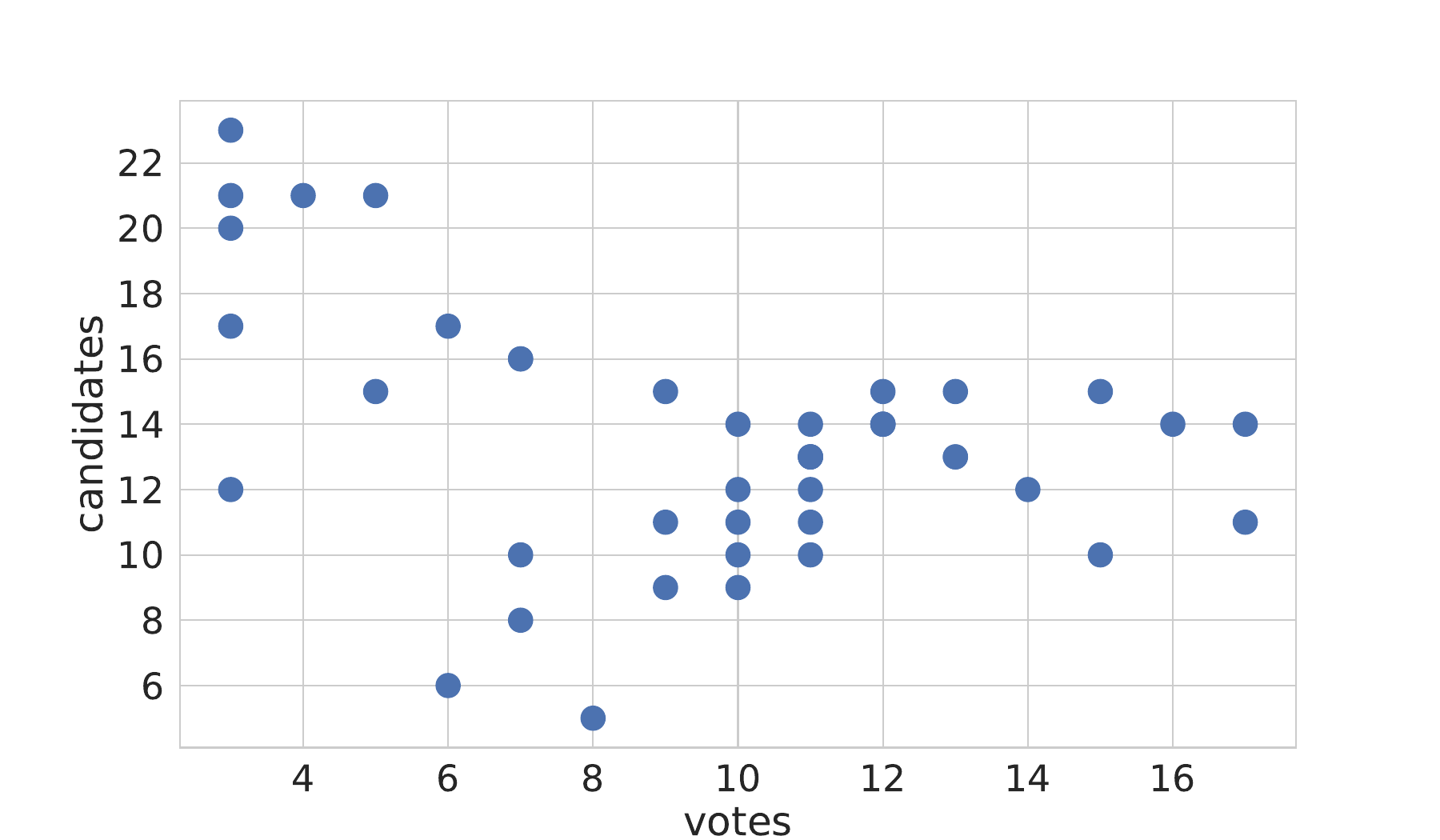}
																																																														\caption{Unsolved instances with at least 4 votes. Each unsolved instance with given number of candidates is represented as a dot. Some instances overlap.}
																																																														\label{fig:unsolved}
																																																													\end{minipage}
																																																												\end{figure}

																																																												As the source of the experiments, we took all the \texttt{.soc} instances from PrefLib~\cite{Preflib}. These are the complete strict order datasets. As of October 2024 there are $7743$ such instances in PrefLib.
																																																												The experiments were carried out on a machine with \texttt{AMD EPYC 7742 64-Core Processor}. 
																																																												For each instance each heuristic (including the EST algorithm) was run on a single thread with time limit of 1 hour and memory limit of 32 GiB.
																																																												This memory limit was not enough for 235 instances (in all cases this was the case of the EST algorithm).
																																																												For both ILP and QCP programs we used the Gurobi Optimizer~\cite{gurobi} (note that for QCP it is the same as in the previous paper~\cite{EscoffierST23}). \Cref{tab:solved_per_dataset} shows the absolute number of solved instances divided among datasets of PrefLib by combining all approaches. \Cref{tab:running_times} shows the running times of each approach separately. We denote by \texttt{trivial} the solver which (trivially) resolves exactly the instances with $|C|\leq 3$ or $|V|\leq 2$ or $|V|\leq 3 \wedge |C|\leq 7$ all of which are $2$-Euclidean.
																																																												
																																																												\paragraph*{Running times}
																																																												We evaluated the performance of a solver by counting the number of instances that are solvable within a given time (see \Cref{fig:solved_per_time}). From the plot it is evident, that the approach based on the $3$-$8$ pattern is the fastest for most instances. The only contenders are the algorithms based on the convex hull (\texttt{Hull++}, \texttt{Hull}). These approaches solve 214 more instances than $3$-$8$ alone.
																																																												The next fastest solver is the EST algorithm, although it is slower by few orders of magnitude. In a few instances, it almost exceeds the time limit. However, based on the shape of its performance curve, it appears unlikely that many additional instances would be solved with EST algorithm in a reasonable time.
																																																												
																																																												In contrast, the ILP approach is even slower by a few more orders of magnitude. Still, since its performance curve remains flat at the end, it suggests that given additional time, it could potentially solve more instances. Lastly, the QCP solver, which can only produce yes answers, solves only a small number of instances (since most PrefLib instances are not $2$-Euclidean). Notably, of the 7683 instances we solved, only $39$ took longer than one second. This corresponds to the Best curve in \Cref{fig:solved_per_time}. Note that only the running time for successfully solved instances is displayed in \Cref{fig:solved_per_time}, rather than the time spent on instances that could not be decided.
																																																												
																																																												\paragraph*{Reduction rules}
																																																												The reduction rules were successfully applied to 802 instances.
																																																												Together both reduction rules were used 979 times -- \Cref{rr:rr1plusplus} was applied 728 times and \Cref{rr:rr2} was applied 251 times.
																																																												In total, 1729 candidates across all instances were removed, averaging more than 2 candidates per instances where any rule was applied.
																																																												In most cases, the number of removed candidates was low -- only 1 candidate was removed in 487 instances and only 2 candidates in 155 instances. 
																																																												However, in $70$ instances, at least $5$ candidates were removed; in $20$ instances $10$ or more candidates were removed, with the highest observed removal being $27$ candidates in $2$ instances.
																																																												
																																																												\paragraph*{Evaluating Solver Necessity}
																																																												The number of solved instances and the running time alone do not fully capture the effectiveness of an approach. It is also crucial to consider whether a given approach can solve instances that other solvers cannot. In \Cref{fig:powersets}, for a given set of solvers, we illustrate how many instances are solved by each combination of solvers. From the initial columns, we can observe that majority of the instances (over 7000) are either trivial or solvable by at least three different approaches.
																																																												
																																																												Additionally, it can be seen that whenever the general implementation of Convex Hull (the \texttt{Hull++}) solves an instance, the simplified implementation, \texttt{Hull}, which only considers subsets of voters of size $4$, also solves it. Clearly, \texttt{Hull} cannot be better than \texttt{Hull++}. However, each of the other introduced solvers solves at least one unique instance that no other solver can handle. Notably, for the yes-instaces, the QCP approach solves nearly all the known nontrivial yes-instances, with the exception of two solved by the EST algorithm alone.\footnote{In \Cref{fig:powersets} there are $5$ instances solved only by EST, but $3$ of them are no instances.}
																																																												
																																																												\paragraph*{When to use which solver}
																																																												In some approaches we try all possible subsets of either votes (\texttt{Hull++}, partially $3$-$8$) or candidates (ILP, partially EST). This theoretically identifies when a particular algorithm should be the best. In practice it is less clear which solver will be the fastest for a given number of candidates and voters (see \Cref{fig:solved_first}). The EST solver is limited by design to a maximum of $9$ candidates. By first reducing the instance size, we were able to solve $17$ additional instances. However, the EST algorithm has not solved any yes-instance with more than $8$ votes. The QCP approach, on the other hand, was able to handle slightly larger instances, solving instances with up to $24$ candidates and $12$ votes. For small no-instances with at most $30$ candidates, \texttt{Hull} and \texttt{Hull++} outperform $3$-$8$, but as instance sizes increase, $3$-$8$ dominates due to its lower complexity (roughly $|V|^3|C|^2$, see \Cref{lem:38patterrec}).
																																																												
																																																												\paragraph*{Unsolved instanes}
																																																												A total of $60$ instances from PrefLib remain unsolved. Among these, $16$ instances have $3$ votes and at least $100$ candidates. The remaining $44$ instances are shown in \Cref{fig:unsolved}. Notably, these unsolved instances are relatively small, they have at most $17$ votes and at most $23$ candidates (excluding the $16$ instances with $3$ votes and over $100$ candidates). Minimal unsolved instances as a $(|C|, |V|)$ pair are of sizes $(3,12)$, $(6,6)$ and $(8,5)$.

																																																												\FloatBarrier
																																																												
																																																												\section{Conclusion}\label{sec:conclusion}
																																																												In this work we discussed a practical approach to $2$-Euclidean preferences. We evaluated the performance of our approach on real-world datasets from PrefLib. We observed that most practical instances fail to be $2$-Euclidean due to a very simple and quickly recognizable forbidden configuration contained in the instance. In particular, it is the forbidden pattern based on the convex hull of the voters or the $3$-$8$ no-instance. Additionally, we designed an ILP that considers many more combinatorial properties of $2$-Euclidean elections and attempts to falsify them. Towards recognizing yes-instances, we improved the QCP formulation that arises from the definition of $2$-Euclidean elections. We fix the error term in the embedding and we gradually scale the bounding box inside which the election has to be embedded. This allows the solver to quickly decide whether it can be embedded if the bounding box is either very small or it is just big enough. All our algorithms are also supported by reduction rules which remove trivial parts of the instance and reduce its size significantly.
																																																												
																																																												Using these techniques we were able to classify more than $82\%$ of previously unclassified instances of PrefLib.
																																																												
																																																												First open question that arises is how our approaches generalize to $d$-Euclidean elections for $d>2$. For example, the reduction rules generalize well to $d>2$ by simply replacing the size of the tail blocks by $d+1$ in \Cref{rr:rr1plus}. \Cref{rr:rr2} should still work, although the correctness argument will need to be made differently so that it can be better generalized to higher dimensions. It is not obvious how to generalize the convex hull approach to higher dimensions. How do controversial sets of voters of size greater than $2$ contribute to the higher-dimensional convex hull? Does this characterization help significantly to provide a no-certificate for $d>2$? How does the ILP approach generalize to higher dimensions?
																																																												
																																																												Finally, it remains to classify the remaining $60$ instances of PrefLib whether they are $2$-Euclidean.
																																																												
																																																												
																																																												
																																			\bibliographystyle{ACM-Reference-Format}
																																																												\bibliography{references}
																																																												
																																																												\ifdefined\ShortVersion
																																																												\appendix
																																																												\section*{Appendix}
																																																												\appendixText
																																																												\fi

\end{document}